\tikzstyle{param}=[circle, minimum size = 0.7cm, thick, draw=black!100, fill = gray!10, node distance = 0.5cm]
\tikzstyle{data}=[rectangle, minimum size = 0.7cm, thick, draw =black!100, node distance = 0.5cm]
\tikzstyle{model}=[rectangle, minimum size = 1cm, thick, draw=black!100, node distance = 0.5cm]
\newtheorem{theorem}{Theorem} 
\newtheorem{corollary}{Corollary} 
\newtheorem{proposition}{Proposition} 
\newtheorem{definition}{Definition}
\newtheorem{assumption}{Assumption}
\newtheorem{remark}{Remark}
\DeclareMathOperator*{\argmin}{argmin}
\DeclareMathOperator*{\argmax}{arg\,max}
\def\x{{\mathbf{x}}}
\def\y{{\mathbf{y}}}
\def\z{{\mathbf{z}}}
\def\V{{\mathcal{V}}}
\def\R{{\mathbb{R}}}
\def\c{c} 
\begin{document}
\setcounter{secnumdepth}{3}

\begin{frontmatter}
\title{Bayesian inference for the learning rate in Generalised Bayesian Inference}
\runtitle{Bayesian inference for the learning rate in GBI}

\begin{aug}
\author{\fnms{Jeong Eun} \snm{Lee}\thanksref{addr2}\ead[label=e2]{kate.lee@auckland.ac.nz}},
\author{\fnms{Sitong} \snm{Liu}\thanksref{addr1}\ead[label=e1]{sitong.liu@spc.ox.ac.uk}},
\author{\fnms{Geoff K.} \snm{Nicholls}\thanksref{addr1}\ead[label=e3]{nicholls@stats.ox.ac.uk}},

\address[addr1]{
Department of Statistics,
  University of Oxford,
  Oxford, UK.
}
\address[addr2]{
Department of Statistics,
  University of Auckland,
  Auckland, NZ.
}
\end{aug}

\begin{abstract}
In Generalised Bayesian Inference (GBI), the learning rate and hyperparameters of the loss must be estimated. These inference-hyperparameters can't be estimated jointly with the other parameters, from the data, by giving them a prior. 
However, in some settings there exist unknown ``true'' hyperparameter-values about which it is meaningful to have prior belief. It is then possible to use Bayesian inference with held-out data to get hyperparameter-posteriors. We define two hyperparameter posteriors, one based on an ELPPD-utility and one aiming to cover the pseudo-true parameter. 
The new framework supports estimation and uncertainty quantification for multiple hyperparameters jointly.
Experiments show that the resulting GBI-posteriors out-perform Bayesian inference on simulated test data and select optimal or near optimal hyperparameter values in a large real problem of text analysis. Generalised Bayesian inference is particularly useful for combining multiple data sets and most of our examples belong to that setting. We also give asymptotic results for some of the special ``multi-modular'' Generalised Bayes posteriors which we use in our examples.
\end{abstract}

\begin{keyword}[class=MSC]
\kwd[Primary ]{00X00}
\kwd{00X00}
\kwd[; secondary ]{00X00}
\end{keyword}

\begin{keyword}
\kwd{Generalised Bayesian inference}
\kwd{Learning rate}
\kwd{Hyperparameter estimation}
\kwd{Posterior predictive}
\kwd{Multi-modular inference}
\end{keyword}

\end{frontmatter}

\section{Introduction}\label{sec:intro}

Generalised Bayesian Inference (GBI) with a Gibbs posterior \citep{Chernozhukov2003,Zhang2006,Jiang2008,Bissiri2016,Winter2023} gives a general framework for updating belief in parametric models which has proven useful in many settings. In GBI the usual parametric observation model $p(\x|\phi)$ for data $\x=(x_i)_{i=1}^{n},\ x_i\in \R^{d_x}$ given a parameter $\phi \in \R^{d_\phi}$ is replaced by a loss $\ell_\beta(\phi; \x)$ and the update from prior $\pi(\phi)$ to posterior becomes 
\begin{equation}\label{eq:GB-post-generic}
    \pi_s(\phi|\x)\propto \exp(-\eta\, \ell_\beta(\phi;\x))\,\pi(\phi),
\end{equation}
where $s=(\eta,\beta)$.
The \emph{learning-rate} $\eta>0$ controls the rate at which information about informing $\phi$ accumulates as data is gathered. In some settings the loss has its own hyperparameters $\beta$ which are distinct from $\phi$. We call $\eta$ and $\beta$ inference hyperparameters (as opposed to model hyperparameters).

The inference hyperparameters $\eta$ and $\beta$ must be chosen to support the goals of the inference. 
Bayesian inference for $s$, with an observation model 
$\exp(-\eta\,\ell_\beta)/c(\phi,s)$ normalised over $\x$, is not justified \citep{Bissiri2016, Miller2018a}. If we give $s$ a prior and estimate it along with $\phi$ by conditioning on $\x$ then the factor $c(\phi,s)$ distorts the $\phi$-posterior. However, GBI with a separate loss for $s$ depending on held-out data is justified. The loss expresses a clear objective, to minimise the associated risk over an infinite population of test data points. 
In this paper we use held-out data to define a posterior for the inference hyperparameters.

One area in which GBI has found application is in treating misspecification in models for multiple heterogeneous data sets. 
We illustrate Bayesian estimation of inference hyperparameters using a belief update called Semi-Modular-Inference (SMI) which is tailored for multi-modular models. We point out that the SMI-posterior is an instance of a Sequential-Gibbs posterior \citep{Winter2023}. This new connection is useful as it brings the theoy established for Sequential-Gibbs posteriors to bear on SMI, alongside existing results \citep{frazier2023}.

The paper is structured as follows. Section \ref{sec:bayes-for-s-main-idea} introduces the framework for joint inference of hyperparameters and establishes the asymptotic behaviour of their posterior distribution under different losses, pooled and product losses. Section \ref{sec:SMI} presents the multi-modular inference setting and Section \ref{subsec:seqGibbs} gives asymptotic properties of semi-modular posteriors using sequential-Gibbs posterior results. Section~\ref{sec:simulation-examples} presents some applications, beginning with a section that defines our estimators and how we compare them. We use synthetic normal-mixture and state-space model examples (Sections \ref{sim:normal} and \ref{sec:ssm}) to illustrate the theoretical results, followed by an application to a real sense-tagging problem in Section \ref{sec:edisc}. 


\section{Bayesian Inference for Inference Hyperparameters}\label{sec:bayes-for-s-main-idea}

When we define a loss $\ell_\beta(\phi;\x)$ for use in GBI, we often keep the parametric link 
between data and parameter but adjust the belief update to allow for misspecification (see Table~1 in \cite{Knoblauch22}). 
We further assume an observation model
\begin{equation}\label{eq:lkd-product}\textstyle
p(\x|\phi)=\prod_{i=1}^n p(x_i|\phi)
\end{equation}
for conditionally independent data vectors $x_i\in\R^{d_x}$ has been given. Let $x\in \R^{d_x}$ be a generic component of $\x$ and denote by $p^*(x)$ the true generative model for the data. 
One way to define the loss is to penalise departures from $p^*$ with a divergence $D(p^*(\cdot)||p(\cdot|\phi))$. \cite{Jewson2022} consider factors involved in choosing $D$ and \cite{Knoblauch22} discuss these in a wider class of generalisations of Bayesian inference. 

We seek a family of divergences in which $p^*$ doesn't need to be evaluated, only simulated. \cite{Jewson2018} achieve this using a Bregman divergence,
\[
D_f(p^*||p) =  \int f(p^*(x))dx-\int f(p(x|\phi))dx - \int f'(p(x|\phi))(p^*(x)-p(x|\phi))dx,
 \]
where $f:[0,\infty)\to \R$ is a strictly-convex continuously-differentiable function and
$f'$ is its derivative. The general ``Bregman'' loss function $\ell_{f}(\phi;\x)$, which depends on the parameterisation of $f$, is an unbiased estimate of $D_f$. Up to terms not depending on $\phi$,
\begin{equation}\label{loss:f}
\ell_{f}(\phi;\x) = n\mathbb{E}_{x|\phi}(f'(p(x|\phi))) -  n\int f(p(x|\phi))dx -\sum_{i=1}^{n}f'(p(x_i|\phi))\,. 
\end{equation}
When $f(p)=p\log(p)-p$ the loss is the negative log-likelihood and \eqref{eq:GB-post-generic} is the {\it power posterior}, $\pi_\eta(\phi|\x)\propto p(\x|\phi)^\eta \pi(\phi)$. If $f(p;\beta)=(p^\beta - 1)/\beta(\beta-1)$ then $\ell_\beta$ in \eqref{loss:f} is the $\beta$-loss \citep{Basu1998,Minami2002,stummer2012breg},
\begin{equation}\label{eq: beta loss}
    \ell_\beta(\phi;\x) = -\frac{1}{\beta-1} \sum_{i=1}^n p(x_i|\phi)^{\beta-1} + \frac{n}{\beta}\int p(x|\phi)^\beta dx.
\end{equation}
If $\ell_\beta(\phi;\x)$ is the $\beta$-loss then the posterior in \eqref{eq:GB-post-generic} is the {\it $\beta$-loss posterior}. The integral in \eqref{eq: beta loss} is evaluated by quadrature or simulation. Inference with the $\beta$-loss posterior is robust to outliers \citep{Knoblauch22,jewson24}. It interpolates the \cite{Itakura1968} loss ($\beta=0$), the negative log-likelihood ($\beta=1$) and the L2-norm divergence ($\beta=2$). We illustrate GBI for $s=(\eta,\beta)$ in the $\beta$-loss posterior.

We now consider the choice of $s$. A good value of $s$ depends on the goals of the inference, so we define one risk function, $l(s;\x)$ below, for parameter estimation and another, $\tilde l(s;\x)$, for prediction. Risk is defined for a population of unobserved test data $z\sim p^*,\ z\in \R^{d_x}$. Suppose for fixed $\x$ there is a unique $s^*$ minimising the chosen population-level risk. Here $s^*$ determines the degree of learning and robustness adjustment induced by the misspecified observation model.  
Our loss-functions are empirical estimates of the respective risks. Suppose that we have in addition to $\x$ a \emph{calibration} data set, $\y=(y_{j})^J_{j=1}$ which $y_j\sim p^*$, $y_j\in \mathbb{R}^{d_x}$. We define \emph{pooled} and \emph{product} losses, respectively $l_{(1,J)}(s;\y,\x)$ and $l_{(J,1)}(s;\y,\x)$, measuring how well the GB-posterior performs for $\phi$-estimation and $z$-prediction respectively. We think of $\y$ as a sample of test data which we use to adjust the belief update to minimise the risk over the population. In practice we get $\y$ by splitting the data into training and calibration blocks.

{\bf Pooled loss}: Let $\widetilde{\phi}=\arg\max_\phi E_{z\sim p^*}(\log(p(z|\phi))$ be the pseudo-true parameter. Our risk for estimation, 
\begin{equation}\label{eq:gb-risk-pool}
{l}(s;\x)=-\log(\pi_s(\widetilde{\phi}|\x))\,,
\end{equation} 
favors $s$ such that $\pi_s(\phi|\x)$ covers the pseudo-true parameter. Let
\begin{equation}\label{eq:pooled_loss}
l_{(1,J)}(s;\y,\x)=-\log(p_{s}(\y|\x))
\end{equation}
be the loss for estimation. The joint posterior predictive in \eqref{eq:pooled_loss},
\begin{equation*}\label{eq:pp-joint-pooled}
p_{s}(\y|\x)=\int 
p(\y|\phi)\,\pi_s(\phi|\x)\,d\phi. 
\end{equation*}
is the intrinsic marginal likelihood of \cite{Berger1996}. As we show in Theorem~\ref{thm:loss_s_single}, when suitably rescaled, $p_{s}(\y|\x)$ converges to  $\pi_s(\widetilde{\phi}|\x)$ at large $J$. With a prior $\rho(s)$ and hyperparameter space $s\in\Omega_s$, we get the pooled posterior, 
\begin{equation}\label{eq:poolpost}
    \rho_{(1,J)}(s|\y;\x) \propto \rho(s) p_s(\y|\x) \,, s\in\Omega_{s}.
\end{equation}

{\bf Product loss}: Our risk for prediction is the (negative) Expected Log Pointwise Predictive Density (ELPPD, \cite{Vehtari2016}),
\begin{equation}\label{eq:gb-risk}
\tilde{l}(s;\x)=-\mathbb{E}_{z\sim p^*}(\log(p_s(z|\x))\,,
\end{equation} 
where $p_s(z|\x)$ is the posterior predictive for one observation. The emprical risk,
\begin{equation} \label{eq:gb-product-loss}
    l_{(J,1)}(s;\y,\x)=-\sum_{j=1}^J \log(p_{s}(y_{j}|\x))\,,
\end{equation}
converges (when rescaled by $J$) to the risk $\tilde{l}(s;\x)$ as $J\!\to\!\infty$ and is our loss for prediction. The calibration data $y_j$ are $d_x$-component vectors with $d_x$ defined by the conditional independence structure in \eqref{eq:lkd-product}. \cite{Cooper2023} considers models for timeseries where $d_x$ must be chosen with care, as \eqref{eq:lkd-product} doesn't hold, and this leads to the Expected Log Joint Predictive Density (ELJPD) which we do not consider. We take a prior $\rho(s)$ and update belief using the product posterior,
\begin{equation}\label{eq:GB-hyper-posterior}
    \rho_{(J,1)}(s|\y;\x)\propto \rho(s)\prod_{j=1}^J p_{s}(y_{j}|\x)\,,\ s\in\Omega_{s}.
\end{equation}



The hyperparameter domain $\Omega_{s}$ is often low-dimensional. In the power posterior (where $\beta=1$) we have $s=\eta$ and $\Omega_s=(0,\infty)$ and in the $\beta$-loss posterior with $s=(\eta,\beta)$ we have
$\Omega_s=(0,\infty)^2$. In SMI, $0\le \beta <\infty$ but $0\le \eta\le 1$, as that interpolates between a posterior conditioning on two data sets $\x_1,\x_2$ and a posterior conditioning only on $\x_1$. This is discussed in Section~\ref{sec:SMI}. 
When $\Omega_s$ has dimension just one or two, $p_s(\y|\x)$ in \eqref{eq:poolpost} and $p_s(y_{j}|\x)$ in \eqref{eq:GB-hyper-posterior} can be estimated, using MCMC samples 
$\phi^{(t)}\sim \pi_s(\cdot|\x),\ t=1,\dots,T$, at points on a lattice of $s$ values and all further calculations carried out using smoothing and quadrature. This is what we do below. For higher dimensional $\Omega_s$, evaluation on a lattice isn't feasible, but we can target the joint posteriors $\rho_{(1,J)}(s,\phi|\y;\x)\propto \rho(s)p(\y|\phi)\pi_s(\phi|\x)$ and $\rho_{(J,1)}(s,\phi_{1},\dots,\phi_J|\y;\x)$ using the nested MCMC algorithm given in Appendix~\ref{Appendix:mcmc}. 


\subsection{Related literature}

 We categorise methods for conducting Bayesian analysis on misspecified models into three groups. The first group comprises methods that apply bootstrapping within a Bayesian framework, such as Weighted Likelihood Bootstrap \citep{Newton1991,Newton1994,Lyddon2019} and BayesBag \citep{Buhlmann2014,Huggins2021}. The second group includes the tempered likelihood, resulting in power posteriors \cite{Walker2001,Grunwald2012,Miller2018a}. The third group generalises the second and involves replacing the likelihood with a more general loss function that mediates data and parameters, including Gibbs posteriors \citep{Zhang2006, Jiang2008, Martin2022}, Generalised Bayes \citep{Bissiri2016, Grunwald2017} and approaches that take the PAC-Bayes loss \citep{Germain2016,Zhang2006,McAllester1998,Shawe1997, Knoblauch22} as their starting point.  We focus on the power posterior, but also consider more general losses which are empirical estimates of the Bregman divergence, so our work falls in the second and third groups.

Several different methods for estimating the learning rate $\eta$ have been given, reflecting different goals. See \cite{Wu2020calib-comparison} for an overview.
Popular methods include matching the frequentist coverage probability \citep{Syring2018, Winter2023}, matching the Hyv\"arinen score \citep{Yonekura2023} and matching the asymptotic covariance using the Fisher information \citep{Lyddon2019, Holmes2017, frazier2023}. 
One straightforward approach is to choose $\eta$ to maximise a utility evaluated on held-out calibration data. Methods based on Leave-One-Out Cross-Validation (LOOCV) estimates of the ELPPD \citep{carmona20} and SafeBayes (\cite{Grunwald2017}, who use sequential prediction) are related. \cite{battaglia2025} minimises the Posterior Mean Squared Error for held-out class labels in a supervised classification task. \cite{Carmona2022scalable} give variational methods for scalable inference when the dimension of $s$ is large, using stochastic gradient descent on the Watanabe-Akaike Information Criterion (WAIC, \cite{watanabe10,Vehtari2016}). However, there is to date no joint estimation of the learning rate $\eta$ with loss hyperparameters, such as $\beta$ in the $\beta$-loss. 

\cite{shenCarvalho2024} estimate $\eta$ in a ``power prior'' setting \citep{carvalhoIbrahim2021} with many of the same elements but different goals. We have training data $\x$, calibration data $\y$ and test data $\z$. These all have some unknown true distribution $p^*$ and we use the same observation model $p(\cdot|\phi)$ with a common parameter $\phi$. 
In \cite{shenCarvalho2024}, $\x\sim p(\cdot|\phi_x)$ is a well-specified model for the ``old'' data set which they use to make a prior for the parameters of a ``new'' data set $\y\sim p(\cdot|\phi_y)$. For example, $\x$ and $\y$ measure a treatment response in two drug trials involving adults and children respectively and $\phi_x$ and $\phi_y$ are model parameters for the two data sets. Misspecification arises if $\phi_x$ differs significantly from $\phi_y$. As this is a concern, the authors construct a prior for $\phi_y$ by down-weighting the information from the old data, taking
\[
\pi(\phi_y,\eta|\x,\y)\propto \rho(\eta) p(\y|\phi_y)\pi_\eta(\phi_y|\x).
\]
This joint posterior is the joint pooled posterior $\rho_{(1,J)}(s,\phi|\y;\x)$. The authors call $\pi_\eta(\phi_y|\x)$ the ``power prior'' for $\phi_y$. 
\cite{shenCarvalho2024} consider the asymptotic behaviour of $\eta$ as $n$ and $J$ grow at fixed $n/J$ for GLMs with possibly different parameters. They show $\eta\to 0$ if the true parameters differ.

Recently, \cite{McLatchie2024} 
showed that for bounded $\eta$, the power-posterior predictive $p_\eta(y|\x)$ converges to the MLE plug-in predictive as the $\x$-sample size $n\to\infty$. 
Predictive performance is insensitive to $\eta$ in this regime, so the product loss in \eqref{eq:gb-product-loss} doesn't inform $\eta$. 
Our setting differs from theirs in two main ways. 
First, we consider the case where the number of calibration points $\y=(y_1,\dots,y_J)$ goes to infinity for a fixed training sample $\x=(x_1,\dots,x_n)$; in \cite{McLatchie2024} it is the other way round.
In the product posterior the risk is the ELPPD, and the ``likelihood'' for $s$ is $\prod_j p_s(y_j|\x)$
so information about $s$ accumulates when we fix $n$ 
and take $J\to \infty$. 
Second, in the problems we consider the number of parameters grows with the number of observations ($\theta_M$ in Section~\ref{sec:ssm} and $c_1,\dots,c_n$ in \ref{sec:edisc}), and so the large-$n$ regime is not so relevant. 

\subsection{Asymptotic distribution of hyperparameters}\label{sec:bayes_s}

The following translates results for the asymptotics of Generalised Bayes posteriors from \cite{Millar21} to our setting (Section \ref{asymptotic_regular}), explains our ``estimation equals pooled, prediction equals product'' characterisation (Section \ref{blocksize}) and gives asymptotics for non-regular models (Section \ref{asymptotic_boundary}), using results from \cite{Bochkina2014}.

For simplicity in notation, partial derivatives evaluated at particular values are simplified by omitting subscripts. For example, $\nabla_\varphi f(\varphi,\x)|_{\varphi=\bar{\varphi}} := \nabla_\varphi f(\bar{\varphi},\x)$. A probability density function and distribution are denoted by $p$ and $P$ respectively and $N(\mu,\Sigma)$ is the multivariate normal distribution with mean $\mu$ and covariance matrix of $\Sigma$.

\subsubsection{Regular asymptotics}\label{asymptotic_regular}

The posterior for the hyperparameters $s$ may concentrate as we gather more calibration data $\y=(y_1,\dots,y_J)$. How this happens depends on whether we have a pooled loss (Theorem \ref{thm:loss_s_single}) or a product loss (Theorem \ref{cor:posterior_bin}). We begin with the pooled loss. 

\begin{assumption}[pooled]\label{A:log_min_pooled} 
For all sufficiently large $J$ there exists a loss minimiser $\bar{s}_{(1,J)} = \{s\in Int(\Omega_s); \nabla_s  l_{(1,J)} (s;\y,\x) =0 \}$ for the pooled loss.  
\end{assumption}

\begin{assumption}\label{A:laplace} 
Let $\y=(y_1,\dots,y_J)$ be iid samples from $p^*$. We assume the following:\\[-0.3in]
\begin{enumerate}[(i),itemsep=3pt]
\item for any $s\in Int(\Omega_s)$, both $\pi_{s}(\phi|\x)$ and $\log p(\y|\phi)$ are $C^{\infty}$ in $\phi$.
\item there exists $\bar{\phi}_J=\{\phi\in Int(\Omega_\phi); \nabla_\phi \log p(\y|\phi)=0\}$.
\item the Hessian $H_{\bar{\phi}_J}:= -\frac{1}{J} \nabla^2_{\phi} \log p(\y|\bar{\phi}_J)$ is finite and positive definite.
\item there exists $\widetilde{\phi}=\{\phi\in Int(\Omega_\phi);\nabla_\phi \mathbb{E}_{z\sim p^*}[\log p(z|\phi)]=0\}$.
\end{enumerate}
\end{assumption}
Assumptions~\ref{A:laplace}(i-iv) hold
for a full regular minimal exponential family in natural-parameter form, and in particular, Assumption~\ref{A:laplace}(iv) holds for any $p^*$ with the same base measure as $p$ (see page 65 of \cite{wainwright2008graphical}).
Assumption~\ref{A:log_min_pooled} is verified on a simple example in Appendix~\ref{app:gbi-normal-example-pooled}. Appendix~\ref{app:gbi-normal-example-interp} explains why this works and suggests the assumption will hold more generally. 
\begin{theorem}\label{thm:loss_s_single}
Suppose Assumptions \ref{A:log_min_pooled} and \ref{A:laplace} hold. If \[\widetilde{s}_{(1,\infty)}=  \argmin_s -\log \pi_s(\widetilde{\phi}|\x)\] is unique and $\displaystyle\lim_{J\to \infty}\sup_s |\pi_s(\bar{\phi}_J|\x) - \pi_s(\widetilde{\phi}|\x)| \xrightarrow{p} 0$ for all $s$ then $\bar{s}_{(1,J)} \xrightarrow{p}\widetilde{s}_{(1,\infty)}$ as $J\to\infty$.
\end{theorem}

\begin{proof}
Under Assumption \ref{A:laplace}, approximate $p_s(\y|\x)$ for large $J$ using the Laplace method
\begin{align}\label{eq:loss_laplace}
p_s(\y|\x)=\int p(\y|\phi) \pi_s(\phi|\x) d\phi  &\approx \dfrac{\pi_s(\bar{\phi}_J|\x) (2\pi)^{d_\phi/2}\exp( \sum^J_{i=1} \log p(y_i|\bar{\phi}_J))}{J^{d_\phi/2} \sqrt{|H_{\bar{\phi}_J}|}}\nonumber\\ 
&\propto  \pi_s(\bar{\phi}_J|\x)\qquad \mbox{(in $s$-dependence).}
\end{align}
Since $l_{(1,J)}(s;\y,\x)=-\log p_s(\y|\x)$, $\bar{s}_{(1,J)} \approx \argmin_s -\log \pi_s(\bar{\phi}_J|\x)$ for large $J$. By the consistency of log-likelihood maxima (Theorem 5.7 in \cite{Vaart2007}), we have $\bar{\phi}_J\xrightarrow{p}\widetilde{\phi}$ and posterior consistency $\bar{s}_{(1,J)} \xrightarrow{p}\widetilde{s}_{(1,\infty)}$.
\end{proof}

\begin{corollary}\label{cor:posterior_pool} 
If the conditions of Theorem~\ref{thm:loss_s_single} hold then, for sufficiently large $J$, the pooled-posterior $\rho_{(1,J)}(s|\y;\x)\propto p_s(\y|\x)\rho(s)$ is
\def\mystrut{\rule[-0.2\baselineskip]{0pt}{\baselineskip}}
\[
\rho_{(1,J)}(s;\y,\x)\approx \dfrac{\pi_s(\bar{\phi}_J|\x)\rho(s)}{\mystrut\int \pi_s(\bar{\phi}_J|\x)\rho(s) ds}.
\]
\end{corollary}
The $s$-posterior for the pooled loss doesn't concentrate with $J$ but has a limiting distribution which is only as concentrated as $\pi_s(\bar{\phi}|\x)\rho(s)$.   

We now turn to the product posterior. We take $J\to \infty$ and show that the posterior for $s$ concentrates. We assume the $s$-domain is locally open around $\widetilde{s}_{(\infty,1)}$ because the analysis relies on local asymptotics.

\begin{assumption}[product]\label{A:log_min_prod} 
For all sufficiently large $J$ there exists a loss minimiser $\bar{s}_{(J,1)} = \{s\in Int(\Omega_s); \nabla_s  l_{(J,1)} (s;\y,\x) =0 \}$ for the product loss.  
\end{assumption}

\begin{assumption}\label{A:Winter1-new} 
Let $\Omega_s\subseteq \R^{d_s}$ be an open space. We assume the following: \\[-0.2in]
\begin{enumerate}[(i)]
\item $l_{(J,1)}(s;\y,\x)/J \xrightarrow{a.s.} \widetilde{l}(s,\x)$ with $\widetilde{l}(s,\x)$ the risk defined in \eqref{eq:gb-risk};
\item there exists $\widetilde{s}_{(\infty,1)} = \{s\in\operatorname{int}(\Omega_s); \nabla_s \widetilde{l}(s;\x) =0 \}$;
\item the Hessian $H_s=\nabla^2_s \widetilde{l}(\widetilde{s}_{(\infty,1)};\x)$ is positive definite and $$\sup_J \sup_{s\in \Omega_s} \sup_{i,j,k} | \nabla^3_{(s_i,s_j,s_k)} l_{(J,1)}(s;\y, \x)/J |<\infty\,.$$
\end{enumerate}
\end{assumption}

Here $\widetilde{s}_{(\infty,1)}$ is the true $s^*$ (ie, the optimal hyperparameters for an infinite sample population of test data) which we want to estimate. If we have a posterior for $s$ which concentrates on $\widetilde{s}_{(\infty,1)}$ then the posterior for $s$ concentrates on the truth.


\begin{theorem}\label{cor:posterior_bin}
Suppose Assumptions~\ref{A:log_min_prod} and \ref{A:Winter1-new} 
 hold and there exist compact sets $K_s\subseteq\Omega_s$ such that $\widetilde{l}(s;\x)>\widetilde{l}(\widetilde{s}_{(\infty,1)} ;\x)$ for $s\in K_s \setminus \{\widetilde{s}_{(\infty,1)} \}$ and
\[
\liminf_J \inf_{s\in K_s\setminus \{\widetilde{s}_{(\infty,1)} \}} (l_{(J,1)}(s;\y, \x)/J-\widetilde{l} (\widetilde{s}_{(\infty,1)};\x))>0.
\]
If $\rho(s)$ is continuous and strictly positive at $\widetilde{s}_{(\infty,1)}$ and $s\sim \rho_{(J,1)}(s;\y,\x)$ then \[\sqrt{J}(s-\bar{s}_{(J,1)}) \xrightarrow{t.v.} N(0,H_s^{-1})\]
with $H_s=\nabla^2_s \widetilde{l}(\widetilde{s}_{(\infty,1)};\x)$ the Hessian defined in Assumption~\ref{A:Winter1-new}.
\end{theorem}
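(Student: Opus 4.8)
The plan is to recognise Theorem~\ref{cor:posterior_bin} as a Bernstein--von Mises (BvM) statement for the Gibbs posterior $\rho(s\mid\y_{(J,m)},\x)\propto\exp(-l(s;\y_{(J,m)},\x))\rho(s)$, in which the role usually played by ``$n$ iid observations'' is taken by the $J$ conditionally independent blocks: by \eqref{eq:gb-product-loss} the loss $l(s;\y_{(J,m)},\x)=-\sum_{j=1}^J\log p_s(y_{(j)}\mid\x)$ is a sum of $J$ terms that are iid given $\x$. The argument is the classical two-part proof of Theorem~10.1 in \citet{Vaart2007}, adapted to the $M$-estimation/Gibbs setting (cf.\ \citealp{Winter2023}): a local quadratic expansion around the loss minimiser $\bar s_{(J,m)}$, plus a global tail-control step supplied by the separation hypotheses. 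Throughout I would assume $\widetilde s_{(\infty,m)}$ lies in the interior of $\Omega_s$, so that the stationarity condition in Assumption~\ref{A:Winter0-new}(i) applies and a shrinking neighbourhood is eventually interior; were it on the boundary the limit would be a truncated Gaussian.

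First I would reparametrise to the local variable $h=\sqrt J\,(s-\bar s_{(J,m)})$ and Taylor-expand the loss about $\bar s_{(J,m)}$. Since $\bar s_{(J,m)}$ is stationary the first-order term vanishes, leaving $l(s;\y_{(J,m)},\x)-l(\bar s_{(J,m)};\y_{(J,m)},\x)=\tfrac12 h^\top \widehat H_J\, h + R_J(h)$, where $\widehat H_J=\tfrac1J\nabla_s^2 l(\bar s_{(J,m)};\y_{(J,m)},\x)$ and $R_J$ is a cubic remainder. Differentiating the a.s.\ limit in Assumption~\ref{A:Winter0-new}(ii) and using the $M$-estimation convergence $\bar s_{(J,m)}\to\widetilde s_{(\infty,m)}$ underlying \eqref{eq:s_sandwich}, continuity of the Hessian gives $\widehat H_J\to H_s=\nabla_s^2\widetilde\ell(\widetilde s_{(\infty,m)};\x)$ a.s.; the uniform bound on $\nabla_s^3\tfrac1J l$ in Assumption~\ref{A:Winter1-new} yields $R_J(h)=O(|h|^3/\sqrt J)\to0$ for each fixed $h$. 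The prior factor $\rho(\bar s_{(J,m)}+h/\sqrt J)/\rho(\bar s_{(J,m)})\to1$ by continuity and positivity of $\rho$ at $\widetilde s_{(\infty,m)}$. Hence the unnormalised posterior density of $h$ converges pointwise to the $N(0,H_s^{-1})$ kernel $\exp(-\tfrac12 h^\top H_s h)$.

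To upgrade pointwise convergence of densities to the claimed total-variation convergence I would invoke Scheff\'e's lemma, which requires that the posterior mass on $h$ outside an arbitrarily large ball (equivalently $s$ outside a $1/\sqrt J$-neighbourhood of $\bar s_{(J,m)}$) is asymptotically negligible, so that the normaliser converges to $(2\pi)^{d_s/2}|H_s|^{-1/2}$. This is exactly where the separation hypotheses enter: the condition $\widetilde l(s;\x)>\widetilde\ell(\widetilde s_{(\infty,m)};\x)$ on $K_s\setminus\{\widetilde s_{(\infty,m)}\}$ together with the uniform $\liminf_J$ loss-gap bound gives a uniform exponential penalty $\exp(-J\delta)$ on the posterior outside any fixed neighbourhood, dominating the $O(J^{d_s/2})$ growth of the Gaussian normaliser. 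I would split the normalising integral into a shrinking local part, handled by dominated convergence using a quadratic lower bound on the loss (from positive-definiteness of $H_s$ and the third-derivative bound), and the complementary tail, handled by the separation bound.

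The step I expect to be the main obstacle is the global tail control rather than the local expansion. The LAN/Taylor part is routine given the smoothness and Hessian assumptions, but making the negligible-tail claim rigorous needs a genuinely uniform-in-$s$ argument: one must pass from the pointwise a.s.\ limit in Assumption~\ref{A:Winter0-new}(ii) to a uniform loss gap over $K_s$ (the role of the stated $\liminf_J\inf_{s}$ condition) and also control the intermediate region between a fixed neighbourhood and the $1/\sqrt J$ scale, where neither the crude separation bound nor the quadratic approximation is individually adequate. This is the classical peeling/testing argument; in the well-specified iid case it is delivered by likelihood-ratio tests, whereas here it must be read off the Gibbs-loss separation condition assumed in the theorem. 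A secondary subtlety worth flagging is that the posterior is centred at the random $\bar s_{(J,m)}$ with the ``model'' curvature $H_s^{-1}$, not the sandwich covariance of \eqref{eq:s_sandwich}; combining the two statements recovers the correct recentring at $\widetilde s_{(\infty,m)}=s^*$ while exhibiting the usual misspecification mismatch between posterior spread and sampling variability.
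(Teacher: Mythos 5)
Your proposal is correct and follows essentially the same route as the paper: the paper proves this theorem simply by citing Theorems~\ref{thm:strong} and \ref{thm2} (the sequential-Gibbs consistency and Bernstein--von Mises results of \cite{Winter2023}) with $l(s;\y_{(J,m)},\x)$ substituted for $\ell_s(\varphi;\x)$ and $J$ playing the role of $n$, and your local quadratic expansion plus Scheff\'e/tail-control argument is precisely the internal content of that cited BvM result, with the theorem's separation hypotheses serving exactly the role you assign them. Your remarks on the interior-point caveat and on the distinction between the posterior curvature $H_s^{-1}$ and the sandwich covariance in \eqref{eq:s_sandwich} are also consistent with the paper's own discussion.
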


The proof for Theorem \ref{cor:posterior_bin} has just the same steps as the proof of Theorem 4 in \citet{Millar21}, under continuity, differentiability, and unimodality conditions equivalent to those of Theorem 5 in that work. The general loss in their paper is replaced by $l_{(J,1)} (s;\y, \x)$. Intuitively, $s$ concentrates at $\widetilde{s}_{(\infty,1)}$ ($ie$, on $s^*$ here) for large $J$.


\subsubsection{Pooled loss vs Product loss}\label{blocksize}

The product and pooled losses correspond to fundamentally different inferential targets: pooled calibration covers the pseudo-true parameter; product calibration covers the predictive distribution. For the pooled case this is clear from Theorem~\ref{thm:loss_s_single}: asymptotically in $J$, $\bar{s}_{(1,J)}\simeq  \argmax_s \pi_s(\widetilde{\phi}|\x)$. Product calibration is fundamentally predictive. Since
\begin{equation}
\textstyle \widetilde s_{(\infty,1)}
=
\arg\min_s
KL\!\left(p^*\,\|\,p_s(\cdot\mid \x)\right),
\end{equation}
product calibration
chooses loss-hyperparameters which make the marginal posterior predictive density $p_s(\cdot|\x)$ for each single $y\in \R^{d_x}$ closest to the
true density $p^*$.

The product loss for $s$ is an additive loss; information about $s$ accumulates with increasing $J$ and under regularity conditions the posterior concentrates at rate $1/\sqrt{J}$. 
The pooled loss targets the marginal likelihood of an intrinsic Bayes factor \citep{Berger1996}, where a data-dependent prior is formed from the training data and inference proceeds via the marginal likelihood of the remaining data. This provides a Bayesian, evidence-based justification for using the pooled loss. 
Although both estimators have well-defined inferential objectives, $\tilde{s}_{\infty,1}$ and $\tilde{s}_{1,\infty}$ are not equal in general.

We compare the two estimators using a simple normal-inverse gamma/power posterior example in Appendix~\ref{app:gbi-normal-example} and explain in Appendix~\ref{app:gbi-normal-example-interp} why we expect Assumptions~\ref{A:log_min_pooled} and \ref{A:laplace} to hold more generally. The easiest way to break Assumptions~\ref{A:log_min_prod} and \ref{A:Winter1-new} is to take a well specified model and data $\x$ which make the optimal $s$-values infinite. For example,
under regularity conditions the power posterior $\pi_\eta(\phi|\x)$ concentrates on the MLE 
\[
\bar{\phi}_n=\{\phi\in Int(\Omega_\phi); \nabla_\phi \log p(\x|\phi)=0\},
\]
as $\eta\to\infty$ and so the posterior predictive $p_\eta(y|\x)$ tends to the plug-in predictive $p(y|\bar{\phi}_n)$ (recall, $n$ is fixed and we focus on small $n$).
Taking $\eta<\infty$ replaces this plug-in model with a heavier-tailed distribution which gives more
probability to observations which are surprising under $p(\cdot|\bar{\phi}_n)$. However, if the fitted model is well specified then there is $\phi^*\in\Omega_\phi$ such that $p^*(y_j)=p(y_j|\phi^*)$, so if we happen to get $\bar{\phi}_n\simeq \phi^*$ then a heavier tailed distribution doesn't help and $\bar{s}_{J,1}=\tilde{s}_{\infty,1}=\infty$. However, if the plug-in predictive $p(y|\bar{\phi}_n)$ has the wrong centre or scale relative to
$p^*$ then a finite learning rate can increase the dispersion of $\pi_\eta(\phi|\x)$ and improve the match between $p^*(z)$ and $p_\eta(z\mid \x)$. 


\subsubsection{Non-Regular asymptotics}\label{asymptotic_boundary}

Sometimes $\Omega_s$ is half-closed or closed, and $s^*$ lies on the boundary of the domain. This is particularly common in SMI, discussed below, as $\eta=0$ corresponds to the Cut-Model, which is often selected. In such cases, standard interior asymptotic normality results are not directly applicable, since the required derivatives fail to exist at a closed boundary. 

To handle this non-regular setting, we convert the posterior concentration theory on the boundary by \cite{Bochkina2014} to our setting, where $s^*$ lies on the closed boundary for the product loss. In contrast to the usual interior case, the asymptotic behaviour is not governed by an interior quadratic expansion, and instead, it is driven by a first-order expansion in the boundary directions. The limiting posterior distribution is a Gamma distribution, with potentially faster convergence rates and explicit dependence on prior behaviour near the boundary. We consider the product loss only.

\begin{assumption}\label{A:s_boundary}
Suppose $\Omega_s=[0,\infty)^{d_s}$. We assume the following:\\[-0.3in]
\begin{enumerate}[(i)]
\item the risk defined in \eqref{eq:gb-risk} has a unique minimum $s^*=\argmin_{s\in\Omega_s} \widetilde{l}(s;\x) $ at $s^*=0$ and positive one-sided gradient $\nabla_s\widetilde{l}(s^*;\x)>0$ at $s^*$. 
\item Let $\overline{\Omega}_{(s^*,\delta)} = \{s\in\Omega_s: \|s-s^*\|_\infty <\delta \}$ for $\delta>0$. As $J\to\infty$, $\delta\to 0$ and $\delta J \to\infty$;
\item for all $s\in\overline{\Omega}_{(s^*,\delta)}$, both $\widetilde{l}(s,\x)$ and $\nabla l_{(J,1)}(s;\y,\x)$ exist $P_{\y\sim p^*}$ almost everywhere for large $J$
and, $P_{\y\sim p^*} (\sup_{s\in \overline{\Omega}_{(s^*,\delta)}} \| \frac{1}{J} \nabla_s l_{(J,1)}(s;\y,\x) - \nabla_s\widetilde{l}(s^*,\x) \|_\infty >\epsilon)\to 0$ as $J\to\infty$ for any $\epsilon>0$.
\item the $\sigma$-finite prior measure $\rho(s)$ gives a proper posterior 
for $s$ for $P_{\y\sim p^*}$ almost all $\y$ for large $J$ and for $s\in\overline{\Omega}_{(s^*,\delta)}$, there exist  $c_0>0$, $\alpha_j>0$ not depending on $J$ and $\Delta(\delta)\ge 0$ such that $c_0(1-\Delta(\delta))\leq \rho(s)\prod^{d_s}_{j=1} s_j^{-(\alpha_j-1)}\leq c_0(1+\Delta(\delta))$.
\end{enumerate}
\end{assumption}

\begin{theorem}\label{thm:post.boundary} 
Suppose $\Omega_s = [0,\infty)^{d_s}$ and $s^*=0$, so the optimal $s$ lies on the boundary. With Assumptions \ref{A:Winter1-new} (i) and \ref{A:s_boundary}, we also assume that $[0,c)^{d_s}\subseteq \{s-s^*:s\in\Omega_s\}$ for some $c>0$ and $P_{\y\sim p^*}(\Delta(\delta) \to 0) \to 1$ as $J\to\infty$
where 
$$\Delta(\delta) = J^{\sum_{j=1}^{d_s} \alpha_j}\int_{\Omega_s \setminus \overline{\Omega}_{(s^*,\delta)}} e^{-l_{(J,1)}(s;\y,\x)+l(s^*;\y,\x)} \rho(s) ds \,.$$
For $s\sim \rho_{(J,1)}(s|\y,\x)$ let $T=Js$ be the scaled posterior variable. Then as $J\to\infty$, for all $\epsilon>0$ and $t\in (0,\infty)^{d_s}$,
\[
P_{\y\sim p^*} (\| \Pr(T_1\le t_1,\dots,T_{d_s}\le t_{d_s}) -\prod^{d_s}_{j=1} \Gamma(t_j;\alpha_j,-\nabla_{s_j} \widetilde{l}(s^*;\x)) \|_{TV}>\epsilon) \to 0\,,
\]
where $\Gamma(\cdot;\alpha,a)$ is the CDF of a Gamma distribution with shape $\alpha$ and rate $a$.
\end{theorem}

The proof of Theorem \ref{thm:post.boundary} follows the proof of Theorem 1 in \cite{Bochkina2014}, considering only $s^*$ on the closed boundary: our $1/J$ is their $\sigma^2$ and their $l_y(\theta)$ is our $-l(s;\y,\x)/J$; almost sure convergence in Assumption \ref{A:Winter1-new} (i) can be relaxed to convergence in probability and, with Assumption \ref{A:s_boundary}(i), covers their Assumption M;
Convergence of loss derivatives in Assumption \ref{A:s_boundary}(iii) covers the parts of their Assumption~S relating to minima on the boundary with positive gradient; our assumed prior control at $s=0$ in Assumption \ref{A:s_boundary}(iv) covers the corresponding parts of their Assumption P. Additional assumptions made in Theorem~\ref{thm:post.boundary} itself concerning the local geometry of the rescaled parameter space around $s^*$ and the assumed behaviour of $\Delta$ (which enforces the required rate of posterior consistency) cover their Assumptions~B and L respectively.

The assumptions in \cite{Bochkina2014} relating to components of $s^*$ in the interior are not needed here, as there are none. This assumption on our part restricts the generality of the result we copy over, but since the dimension of $s$ is typically small, the gain in simplicity seems worthwhile. 


\section{Generalised Bayes for multi-modular models}\label{sec:SMI}

\subsection{Related work in muti-modular settings}\label{sec:smi-lit-rev}

A module is the set of variables needed to define the observation model for one data set in a posterior distribution based on more than one data set. A posterior is multi-modular if it has more than one module and the modules share parameters. See \cite{liu2024generalframeworkcuttingfeedback} for a formal definition. Misspecification is common in multi-modular Bayesian inference: there is simply more that can go wrong. \citet{Liu2009} provide an early analysis of ``modularization" and methods such as Bayesian melding \citep{Poole2000} and Markov melding \citep{Goudie2019melding} address conflicts between priors across modules.   

We assume that modules have been identified as either misspecified or well-specified. For example, we may have a mix of experimental and observational data, or made an initial standard analysis which showed problems. {\it Cut-model} inference \citep{Plummer2015} is effective in this setting. It can be thought of as a kind of sequential imputation procedure, in which the uncertainty for a shared parameter is estimated from a first well-specified module and the distribution passed as a prior for the shared parameter in a second misspecified module. This is not Bayesian inference, as information from the second module does not inform the shared parameter. An early form of Cut model inference has been available in WinBUGS \citep{Spiegelhalter2014} and cut models have found many applications \citep{BLANGIARDO2011379, Finucane2016, Styring2017, Li2017, Nicholson2021covid, Teh2021covid, Lunn2013, Kaizar2015, Zigler2014}. \citet{Jacob2017b} provides an overview of modularised Bayesian analysis, including Cut-models, from the perspective of statistical decision theory, while \citet{Pompe2021} discusses their asymptotic properties. \citet{frazier2023} considered general losses and prove a BVM-type result for the generalised cut posterior. Nested MCMC \citep{Plummer2015, carmona20} is commonly used to fit Cut models. Recent advances include Laplace \citep{frazier2023} and variational approximations \citep{Yu2021variationalcut,song2025neuralvariationalinferencecutting}, unbiased MCMC \citep{Jacob17} and a computationally efficient variant of nested MCMC \citep{Liu2020sacut}.

Cutting feedback from the second module into the first leads to a bias-variance trade-off. For example, if the parameters of a well-specified module lack sufficient ``local" information, incorporating limited information from misspecified modules can reduce uncertainty without introducing significant bias. \citet{carmona20} address this issue using Semi-Modular Inference (SMI). In this framework, a loss with a learninig rate is designed to control the flow of information from a misspecified module, while the loss for a well-specified module is just the negative log-likelihood. This gives a sequence of posteriors parameterised by the learning rate and interpolating the conventional posterior and the cut posterior. Various ways to parameterise the sequence have been suggested \citep{Nicholls2022,Nott2024} and choices of loss include tempered likelihood \citep{carmona20} and tempered marginal likelihood \citep{Nicholls2022,frazier2023}. SMI has proven effective in applications \citep{Liu2021generalized,styring22,battaglia2025} involving misspecified models for high dimensional latent variables.

The asymptotic properties of the Gibbs posterior in the single module case, including posterior consistency and a Bernstein von Mises (BVM) type result, have been established \citep{Martin2022, Millar21}. \citet{frazier2023} show similar results for the generalised cut posterior. Recently, \citet{Winter2023} presented a sequential-Gibbs posterior and its asymptotic properties, extending \citet{Millar21}. Their setup includes SMI posteriors, though they do not discuss the connection. 

\subsection{Semi-Modular inference}\label{sec:SMI-intro}

\begin{figure}[t]
  \begin{center}
  \scalebox{1}{
    \begin{tikzpicture}
      \node (Z) [data] {$\x_1$};
      \node (Y) [data, right=of Z, xshift=0.5cm] {$\x_2$};
      \node (phi) [param, below=of Z] {$\varphi$};
      \node (theta) [param, below=of Y] {$\theta$};
      \edge {phi} {Y, Z};
      \edge {theta} {Y};
      \draw[dashed,red] (0.85,0) to (0.85,-1.25);
      \draw[red,->] (0.55,-1.3) to (1.15,-1.3);
      \node[text width=3cm] at (0.5,1) {Module 1};
      \node[text width=3cm] at (2.75,1) {Module 2};
    \end{tikzpicture}
    }
  \end{center}
  \caption{Graphical representation of a simple multi-modular model. The Bayes posterior for this model in given in \eqref{eq:eta-smi-joint} with $\eta=1$. The dashed vertical line and arrow indicates that information flow from right to left is modulated in SMI.}
  \label{fig:Multimodular_model}
\end{figure}

Hyperparameters can also be used to control the relative weight of information from different sources. Consider the two-module configuration of Figure~\ref{fig:Multimodular_model} with two vectors of data $\x_1=(x_{1,1},\dots,x_{1,n_1})$ and $\x_2=(x_{2,1},\dots,x_{2,n_2})$ and model parameter vectors $\varphi$ and $\theta$. In our notation below we take the sample spaces to be $x_{i,j}\in \R^{d_i},\ j=1,\dots,n_i$ and $i=1,2$, $\varphi\in \Omega_\varphi \subseteq\R^{d_\varphi}$ and $\theta\in \Omega_\theta \subseteq \R^{d_\theta}$ though this is not an essential restriction. The sample sizes grow with a fixed ratio, $\lim_{n_2\to\infty} n_1/n_2=\alpha$. Let $p^*$ be the true generative model for $\x=(\x_1,\x_2)$ and let $\phi=(\varphi,\theta)$ to match previous notation. 

The belief update in Semi-Modular Inference (SMI, \cite{carmona20}) is a sequential-Gibbs posterior \citep{Winter2023}. It is assumed the $\x_1$ module is well-specified and the $\x_2$-module is misspecified. The loss $\ell_{s}(\varphi;\x)$ for $\varphi$ is designed to control the flow of information from the misspecified $\x_2$-module into the posterior for $\varphi$. The belief update for $\theta$ is Bayesian, with loss $\ell(\theta;\x_2,\varphi)=-\log(p(\x_2|\theta,\varphi))$ conditioned on $\varphi$ and learning rate equal one.  The posterior has the general form
\begin{align}\label{eq:SeqGibbsPosterior}
    \pi_s(\varphi,\theta|\x)&=\pi_s(\varphi|\x)\pi(\theta|\x_2,\varphi)\\
    &=\frac{\exp(-\ell_{s}(\varphi;\x))\pi(\varphi)}{p_s(\x)}\frac{\exp(-\ell(\theta;\x_2,\varphi))\pi(\theta)}{p(\x_2|\varphi)},\nonumber
\end{align}
and it is clear from the first equation that this is just a special case of the sequential-Gibbs posterior given in Definition~1 of \cite{Winter2023}.
Different choices of $\ell_{s}(\varphi;\x)$ give different variants of SMI.
\begin{itemize}
    \item If $\ell_{s}(\varphi;\x)=-\log(p(\x_1|\varphi)$ then $\pi_s(\varphi,\theta|\x)$ is the Cut model of \cite{Plummer2015}.
    \item If $\ell_{s}(\varphi;\x)=\ell_\eta(\varphi;\x_1,x_2)$ where
    \begin{equation}\label{loss:eta_varphi}
     \ell_\eta(\varphi;\x_1,x_2) = -\displaystyle\log \int p(\x_2|\varphi,\theta')^\eta \pi(\theta') d\theta' - \log p(\x_1|\varphi)\,,
    \end{equation}
    then $\pi_s(\varphi,\theta|\x)$ is the $\eta$-SMI posterior of \cite{carmona20}.
    \item If $\ell_{s}(\varphi;\x)=\ell_\gamma(\varphi;\x)$ where
    \begin{equation}\label{loss:gamma_varphi}
    \ell_\gamma(\varphi;\x)=- \displaystyle\gamma\log \int p(\x_2|\varphi,\theta') \pi(\theta') d\theta' - \log p(\x_1|\varphi)\,,
    \end{equation}
    then $\pi_s(\varphi,\theta|\x)$ is the $\gamma$-SMI posterior of \cite{Nicholls2022} (the marginalised loss posterior of \citet{frazier2023}).
    \item If $\ell_{s}(\varphi;\x)=\ell_{(\eta,\beta)}(\varphi;\x)$ where
    \begin{align}
        \ell_{(\eta,\beta)}(\varphi;\x)=\log\left(\int\exp(-\eta\ell_\beta(\varphi,\theta';\x_2))\pi(\theta')d\theta'\right)  - \log p(\x_1|\varphi)\,,
    \end{align}
    and $\ell_\beta(\varphi,\theta';\x_2)$ is the $\beta$-loss in \eqref{eq: beta loss} with $\phi\to (\varphi,\theta')$ and $\x\to \x_2$ then $\pi_s(\varphi,\theta|\x)$ is the $(\eta,\beta)$-SMI posterior. This new variant of SMI is designed to be robust to outliers in $\x_2$ following the discussion in \cite{jewson24}. 
\end{itemize}
 
These SMI-posteriors interpolate between the Cut model posterior (at $\eta=0$ and $\gamma=0$) and Bayes (at $\eta=1$ and $\gamma=1$). The $(\eta,\beta)$-SMI posterior is a larger family coinciding with the Cut model posterior at $\eta=0$ (and any $\beta$) and Bayes at $\eta=1$ in the limit $\beta\to 1$. More general Bregman-SMI posteriors may be defined by taking a loss $\ell_\beta(\varphi,\theta';\x_2)$ defined by other choices of the convex function $f$ in \eqref{loss:f}.

All these Cut-model and SMI posteriors involve intractable integrals, but the $\eta$- and $(\eta,\beta)$-posteriors may be written in terms of an auxiliary parameter and sampled using the same ``nested'' MCMC simulation \citep{Plummer2015,carmona20} used for Cut-models. The joint $\eta$-SMI posterior is 
\begin{align}\label{eq:eta-smi-joint}
\pi_\eta(\varphi,\theta',\theta|\x)&= \pi_\eta(\varphi,\theta'|\x)\ \times\ \pi(\theta|\x_2,\varphi) \\
&=\frac{\pi(\varphi)\pi(\theta')\exp(-\ell_\eta(\varphi,\theta';\x))}{c_\eta(\x)}\frac{\pi(\theta)p(\x_2|\varphi,\theta)}{p(\x_2|\varphi)},\nonumber
\intertext{where $c_\eta(\x)$ is an intractable normalising constant and}
\ell_\eta(\varphi,\theta';\x))&=-\log(p(\x_1|\varphi))-\eta\log(p(\x_2|\varphi,\theta'))\,.\label{eq:eta-smi-joint-loss}
\end{align}
Integration over $\theta'$ in \eqref{eq:eta-smi-joint} gives the $\eta$-SMI posterior for $\varphi$ and $\theta$. Similarly, the joint $(\eta,\beta)$-SMI posterior is
\begin{align}\label{eq:eta-beta-smi-joint}
\pi_{(\eta,\beta)}(\varphi,\theta',\theta|\x)&= \pi_{(\eta,\beta)}(\varphi,\theta'|\x)\ \times\ \pi(\theta|\x_2,\varphi) \\
\intertext{with}
\pi_{(\eta,\beta)}(\varphi,\theta'|\x)&=\frac{\pi(\varphi)\pi(\theta')\exp(-\ell_{(\eta,\beta)}\beta(\varphi,\theta';\x_2))}{c_{(\eta,\beta)}(\x)}\,,\nonumber
\intertext{and}
\ell_{(\eta,\beta)}(\varphi,\theta';\x_2)&=-\log(p(\x_1|\varphi))-\eta\ell_\beta(\varphi,\theta';\x_2)) \label{eq:eta-beta-smi-joint-loss}
\end{align}
Integrating over $\theta'$ in \eqref{eq:eta-beta-smi-joint} gives the $(\eta,\beta)$-SMI posterior. The marginal $p(\x_2|\varphi)$ is intractable but $\pi_s(\varphi,\theta',\theta|\x)$ can be sampled by sampling the first factor $\varphi,\theta'\sim \pi_s(\cdot|\x)$ in \eqref{eq:eta-smi-joint} or \eqref{eq:eta-beta-smi-joint} and then for each sampled $\varphi$ we sample $\theta\sim \pi(\theta|\x_2,\varphi)$. This trick doesn't work for $\gamma$-SMI.
\cite{frazier2023} use a variational approximation and below we analyse $\gamma$-SMI in cases where exact calculation or numerical evaluation of integrals (as described in Appendix~\ref{Appendix:computing-gamma-SMI}) is feasible.

When we estimate the hyperparameters $s=\eta, s=\gamma$ or $s=(\eta,\beta)$ we have a choice of loss depending on which data-module we want to predict. If $\y=(\y_1,\y_2)$ is calibration data, with a component for each module, and $z=(z_1,z_2)$ is test data, we can take $l(s;\y_i,\x)=-\log(p_s(\y_i|\x))$ with $i=1$ or $i=2$ or we can take $l(s;\y,\x)=-\log(p_s(\y|\x))$. In the following we calibrate using $\y_1$ to predict $z_1$ as this is the simplest case. The product-posterior for $s$ in \eqref{eq:GB-hyper-posterior} is $\rho_{(J,1)}(s|\y_1,\x)$ with 
\[
p_s(y_{1,j}|\x)=\int p_s(y_{1,j}|\varphi)\pi_s(\varphi|\x) d\varphi\,,
\]
for $j=1,\dots,J$ and $\pi_s(\varphi|\x)$ given in \eqref{eq:SeqGibbsPosterior}. The pooled-posterior is $\rho_{(1,J)}(s|\y_1,\x)\propto \rho(s)p_s(\y_1|\x)$.
These can be computed using the same methods we gave for the single-module case in Section~\ref{sec:bayes-for-s-main-idea}. Calibration using $\y_2$ to learn to predict $z_2$ is only slightly more complicated: for example in the product posterior $\rho_{(J,1)}(s|\y_2,\x)$ it is necessary to sample $(\varphi,\theta',\theta)$ in order to sample the marginal for $(\varphi,\theta)$ and evaluate  $p_s(y_{2,j}|\varphi,\theta)$; only $(\varphi,\theta')$ are needed for to sample the marginal for $\varphi$ and evaluate $p_s(y_{1,j}|\varphi)$.

\section{Asymptotic behaviour of SMI-posteriors} \label{subsec:seqGibbs}

We consider the behaviour of $\pi_s(\varphi|\x)$ and $\pi(\theta|\x_2,\varphi)$ in \eqref{eq:SeqGibbsPosterior} as the size $n=n_2$ of the training data set $\x$ grows with $\lim_{n\to\infty} n_1/n=\alpha$ fixed. 
One advantage of identifying SMI as a special case of sequential-Gibbs is that the convergence results from \cite{Winter2023} carry over straightforwardly for $\eta$-SMI and $(\eta,\beta)$-SMI (and $f$-SMI, see final paragraph of this section). The more general manifold setting in \cite{Winter2023} is just the full parameter space here. Most of the results here are essentially the same as those in \citet{Martin2022}, \citet{Millar21} and \cite{frazier2023}.

There are two results we add to those in \cite{frazier2023}: we give consistency and a Bernstein-Von Mises theorem which holds for $(\eta,\beta)$-SMI, subject to Assumptions \ref{A:Winter0-old}, \ref{A:Winter1-old} and \ref{A:Winter3-old} given below. Also, as Assumption~\ref{A:Winter0-old} does not hold in general for $\gamma$-SMI (the loss is a marginal), in Appendix~\ref{Appendix:gamma-is-eta} we give a weaker consistency result (Corollary \ref{cor:weak}) and show that $\gamma$-SMI and $\eta$-SMI converge as $n$ goes to infinity at fixed $\eta=\gamma$ (the Laplace approximations converge) in Corollary \ref{cor:gamma_loss_varphi}; this leads to similar posterior distributions for $\eta$ and $\gamma$ in \eqref{eq:GB-hyper-posterior} and this is visible in our experiments.

\begin{assumption}\label{A:Winter0-old}  
$\frac{1}{n}\ell_{s}(\varphi;\x) \xrightarrow{a.s.} \widetilde{\ell}_{s}(\varphi)$ and $\frac{1}{n}\ell (\theta;\varphi,\x_2)\xrightarrow{a.s.} \widetilde{\ell}(\theta;\varphi)$ for every $\varphi\in\Omega_\varphi$.
\end{assumption}

\begin{definition}
Let $\overline{\varphi}_s = \{\varphi; \nabla_\varphi \frac{1}{n}\ell_{s}(\varphi;\x)=0 \}$ and $\widetilde{\varphi}_s = \{\varphi; \nabla_\varphi \widetilde{\ell}_{s}(\varphi) =0 \}$. 
For any $\varphi\in\Omega_\varphi$,  $\overline{\theta}_{\varphi} = \{\theta; \nabla_\theta \frac{1}{n}\ell(\theta;\varphi,\x_2)=0 \} $ and $\widetilde{\theta}_{\varphi} = \{\theta; \nabla_\theta \widetilde{\ell}(\theta;\varphi)=0 \}$.
\end{definition}

This definition is used for simplicity, and the existence of a unique minimiser is not necessary for the posterior consistency in Theorem \ref{thm:strong}. Two further regularity conditions, Assumptions~\ref{A:Winter1-old}
and \ref{A:Winter3-old}, are given in Appendix~\ref{Appendix:SMI-convergence-WD-assumptions}. These are conditions on loss continuity, differentiability, its Hessian and uni-modality. The following two theorems from \cite{Winter2023} characterise the asymptotic behaviour of SMI-posteriors. Our contribution is just to observe that they apply in the multi-modular setting.

\begin{theorem}\label{thm:strong} (Theorem 2.3 of \cite{Winter2023})
If Assumptions~\ref{A:Winter0-old} and \ref{A:Winter1-old} hold and the prior probabilities over neighborhoods $N_{\varphi,\epsilon}$ and $N_{\theta,\epsilon}$ defined in Assumption~\ref{A:Winter1-old} are positive for all $\epsilon>0$ then $P^{(s)}(d((\varphi,\theta),(\widetilde{\varphi}_s,\widetilde{\theta}_{\widetilde{\varphi}_s}))<\epsilon |\x)\to 1$ almost surely.
\end{theorem}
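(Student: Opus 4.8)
The plan is to establish posterior consistency by the standard Gibbs-posterior ratio argument (a Schwartz-type test), adapted to the sequential factorisation $\pi_s(\varphi,\theta|\x)=\pi_s(\varphi|\x)\,\pi(\theta|\x_2,\varphi)$ in \eqref{eq:SeqGibbsPosterior}. Fix $\epsilon>0$, write $A_\epsilon$ for the ball of radius $\epsilon$ about $(\widetilde{\varphi}_s,\widetilde{\theta}_{\widetilde{\varphi}_s})$, and note it suffices to show $P^{(s)}(A_\epsilon^c|\x)\to 0$ almost surely. \textbf{Step 1 ($\varphi$-marginal).} I would first show $\pi_s(\varphi|\x)$ concentrates on $\widetilde{\varphi}_s$ by bounding
\[
\frac{\int_{N_{\varphi,\epsilon}^c}\exp(-\ell_{s}(\varphi;\x))\pi(\varphi)\,d\varphi}{\int \exp(-\ell_{s}(\varphi;\x))\pi(\varphi)\,d\varphi},
\]
subtracting $\ell_{s}(\widetilde{\varphi}_s;\x)$ from numerator and denominator. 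Assumption~\ref{A:Winter0-old} gives $\tfrac1n\ell_{s}(\varphi;\x)\to\widetilde{\ell}_{s}(\varphi)$ a.s., and the uni-modality/Hessian part of Assumption~\ref{A:Winter1-old} yields a separation $\inf_{\varphi\in N_{\varphi,\epsilon}^c}\widetilde{\ell}_{s}(\varphi)-\widetilde{\ell}_{s}(\widetilde{\varphi}_s)\ge\delta>0$. Hence the numerator integrand is eventually $O(e^{-n\delta/2})$ off the neighborhood, while positivity of the prior on a small ball about $\widetilde{\varphi}_s$ together with local continuity of $\widetilde{\ell}_{s}$ bounds the denominator below by $e^{-o(n)}$ times a positive prior mass, so the ratio is $O(e^{-n\delta/4})\to0$ a.s.

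\textbf{Step 2 ($\theta$-conditional).} For each fixed $\varphi$, the identical argument applied to $\pi(\theta|\x_2,\varphi)\propto\exp(-\ell(\theta;\x_2,\varphi))$, using the second limit in Assumption~\ref{A:Winter0-old} and the corresponding regularity in Assumption~\ref{A:Winter1-old}, shows concentration of $\theta$ on $\widetilde{\theta}_{\varphi}$. The delicate point is that I need this \emph{uniformly} for $\varphi$ in a shrinking neighborhood of $\widetilde{\varphi}_s$, since in the joint posterior $\theta$ is drawn conditionally on $\varphi\sim\pi_s(\cdot|\x)$. I would obtain uniformity from the Hessian/continuity conditions in Assumption~\ref{A:Winter1-old}, which make the separation constant and the local lower bound continuous in $\varphi$ and hence uniform on a compact neighborhood of $\widetilde{\varphi}_s$.

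\textbf{Step 3 (combination and continuity).} I would decompose
\[
P^{(s)}(A_\epsilon|\x)=\int \pi_s(\varphi|\x)\Big(\int_{\{\theta:(\varphi,\theta)\in A_\epsilon\}}\pi(\theta|\x_2,\varphi)\,d\theta\Big)d\varphi .
\]
By Step~1 the outer integral localises $\varphi$ to within $O(\epsilon)$ of $\widetilde{\varphi}_s$; on that set, continuity of the map $\varphi\mapsto\widetilde{\theta}_{\varphi}$ (from differentiability of $\widetilde{\ell}$ and positive-definiteness of its $\theta$-Hessian, via the implicit function theorem) places $\widetilde{\theta}_{\varphi}$ within $O(\epsilon)$ of $\widetilde{\theta}_{\widetilde{\varphi}_s}$; and by Step~2 the inner integral places mass $1-o(1)$ on $\theta$ within $O(\epsilon)$ of $\widetilde{\theta}_{\varphi}$. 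A triangle inequality then forces $(\varphi,\theta)\in A_\epsilon$ with probability tending to one, giving $P^{(s)}(A_\epsilon|\x)\to1$ a.s.

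\textbf{Main obstacle.} I expect the crux to be the uniform $\theta$-concentration in Step~2 coupled with the continuity of $\varphi\mapsto\widetilde{\theta}_{\varphi}$ in Step~3: the $\theta$-update is indexed by the random $\varphi$, so pointwise-in-$\varphi$ concentration is not enough and one must control the conditional law uniformly over a neighborhood shrinking to $\widetilde{\varphi}_s$ while ensuring the target $\widetilde{\theta}_{\varphi}$ moves continuously to $\widetilde{\theta}_{\widetilde{\varphi}_s}$. Since the result is quoted as Theorem~2.3 of \cite{Winter2023}, the remaining work is to verify that Assumptions~\ref{A:Winter0-old} and \ref{A:Winter1-old} in the present multi-modular parameterisation are precisely the hypotheses of that theorem, which is the identification made in Section~\ref{sec:SMI}.
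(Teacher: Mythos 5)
Your sketch is a sound reconstruction of the standard sequential-Gibbs consistency argument, but be aware that the paper itself contains no proof of this statement: it is imported verbatim as Theorem~2.3 of \cite{Winter2023}, and the authors' only claim of novelty is the observation that the SMI losses fit the sequential-Gibbs template, so that the hypotheses of Winter's theorem are exactly Assumptions~\ref{A:Winter0-old} and \ref{A:Winter1-old} with the SMI loss substituted in. You correctly identify this at the end of your proposal. Two points where your sketch leans on machinery that is neither available nor needed under the stated hypotheses. First, the separation constants you want in Step~1 and Step~2 are not derived from Hessian or uni-modality conditions --- those live in Assumption~\ref{A:Winter3-old}, which is a hypothesis of the BVM result (Theorem~\ref{thm2}) but \emph{not} of this consistency theorem; instead, Assumption~\ref{A:Winter1-old} directly postulates $\liminf_n \inf_{\varphi\in N^c_{\varphi,\epsilon}}\bigl(\tfrac1n\ell_s(\varphi;\x)-\widetilde{\ell}_s(\widetilde{\varphi}_s)\bigr)>0$ on the empirical loss, which is exactly what the ratio bound needs. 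Second, the "main obstacle" you flag --- uniform conditional concentration of $\theta$ over a neighborhood of $\widetilde{\varphi}_s$ together with continuity of $\varphi\mapsto\widetilde{\theta}_\varphi$ --- is dissolved by the way Assumption~\ref{A:Winter1-old}(ii) is phrased: the neighborhood $N_{\theta,\epsilon}$ is centred at $\widetilde{\theta}_{\widetilde{\varphi}_s}$ (not at $\widetilde{\theta}_\varphi$) and the separation is assumed uniformly in $\varphi\in N_{\varphi,\delta}$, so the conditional posterior is forced onto a neighborhood of the \emph{fixed} target $\widetilde{\theta}_{\widetilde{\varphi}_s}$ without any appeal to the implicit function theorem. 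With those substitutions your three-step decomposition goes through and is, in essence, the proof given in \cite{Winter2023}.
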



\begin{theorem}\label{thm2}
(Theorem 2.4 of \cite{Winter2023}) Suppose that $\varphi\sim \pi_{s}(\varphi|\x)$ and $\theta \sim \pi(\theta|\varphi,\x_2)$. The priors $\pi(\varphi)$ and $\pi(\theta)$ are continuous and strictly positive at $\widetilde{\varphi}_s$ and $\widetilde{\theta}_{ \widetilde{\varphi}_s}$. If Assumptions \ref{A:Winter0-old} and \ref{A:Winter3-old} hold then, $p(\sqrt{n}(\varphi-\overline{\varphi}_s)) \xrightarrow{t.v.} N(0,H_\varphi^{-1})$ and $p(\sqrt{n}(\theta-\overline{\theta}_{\varphi})) \xrightarrow{t.v.} N(0,H_\theta^{-1})$. 
\end{theorem}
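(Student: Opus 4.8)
The plan is to establish a Bernstein--von Mises result separately for each of the two factors in the sequential-Gibbs factorisation \eqref{eq:SeqGibbsPosterior}, exploiting the fact that the joint posterior splits into a marginal $\pi_s(\varphi|\x)$ and a conditional $p(\theta|\varphi,\x_2)$. Because the Gaussian limits are centred at the \emph{empirical} minimisers $\overline{\varphi}_s$ and $\overline{\theta}_{\varphi}$ (where the relevant loss gradients vanish by definition), there is no need for a central limit theorem on a score; the argument reduces to controlling the shape of each posterior around its own mode, and it is the classical localise--expand--rescale--Scheff\'e template applied twice.

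For the $\varphi$ factor, first I would use the consistency result of Theorem~\ref{thm:strong} to confine posterior mass to a shrinking ball about $\widetilde{\varphi}_s$; since Assumption~\ref{A:Winter0-old} together with the regularity in Assumption~\ref{A:Winter3-old} gives $\overline{\varphi}_s\to\widetilde{\varphi}_s$, the same holds about $\overline{\varphi}_s$. Changing variables to $u=\sqrt{n}(\varphi-\overline{\varphi}_s)$, I would Taylor-expand the loss difference $\ell_s(\overline{\varphi}_s+u/\sqrt{n};\x)-\ell_s(\overline{\varphi}_s;\x)$: the linear term vanishes because $\overline{\varphi}_s$ is a stationary point, the quadratic term equals $\tfrac12 u^\top[\tfrac1n\nabla^2_\varphi\ell_s(\overline{\varphi}_s;\x)]u$ and converges to $\tfrac12 u^\top H_\varphi u$ with $H_\varphi=\nabla^2_\varphi\widetilde{\ell}_s(\widetilde{\varphi}_s)$, while the cubic remainder is $O(n^{-1/2}|u|^3)$ uniformly on compacts by the Hessian-regularity part of Assumption~\ref{A:Winter3-old} and so vanishes. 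Continuity and positivity of the prior at $\widetilde{\varphi}_s$ give $\pi(\overline{\varphi}_s+u/\sqrt{n})\to\pi(\widetilde{\varphi}_s)$, a positive constant that cancels in the normalisation. Hence the rescaled density converges pointwise to the $N(0,H_\varphi^{-1})$ density; combining this with the uniform tail bound supplied by consistency and invoking Scheff\'e's lemma upgrades the convergence to total variation, which is the assertion $p(\sqrt{n}(\varphi-\overline{\varphi}_s))\xrightarrow{t.v.}N(0,H_\varphi^{-1})$.

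For the $\theta$ factor the conditional $p(\theta|\varphi,\x_2)\propto\exp(-\ell(\theta;\varphi,\x_2))p(\theta)$ is an ordinary (learning-rate-one) Gibbs posterior, so the identical expansion around $\overline{\theta}_{\varphi}$---using the second part of Assumption~\ref{A:Winter0-old}, the Hessian $H_\theta$, and positivity of $p(\theta)$ at $\widetilde{\theta}_{\widetilde{\varphi}_s}$---yields $p(\sqrt{n}(\theta-\overline{\theta}_{\varphi}))\xrightarrow{t.v.}N(0,H_\theta^{-1})$. The extra subtlety here is that this conditional statement is made at a $\varphi$ that is itself drawn from the consistent first-stage posterior, so the expansion must hold uniformly as $\varphi$ ranges over a neighbourhood of $\widetilde{\varphi}_s$; the differentiability and uni-modality conditions packaged into Assumption~\ref{A:Winter3-old} ensure that $\varphi\mapsto(\overline{\theta}_{\varphi},H_\theta)$ is regular enough for this.

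I expect the main obstacle to be precisely this passage from weak to total-variation convergence and the uniformity it demands: one must dominate the rescaled posterior densities by an integrable function and rule out escaping mass, which relies on the consistency bound of Theorem~\ref{thm:strong} together with the Hessian and differentiability conditions of Assumption~\ref{A:Winter3-old}, and for the $\theta$ factor on the regular dependence of its mode and curvature on the random $\varphi$. I note finally that, because \eqref{eq:SeqGibbsPosterior} is by construction an instance of the sequential-Gibbs posterior of \cite{Winter2023}, this entire argument is subsumed by their Theorem~2.4; in the present framing the proof therefore amounts to verifying that the multi-modular SMI losses satisfy Assumptions~\ref{A:Winter0-old} and \ref{A:Winter3-old} and then invoking that result.
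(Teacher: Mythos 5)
Your proposal is correct and, in its conclusion, takes exactly the route the paper does: the paper offers no independent proof of this statement, since it is imported verbatim as Theorem~2.4 of \cite{Winter2023}, the whole contribution being the observation that the SMI posterior \eqref{eq:SeqGibbsPosterior} is an instance of their sequential Gibbs posterior so that the result applies once Assumptions~\ref{A:Winter0-old} and \ref{A:Winter3-old} are checked for the SMI losses. Your localise--expand--rescale--Scheff\'e sketch, including the uniformity-in-$\varphi$ subtlety for the conditional $\theta$-factor, is a faithful reconstruction of what the cited proof does, so nothing further is needed.
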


Theorem \ref{thm2} shows the posterior distribution is normally approximated for large $n$. Here $\varphi$ and $\theta$ are independent asymptotically. Intuitively $\varphi$ concentrates at $\widetilde{\varphi}_s$ and the conditional posterior $\pi(\theta|\x_2,\varphi)$ is similar to $\pi(\theta|\x_2,\widetilde{\varphi}_s)$ for large $n$. Asymptotically $\sqrt{n}(\varphi-\overline{\varphi}_s) \approx N(0,H_\varphi^{-1})$ and $\sqrt{n}(\theta_\varphi-\overline{\theta}_{\varphi}) \approx N(0,H_\theta^{-1})$. This result holds for each sequential-Gibbs posterior, and as \citet{Winter2023} and \cite{frazier2023} remark, a similar convergence does not necessarily hold for the joint Gibbs posterior.

Unlike the $\gamma$-SMI posterior, the $(\eta,\beta)$ and $\eta$-SMI posterior distributions can be given jointly with an auxiliary variable $\theta'$ as in \eqref{eq:eta-smi-joint} and \eqref{eq:eta-beta-smi-joint}. The large $n$ behaviour of $\pi_s(\varphi,\theta'|\x)$ is derived straightforwardly in Corollary \ref{cor:aux_loss_joint} below, with $\ell_{s}(\varphi,\theta';\x)=\ell_{\eta}(\varphi,\theta';\x)$ in \eqref{eq:eta-smi-joint-loss} or $\ell_{s}(\varphi,\theta';\x)=\ell_{(\eta,\beta)}(\varphi,\theta';\x)$ in \eqref{eq:eta-beta-smi-joint-loss} substituted for $\ell_{s}(\varphi;\x)$ in Assumptions~\ref{A:Winter0-old}, \ref{A:Winter1-old} and \ref{A:Winter3-old}. These losses are sums over independent data components so Assumption~\ref{A:Winter0-old} is satisfied. For the $\eta$-SMI posterior with $n_1$ and $n_2$ both $O(n)$, 
$n^{-1}\ell_\eta(\varphi,\theta';\x_1,\x_2)\xrightarrow{a.s.}\widetilde{\ell}_\eta(\varphi,\theta')$ 
where 
\[
\widetilde{\ell}_\eta(\varphi,\theta') = -\eta \mathbb{E}_{p^*_{x_2}}[\log p(x_2|\varphi,\theta')]-\alpha \mathbb{E}_{p^*_{x_1}}[\log p(x_1|\varphi)].\] 
Similarly for $(\eta,\beta)$-SMI using the loss in \eqref{eq:eta-beta-smi-joint-loss},
$n^{-1}\ell_{(\eta,\beta)}(\varphi,\theta';\x) \xrightarrow{a.s.}\widetilde{\ell}_{(\eta,\beta)}(\varphi,\theta')$ where \[\widetilde{\ell}_{(\eta,\beta)}(\varphi,\theta')=-\alpha \mathbb{E}_{p^*_{x_1}} [\log p(x_1|\varphi)]
-\frac{\eta}{\beta-1} E_{p^*_{x_2}}\!\left(p(x_2|\varphi,\theta')^{\beta-1}\right) + \frac{\eta}{\beta}\int \! p(x_2|\varphi,\theta')^\beta dx_2\,.
\]
For the two cases $s=\eta$ and $s=(\eta,\beta)$, let $(\overline{\varphi}_s,\overline{\theta}'_s)=\{ (\varphi,\theta') ;  \nabla_{\varphi,\theta'} \frac{1}{n} \ell_{s}(\varphi,\theta';\x)=0\}$ and $(\widetilde{\varphi}_s,\widetilde{\theta}'_s)=\{ (\varphi,\theta') ; \nabla_{\varphi,\theta'} \widetilde{\ell}_{s}(\varphi,\theta')=0\}$. 

\begin{corollary}\label{cor:aux_loss_joint}
If Assumptions \ref{A:Winter0-old} and \ref{A:Winter1-old} hold with $\ell_{s}(\varphi,\theta';\x)$ substituted for $\ell_{s}(\varphi;\x)$ then (1) it holds that \[P^{(s)}(d((\varphi,\theta',\theta),(\widetilde{\varphi}_s,\widetilde{\theta}'_s,\widetilde{\theta}_\varphi))<\epsilon|\x) \xrightarrow{a.s.} 1.\] 
If $(\varphi,\theta') \sim \pi_{s}(\varphi,\theta'|\x)$, $\theta|\varphi \sim \pi(\theta|Y,\varphi)$, Assumption~\ref{A:Winter3-old} holds, and the prior is positive at $(\widetilde{\varphi}_s,\widetilde{\theta}'_s)$ and continuous then (2) it holds that \[p(\sqrt{n}((\varphi,\theta') -  (\overline{\varphi}_s,\overline{\theta}'_s))|\x) \xrightarrow{t.v.} N(0,H^{-1}_{\varphi,\theta'})\,,\] where $H_{\varphi,\theta'}= \nabla^2_{\varphi,\theta'} \widetilde{\ell}_{s}(\widetilde{\varphi}_s,\widetilde{\theta}'_s)$. 
This implies (3) the convergence of its marginals, $p(\sqrt{n}(\varphi-\overline{\varphi}_s)) \xrightarrow{d} N(0,H^{-1}_{\varphi}) $ where $H^{-1}_\varphi$ is the first $d_\varphi$ rows and columns of $H^{-1}_{\varphi,\theta'}$. 
\end{corollary}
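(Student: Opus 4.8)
The plan is to treat the augmented pair $(\varphi,\theta')$ as the first-stage parameter of a sequential Gibbs posterior and then invoke Theorems~\ref{thm:strong} and~\ref{thm2} essentially verbatim. The key observation is that, by \eqref{eq:eta-smi-joint} and \eqref{eq:eta-beta-smi-joint}, the joint posterior factors as $\pi_s(\varphi,\theta'|\x)\,p(\theta|\x_2,\varphi)$, which is exactly the two-stage structure of Definition~1 of \cite{Winter2023} with $(\varphi,\theta')$ in the role of the Gibbs-updated parameter and $\theta$ in the role of the Bayes-updated parameter. Thus claims (1) and (2) follow once we check that the hypotheses of Theorems~\ref{thm:strong} and~\ref{thm2} hold for the enlarged parameter.

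First I would verify Assumption~\ref{A:Winter0-old} for the augmented loss. Since $\ell_\eta(\varphi,\theta';\x)$ and $\ell_{(\eta,\beta)}(\varphi,\theta';\x)$ are sums over independent data components in $\x_1$ and $\x_2$, the strong law supplies the almost-sure limits $\widetilde{\ell}_\eta(\varphi,\theta')$ and $\widetilde{\ell}_{(\eta,\beta)}(\varphi,\theta')$ displayed above, so Assumption~\ref{A:Winter0-old} holds with $(\varphi,\theta')$ replacing $\varphi$. The regularity conditions Assumptions~\ref{A:Winter1-old} and~\ref{A:Winter3-old} (smoothness, a positive-definite Hessian $H_{\varphi,\theta'}=\nabla^2_{\varphi,\theta'}\widetilde{\ell}_s(\widetilde{\varphi}_s,\widetilde{\theta}'_s)$ at the limiting minimiser, and uni-modality) are assumed directly in the corollary. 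With these in hand, Theorem~\ref{thm:strong} yields the joint consistency of claim (1), and Theorem~\ref{thm2}, using the stated positivity and continuity of the prior at $(\widetilde{\varphi}_s,\widetilde{\theta}'_s)$, yields the Bernstein--von Mises limit $p(\sqrt{n}((\varphi,\theta')-(\overline{\varphi}_s,\overline{\theta}'_s)))\xrightarrow{t.v.} N(0,H^{-1}_{\varphi,\theta'})$ of claim (2).

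Claim (3) is then elementary: total-variation convergence is preserved under the marginalisation map that projects $(\varphi,\theta')$ onto its $\varphi$-coordinates, and the $\varphi$-marginal of $N(0,H^{-1}_{\varphi,\theta'})$ is the Gaussian whose covariance is the leading $d_\varphi\times d_\varphi$ block of $H^{-1}_{\varphi,\theta'}$, which is denoted $H^{-1}_\varphi$; hence $p(\sqrt{n}(\varphi-\overline{\varphi}_s))\xrightarrow{d} N(0,H^{-1}_\varphi)$. I expect the only non-routine step to be checking the regularity assumptions for the two specific losses --- in particular that the limiting Hessian $H_{\varphi,\theta'}$ is positive definite and the limiting loss is uni-modal in $(\varphi,\theta')$ --- since the $\theta'$ block of the $(\eta,\beta)$ loss involves the $\beta$-divergence term and couples $\varphi$ and $\theta'$ through $p(x_2|\varphi,\theta')$. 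Once positive-definiteness of $H_{\varphi,\theta'}$ is confirmed, the marginalisation giving (3) is immediate, with the one cautionary point that $H^{-1}_\varphi$ is a sub-block of the inverse Hessian, not the inverse of a sub-block.
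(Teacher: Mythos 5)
Your proposal is correct and follows essentially the same route as the paper: the paper's proof likewise just invokes Theorem~\ref{thm:strong} for (1) and Theorem~\ref{thm2} for (2) with $(\varphi,\theta')$ in place of $\varphi$, and obtains (3) by marginalising the joint limit. Your additional remarks (verifying Assumption~\ref{A:Winter0-old} via the sum structure of the losses, and the caution that $H^{-1}_\varphi$ is a sub-block of the inverse rather than the inverse of a sub-block) match material the paper places in the surrounding text rather than in the proof itself.
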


\begin{proof} Under the assumptions, (1) holds by Theorem~\ref{thm:strong} and (2) holds by Theorem~\ref{thm2} so
(3) holds as weak convergence of the joint gives weak convergence of marginals.
\end{proof}

Asymptotically, the marginal $\eta$-SMI posterior $\pi_\eta(\varphi|\x)$ is normally approximated, with its mean given by the empirical loss minimiser $\bar{\varphi}_\eta$ and this converges to $\widetilde{\varphi}_\eta$ satisfying $\nabla_{\varphi,\theta'} \widetilde{\ell}_\eta(\widetilde{\varphi}_\eta,\widetilde{\theta}'_\eta)=0$. The equivalent result holds for 
$\pi_\eta(\varphi,\theta'|\x)$. In fact, if we replace $\ell_\beta$ in the $(\eta,\beta)$-SMI loss for $(\varphi,\theta')$ in \eqref{eq:eta-beta-smi-joint-loss} with any loss $\ell_f$ of the form \eqref{loss:f} (ie, derived from a Bregman divergence $D_f$ with $f$ having a finite dimensional parameterisation) then Assumption~\ref{A:Winter0-old} will hold. If Assumptions~\ref{A:Winter1-old} and \ref{A:Winter3-old} also hold then Corollary~\ref{cor:aux_loss_joint} will hold for this ``$f$-SMI''. 


\section{Simulation study}\label{sec:simulation-examples}

We report experiments with two simulated data sets, a normal mixture model in Section \ref{sim:normal} and a state space model in Section~\ref{sec:ssm}. These examples illustrate convergence of $\rho(s|\y,\x)$ as set out in Section~\ref{sec:bayes_s}, generally superior performance of SMI over Bayes and Cut-model belief updates (in risk estimated on test data) and best results from $(\eta,\beta)-SMI$. We illustrate the methods on real data in Sections~\ref{sec:hpv} (a small epidemiological data set) and \ref{sec:edisc} (a realistic text classification task). We begin by setting out the estimators we use to summarise the posterior and the criteria we use to compare them.

\subsection{Hyperparameter estimation and evaluation criteria}\label{sec:hyperparam-estimation-and-R}

In practice, the calibration data is obtained by splitting the data into training $\x$ and calibration $\y$. This is discussed in Appendix~\ref{Appendix:normal4-TC-split}. Once we have an estimate $\hat s$ for $s^*$ we pool the training and calibration data sets and go forward with $\pi_{\hat s}(\phi|\x,\y)$ using all the data. We are assuming that if $\hat s$ adjusts the predictive to make $p_{\hat s}(\cdot|\x)$ match $p^*(\cdot)$ then $\hat s$ will also adjust $p_{\hat s}(\cdot|\x,\y)$ to match $p^*(\cdot)$. This assumption is supported by experiments reported in Appendices~\ref{sec:normal3} and  \ref{Appendix:risk-ratios-state-space} (see Figures~\ref{fig:normal_b_gamma_eta_prod_pred} and \ref{fig:diff_eta}).

We consider five $s$-estimators: (i) the posterior mean $\hat s_{(J,1)}=E_{s\sim \rho_{(J,1)}(\cdot|\y;\x)}(s)$, (ii) the loss-minimiser $\bar{s}_{(J,1)}=\arg\min_s l_{(J,1)}(s;\y,\x)$, (iii) the WAIC $\eta_{\rm \scriptscriptstyle WAIC}$ and (iv) the Bayes estimator
$\widehat{s}_{KL(J,1)}$ minimising the KL divergence between posterior-predictive distributions at $\eta^*$ and $\eta$. 
This is defined in Appendix~\ref{Appendix:s-KL-eta-beta} where we give a Monte Carlo estimator. In Appendix~\ref{Appendix:KL-harmonic} we show using a normal approximation to power posterior for $\phi$ that the KL-estimate for $\eta$ is approximately the harmonic mean, $\widehat{\eta}_{KL(J,1)}\simeq \widehat{s}_{hm(J,1)}=1/E_{\eta\sim \rho_{(J,1)}(\cdot|\y;\x)}(1/\eta)$ when $n$ and $J$ are large; this is the fifth estimator.  
The pooled loss $l_{(1,J)}$ leads to four corresponding estimators $\hat s_{(1,J)}$, 
$\bar{s}_{(1,J)}$, 
$\widehat{s}_{KL(1,J)}$ and $\widehat{s}_{hm(1,J)}$. The WAIC doesn't make sense for the pooled case -- it would correspond to an ELJPD with a single observation, the entire data set.

When we compare different methods for estimating $s$ and updating belief in Sections~\ref{sim:normal}, \ref{sec:ssm}, \ref{sec:edisc} and Appendix~\ref{sec:hpv}, we compute a risk ratio $R(\hat s_1,\hat s_2)$ estimated on test data. If we have $J_z$ test replicates from $p^*$ and $\z=(z_1,\dots,z_{J_z})$ then
\begin{equation}\label{eq:s_relative}
    R_J(\hat s_1,\hat s_2) = \mathbb{E}_{\{z_{j}\}^{J_z}_{j=1} \sim p^*} \left[\prod_{j=1}^{J_z}\frac{p_{\hat s_1}(z_{j}|\x,\y)}{p_{\hat s_2}(z_{j}|\x,\y)}\right] \,.
\end{equation}
When we compare pooled-loss estimators we use a different measure,
\begin{equation}\label{eq:s_relative_pooled}
    R_1(\hat s_1,\hat s_2) = \mathbb{E}_{\{z_{j}\}^{J_z}_{j=1} \sim p^*} \left[\frac{p_{\hat s_1}(\z|\x,\y)}{p_{\hat s_2}(\z|\x,\y)}\right] \,,
\end{equation}
as the inferential objective is different. Here $R_1$ is the average intrinsic Bayes factor \citep{Berger1996}.
The expectations in \eqref{eq:s_relative} and \eqref{eq:s_relative_pooled} are estimated using Monte Carlo and if $R(\hat s_1,\hat s_{2})>1$ then $\hat s_1$ is favored. The ratios condition on $\x$ \emph{and} $\y$ (the full data set) to predict test data although $\hat s_1$ will be estimated to predict calibration data off $\x$ alone. We split the data to get $\y$ and want to compare using a measure which favors methods like Bayesian inference which do not split data. 


\subsection{Normal Mixture example}\label{sim:normal}

Following the framework in Figure \ref{fig:Multimodular_model}, we set up an example in which the $\x_1$-module is well-specified and the $\x_2$-module is misspecified due to a small group of outlying observations. In this section we take a simple version of $(\eta,\beta)$-SMI with $\eta=1$ which we refer to $\beta$-SMI. The loss for $\varphi$ controls the feedback from the $\x_2$-module and interpolates the cut and conventional posteriors. We consider $\eta$, $\gamma$ and $\beta$-losses. 

The setup is adapted from \cite{jewson24} where it was used to illustrate the efficacy of inference with a $\beta$-loss. See Figure~\ref{fig:normal_true}. Let $x_i$ be a generic component of $\x_i,\ i=1,2$. The true data generative models are $p^*_1(x_1)=N(x_1;\varphi_A^*, \sigma^2_1)$ and \[p^*_2(x_2)=\lambda^* N(x_2;\varphi_A^*,\sigma_2^{2}) + (1-\lambda^*) N(x_2;\theta^*,\sigma_2^{2}),\] all jointly independent. Here $\sigma^2_1=4^2, \varphi_A^*=0,  \theta^*=6,\sigma_2^{2}=1$ and varying values of $\lambda^*$ are considered.
The fitted models miss the mixture component with mean $\theta^*$, taking $p(x_1|\varphi) = N(x_1;\varphi,\sigma^2_1)$ and $p(x_2|\varphi,\theta) = N(x_2;\varphi+\theta,\sigma^2_2)$. Here $\sigma_1$ and $\sigma_2$ are fixed to the true values. A uniform prior is assigned for $\varphi$ and $\theta\sim N(0,s^2_\theta)$ with $s^2_\theta=0.33^2$. 

We explore $s$-estimation for $\eta,\gamma$ and $\beta$-SMI.
We reparameterise the $\beta$-SMI posterior using $b=1/\beta$ values so increasing $b$ gives a similar effect to increasing $\eta$. 
Interpolation between the cut and conventional posterior distributions motivates restricting $\eta$ and $\gamma$ to the unit interval. The range of $b$ is not restricted, but upper bounded using a bound chosen from preliminary numerical experiments to support the optimal values we actually see; the hyperparameter priors $\rho(s)$ are $\gamma\sim U[0,1]$, $\eta\sim U[0,1]$ and $b\sim U[0,3]$ (upper bound chosen to cover support of $b$-posteriors in all experiments). Simulation details for the $\eta$, $\gamma$ and $\beta$-SMI posteriors, including explicit expressions for losses, are summarised in Appendix \ref{Appendix:Normal}. 

 Misspecification arises due to the discrepancy between $p^*_2$ and the parametric model $p(x_2|\varphi,\theta)$. As $\lambda^*$ decreases, the degree of misspecification decreases as the unmodeled outlier component goes away. However, when $\lambda^*=1$, the model is over-parameterised. This leads to interesting behaviour in the $\beta$-SMI posterior with the optimal $\beta<1$ ($b>1$, so up-weighting rather than down-weighting the information from the data). Given the training data $\x=\{\x_1,\x_2\}$, the calibration data for the loss-hyperparameter $s=\eta,\gamma$ or $s=b$ is taken from the $\x_1$-module, so we have $\y=(y_{1,1},\dots,y_{1,J})$ with $y_{1,j} \sim p^*_1$ iid for $j=1,\dots,J$ (here and below we drop the $1$ subscript on $\y$ as we always calibrate using the well-specified first module).

Three studies are carried out: $s$ estimation with different levels of misspecification in Appendix~\ref{sec:normal1}, asymptotic property validation in Section~\ref{sec:normal2}, and an expected risk ratio comparison using test data in Appendix \ref{sec:normal3}. 

\subsubsection{Asymptotic property validation}\label{sec:normal2}

We illustrate convergence per Section~\ref{sec:bayes_s} by taking different values of $J$, the dimension of the calibration data $\y$. The training data $\x=\{\x_{1,1:n_1},\x_{2,1:n_2}\}$ has $n_1=30$ and $n_2=60$ samples. The misspecification parameter is set at $\lambda^*=0.9$ (ten percent outliers).

The $J$-dependence of the posterior densities for $b$, $\gamma$ and $\eta$ are illustrated in Figure \ref{fig:normal_loss_b_gam_eta}. 
In Appendix~\ref{Appendix:s-post-asym-further} we plot equivalent graphs for $\lambda^*=0.99$ and $\lambda^*=0.1$. These experiments report posteriors for just a single fixed training data set $\x$. Experiments in Appendix~\ref{sec:normal1} across many replicates of $\x$ show similar behaviour. 
\begin{figure}[ht]
    \centering
    {\begin{overpic}[ width=.27\linewidth,height=0.17\textheight ]
    {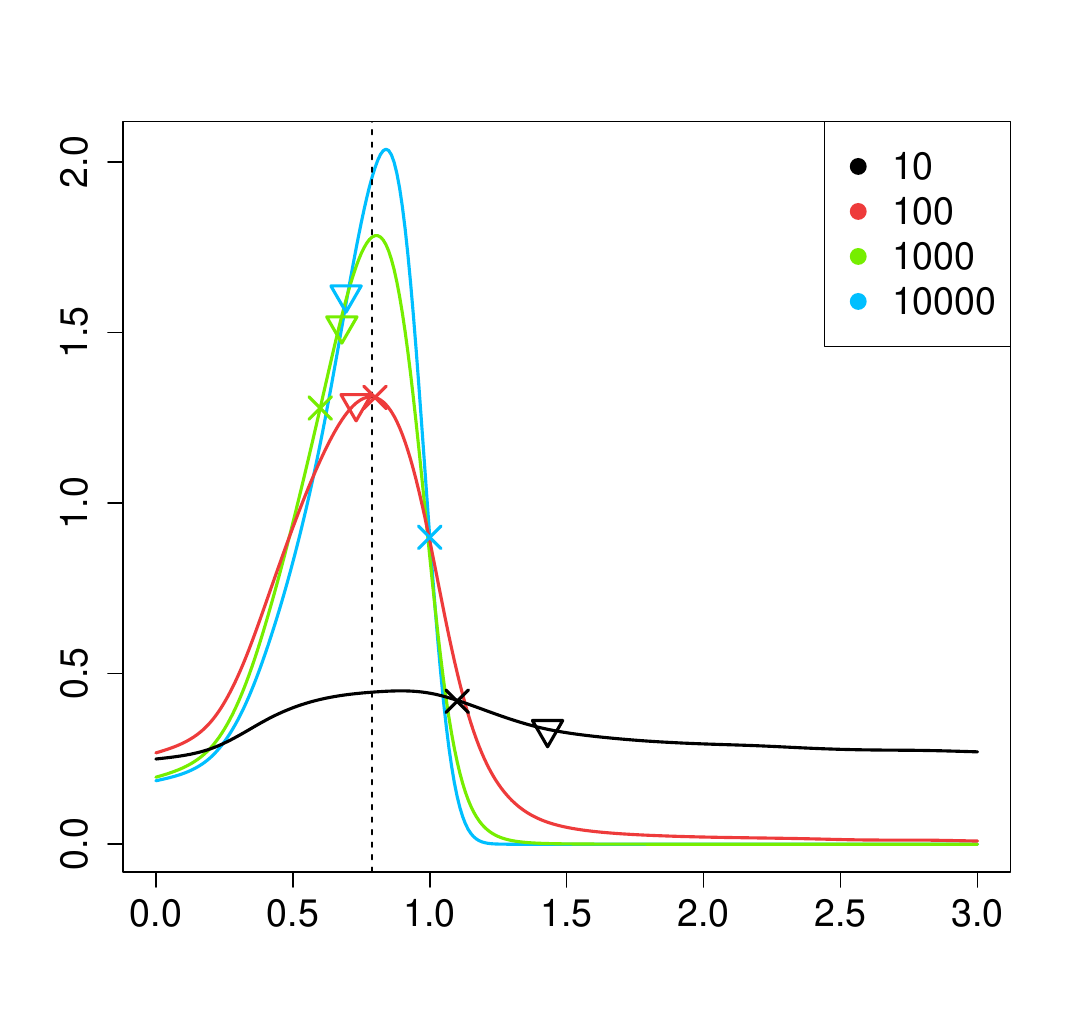}  
      \put(-5,20){\rotatebox{90}{\tiny $\rho_{(1,J)}(b|\y,\x)$}}
      \put(40,0){\tiny b values}
    \end{overpic}}
    \hspace{0.3cm}
    \begin{overpic}[ width=.27\linewidth,height=0.17\textheight ]{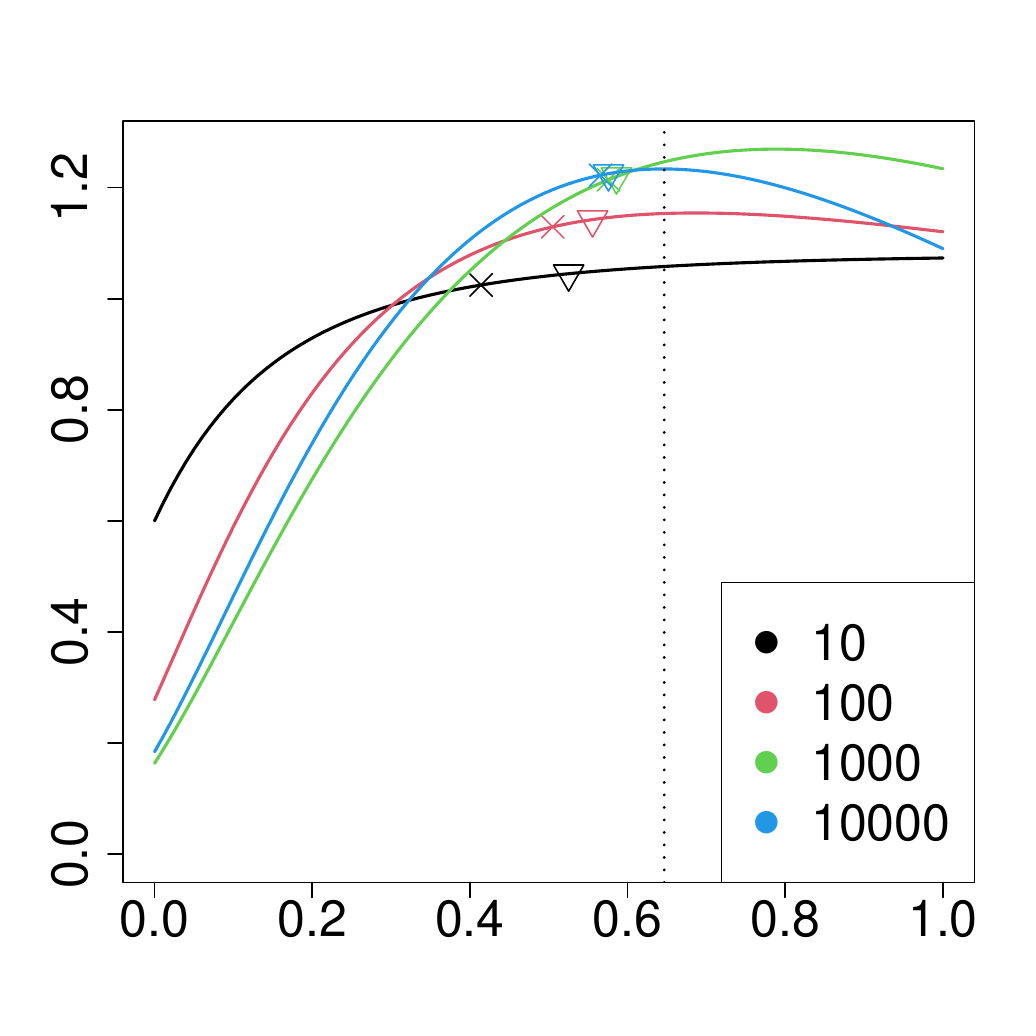}  
     \put(40,0){\tiny $\gamma$ values}  \put(-5,25){\rotatebox{90}{\tiny $\rho_{(1,J)}(\gamma|\y,\x)$}}
    \end{overpic}
    \hspace{0.3cm}
    \begin{overpic}[ width=.27\linewidth,height=0.17\textheight ]{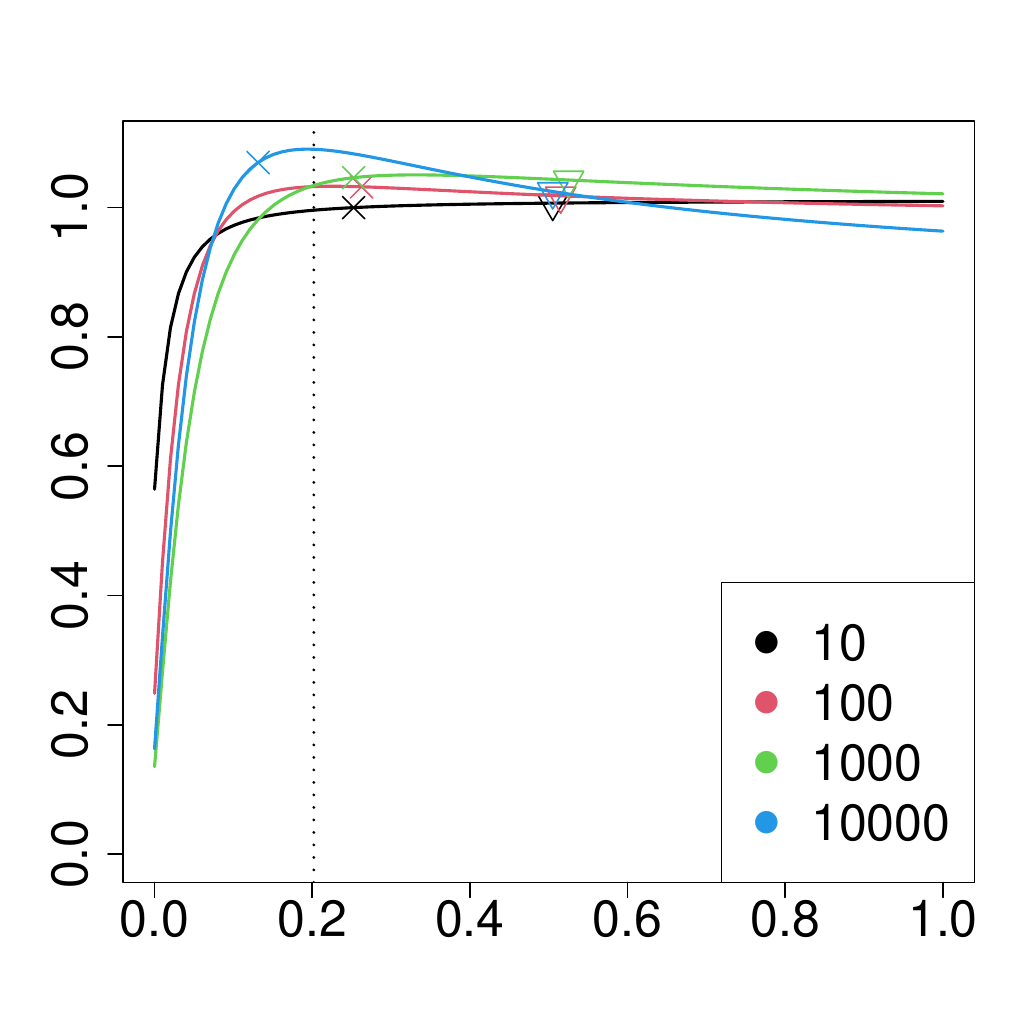}  
     \put(40,0){\tiny $\eta$ values}  \put(-5,25){\rotatebox{90}{\tiny $\rho_{(1,J)}(\eta|\y,\x)$}}
    \end{overpic}
    \vspace{0.5cm}
    {\begin{overpic}[ width=.28\linewidth,height=0.17\textheight ]{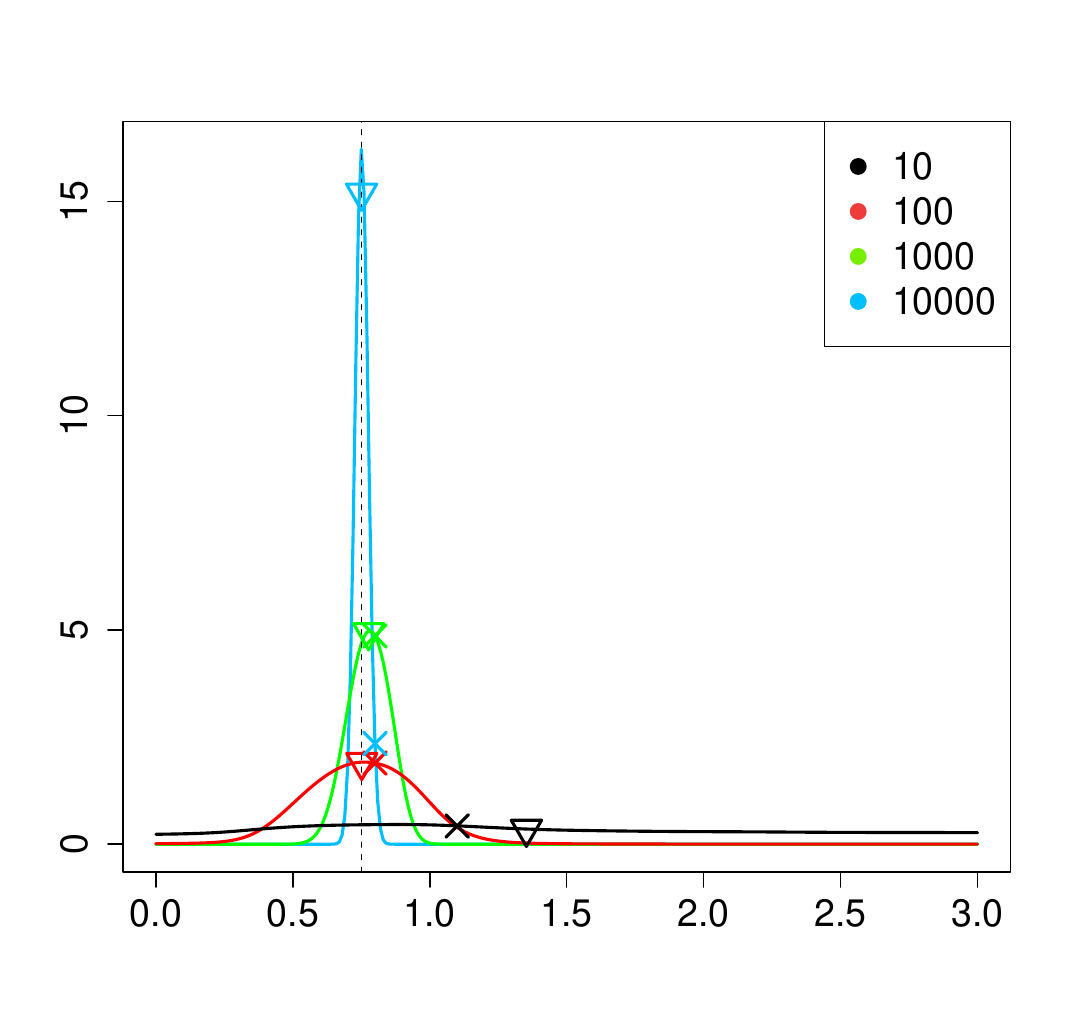}  
      \put(-5,20){\rotatebox{90}{\tiny $\rho_{(J,1)}(b|\y,\x)$}}
      \put(40,0){\tiny b values}
    \end{overpic}}
     \hspace{0.3cm}
    \begin{overpic}[width=.27\linewidth,height=0.17\textheight ]{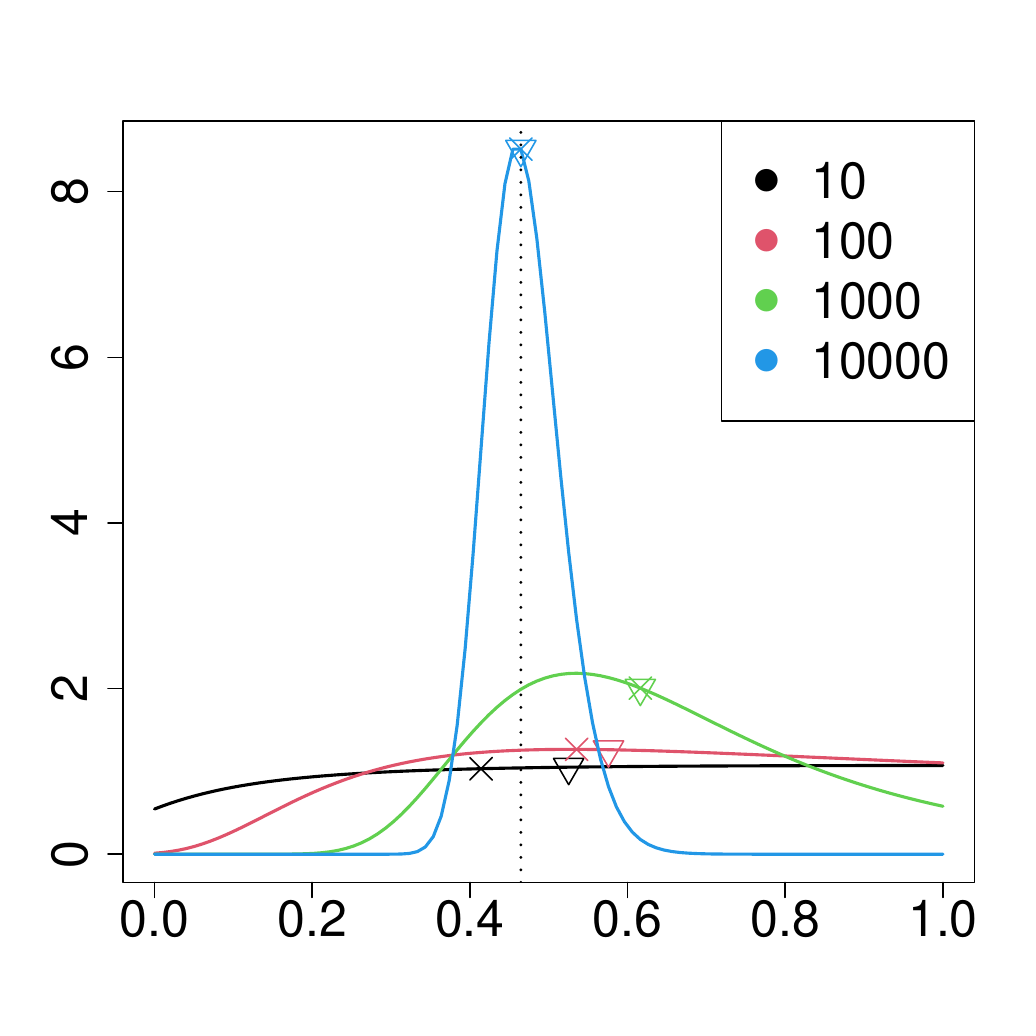}  
     \put(40,0){\tiny $\gamma$ values} \put(-5,25){\rotatebox{90}{\tiny $\rho_{(J,1)}(\gamma|\y,\x)$}}
    \end{overpic}
     \hspace{0.3cm}
    \begin{overpic}[width=.27\linewidth,height=0.17\textheight ]{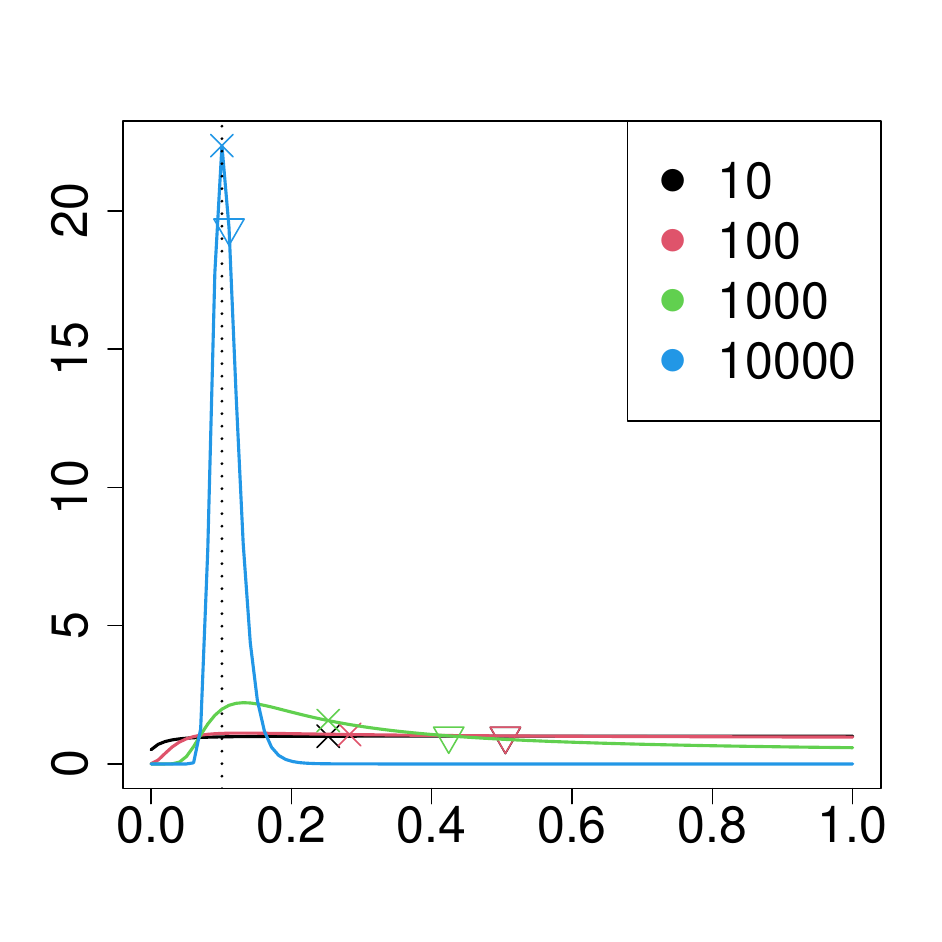}  
     \put(40,0){\tiny $\eta$ values} \put(-5,25){\rotatebox{90}{\tiny $\rho_{(J,1)}(\eta|\y,\x)$}}
    \end{overpic}
    \caption{Normal model of Section~\ref{sec:normal2}: columns give posterior densities for $b$ (left), $\gamma$ (middle) and $s=\eta$ (right) at varying $J$.
    Rows show pooled (top row, $\rho_{(1,J)}$)  and product (bottom row, $\rho_{(J,1)}$) posteriors. 
    Line color indicates $J$-value. Vertical dotted lines, triangle marks and cross marks respectively indicate the optimal value $\widetilde{s}$, posterior mean $\widehat{s}$ and $\widehat{s}_{KL}$ (see Appendix~\ref{Appendix:s-KL-eta-beta}) for $s=b$, $s=\gamma$ and $s=\eta$ in columns left to right. 
    }
    \label{fig:normal_loss_b_gam_eta}
\end{figure}

The points to note here are (1) qualitatively similar behaviour of $\gamma$ and $\eta$-posteriors in the two right columns of Figure~\ref{fig:normal_loss_b_gam_eta} per Corollary~\ref{cor:same_smi}, (2) optimal pooled $\widetilde{s}_{(1,\infty)}$ and product $\widetilde{s}_{(\infty,1)}$ values are not equal (see Section~\ref{blocksize}) but are located in the interior of $\Omega_s$, confirming that the assumptions are satisfied, (3) convergence and concentration of the product-loss posteriors $\rho_{(J,1)}(s|\y,\x)$ to a normal centered at $\widetilde{s}_{(\infty,1)}$ in the bottom row, and ultimately a point mass, as required by Theorem~\ref{cor:posterior_bin} and (4) convergence of the pooled $s$-posteriors $\rho_{(1,J)}(s|\y,\x)$ to a diffuse distribution coinciding with $\pi_s(\tilde\phi|\x)$ in the bottom row as required by Corollary~\ref{cor:posterior_pool}. 

See Appendix~\ref{Appendix:s-post-asym-further} for further discussion and examples. See in particular Figure~\ref{fig:smipost}
which illustrates the near-identity of the $\gamma$ and $\eta$-SMI posteriors $\pi_\gamma(\varphi,\theta|\x)$ and $\pi_\eta(\varphi,\theta|\x)$ when the size of the training data $n_1,n_2$ is very large.


\subsection{State-space example}\label{sec:ssm}

This is a Hidden Markov Model in which some of the ``hidden'' states are known and others are missing. There is misspecification due to different true observation models for the observed and missing values in the Markov chain. In this synthetic-data example the number of unknown 
parameters grows with the number of observations. It is a simple one-dimensional version of the spatial data analysis described in \cite{battaglia2025}.

\begin{figure}[!hb]
  \begin{center}
    \scalebox{0.85}{
    \begin{tikzpicture}
      \node (x1) [text=black] {$x_{1,1}$};
      \node (x2) [text=black, right=of x1, xshift=-0.45cm] {$x_{1,2}$};
      \node (x3) [text=black, right=of x2, xshift=-0.45cm] {$x_{1,3}$};
      \node (x4) [text=black, right=of x3, xshift=-0.45cm] {$x_{1,4}$};
      \node (x5) [text=black, right=of x4, xshift=-0.45cm] {$x_{1,5}$};
      \node (x6) [text=black, right=of x5, xshift=-0.45cm] {$x_{2,1}$};
      \node (x7) [text=black, right=of x6, xshift=-0.45cm] {$x_{2,2}$};
      \node (x8) [text=black, right=of x7, xshift=-0.45cm] {$x_{2,3}$};
      \node (x9) [text=black, right=of x8, xshift=-0.45cm] {$x_{2,4}$};
      \node (x10) [text=black, right=of x9, xshift=-0.45cm] {$x_{{2,5}}$};
      \node (xn) [text=black, right=of x10, xshift=-0.05cm] {$x_{n,5}$};
      \node (t1) [text=black, below=of x1] {$\theta_{1,1}$};
      \node (t2) [param, right=of t1, xshift=-0.02cm] {$\theta_{1,2}$};
      \node (t3) [param, right=of t2, xshift=-0.13cm] {$\theta_{1,3}$};
      \node (t4) [param, right=of t3, xshift=-0.13cm] {$\theta_{1,4}$};
      \node (t5) [text=black, below=of x5] {$\theta_{1,5}$};
      \node (t6) [text=black, below=of x6] {$\theta_{2,1}$};
      \node (t7) [param, right=of t6, xshift=0.0cm] {$\theta_{2,2}$};
      \node (t8) [param, right=of t7, xshift=-0.13cm] {$\theta_{2,3}$};
      \node (t9) [param, right=of t8, xshift=-0.13cm] {$\theta_{2,4}$};
      \node (t10) [text=black, below=of x10] {$\theta_{2,5}$};
      \node (tn) [text=black, below=of xn] {$\theta_{n,5}$};
      \edge[->] {t1} {t2, x1};
      \edge[->] {t2} {t3, x2};
      \edge[->] {t3} {t4, x3};
      \edge[->] {t4} {t5, x4};
      \edge[->] {t5} {t6, x5};
      \edge[->] {t6} {t7, x6};
      \edge[->] {t7} {t8, x7};
      \edge[->] {t8} {t9, x8};
      \edge[->] {t9} {t10, x9};
      \edge[->] {t10} {x10};
      \edge[dashed,->] {t10} {tn};
      \edge[->] {tn} {xn};
      \node[shape=coordinate] (b1t) [left=of x1, xshift=0.9cm] {};
      \node[shape=coordinate] (b1b) [left=of t1, xshift=0.9cm, yshift=-0.5cm] {};
      \edge[black, very thick, dashed, -] {b1t}{b1b}
      \node[shape=coordinate] (b2t) [left=of x6, xshift=0.6cm] {};
      \node[shape=coordinate] (b2b) [left=of t6, xshift=0.6cm, yshift=-0.5cm] {};
      \edge[black, very thick, dashed, -] {b2t}{b2b}
      \node[shape=coordinate] (b3t) [right=of x10, xshift=-0.8cm] {};
      \node[shape=coordinate] (b3b) [right=of t10, xshift=-0.8cm, yshift=-0.5cm] {};
      \edge[black, very thick, dashed, -] {b3t}{b3b}
      \node[shape=coordinate] (bJt) [right=of xn, xshift=-0.9cm] {};
      \node[shape=coordinate] (bJb) [right=of tn, xshift=-0.9cm, yshift=-0.5cm] {};
      \edge[black, very thick, dashed, -] {bJt}{bJb}
    \end{tikzpicture}
    } 
  \end{center}
  \caption{State Space Model, latent variable process $\theta_{1,1},\theta_{1,2}\dots,\theta_{n,d_x}$ and response variables $x_{1,1},x_{1,2},\dots,x_{n,d_x}$. Parameters in circles are missing; all other parameters and responses are observed. The vertical dashed lines indicate conditionally independent blocks of data. In this example the block size is $d_x=6$ so the anchors are $A=\{(1,1),(1,5),(2,1),(2,5),\dots, (n,6)\}$.}
  \label{fig:ssm-hmm-block-structure}
\end{figure}

The true and fitted observation models have a block structure illustrated in Figure~\ref{fig:ssm-hmm-block-structure}: the data in each block
are independent of data in other blocks so they contribute to one factor in the product in \eqref{eq:lkd-product}. We have $n$ blocks with $d_x$ observations in each block for $nd_x$ observations in all. Let $(\theta_{i,j},x_{i,j})$ denote the hidden and emitted HMM state pair at step $j=1,\dots,d_x$ in block $i=1,\dots,n$. Let $N=[n]\times [d_x]$ be the index set. The first and last ``hidden state'' parameters $\theta_{i,1}$ and $\theta_{i,d_x}$ in each block of the HMM have been measured and the rest are missing so $\theta_{i,j}$ is observed for $(i,j)\in A=[n]\times\{1,d_x\}$ and missing for $(i,j)\in M=N\setminus A$. In our example we take $d_x=6$ and we consider a range of $n$-values.
Let $\theta_A=\{\theta_{i,j}\}_{(i,j)\in A}$ and 
$\theta_M=\{\theta_{i,j}\}_{(i,j)\in M}$ and
let $\x_M$ and $\x_A$ be the corresponding emitted values (the data). Here $\x_A$ plays the role of $\x_1$ in Section~\ref{sec:SMI-intro}, $\x_M$ is $\x_2$ and $\theta_M$ is the $\theta$-parameter of the $\x_2$-module. Let $\x=(\x_A,\x_M)$.

The fitted observation model (Figure~\ref{fig:ssm-true-fitted-graph-models}, right) is $x_{i,j}\sim N(\theta_{i,j},\varphi^2)$ for $(i,j)\in N$ with $\varphi$ and $\theta_M$ unknown parameters to be estimated.
The hidden states $\theta$ in the HMM follow a stationary autoregressive model of order one so $\theta_{i,j}\sim N(\nu \theta_{i,j-1},\sigma^2)$ with $\nu=0.5$ and $\sigma=0.7$ (and the process continues across block boundaries). We take an inverse Gamma prior for $\varphi^2\sim InvGamma(2,1)$ and the true AR(1) process as a prior for $\theta_M$. The $\eta$-SMI posterior with auxiliary variable $\theta'_M$ (compare \eqref{eq:eta-smi-joint}) is
\begin{align}
    \label{eq: eta SMI SSM}\pi_{\eta}(\varphi,\theta'_M,\theta_M|\x) &=  \pi_{\eta}(\varphi,\theta'_M|\x)\, \pi(\theta_M|\x_M,\varphi)
    \\&\propto
     p(\x_A|\theta_A,\varphi)p(\x_M|\theta'_M, \varphi)^\eta \pi(\theta'_M) \pi(\varphi)\,\pi(\theta_M|\x_M,\varphi) \,,
    \nonumber
\end{align}
and the $(\eta,\beta)$-SMI posterior (compare \eqref{eq:eta-beta-smi-joint}) is
\begin{align}\label{eq: eta beta SMI SSM}
    \pi_{(\eta,\beta)}(\varphi,\theta'_M,\theta_M|\x) 
    &= \pi_{(\eta,\beta)}(\varphi,\theta'_M|\x)\,\pi(\theta_M|\x_M,\varphi)
\\&\propto 
p(\x_A|\theta_A,\varphi) \exp\{-\eta \ell_{\beta}(\varphi,\theta'_M;\x_M)\} \pi(\theta'_M) \pi(\varphi)\,\pi(\theta_M|\x_M,\varphi) \,,
\nonumber
\end{align}
where $\ell_{\beta}(\varphi,\theta'_M;\x_M)$ is the $\beta$-loss in \eqref{eq: beta loss} with $\phi\to (\varphi,\theta'_M)$ and $\x\to \x_M$. Conditioning on the known $\theta_A$ is implicit throughout.

\begin{figure}[!ht]
  \begin{center}
  \scalebox{0.75}{
    \begin{tikzpicture}
      \node (ta) [data] {$\theta_A$};
      \node (tm) [param, right=of ta, xshift=0.5cm] {$\theta_M$};
      \node (xm) [param, below=of tm] {$\x_M$};
      \node (xa) [data, left=of xm, xshift=-0.5cm] {$\x_A$};
      \node (sm) [param, below=of xm] {$\varphi^*_M$};
      \node (sa) [param, left=of sm, xshift=-0.35cm] {$\varphi^*_A$};
      \edge[-] {ta} {tm};
      \edge {ta} {xa};
      \edge {tm} {xm};
      \edge {sa} {xa};
      \edge {sm} {xm};
      \node [text=black, below= of sa, xshift=1.0cm, yshift=0.7cm] {True model};
    \end{tikzpicture}
    \hspace*{0.5in}
    \begin{tikzpicture}
      \node (ta) [data] {$\theta_A$};
      \node (tm) [param, right=of ta, xshift=0.5cm] {$\theta_M$};
      \node (xm) [param, below=of tm] {$\x_M$};
      \node (xa) [data, left=of xm, xshift=-0.5cm] {$\x_A$};
      \node (s) [param, below=of xa, xshift=0.95cm, yshift=-0.3cm] {$\varphi$};
      \edge[-] {ta} {tm};
      \edge {ta} {xa};
      \edge {tm} {xm};
      \edge {s} {xa};
      \edge {s} {xm};
      \node [text=black, below= of s, xshift=0.05cm, yshift=0.7cm] {Fitted model};
    \end{tikzpicture}
    }
  \end{center}
  \caption{The true (left) and fitted (right) State Space Models differ. In the fitted model the observation models for the observed and missing parts of the process have the same variance parameter $\varphi$. In fact they have different true variances $\varphi^*_A$ and $\varphi^*_M$.}
  \label{fig:ssm-true-fitted-graph-models}
\end{figure}

Misspecification arises because the fitted model assumes the variance $\varphi$ is the same for missing and observed cases. The AR(1) process realising the hidden variables $\theta$ is the same in the true and fitted models. However, in the true model $\x|\theta\sim p^*$ (Figure~\ref{fig:ssm-true-fitted-graph-models}, left) observations $x_{i,j}\sim N(\theta_{i,j}, (\varphi_A^*)^2),\ (i,j)\in A$ at anchor locations have $\varphi_A^*=1$, and this is different from the generative model for observations at missing states where $x_{i,j}\sim N(\theta_{i,j}, (\varphi_M^*)^2),\ (i,j)\in M$. We give examples with 
$\varphi^*_M=0.5$ (strong misspecification) $\varphi^*_M=0.7$ (mild) and
$\varphi_M^*=1$ (well-specified). 

We suppose a calibration data set $\y,\theta^{(y)}$ in $J$ blocks is available. Following the discussion at the end of Section~\ref{sec:SMI-intro}, we adjust $s$ to predict the data in the well-specified module so $\y,\theta^{(y)}$ have the same distribution as $\x_A,\theta_A$. They are simulated by taking an independent realisation $\psi_{1,1},\dots,\psi_{J,d_x}\sim$AR(1) and $\x'|\psi\sim p^*$ of length $n^{(y)}=Jd_x$ ($J$ blocks, with observed $\psi_{i,j}$ at $(i,j)\in A'=[J]\times\{1,d_x\}$) and collecting pairs $(y_{i,j}=x'_{i,j},\ \theta^{(y)}_{i,j}=\psi_{i,j})$ with $(i,j)\in A'$. 

Four studies are carried out: asymptotic behaviour of $\eta$-SMI (Appendix~\ref{sec: asym ssm}, very similar to the normal mixture example in Section~\ref{sec:normal2}), expected risk ratio comparison of $\eta$-SMI and Bayes and Cut-models using test data (Section~\ref{sec: post pred ssm}) and experiments with $(\eta,\beta)$-SMI (Section~\ref{sec: eta-beta ssm}). 
In these SMI experiments we interpolate the Cut Model and Bayesian inference so we restrict $\eta\in [0,1]$. In Appendix~\ref{Appendix:zero_constraint} we look at $\rho(s|\y;\x)$ for $\eta\ge 0$. Posteriors for multiple replicates of $\x,\y$ are shown in  Figure \ref{fig:post_eta_100rep}. They cluster in $\eta\in [0,1]$, or overlapping $\eta=1$ in the well-specified case. We imposed $0\le\eta\le 1$ in order to interpolate Cut and Bayes posteriors and these results support that choice.

\subsubsection{Expected risk ratio comparison for $\eta$-SMI}\label{sec: post pred ssm}

We compare the expected risk ratio (\ref{eq:s_relative}) of our fitted $\eta$-SMI belief update (\ref{eq: eta SMI SSM}) with Cut and Bayes updates in Figure~\ref{fig:ssm_pred}. Comparisons with $\beta$-SMI and $(\eta,\beta)$-SMI are given in Appendix~\ref{Appendix:risk-ratios-state-space}. We generate 100 independent datasets $\bar{\x}=(\x,\y) $ each of $\bar{n}=60$ blocks which we split into $n=10$ blocks of training data $\x$ and $J=50$ blocks of calibration data, $\y$. We compute $\overline{\eta}$, $\widehat{\eta}$ and $\widehat{\eta}_{hm}$ for the product and pooled posteriors and for each training data set $\x$. 
We then estimate (for each of these estimates of $\eta$) the expected risk ratio $R(\eta,\eta_o)$ defined in (\ref{eq:s_relative}), based on a further 30 test data sets each with $J^{(z)}=100$ blocks. We take reference values of $\eta_o$ corresponding to the Bayes posterior ($\eta_o=1$, left column in Figure~\ref{fig:ssm_pred}) and the Cut posterior ($\eta_o=0$, centre column). 
We consider three levels of misspecification $\varphi^*_M=0.5$ (red), $0.7$ (green) and $1$ (blue). 

\begin{figure}[t]
    \centering
\begin{overpic}[ width=.32\linewidth,height=0.22\textheight ]
{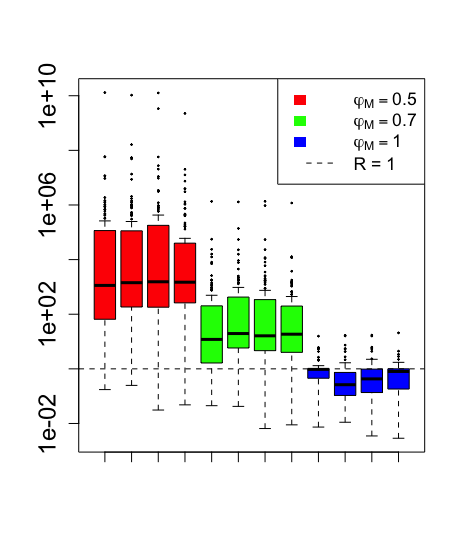} 
\put(32,92){\tiny $R(\eta_{(J,1)},1)$}
\put(16,5){\tiny a}
\put(23,5){\tiny b}
\put(28,5){\tiny c}
\put(33,5){\tiny d}
\end{overpic}
\begin{overpic}[ width=.32\linewidth,height=0.22\textheight ]{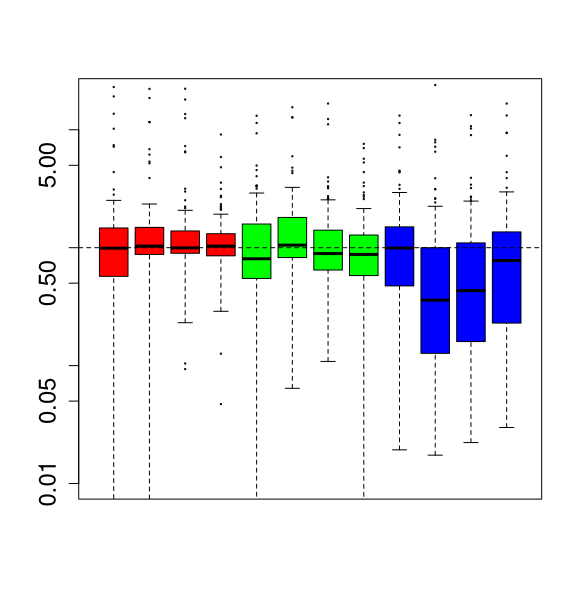} 
\put(32,92){\tiny $R(\eta_{(J,1)},0)$}
\end{overpic}
\begin{overpic}[ width=.32\linewidth,height=0.22\textheight ]{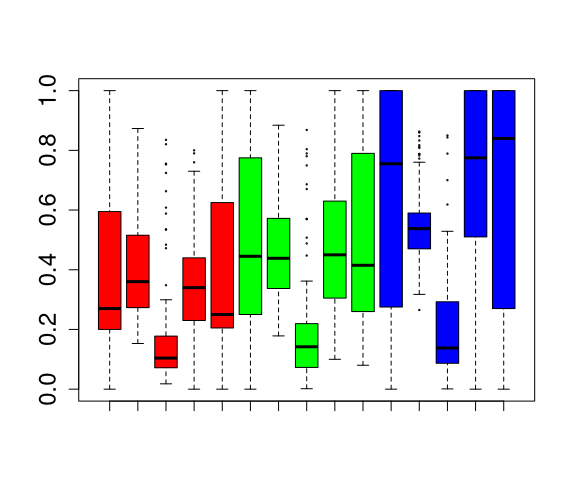} 
\put(32,92){\tiny $\eta$-estimates}
\put(16,5){\tiny a}
\put(21,5){\tiny b}
\put(25,5){\tiny c}
\put(30,5){\tiny d}
\put(35,5){\tiny e}
\end{overpic}
\begin{overpic}[ width=.32\linewidth,height=0.22\textheight ]{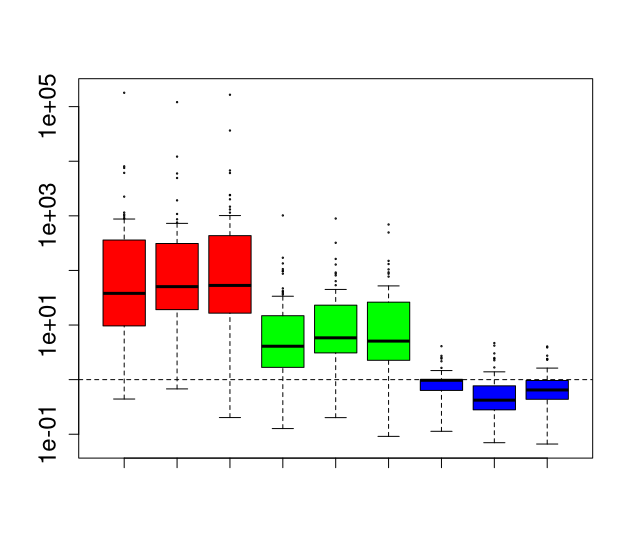} 
\put(32,92){\tiny $R(\eta_{(1,J)},1)$}
\put(18,5){\tiny a}
\put(28,5){\tiny b}
\put(35,5){\tiny c}
\end{overpic}
\begin{overpic}[ width=.32\linewidth,height=0.22\textheight ]
{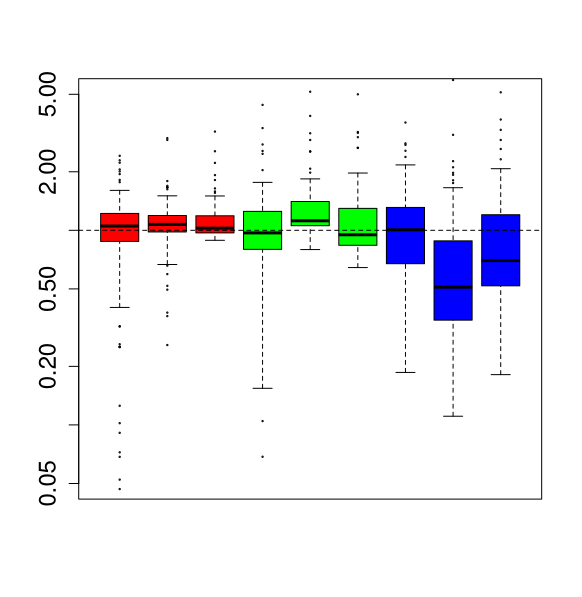} 
\put(32,92){\tiny $R(\eta_{(1,J)},0)$}
\end{overpic}
\begin{overpic}[ width=.32\linewidth,height=0.22\textheight ]{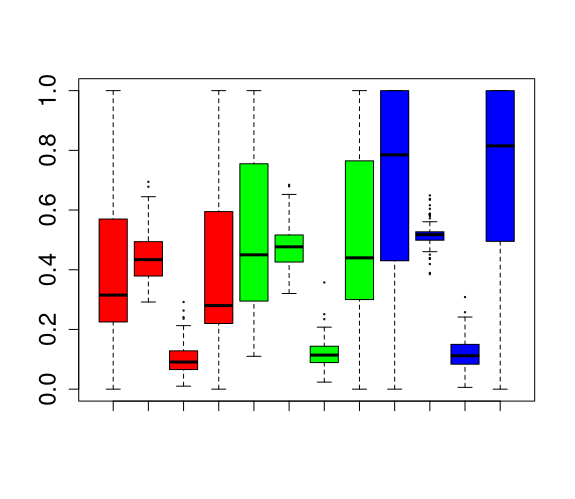}  
\put(32,92){\tiny $\eta$-estimates}
\put(17,5){\tiny a}
\put(23,5){\tiny b}
\put(28,5){\tiny c}
\put(34,5){\tiny e}
\end{overpic}
\caption{State-space model, Section~\ref{sec: post pred ssm}: Distributions of expected risk ratio \eqref{eq:s_relative} and $\eta$ estimates for the three misspecification levels; $\varphi^*_M=0.5$ (red, high), $\varphi^*_M=0.7$ (green, medium), and $\varphi^*_M=1$ (blue, no misspecification). Rows show comparisons against estimates of $\eta$ computed using product (top row) and pooled (bottom row) $\eta$-posteriors. For each misspecification level, four $\eta$ estimators, (a) $\overline{\eta}$ (posterior mode), (b) $\widehat{\eta}$ (mean), (c) $\widehat{\eta}_{hm}$  (\ref{eq:harmonic}) and (d) $\eta_{\rm \scriptscriptstyle WAIC}$ are compared against Bayes ($\eta=1$, left column) and cut ($\eta=0$, middle column) and (e) a high precision estimate  $\bar\eta^*_\x$ of $\eta^*$ (in each color group, estimated using $J=10^3$ test blocks, right column). 
}
\label{fig:ssm_pred}
\end{figure}

When there is misspecification (red and green), the expected risk ratios in the left and centre panels of Figure~\ref{fig:ssm_pred} are distributed above one, so the proposed estimators give better predictions of test data than the Bayes or Cut model belief updates. In each case, the posterior mean gives a slight improvement (slightly higher) over the loss minimiser (which is the posterior mode here) and harmonic mean. When there is no misspecification (blue), standard Bayesian inference  is preferred (the blue boxes are under the dotted horizontal line in the left column of Figure~\ref{fig:ssm_pred}).

In the right column of Figure~\ref{fig:ssm_pred} we plot the distributions of the three $\eta$-estimates and the distribution of a high precision estimate $\bar\eta^*_\x$ of $\eta^*$; $\bar\eta^*_{\x} = \argmin_{\eta} l(\eta;\z,\x)$ for pooled (top) and product (bottom) cases (so $\eta^*$ can be $\tilde{\eta}_{(1,\infty)}$ or $\tilde{\eta}_{(\infty,1)}$ depending on the loss). The medians (of $\overline{\eta}^*_{\x}$ for the two losses) over 30 replicates with $J^{(z)}=1000$ samples tend to increase as the level of misspecification decreases. The $\eta$ estimators show similar behaviour to the normal mixture example, though less clearly. The harmonic mean tends to underestimate, the posterior mean is shrunk to central values and the mode shows the greatest inter-quartile range. 
In Figure~\ref{fig:ssm_pred}, $\bar\eta^*_\x$ takes $\x$ as the training data because that is the optimal value targeted by the estimator. As explained in Section~\ref{sec:hyperparam-estimation-and-R}, we would really like to estimate $\bar\eta^*_{\overline{\x}}=\argmin_{\eta} l(\eta;\z,\overline{\x})$, taking $\overline{\x}=(\x,\y)$, the full data set before splitting, as the training data. In Figure~\ref{fig:diff_eta} we repeat the measurements of Figure~\ref{fig:ssm_pred} but plot differences, including $\bar\eta^*_{\x}-\bar\eta^*_{\overline{x}}$, between estimators and the full-data optimal $\eta$. Results support the assumptions in Section~\ref{sec:hyperparam-estimation-and-R}. See Appendix~\ref{Appendix:risk-ratios-state-space} for details.

The WAIC estimator behaves similarly to the product mode and mean estimators in the top (product-loss) row. It estimates the maximum of the ELPPD, so this is expected.

\subsubsection{Asymptotics and risk for $(\eta,\beta)$-SMI}\label{sec: eta-beta ssm}

One of the strengths of using Bayesian inference with calibration data is that we jointly estimate the learning rate and loss hyperparameters (e.g., $\beta=1/b$), rather than fixing the learning rate at 1 as is commonly done when using generalized losses beyond the likelihood.
In this experiment, we illustrate this using $(\eta,\beta)$-SMI \eqref{eq: eta beta SMI SSM}.

\begin{figure}[ht]
    \centering
    \begin{overpic}[ width=.32\linewidth,height=0.25\textheight ]{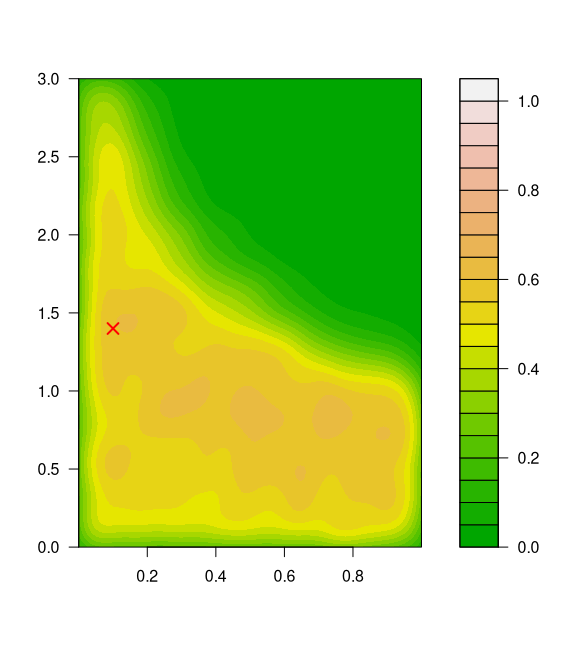}  
    \put(25,90){\tiny $J=10$}
    \end{overpic}
     \begin{overpic}[ width=.32\linewidth,height=0.25\textheight ]{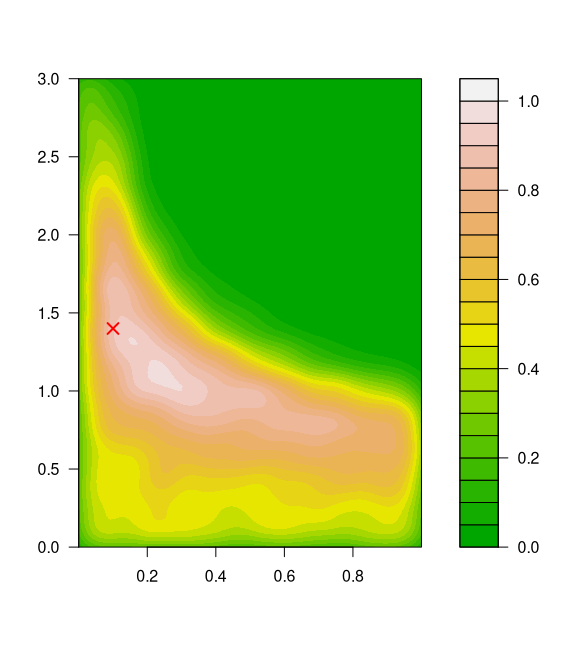} 
     \put(25,90){\tiny $J=100$}
    \end{overpic}
    \begin{overpic}[ width=.32\linewidth,height=0.25\textheight ]{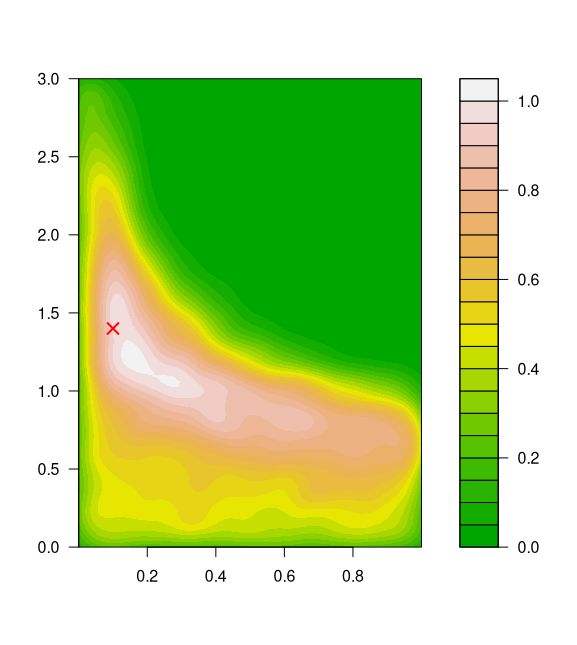} 
    \put(25,90){\tiny $J=1000$}
    \end{overpic}

    \vspace*{-0.2in}

    \begin{overpic}[ width=.32\linewidth,height=0.25\textheight ]{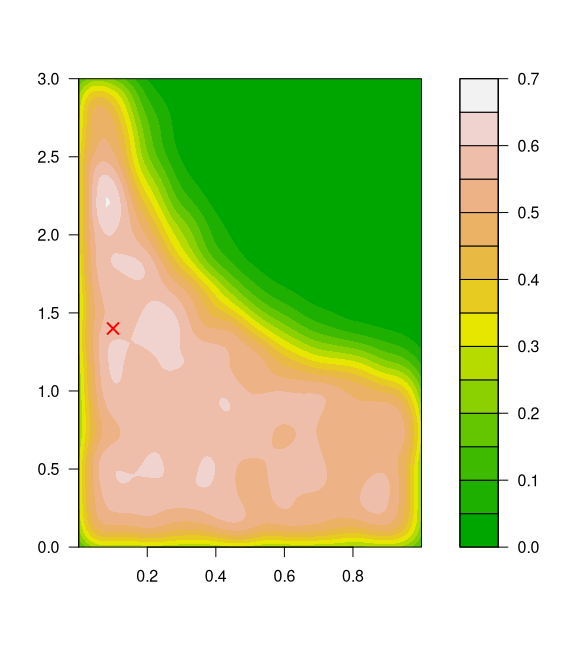}
    \put(25,90){\tiny $J=10$}
    \end{overpic}
     \begin{overpic}[ width=.32\linewidth,height=0.25\textheight ]{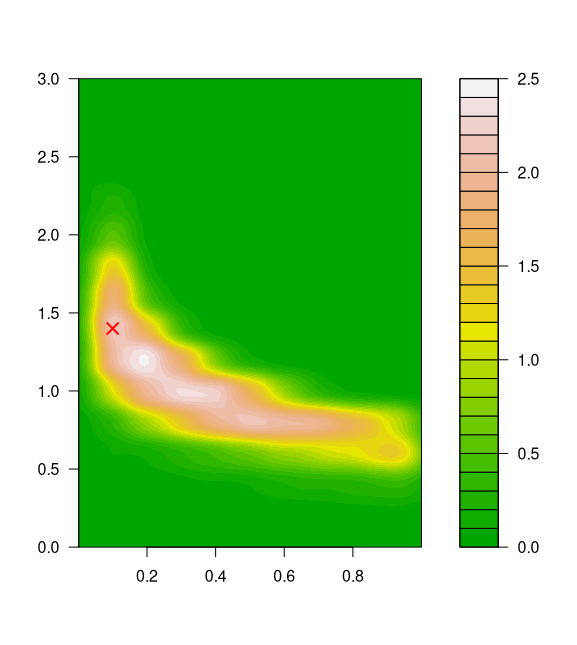}  
     \put(25,90){\tiny $J=100$}
    \end{overpic}
    \begin{overpic}[ width=.32\linewidth,height=0.25\textheight ]{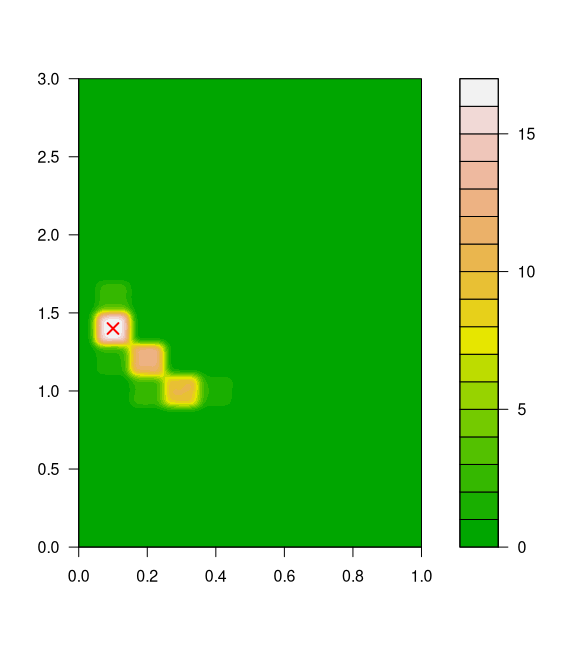}  
    \put(25,90){\tiny $J=1000$}
    \end{overpic}
        \caption{State-space model, Section~\ref{sec: eta-beta ssm}: Joint posterior of $(\eta,b)$ at $\varphi^*_M=0.7$; $b$-values ($y$-axis) against $\eta$ values ($x$-axis); posterior density, indicated by color bar, for pooled (top row) and product calibration (bottom row). The red cross marks indicate $s^*$.
         }
    \label{fig:eta b joint}
\end{figure}

We take a single training dataset with $n=10$, scale the calibration data as $J=10,10^2,10^3$ and make a high-precision estimate of $s^*$ using 
test data with $J^{(z)}=10^4$ blocks. The joint posteriors $\rho(s|\y,\x)$ for $s=(\eta,b)$ in Figure~\ref{fig:eta b joint} show convergence to a diffuse distribution and convergence to a point in the pooled and product cases, matching the theory in Section~\ref{sec:bayes_s} and Appendix~\ref{Appendix:zero_constraint}, which holds for $s$ of any finite dimension. The posteriors are evaluated at a grid of $(\eta,b)$ values then smoothed. This is impractical when the dimension of $s$ is large. The MCMC in Appendix~\ref{Appendix:mcmc} targeting $\rho(s,\phi|\y;\x)$ may help in the pooled case and for the product-loss at moderate $J$. For example, in Figure~\ref{fig:nested mcmc check}, agreement between smoothing and MCMC is good at $J=40$.

In Figure~\ref{fig:eta b ratio} in Appendix~\ref{Appendix:state-space} we compare the risk for $(\eta,\beta)$-SMI to $\eta$-SMI, $\beta$-SMI, Bayes and Cut for estimation using the pooled loss.
We find $(\eta,\beta)$-SMI has lower risk than Bayes, Cut and $\beta$-SMI and almost the same as $\eta$-SMI. This is interesting because work with the $\beta$-loss (eg \cite{jewson24}) often sets $\eta=1$ in the Gibbs posterior. 



\subsection{Estimation of Sense Change using the Power Posterior}\label{sec:edisc}

In this section we illustrate Bayesian estimation of $\eta$ in 
a power posterior of the form \eqref{eq:GB-post-generic} with a single module (so SMI isn't needed). The data is an English text corpus, described in \cite{zafar2024embedded}, spanning $T=10$ twenty-year intervals from 1810 to 2010. They cluster instances of a target word by meaning, taking into account changes in meaning over time. For example, the target word ``bug'' has four main meanings (insect, bacteria or virus, software error and secret listening device). Following \cite{frermann2016bayesian}, each instance of the target word is extracted, together with the text in a small window on either side. These small pieces of text are called ``snippets". The sense of the target word   varies across snippets; the distribution of context-words appearing alongside the target word will differ in snippets with different meanings. For example, ``bug'' appears in $N=522$ snippets alongside $V=2475$ different words with about 5 words per snippet. See Table~\ref{tab:EDiSC-experiment-settings} in Appendix~\ref{Appendix:EDiSC} for a summary of the data. 

We analyze all ten test data sets considered in \cite{zafar2024exploring}.  Five of these data sets are created by partitioning a large data set with target word ``bank'' into five roughly equal sized subsets, ``bank1'' to ``bank5''. The other five data sets are formed from the same text corpus by taking snippets with target words ``chair'', ``apple'', ``gay'', ``mouse'' and ``bug''. The $N$ snippets in each data set are divided up in the ratio 4:1:1 into training/calibration/test data sets with $n,\ J,\ J_z$ snippets respectively. See Table~\ref{tab:EDiSC-experiment-settings} for the values of $n,\ J$ and $J_z$ across target words. 

\cite{zafar2024embedded} hand-labeled each snippet with its true sense (using their own judgment). However, this information can't be used in the analysis, as it isn't available in use-cases. Instead, \cite{zafar2024embedded} use their Embedded Diachronic Sense Change (EDiSC) model to make an unsupervised clustering of the snippets by sense, using the {context words} in each snippet to identify the sense of its target word. If the user fixes the number of senses $K$ they want in the partition then EDiSC will give a posterior distribution over partitions of the snippets into $K$ groups. When we are done we can use the true sense labels to measure success. 

EDiSC is a bag-of-words model resembling Latent Dirichlet Allocation. Occurrences of the target word are indexed $i\in\{1,\dots,n\}$ through the corpus. Snippet $x_i\subset \V$ is a set of words formed by taking the $L$ words just before and after the $i$'th instance of the target word and dropping stop-words and very rare words. Let $\V=\cup_i x_i$ be the set of all $V=|\V|$ words appearing alongside the target word. The data are $\x=(x_1,\dots,x_n)$. For $i\in\{1,\dots,n\}$, $\c_i\in \{1,\dots,K\}$ is a parameter giving the cluster assignment for the $i$'th snippet. See Appendix~\ref{Appendix:EDiSC} for details of the model, numerical values for $K, n, V$ and $L$ and \eqref{eq:EDiSC-gb-posterior} for the power posterior itself.

This example demonstrates our methods on a problem which is already fairly challenging: the data are sparse, there is a time series structure, the parameter space is moderately high dimensional and even MCMC is not easy. However, it is typical of problems where Generalised Bayes is helpful: it has many latent variables and the fitted model is misspecified. \cite{zafar2024exploring} find that Generalised Bayes with a power-posterior gave improved prediction of sense labels over Bayes. They selected the learning rate by adjusting $\eta$ so that the posterior-predictive distribution matched the fitted data. We choose the learning rate using Bayesian inference and held-out data.

\begin{figure}
    \centering
    \hbox{
    \begin{overpic}
    [width=.45\linewidth,height=0.25\textheight]{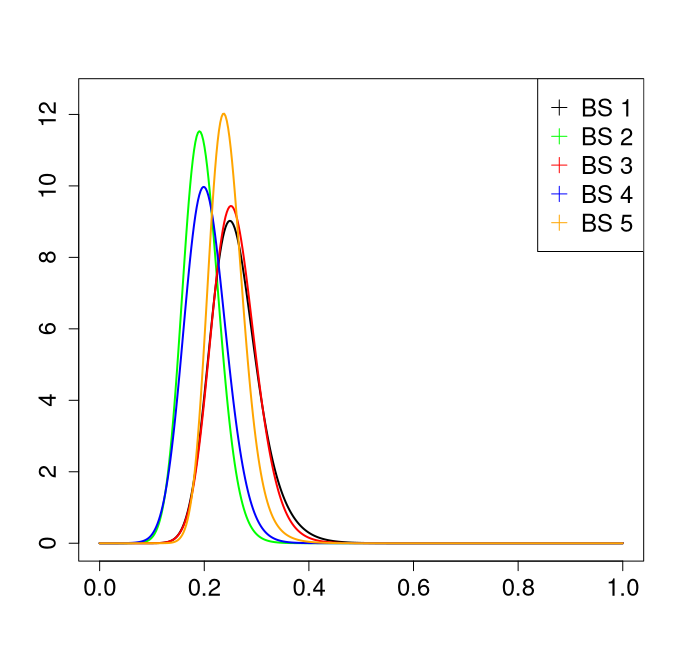}
    \put(-5,25){\rotatebox{90}{\small $\rho_{(J,1)}(\eta|\y, \x)$}}
        \put(50,2){\small $\eta$}
    \end{overpic}
    \hspace{0.2cm}
    \begin{overpic}
    [width=.45\linewidth,height=0.25\textheight]{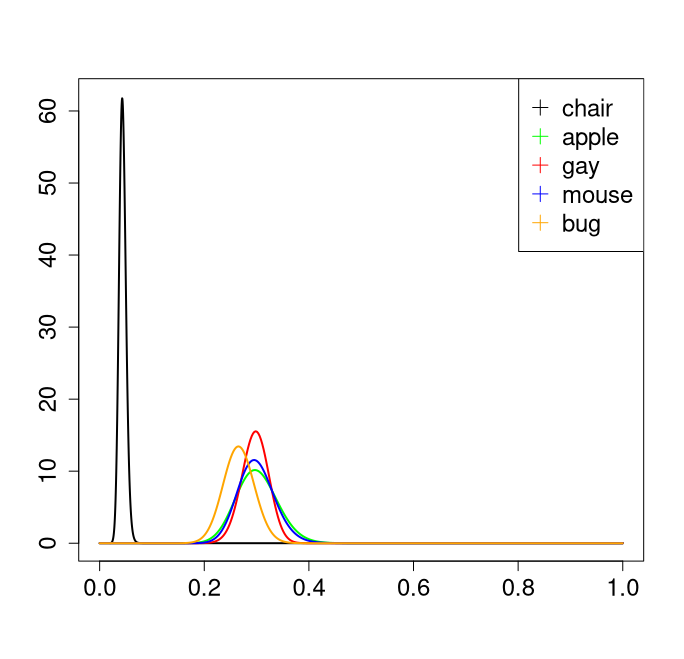}
    \put(-5,25){\rotatebox{90}{\small $\rho_{(J,1)}(\eta|\y, \x)$}}
        \put(50,2){\small $\eta$}
    \end{overpic}
    }
    \caption{Posterior densities for $\eta$ for five subsets of data all with target word ``bank'' (left, BS1-BS5) and five other target words (right). Posteriors are calculated using a third of the data for calibration, the product posterior and smoothing (Appendix~\ref{Appendix:mcmc}).}
    \label{fig:bank split}
\end{figure}

In Table~\ref{tab:edisc-R-bank} in Appendix~\ref{Appendix:EDiSC-pooled-loss} we show that our method gives larger posterior-predictive probabilities $p_{\widehat \eta}(\z|\x,\y)$ for held out test data $\z$ compared to taking $\eta=1$ or the values $\eta=\eta^\dagger$ estimated by \cite{zafar2024exploring}. Our $\widehat\eta$-values are estimated using the 4:1:1 split and the posteriors in Figure~\ref{fig:bank split sixth}, as we need to hold test data back in the calibration. This good performance indicates that our hyperparameter-estimator can generalize beyond the calibration set and improve predictive performance.

The $\eta$-posteriors $\rho_{(J,1)}(\eta|\y,\x)$ for the product loss are given in Figure~\ref{fig:bank split}. We now merge the calibration and test sets, as we are going to compare methods by Brier score and don't need a test set. 
The bank-posteriors at left are very similar with means all $\widehat{\eta}\approx 0.25$. Each split of the bank-data has the same true generative model $p^*$ so this makes sense. With the exception of ``chair'', the $\eta$-posteriors at right for the other words are similar but centered at slightly larger $\eta$-values. Their shapes suggest we are in the asymptotic regime where $\eta$ is approximately normal (the prior is uniform on $0<\eta\le 1$). In Figure~\ref{fig:bank split sixth} we reduce the amount of calibration data from one third to one sixth of the snippets. The $\eta$-posteriors show greater dispersion and skew but similar locations.

We can also use the true sense labels to measure success. Denote by $o_i\in\{1,2,\dots,K\}$ the true sense label (ie, the correct cluster assignment) for snippet $i=1,\dots,N$ identified by a human reader. Let $\hat{p}_\eta(\c_i = k)$ be an estimate of the posterior probability $\pi_\eta(\c_i=k|\overline{\x})$ that snippet $i$ is in cluster $k$ (using MCMC samples, 
allowing for label switching). Let $K^*$ give the true number of senses. Following \citet[Section 5.1]{zafar2024embedded}, we assess predictive accuracy under each learning rate $\eta$ using the Brier score:
\begin{equation}
\text{BS}_\eta = \frac{1}{N} \sum_{i=1}^{N} \sum_{k=1}^{K^*} \left( \hat{p}_\eta(\c_i = k) - I(o_i = k) \right)^2,\label{eq:edisc-brier-score}
\end{equation}
a proper scoring rule for multi-category probabilistic predictions $\hat{p}_\eta(\c_i = k)$, ranging from 0 (best) to 2 (worst). The expression in \eqref{eq:edisc-brier-score} has to be modified in cases where the number of fitted senses is greater than the number of labeled senses so $K>K^*$ (any given cluster may have more than one label).

The Brier scores computed in \cite{zafar2024exploring} are reproduced
in Table~\ref{table: BS}. Missing entries indicate ``collapse''. Reconstructed clusters contain snippets with no recognisable common meaning so there is no way to identify the ``true'' sense $o_i$ expressed by a given cluster. This is illustrated in Table~\ref{tab:context_words} in Appendix~\ref{Appendix:edisc-clustering-bug} for the target word ``bug''. We run the analysis at different $\eta$-values and report the \emph{a posteriori} most probable context-words in each cluster. The improvement in the clarity of the grouping of context-words by meaning is striking when we use the optimal $\eta$, so this shows the power-posterior can do a very good job of treating misspecification. At $\eta=0.4$ (the optimal value by Brier-score) the clusters identify all four meanings (``insect'', ``disease'', ``software'' and ``wiretap''). At $\eta=0.1$, cluster 2 has no identifiable meaning and only ``insect'' and ``software'' senses are recovered. Our method selects $\eta=0.3$, close, but not as good as $\eta=0.4$ (the ``wiretap'' sense is lost and ``software'' is repeated, as for $\eta=1$). 
Overall, the $BS_{\widehat{\eta}}\,$-values for our selected rates are comparable to those given in \cite{zafar2024exploring} (four are better, four are worse and two are tied).  

\cite{zafar2024exploring} also estimated $\eta_{\rm \scriptscriptstyle WAIC}$ for all the words in Table~\ref{table: BS}. They are all equal one, so this is poor performance. The WAIC depends on the effective parameter dimension and this is hard to estimate reliably, because the variance of the log-likelihood over snippets is large. This is a known issue for the WAIC \citep{Vehtari2016}. Although the WAIC is not a reliable estimator for the ELPPD here, the product loss works well, even though it targets the same risk, presumably because it directly estimates the ELPPD, rather than relying on an approximation good at large $n$.


\renewcommand{\arraystretch}{0.95}
\begin{table}[t]
\centering
\begin{tabular}{c|c|c|c|c|c}
\toprule
$\eta$ 
& bank split 1
& bank split 2
& bank split 3
& bank split 4
& bank split 5 \\
\midrule
1   & 0.135 & 0.115 & 0.106 & 0.144 & 0.189 \\
0.6 & 0.128 & 0.111 & 0.103 & 0.139 & 0.178 \\
0.5 & 0.125 & 0.110 & 0.103 & 0.138 & 0.173 \\
0.4 & \fcolorbox{blue}{white}{0.124} & \fcolorbox{blue}{white}{0.109} & \fcolorbox{blue}{white}{\textcolor{red}{0.103}} & \fcolorbox{blue}{white}{0.138} & \fcolorbox{blue}{white}{0.168} \\
0.3 & \fcolorbox{red}{white}{0.121} & \textcolor{red}{0.108} & \fcolorbox{red}{white}{0.104} & 0.137 & 0.162 \\
0.2 & \textcolor{red}{0.118} & \fcolorbox{red}{white}{0.109} & 0.106 & \fcolorbox{red}{white}{\textcolor{red}{0.137}} &  \fcolorbox{red}{white}{0.158}\\
0.1 & 0.122 & 0.113 & 0.113 & 0.141 & \textcolor{red}{0.152} \\
\midrule
$\eta$ & chair & apple & gay & mouse & bug \\
\midrule
1   & 0.140  & 0.0510 & 0.343 & 0.0474 & 0.300\\
0.6 & \fcolorbox{blue}{white}{0.133} & 0.0496 & 0.299 & 0.0382 & 0.268\\
0.5 & 0.129& 0.0488 & \fcolorbox{blue}{white}{0.291} & 0.0359 & \fcolorbox{blue}{white}{0.267} \\
0.4 & 0.126 & 0.0484  & \textcolor{red}{0.282}& 0.0348& \textcolor{red}{0.265} \\
0.3 & 0.124& \fcolorbox{red}{white}{0.0479}  & \fcolorbox{red}{white}{0.295}  & \fcolorbox{blue}{white}{\fcolorbox{red}{white}{\textcolor{red}{0.0335}}}  &  \fcolorbox{red}{white}{0.330}\\
0.2 & \textcolor{red}{0.120} & \fcolorbox{blue}{white}{0.0467}  &  & 0.0337 &  0.424 \\
0.1 & \fcolorbox{red}{white}{0.125}& \textcolor{red}{0.0461}&  & 0.0393  &  \\
\bottomrule
\end{tabular}
\caption{Brier scores $BS_\eta$ for candidate learning rates $\eta$ on test data. Optimal scores $BS_{\eta*}$ are in red text. Scores $BS_{\widehat{\eta}}$ for the learning rate $\widehat{\eta}$ selected using our method are boxed in red. Scores $BS_{\eta^\dagger}$ selected using the posterior-predictive check of \cite{zafar2024exploring} are boxed in blue. Missing values for $BS_\eta$ indicate collapse (see text).}
\label{table: BS}
\end{table}

\section{Discussion}

We have given a Bayesian framework for learning inference hyperparameters $s=(\eta,\beta)$ in
generalised Bayesian inference. The central idea (Section~\ref{sec:bayes-for-s-main-idea}) is to treat the
posterior-predictive $p_s(\y|\x)$ for held-out data $\y$ as a likelihood-like quantity for $s$, so that
uncertainty about the learning rate and loss hyperparameters can be quantified. This gives a coherent
route from calibration data to a posterior for $s$, and extends the usual one-dimensional learning-rate problem to joint inference on several hyperparameters.

The theory in Section~\ref{asymptotic_regular} and the discussion in Section~\ref{blocksize} clarify the different meaning and behavior of pooled and product calibrations.
The product loss estimates $KL(p^*||p_s(\cdot|\x))$ (up to terms constant in $s$) so it is a loss for prediction. The product posterior concentrates at rate $J^{-1/2}$. 
In the pooled case the optimal $s$ maximises $\pi_s(\tilde\phi|\x)$ with $\widetilde{\phi}$ the pseudo-true parameter, so it tries to cover the pseudo-truth. Its $s$-likelihood $p_s(\y|\x)$ is the intrinsic marginal likelihood; this approaches a diffuse limit (which is just $\pi_s(\tilde\phi|\x)$) with increasing $J$. The difference is illustrated using an analytic example in Appendix~\ref{app:gbi-normal-example} and in simulation studies in Sections~\ref{sim:normal} and \ref{sec:ssm}: in both the normal-mixture examples (in Figure~\ref{fig:normal_loss_b_gam_eta}) and
state-space examples (in Figure~\ref{fig:eta b joint}) the product posterior becomes increasingly concentrated, whereas the pooled posterior remains spread over a region of plausible values. 


A second message is that the multi-modular setting in Section~\ref{sec:SMI-intro} fits naturally into the
sequential-Gibbs framework of \cite{Winter2023}. Viewing SMI in this way lets us import existing
consistency and Bernstein--von Mises theory into the modular setting, and yields the
same type of result for the new $(\eta,\beta)$-SMI update. The latter is appealing
because it separates two forms of robustness: $\eta$ controls how much feedback enters
from a suspect module, while $\beta$ changes the loss and can temper
sensitivity to outliers. At the end of Section~\ref{subsec:seqGibbs} we outlined extensions to richer Bregman-loss families.

Empirically, the proposed estimators are useful in the regimes where
misspecification matters. In the state-space example in Section~\ref{sec:ssm}, when there is misspecification
the learned $\eta$ values improve prediction ($R_J$-values in \eqref{eq:s_relative}) and intrinsic Bayes-factors ($R_1$-values in \eqref{eq:s_relative_pooled}) relative to fixed Bayes and
Cut updates  (see Figure~\ref{fig:ssm_pred}), while in the well-specified case ordinary Bayes is correctly preferred.
In the same example, joint calibration in $(\eta,\beta)$-SMI is clearly better than
Bayes, Cut and $\beta$-SMI, and is almost indistinguishable from $\eta$-SMI (see Figure~\ref{fig:eta b ratio}), so the
extra flexibility does not appear to incur a substantial cost. 
In the normal-mixture example,
$\eta$- and $\gamma$-SMI behave very similarly (see Figure~\ref{fig:smipost}), and this agreement becomes stronger as
the training size grows, consistent with the asymptotic connection established in
Appendix~C. This result holds up in the HPV example with real data in Appendix~\ref{sec:hpv}, Figure~\ref{fig:hpv}.

The real-data analyses highlight the strength and scope of the approach: in the sense-change application in Section~\ref{sec:edisc}, the power posterior improves snippet-clustering (see Table~\ref{tab:context_words}) in a real problem with sparse data, a high dimensional parameter space (just over $5500$ for the word ``chair'') and a mispecified observation model; the learned $\eta$-posteriors in Figure~\ref{fig:bank split} have the shape the theory in Section~\ref{asymptotic_regular} anticipates; $\eta$-estimates are stable across different splits of the data (compare Figure~\ref{fig:bank split sixth}) and give larger
held-out posterior-predictive probabilities than Bayes or values selected
by posterior-predictive checks on training data (see Table~\ref{tab:edisc-R-bank}); the associated Brier scores in Table~\ref{table: BS} are competitive with the values estimated in \cite{zafar2024exploring}. In the HPV example in Appendix~\ref{sec:hpv}, the pooled
posterior means for $\eta$ and $\gamma$ do not beat the Cut
posterior (in Figure~\ref{fig:hpv}), because the MAP estimator lies at the boundary and gives the Cut posterior. Boundary solutions are common in SMI and the mode may be the preferred estimator when the posterior peaks sharply at the boundary.

Several practical lessons emerge. First, a modest fraction of a large data set can be enough to
identify $s$, whereas very small calibration sets leave $\rho(s\mid \y;\x)$ weakly
informative. Secondly, the posterior mean is a good default when the optimal value lies
in the interior, but boundary cases favour the mode/MAP; the harmonic mean tends to
underestimate and the WAIC behaves much like the product-loss summaries except when it is an unreliable ELPPD-estimator (last paragraph, Section~\ref{sec:edisc}). Thirdly, pooled and product calibration correspond to different inferential objectives and the optimal hyperparameter values under the two criteria need not coincide. Hence, differences between them reflect different inferential goals rather than estimation error.

The main limitations are computational cost at high hyperparameter dimension and information loss from splitting data. Computing
$p_s(y_j\mid \x)$ at a grid of $s$ values for high-dimensional $\phi$-parameter spaces is practical when $\dim(s)$ is one or two,
as we illustrate in Sections~\ref{sec:ssm} and \ref{sec:edisc}.
However, product-loss calibration becomes expensive when either $J$ or the dimension of $\phi$ increase, because repeated simulation from $\pi_s(\cdot\mid \x)$ is required. In Appendix~\ref{Appendix:mcmc} a joint/nested MCMC algorithm gives similar posterior estimates to smoothing (Figure~\ref{fig:nested mcmc check}) but we had to restrict $J\le 40$, and when we fit the $(\eta,\beta)$ model to multiple replicate data sets to get risk ratio distributions in Appendix~\ref{Appendix:state-space} we only did that for the pooled case. 
Also, data splitting must drop information, though our experiments suggest that the resulting
estimator generalises well to the posterior based on the combined data, and risk ratio distributions seem to be insensitive to the split ratio in our examples in Figure~\ref{fig:normal_b_gamma_eta_prod_pred}. The issue is discussed in Appendix~\ref{Appendix:normal4-TC-split}.

Natural next steps are to use more informative priors for $s$, especially $\eta$. In SMI, $\eta=0$ (the Cut-model) is often optimal, so it would be natural to put an atom there and use reversible jump MCMC. We may reduce computational costs by using a variational approximation amortised over hyperparameters, following \cite{Carmona2022scalable}.
Alternative calibration losses tailored to dependent data \citep{Cooper2023} or to task-specific
objectives such as classification \citep{battaglia2025} seem to be in reach. 

\bibliographystyle{ba}
\bibliography{ref}

\clearpage

\appendix

\counterwithin{theorem}{section}
\counterwithin{assumption}{section}
\counterwithin{corollary}{section}
\renewcommand{\thetable}{\Alph{section}\arabic{table}}
\setcounter{table}{0}
\renewcommand{\thefigure}{\Alph{section}.\arabic{figure}}
\setcounter{figure}{0}
\renewcommand{\theequation}{\Alph{section}.\arabic{equation}}
\setcounter{equation}{0}

\section{Sampling hyperparameters using smoothing and MCMC}\label{Appendix:mcmc}

We propose two methods to sample from $\rho(s|\y,\x)$. We illustrate this on the product-posterior in \eqref{eq:GB-hyper-posterior}. The pooled case is actually more straightforward. 
 
{\bf Smoothing} This works well when the dimension of $s$ is low (one or two, which is actually typical). For the first method, we evaluate $p_s(\y|\x),\ \y\in \R^J$ or $p_s(y|\x),\ y\in \R$ on a lattice of $s$ values and use a smoothing spline to evaluate them at $s$-values off lattice. Treating $p_s(\y|\x)$ or $\tilde p_s(\y|\x)=\prod_j p_s(y_j|\x)$ as the likelihood for $s$ and with a prior $\rho(s)$, it is possible to do Bayesian inference for $s$. Marginalisation over $\phi$-parameters in $p_s$ and $\tilde p_s$ is approximated numerically or using Monte-Carlo. For example, we use Gaussian quadrature in \ref{Appendix:computing-gamma-SMI}. 

{\bf MCMC} This still works if the dimension of $s$ is larger, though it is slow for the product-loss. We target $\rho$ using a nested MCMC algorithm. We first illustrate how the algorithm works for the product loss, i.e. we target $\rho_{(J,1)}(s,\phi_{1:J}|\y,\x)$.

The joint distribution of $(s,\phi_{1:J})$ in the product posterior is
\begin{align}\label{eq:s phi joint prod}
\rho(s,\phi_{1:J}|\y;\x) &\propto
    \rho(s) \prod_{j=1}^J p(y_{j}|\phi_j) \frac{\pi(\phi_j)p_s(\x|\phi_j)}{p_s(\x)},
\end{align}
(dropping the product-loss indicator off $\rho_{(J,1)}$).

Let a proposal distribution $q(s'|s)$ for $s$ be given. The following algorithm is ergodic for $\rho(s,\phi_{1:J}|\y;\x)$ in the limit that its parameter $I$ is taken to infinity.

\begin{enumerate}[topsep=0pt,itemsep=0ex]
    \item[] Let $X_t=(s,\phi_{1:J})$ with $\phi_{1:J}=(\phi_1,\dots,\phi_J)$.
    \item Sample $s'\sim q(\cdot|s)$;
    \item For $j=1,\dots,J$,
    run an MCMC chain $\widetilde\Phi_{j,i},\ i=0,1,\dots I$ with $\Phi_{j,0}=\phi_j$ targeting 
    \[\pi_{s'}(\phi'|\x)=\frac{\pi(\phi')p_{s'}(\x|\phi')}{p_{s'}(\x)}\]
    and set $\phi_j'=\widetilde\Phi_{j,I}$ (with $I$ large enough so $\phi_j'\sim \pi_{s'}(\cdot|\x)$ to a good approximation);
    \item Accept $\phi'_j$ with probability
    \begin{align}
        \alpha(s',\phi_1',\dots,\phi_J'|s,\phi)&=\min\left\{1,\frac{\rho(s',\phi_1',\dots,\phi_J'|\y;\x)q(s|s')\pi_{s}(\phi_1|\x)\cdot \dots \cdot \pi_s(\phi_J|\x)}{\rho(s,\phi_1,\dots,\phi_J|\y;\x)q(s'|s)\pi_{s'}(\phi'|\x)\dots \pi_{s'}(\phi_J'|\x)}\right\} \nonumber\\ 
        & \qquad\mbox{(use Equation~\ref{eq:s phi joint prod})}\\
        &=\min\left\{1,\frac{\rho(s')q(s|s')\prod_{j=1}^J p(y_{j}|\phi_j')}{\rho(s)q(s'|s)\prod_{j=1}^J p(y_{j}|\phi_j)}\right\}\label{eq:ap-mcmc-prod}
    \end{align}
    set $X_{t+1}=(s',\phi_1',\dots,\phi_J')$ and otherwise set $X_{t+1}=(s,\phi_1,\dots,\phi_J)$.
\end{enumerate}
The version of this for the pooled loss with posterior
\begin{align}\label{eq:s phi joint pool}
\rho(s,\phi|\y;\x) &\propto
    \rho(s) p(\y|\phi) \frac{\pi(\phi)p_s(\x|\phi)}{p_s(\x)},
\end{align}
is similar but the state is $X_t=(s,\phi)$, it just needs one $\phi'\sim \pi_s(\phi'|\x)$ at step 2 and the final acceptance probability has $p(\y|\phi')/p(\y|\phi)$ replacing $p(y_j|\phi'_j)/p(y_j|\phi_j)$ in \eqref{eq:ap-mcmc-prod}.

\begin{figure}[!h]
    \centering
 \begin{overpic}
    [ width=.3\linewidth,height=0.2\textheight ]{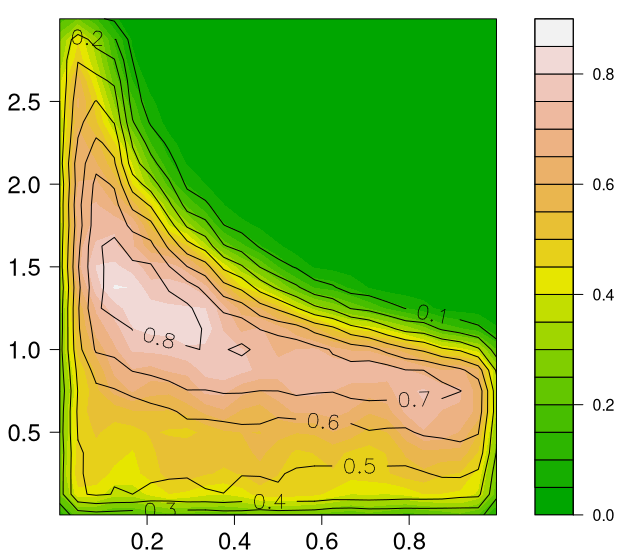} 
    \put(37,97){\tiny $J=10$}
    \put(-8,30)
    {\rotatebox{90}
    {\tiny $\rho_{(1,J)}(s|\y,\x)$}}
\end{overpic}
\hspace{0.3cm}
\begin{overpic}
    [ width=.3\linewidth,height=0.2\textheight ]{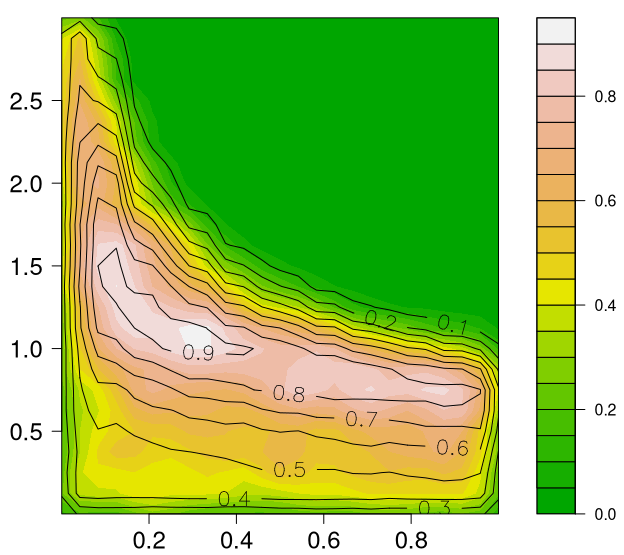} 
    \put(37,97){\tiny $J=20$}\put(-8,30)
    {\rotatebox{90}
    {\tiny $\rho_{(1,J)}(s|\y,\x)$}}
\end{overpic}
\hspace{0.3cm}
\begin{overpic}
    [ width=.3\linewidth,height=0.2\textheight ]{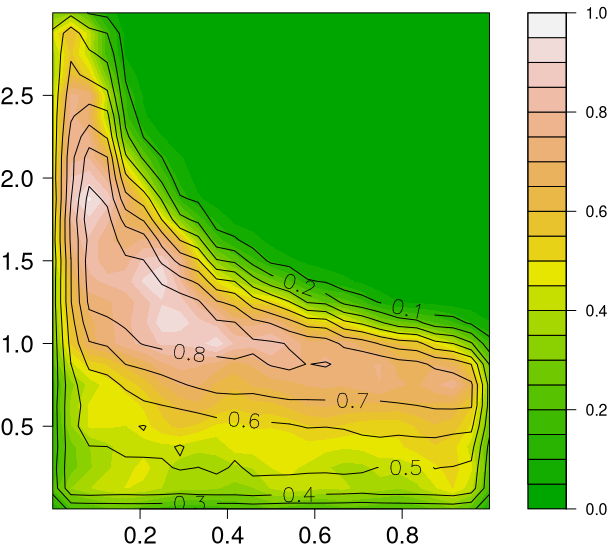} 
    \put(37,97){\tiny $J=40$}\put(-8,30)
    {\rotatebox{90}
    {\tiny $\rho_{(1,J)}(s|\y,\x)$}}
\end{overpic}

\vspace{1cm}
\begin{overpic}
    [ width=.3\linewidth,height=0.2\textheight ]{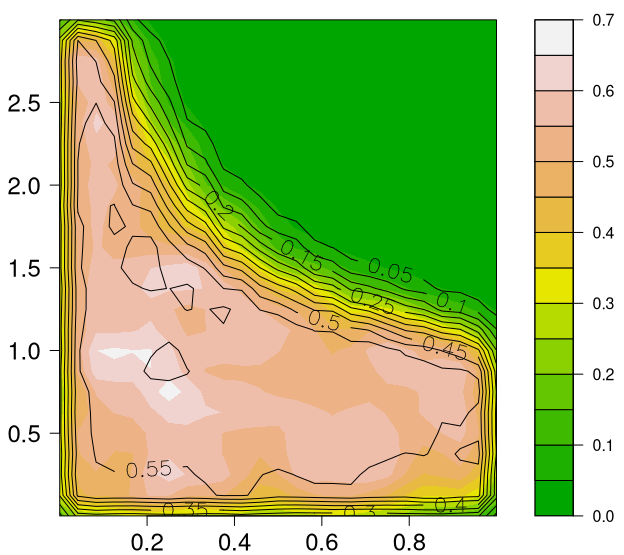} 
    \put(37,97){\tiny $J=10$}
    \put(-8,30)
    {\rotatebox{90}
    {\tiny $\rho_{(J,1)}(s|\y,\x)$}}
\end{overpic}
\hspace{0.3cm}
\begin{overpic}
    [ width=.3\linewidth,height=0.2\textheight ]{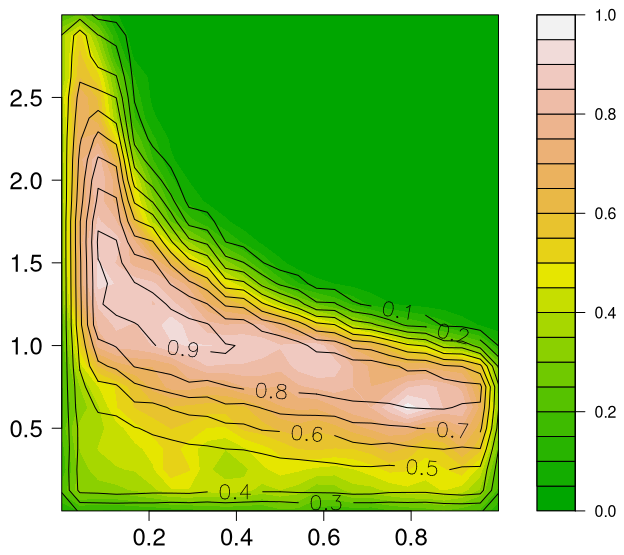} 
    \put(37,97){\tiny $J=20$}
    \put(-8,30)
    {\rotatebox{90}
    {\tiny $\rho_{(J,1)}(s|\y,\x)$}}
\end{overpic}
\hspace{0.3cm}
\begin{overpic}
    [ width=.3\linewidth,height=0.2\textheight ]{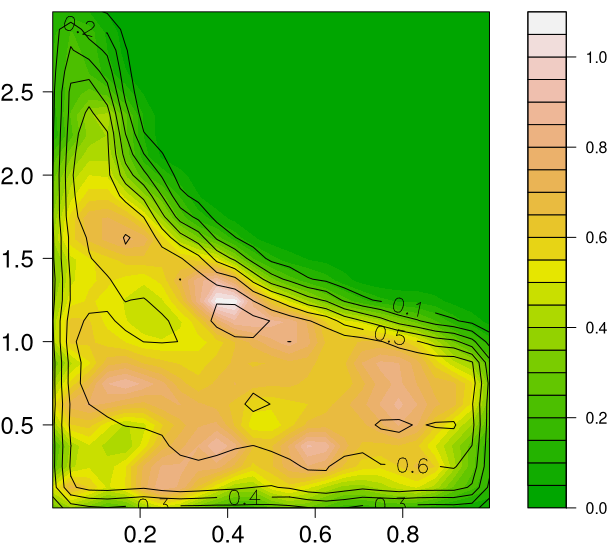} 
    \put(37,97){\tiny $J=40$}
    \put(-8,30)
    {\rotatebox{90}
    {\tiny $\rho_{(J,1)}(s|\y,\x)$}}
\end{overpic}
    \caption{
    Contour plots for $(\eta,\beta)$-SMI illustrating the smoothing and MCMC methods for approximating $\rho(s|\y,\x)$ for the SSM example in Section~\ref{sec:ssm}. The colour-level corresponds to the posterior density estimated using nested MCMC simulation, and the black contour lines correspond to the smoothing method. Training data is fixed and $n=10$. The calibration data has $J=10$ (first column), $J=20$ (second column), and $J=40$ (third column) respectively.
    The top row is pooled loss and the bottom row is product loss.
    For $J=10$, the pooled version is generated using 60,000 MCMC samples (without thinning) and product version is using 40,000 MCMC samples. For $J=20$, the pooled version is generated using 40,000 MCMC samples and product version is using 37,000 MCMC samples. For $J=40$, the pooled version is generated using 50,000 MCMC samples and product version is using 12,000 MCMC samples.
    The side chain length is $I=200$. }
    \label{fig:nested mcmc check}
\end{figure}

We compare the $s$ posterior generated using the two methods (smoothing and MCMC) for the State-Space model example from Section~\ref{sec:ssm} in Figure~\ref{fig:nested mcmc check}. We work on $(\eta,\beta)$-SMI with $n=10$ and $\sigma_{y,M}^*=0.7$.
The number of calibration data samples vary across columns as $J=10,20,40$, respectively. 
The color-map density obtained using MCMC and the black contour line from smoothing method agree well. The irregular structure bottom right is not a Monte-Carlo artifact as the agreement between the two estimators confirms. When we made the experimental verification of concentration for the product loss posterior, Figure~\ref{fig:eta b joint} bottom right, in Section~\ref{sec:ssm} we used the smoothing method rather than MCMC as we wished to take $J=1000$ which is impractical for MCMC specifically on the product loss due the $\phi_{1:J}$ sampling required.

\section{The KL-estimator}

\subsection{$\widehat{s}_{KL}$ definition and estimation}\label{Appendix:s-KL-eta-beta}

In this section, we define $\widehat{s}_{KL}$ and give an estimation procedure. 

The estimator $\widehat{s}_{KL}$ is a Bayes estimator minimising the expected posterior loss for miss-match between
$p_{s^*}(z|\y,\x)$ and $p_{s'}(z|\y,\x)$ as measured by KL-divergence:
\begin{align}\label{eq:s_kl}
\widehat{s}_{KL} &=\argmin_{s'} \mathbb{E}_{s\sim \rho(s|\y,\x)}[KL( p_s(z|\y,\x) \| p_{s'}(z|\y,\x)]
\\
&=\argmin_{s'} \mathbb{E}_{s\sim \rho(s|\y,\x)}[\mathbb{E}_{z \sim p_s(z|\y,\x)}\left( -\log  p_{s'}(z|\y,\x)\right )]
\end{align}

\begin{algorithmic}
\State Let $(s'_1,...,s'_K)$ be equally spaced values of $s\in \Omega_s$. 
\State 1. For each $s'_i$, we estimate $w_i = \frac{1}{TJ} \sum^T_{t=1} \sum^J_{j=1}\log p_{s'_i}(z_j|\y,\x)$ where $s_t \sim \rho(s|\y, \x)$, $t=1,...,T$ and $z_j|s_t \sim p_{s_t}(z|\y,\x)$, $j=1,...,J$.
\State 2. Smooth $w_1,...,w_K$ as a function of $s'$ and find $\widehat{s}_{KL} = \argmin_{s'} \widehat{KL}(s')$.
\end{algorithmic}

For example, to estimate $\widehat{b}_{KL}$ in Fig.~\ref{fig:normal_loss_b_gam_eta}, we took $T=400$ and $J=1000$.

\subsection{Harmonic mean estimator}\label{Appendix:KL-harmonic}

The harmonic mean estimator 
\begin{equation}\label{eq:harmonic}
    \widehat{\eta}_{hm} =\frac{1}{\mathbb{E}_{\eta\sim \rho(\eta|\y,\x)}[1/\eta]}\,.
\end{equation}
for $\eta^*$ approximates a Bayes estimator similar to $\widehat{\eta}_{KL}$, but minimising the expected posterior loss given by the KL-divergence between GBI posterior distributions 
\[
\pi_{\eta'}(\phi|\x,\y)\propto \pi(\phi)\,e^{-\eta \ell(\phi;\x,\y)}
\]
and $\pi_{\eta^*}(\phi|\x,\y)$. Here $\ell(\phi;\x,\y)$ is a loss, in our case the negative log-likelihood and $\pi_{\eta'}(\phi|\x,\y)$ the power posterior. The two posteriors at $\eta=\eta'$ and $\eta=\eta^*$ condition on both $\x$ (with $n$ samples) and $\y$ (with $J$). If $n'=n+J$ is large then the GBI-posterior is approximately $N(\bar{\phi},\frac{1}{\eta n'}H^{-1})$ where $\overline{\phi}=\argmin_\phi \ell(\phi;\x,\y)$, $H=\nabla^2 \widetilde{\ell}(\widetilde{\phi})$ and $\widetilde{\phi}=\argmin_\phi \widetilde{\ell}({\phi})$.  

This second expected posterior loss is approximated
\begin{equation} \label{eq:KL-epl-approx}
    \mathbb{E}_{\eta\sim \rho(\eta|\y,\x)}[KL( \pi_\eta(\phi|\y,\x) \| \pi_{\eta'}(\phi|\y,\x)] \approx  \mathbb{E}_{\eta\sim \rho(\eta|\y,\x)} \left( \log \frac{\eta}{\eta'} + \frac{\eta'}{\eta} \right) + c
\end{equation}
(up to constant terms not depending on $\eta'$) and finally
\[
    \argmin_{\eta'}\mathbb{E}_{\eta\sim \rho(\eta|\y,\x)} \left( \log \frac{\eta}{\eta'} + \frac{\eta'}{\eta} \right) =\widehat{\eta}_{hm}
\]
so $\widehat{\eta}_{hm}$ in \eqref{eq:harmonic} is approximately equal the Bayes estimator for the loss on the LHS of Equation~\eqref{eq:KL-epl-approx}.

This only motivates the harmonic mean for GBI, as in Section~\ref{sec:edisc}, not SMI, as in Section~\ref{sec:ssm}. We nevertheless explored (\ref{eq:harmonic}) for SMI simply as an alternative estimator.

\section{Further details on asymptotic behaviour of SMI-posteriors}

\subsection{Assumptions required for Theorems~\ref{thm:strong} and \ref{thm2}}\label{Appendix:SMI-convergence-WD-assumptions}

The following assumptions are required for the asymptotics in Section~\ref{subsec:seqGibbs} to hold. The theory in that section and the assumptions here come from \cite{Winter2023}. They have been adapted to apply to our SMI setting with mainly notational changes. The notation defined up to Theorem~\ref{thm:strong} is assumed.

\begin{assumption}\label{A:Winter1-old}  
Let $d$ be the Euclidean norm. Let $N_{\varphi,\epsilon}=\{ \varphi: d(\varphi,\widetilde{\varphi}_s)<\epsilon \}$ and $N_{\theta,\epsilon} =\{ \theta: d(\theta,\widetilde{\theta}_{\widetilde{\varphi}_s})<\epsilon \}$. \\
(i) The risk $\widetilde{\ell}_{s}(\varphi)$ is continuous at $\widetilde{\varphi}_s$ and for every $\epsilon>0$, it holds that $$\liminf_n \inf_{\varphi\in N^c_{\varphi,\epsilon}} ( \frac{1}{n}\ell_{s}(\varphi;\x) - \widetilde{\ell}_{s}( \widetilde{\varphi}_s)) >0.$$ \\
(ii) There exists $\delta>0$ such that $\widetilde{\ell}(\theta;\varphi)$ is continuous at $\widetilde{\theta}_{\varphi}$ for all $\varphi \in N_{\varphi,\delta}$ and for every $\epsilon>0$, it holds that $\liminf_n \inf_{\varphi\in N_{\varphi,\delta}} \inf_{\theta\in N^c_{\theta,\epsilon}} ( \frac{1}{n}\ell(\theta;\varphi,\x_2) - \widetilde{\ell}(\widetilde{\theta}_{\varphi};\varphi)) >0$. 
\end{assumption} 


\begin{assumption}\label{A:Winter3-old} 
\def\Ef{E}

(i) There exists an open bounded set $\Ef_\varphi$ such that $\widetilde{\varphi}_s \in \Ef_\varphi$, $\ell_{s}(\varphi;\x)$ has continuous third derivatives in $\Ef_\varphi$ and $\sup_n \sup_{\varphi\in \Ef_\varphi} \sup_{ijk} | \nabla^3_{\varphi_j,\varphi_j,\varphi_k} \frac{1}{n}\ell_{s}(\varphi;\x)|<\infty$. The Hessian $H_\varphi=\nabla^2_\varphi \widetilde{\ell}_{s}(\widetilde{\varphi}_s)$ is positive definite. There exist compact sets $K_\varphi\subseteq\Omega_\varphi$ such that $\widetilde{\ell}_{s}(\varphi)>\widetilde{\ell}_{s}(\widetilde{\varphi}_s )$ for $\varphi\in K_\varphi \setminus \{\widetilde{\varphi}_s \}$ and it holds that $$\liminf_n \inf_{\varphi\in K_\varphi\setminus \{\widetilde{\varphi}_s \}} ( \frac{1}{n} \ell_{s}(\varphi;\x)-\widetilde{\ell}_{s}(\widetilde{\varphi}_s))>0.$$\\
(ii) There exists an open bounded set $\Ef_\theta$ such that $\widetilde{\theta}_{ \widetilde{\varphi}_s} \in \Ef_\theta$, $\ell_{s}(\theta;\widetilde{\varphi}_s,\x_2)$ has continuous third derivatives in $\Ef_\theta$ and $\sup_n \sup_{\theta\in \Ef_\theta} \sup_{ijk} | \nabla^3_{\theta_i,\theta_j,\theta_k} \ell_{s}(\theta;\widetilde{\varphi}_s,\x_2)|<\infty$. The Hessian $H_\theta=\nabla^2_\theta \widetilde{\ell}(\widetilde{\theta}_{\widetilde{\varphi}_s};\widetilde{\varphi}_s)$ is positive definite. 
 There exist compact sets $K_\theta\subseteq\Omega_\theta$ such that $\widetilde{\ell}(\theta;\widetilde{\varphi}_s )>\widetilde{\ell}(\widetilde{\theta}_{ \widetilde{\varphi}_s};\widetilde{\varphi}_s )$ for $\theta\in K_\theta \setminus \{\widetilde{\theta}_{ \widetilde{\varphi}_s}\}$. It also holds that $$\liminf_n \inf_{\theta\in K_\theta\setminus \{\widetilde{\theta}_{ \widetilde{\varphi}_s}\}} (\frac{1}{n}\ell(\theta;\widetilde{\varphi}_s,\x_2 )-\widetilde{\ell}(\widetilde{\theta}_{ \widetilde{\varphi}_s};\widetilde{\varphi}_s ))>0.$$ 
\end{assumption}

\subsection{Detailed treatment of SMI as sequential Gibbs}\label{Appendix:gamma-is-eta}

We present the large $n$ behaviours of SMI-posteriors described in Section~\ref{sec:SMI-intro}, applying the asymptotic properties of the sequential Gibbs posterior in Section \ref{subsec:seqGibbs} established by \cite{Winter2023} in a straightforward manner. 
A probability density function and distribution are denoted by $p$ and $P$, respectively. The true generative density for $x_1$ and $x_2$ are denoted by $p^*_{x_1}$ and $p^*_{x_2}$ respectively.

The SMI-posteriors have the same conditional update for $\theta$ given $\varphi$ but different loss functions for $\varphi$. Given $\frac{1}{n_2}\ell(\theta;\varphi,\x)=\frac{-1}{n_2}\log p(\x_2|\varphi,\theta)$, $$\frac{1}{n_2}\ell(\theta|\varphi,\x) \xrightarrow{a.s.} -\mathbb{E}_{p^*_{x_2}}[\log p(x_2|\varphi,\theta)].$$ If Assumptions \ref{A:Winter1-old} (ii) and \ref{A:Winter3-old} (ii) hold for $\ell(\theta;\varphi,\x)$ then conditional posterior convergence is obtained.

We now focus on the empirical loss functions for $\varphi$, which is often formed as a marginal loss function. In this case almost sure convergence in Assumption~\ref{A:Winter0-old} need not hold. Assuming that $\alpha=\lim_{n\to\infty} n_1/n_2$, 
Corollary~\ref{cor:weak} shows that posterior consistency still holds if the loss functions converge in probability. 

\begin{corollary}\label{cor:weak}
Suppose Assumption~\ref{A:Winter0-old} does not hold but $\ell_{s}(\varphi;\x) \to \widetilde{\ell}_{s}(\varphi)$ in probability and $\frac{1}{n_2}\ell(\theta|\varphi,\x_2)\to \widetilde{\ell}(\theta;\varphi)$ in probability for every $\varphi\in\Omega_\varphi$. If the prior probabilities over $N_{\varphi,\epsilon}$ and $N_{\theta,\epsilon}$ are positive for all $\epsilon>0$ and Assumption \ref{A:Winter1-old} holds then $P^{(s)}(d((\varphi,\theta),(\widetilde{\varphi}_s,\widetilde{\theta}_{\widetilde{\varphi}_s}))<\epsilon |\x)\to 1$ in probability. 
\end{corollary}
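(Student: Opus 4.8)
The plan is to derive the in-probability statement from the almost-sure statement of Theorem~\ref{thm:strong} using the standard subsequence principle: a sequence of $[0,1]$-valued random variables converges to a constant in probability if and only if every subsequence admits a further subsequence converging to that constant almost surely. Writing $A_n = P^{(s)}(d((\varphi,\theta),(\widetilde{\varphi}_s,\widetilde{\theta}_{\widetilde{\varphi}_s}))<\epsilon\mid\x)$ for the posterior mass of the $\epsilon$-neighbourhood as a function of the sample size $n$ (equivalently $n_2$, since $n_1/n_2\to\alpha$), the goal is $A_n\to 1$ in probability, so it suffices to show that along any subsequence one can pass to a further subsequence on which Theorem~\ref{thm:strong} applies.

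First I would fix an arbitrary subsequence $(n_k)$. Since $\frac{1}{n}\ell_s(\varphi;\x)\to\widetilde{\ell}_s(\varphi)$ and $\frac{1}{n_2}\ell(\theta\mid\varphi,\x_2)\to\widetilde{\ell}(\theta\mid\varphi)$ in probability for each fixed $\varphi$ and $\theta$, these convergences persist along $(n_k)$. The proof of Theorem~\ref{thm:strong} uses the loss convergence of Assumption~\ref{A:Winter0-old} only to pin down the behaviour of the normalised losses at the optimisers $\widetilde{\varphi}_s$ and $\widetilde{\theta}_{\widetilde{\varphi}_s}$ and on a countable dense subset of $\Omega_\varphi\times\Omega_\theta$ used in the denominator lower bound; the control of the loss away from the neighbourhoods is supplied instead by the uniform separation condition of Assumption~\ref{A:Winter1-old}, which we retain unchanged. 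I would therefore use the elementary fact that convergence in probability at a single point yields an almost-surely convergent subsequence, and apply a diagonal extraction over the countably many relevant points to obtain a single further subsequence $(n_{k_l})$ along which both normalised losses converge almost surely at all of those points simultaneously.

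Along $(n_{k_l})$ the almost-sure form of Assumption~\ref{A:Winter0-old} is in force, while Assumption~\ref{A:Winter1-old} and the prior-positivity conditions continue to hold, so Theorem~\ref{thm:strong} gives $A_{n_{k_l}}\to 1$ almost surely. Because the starting subsequence was arbitrary and the limiting value $1$ is the same in every case, the subsequence principle yields $A_n\to 1$ in probability, which is the assertion. The main obstacle is the continuum of $\varphi$-values hidden in the phrase ``for every $\varphi$'': pointwise-in-$\varphi$ convergence in probability does not, on its own, produce a single subsequence along which almost-sure convergence holds uniformly in $\varphi$. The crux of the argument is thus the verification that Theorem~\ref{thm:strong} requires the loss convergence only at countably many points, legitimising the diagonal extraction; if one instead wished to invoke genuinely uniform control of the losses, the hypothesis would have to be strengthened to locally uniform convergence in probability of $\frac{1}{n}\ell_s(\cdot\,;\x)$ and $\frac{1}{n_2}\ell(\cdot\mid\varphi,\x_2)$.
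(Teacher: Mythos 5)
Your overall strategy---reduce the in-probability claim to the almost-sure statement of Theorem~\ref{thm:strong} via the subsequence principle---is a genuinely different route from the paper's. The paper instead re-runs the proof of Theorem~\ref{thm:strong} from \cite{Winter2023} directly, observing that the almost-sure loss convergence of Assumption~\ref{A:Winter0-old} enters only in the lower bound on the posterior denominator (their Equation~A.3), where the integrand now diverges in probability rather than almost surely, and then traces this weaker mode of convergence through the remaining steps to obtain the conclusion in probability.

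However, your argument has a gap at precisely the point you flag as the crux, and the flag does not discharge the obligation. The claim that Theorem~\ref{thm:strong} uses the loss convergence only at the optimisers and at a countable dense subset is asserted, not verified, and is almost certainly false as stated: the denominator bound is an integral of the exponentiated, recentred loss over a prior neighbourhood of $\widetilde{\varphi}_s$ (and likewise of $\widetilde{\theta}_{\widetilde{\varphi}_s}$ for the conditional update), and the standard Fatou argument requires $\frac{1}{n}\ell_s(\varphi;\x)\to\widetilde{\ell}_s(\varphi)$ at prior-almost-every $\varphi$ in that neighbourhood---an uncountable set. A countable dense subset controls such an integral only under an additional equicontinuity hypothesis on $\varphi\mapsto\frac{1}{n}\ell_s(\varphi;\x)$, which is not assumed. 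Consequently your diagonal extraction does not produce a subsequence on which the hypotheses of Theorem~\ref{thm:strong} actually hold. The argument can be repaired in either of two ways: (i) perform the subsequence extraction with respect to the product measure $P\otimes\pi$---pointwise convergence in probability for every $\varphi$ gives convergence in $P\otimes\pi$-probability by dominated convergence, hence a further subsequence along which, for $P$-almost-every data sequence, the loss converges for $\pi$-almost-every $\varphi$, which is exactly what Fatou needs; or (ii) follow the paper and propagate the in-probability divergence of the denominator integrand through Winter's proof directly. As written, the proposal is incomplete.
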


\begin{proof}
    The proof is identical to the proof of Theorem~\ref{thm:strong} given in \cite{Winter2023}, except that the integrand in the denominator of Equation~A.3 goes to infinity in probability. Tracing this change in the mode of convergence back through their proof gives the stated result.
\end{proof}

A marginalised loss function is approximated by Laplace's method, and the approximation is known to be no worse than $O(n_2^{-1})$ \citep{Wong1989,Schillings2020}. Recently, \citet{BILODEAU2023109839} presented the tight lower error bound for some well-known probability models. We use this in Appendix~\ref{Appendix:Laplace} to establish some background results. The convergence property of Laplace's method is written for $p(\x_2|\varphi)$ in Theorem \ref{T:Laplace} in Appendix~\ref{Appendix:Laplace} and the asymptotic behaviour of $\frac{1}{n_2} \log p(\x_2|\varphi)$ is given in Corollary \ref{C:Laplace}. Corollary \ref{cor:gamma_loss_varphi} shows the posterior consistency for the $\gamma$-SMI posterior.

\begin{corollary}\label{cor:gamma_loss_varphi}
If Assumption \ref{A:Laplace} holds for $p(x_2|\varphi,\theta)$, a marginalized loss converges in probability; 
$\frac{1}{n_2}\ell_\gamma(\varphi;\x) \xrightarrow{p} -\gamma \mathbb{E}_{p^*_{x_2}}[\log p(x_2|\varphi,\widetilde{\theta}_\varphi)] -\alpha\mathbb{E}_{p^*_{x_1}}[\log p(x_1|\varphi)]$. If Assumption \ref{A:Winter1-old} (i) holds substituting $\ell_{s}(\varphi;\x)$ with $\ell_\gamma(\varphi;\x)$, and there is a positive prior probability in a neighborhood of $(\widetilde{\varphi}_\gamma,\widetilde{\theta}_{\widetilde{\varphi}_\gamma})$, then $P^{(\gamma)}(d((\varphi,\theta),(\widetilde{\varphi}_\gamma,\widetilde{\theta}_{\widetilde{\varphi}_\gamma}))<\epsilon|\x)\to 1$ in probability. 
\end{corollary}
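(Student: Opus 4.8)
The plan is to reduce the statement to Corollary~\ref{cor:weak}: once we show that the marginal loss $\frac{1}{n_2}\ell_\gamma(\varphi;\x)$ converges in probability to the claimed limit for every fixed $\varphi$, and that the conditional $\theta$-loss converges, posterior consistency follows directly from that corollary. The only genuinely new ingredient is the first convergence, because $\ell_\gamma$ contains the marginal likelihood $p(\x_2|\varphi)=\int p(\x_2|\varphi,\theta')\pi(\theta')\,d\theta'$; it is precisely this marginalisation that prevents the almost-sure convergence required by Assumption~\ref{A:Winter0-old} and forces us to work with convergence in probability.

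First I would split the loss as $\ell_\gamma(\varphi;\x)=-\gamma\log p(\x_2|\varphi)-\log p(\x_1|\varphi)$ and treat the two terms separately. For the well-specified module, the strong law of large numbers gives $\frac{1}{n_1}(-\log p(\x_1|\varphi))\xrightarrow{a.s.}-\mathbb{E}_{p^*_{x_1}}[\log p(x_1|\varphi)]$, and using $n_1/n_2\to\alpha$ this yields $\frac{1}{n_2}(-\log p(\x_1|\varphi))\to-\alpha\,\mathbb{E}_{p^*_{x_1}}[\log p(x_1|\varphi)]$. For the marginal term I would invoke Corollary~\ref{C:Laplace}, which (under Assumption~\ref{A:Laplace} for $p(x_2|\varphi,\theta)$) controls the Laplace approximation of $p(\x_2|\varphi)$: the integrand is maximised at a random mode $\bar\theta_\varphi$ whose profile term dominates, while the remaining Laplace corrections $\bigl(\tfrac{d_\theta}{2}\log(2\pi/n_2)-\tfrac12\log|H_{\bar\theta_\varphi}|+\log\pi(\bar\theta_\varphi)+O(n_2^{-1})\bigr)$ are $o(n_2)$. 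Hence $\frac{1}{n_2}\log p(\x_2|\varphi)=\frac{1}{n_2}\log p(\x_2|\varphi,\bar\theta_\varphi)+o_p(1)\xrightarrow{p}\mathbb{E}_{p^*_{x_2}}[\log p(x_2|\varphi,\widetilde{\theta}_\varphi)]$, where $\bar\theta_\varphi\xrightarrow{p}\widetilde{\theta}_\varphi$ is the minimiser of $\widetilde{\ell}(\theta|\varphi)$. Multiplying by $-\gamma$ and adding the $\x_1$ term gives exactly the stated limit in probability.

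With the loss convergence in hand, I would verify the hypotheses of Corollary~\ref{cor:weak}. The conditional loss needs no Laplace argument: $\frac{1}{n_2}\ell(\theta;\varphi,\x_2)=-\frac{1}{n_2}\log p(\x_2|\varphi,\theta)\xrightarrow{a.s.}-\mathbb{E}_{p^*_{x_2}}[\log p(x_2|\varphi,\theta)]$ by the law of large numbers, which is in particular convergence in probability. Taking $\widetilde{\ell}_\gamma(\varphi)$ to be the limit just established, Assumption~\ref{A:Winter1-old}(i) is assumed to hold with $\ell_\gamma$ substituted for $\ell_{s}(\varphi;\x)$, Assumption~\ref{A:Winter1-old}(ii) holds for the conditional loss, and the prior is assumed positive on a neighbourhood of $(\widetilde{\varphi}_\gamma,\widetilde{\theta}_{\widetilde{\varphi}_\gamma})$. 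Corollary~\ref{cor:weak} then delivers $P^{(\gamma)}(d((\varphi,\theta),(\widetilde{\varphi}_\gamma,\widetilde{\theta}_{\widetilde{\varphi}_\gamma}))<\epsilon\,|\,\x)\to 1$ in probability.

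The main obstacle is the marginal-loss step: one must be sure that the Laplace expansion of $\log p(\x_2|\varphi)$ is valid with probability tending to one (the mode $\bar\theta_\varphi$ interior, the Hessian positive definite) and that the error term is controlled well enough in the data to survive division by $n_2$ as a genuine $o_p(1)$. This is exactly what Theorem~\ref{T:Laplace} and Corollary~\ref{C:Laplace} are designed to supply, so the remaining effort is bookkeeping: checking that their conclusions hold pointwise in $\varphi$, as required by the convergence hypothesis of Corollary~\ref{cor:weak}, and that the random mode converges to the population minimiser $\widetilde{\theta}_\varphi$ rather than to some spurious stationary point.
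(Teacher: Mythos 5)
Your proposal is correct and follows essentially the same route as the paper: decompose $\ell_\gamma$ into the $\x_1$ log-likelihood term (handled by the strong law and $n_1/n_2\to\alpha$) and the marginalised $\x_2$ term (handled by Corollary~\ref{C:Laplace} via the Laplace approximation), then feed the resulting convergence in probability into Corollary~\ref{cor:weak}. The extra detail you supply on the $o_p(1)$ Laplace corrections and the convergence of the random mode $\overline{\theta}_\varphi$ to $\widetilde{\theta}_\varphi$ is exactly what the paper delegates to Theorem~\ref{T:Laplace} and Corollary~\ref{C:Laplace}.
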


\begin{proof}
Following Corollary \ref{C:Laplace}, $$\frac{1}{n_2}\log p(\x_2|\varphi) \xrightarrow{p} \mathbb{E}_{p^*_{x_2}}[\log p(x_2|\varphi,\widetilde{\theta}_\varphi)]$$ and $\frac{1}{n_2}\log p(\x_1|\varphi) \xhookrightarrow{a.s.} \alpha\mathbb{E}_{p^*_{x_1}}[\log p(x_1|\varphi)]$. This is sufficient to show the convergence of $\frac{1}{n_2}\ell_\gamma(\varphi;\x)$ in probability. Following Corollary \ref{cor:weak}, the $\gamma$-SMI posterior consistency is obtained.
\end{proof}

\begin{corollary}\label{cor:same_smi}
Substituting $\ell_{s}(\varphi;\x)$ with $\ell_\eta(\varphi;\x)$, if Assumption \ref{A:Winter1-old} (i) holds and there is a positive prior probability in a neighborhood of $\widetilde{\varphi}_\eta$, then $P^{(\eta)}(d(\varphi,\widetilde{\varphi}_\eta)<\epsilon|\x)\to 1$ in probability. 
Let $\varphi^{(\eta)}\sim \pi_\eta(\varphi|\x)$ and $\varphi^{(\gamma)}\sim \pi_\gamma(\varphi|\x)$. With Corollary \ref{cor:gamma_loss_varphi}, if $\eta=\gamma$ then $P(d(\varphi^{(\eta)},\varphi^{(\gamma)})<\epsilon) \to 1$ in probability. 
\end{corollary}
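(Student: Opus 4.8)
The statement bundles two claims: first, marginal posterior consistency of the $\eta$-SMI posterior, $P^{(\eta)}(d(\varphi,\widetilde{\varphi}_\eta)<\epsilon|\x)\to 1$ in probability; second, that when $\eta=\gamma$ the $\eta$-SMI and $\gamma$-SMI draws for $\varphi$ coincide asymptotically. The plan is to prove the first by a Laplace argument paralleling Corollary~\ref{cor:gamma_loss_varphi}, and then to obtain the second by observing that the two limiting losses are the \emph{same} function of $\varphi$ when $\eta=\gamma$, so both posteriors concentrate on a common point, after which a triangle inequality finishes the job.

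For the first claim I would apply Laplace's method to the tempered marginal in the $\eta$-loss \eqref{loss:eta_varphi}, writing $\int p(\x_2|\varphi,\theta')^\eta \pi(\theta')d\theta' = \int \exp\{\eta \sum_i \log p(x_{2,i}|\varphi,\theta')\}\pi(\theta')\,d\theta'$. The key point is that because $\eta>0$ is a fixed constant, tempering does not move the interior maximizer: the integrand is maximized at the same $\overline{\theta}'_\varphi$ as the untempered likelihood, and $\overline{\theta}'_\varphi$ converges to the pseudo-true $\widetilde{\theta}_\varphi$. Invoking the Laplace machinery (Assumption~\ref{A:Laplace} and Corollary~\ref{C:Laplace}, established for the $\gamma$-SMI marginal) applied now to the tempered integrand, the subexponential prefactor — prior value, $(2\pi/n_2)^{d_\theta/2}$ and Hessian determinant — contributes only $O(n_2^{-1}\log n_2)$ after dividing by $n_2$ and taking logs. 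This gives $\tfrac{1}{n_2}\log \int p(\x_2|\varphi,\theta')^\eta \pi(\theta')d\theta' \xrightarrow{p} \eta\,\mathbb{E}_{p^*_{x_2}}[\log p(x_2|\varphi,\widetilde{\theta}_\varphi)]$, and combined with $\tfrac{1}{n_2}\log p(\x_1|\varphi)\xrightarrow{a.s.}\alpha\,\mathbb{E}_{p^*_{x_1}}[\log p(x_1|\varphi)]$ yields $\tfrac{1}{n_2}\ell_\eta(\varphi;\x)\xrightarrow{p}\widetilde{\ell}_\eta(\varphi)$ with $\widetilde{\ell}_\eta(\varphi)=-\eta\,\mathbb{E}_{p^*_{x_2}}[\log p(x_2|\varphi,\widetilde{\theta}_\varphi)]-\alpha\,\mathbb{E}_{p^*_{x_1}}[\log p(x_1|\varphi)]$. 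I would then feed this convergence-in-probability of the loss into Corollary~\ref{cor:weak}, with $\ell_\eta$ substituted for $\ell_s$, using Assumption~\ref{A:Winter1-old}(i) and the positive prior near $\widetilde{\varphi}_\eta$, to conclude consistency. As a stronger but less symmetric alternative, the same conclusion follows almost surely from Corollary~\ref{cor:aux_loss_joint} via the auxiliary-variable representation \eqref{eq:eta-smi-joint}, since the joint loss is a sum over independent data; I would note this route but keep the Laplace version to match the in-probability statement.

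For the second claim the decisive observation is that $\widetilde{\ell}_\eta(\varphi)$ above coincides \emph{exactly} with the limiting $\gamma$-loss from Corollary~\ref{cor:gamma_loss_varphi}, namely $-\gamma\,\mathbb{E}_{p^*_{x_2}}[\log p(x_2|\varphi,\widetilde{\theta}_\varphi)]-\alpha\,\mathbb{E}_{p^*_{x_1}}[\log p(x_1|\varphi)]$, whenever $\eta=\gamma$; the tempered marginal (inside the integral) and the tempered log-marginal (outside the log) are distinct integrals but share the same leading-order Laplace rate. Consequently the stationary-point sets coincide, $\widetilde{\varphi}_\eta=\widetilde{\varphi}_\gamma$. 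By the first claim $\varphi^{(\eta)}$ concentrates at $\widetilde{\varphi}_\eta$ and by Corollary~\ref{cor:gamma_loss_varphi} $\varphi^{(\gamma)}$ concentrates at $\widetilde{\varphi}_\gamma=\widetilde{\varphi}_\eta$, both in probability and over the same data draw $\x$; the triangle inequality $d(\varphi^{(\eta)},\varphi^{(\gamma)})\le d(\varphi^{(\eta)},\widetilde{\varphi}_\eta)+d(\widetilde{\varphi}_\eta,\varphi^{(\gamma)})$ together with a union bound then gives $P(d(\varphi^{(\eta)},\varphi^{(\gamma)})<\epsilon)\to 1$. The hard part will be the Laplace step for the tempered marginal: specifically, confirming that tempering by a fixed $\eta$ leaves the exponential rate equal to $\eta$ times the MLE log-likelihood with negligible prefactor, and that this holds pointwise in $\varphi$ with enough uniformity to supply the hypothesis of Corollary~\ref{cor:weak}. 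Once the regularity in Assumption~\ref{A:Laplace} secures this, the remainder is routine bookkeeping with the triangle inequality.
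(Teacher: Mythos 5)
Your proposal follows essentially the same route as the paper's proof: a Laplace approximation of the tempered marginal $\int p(\x_2|\varphi,\theta')^\eta\pi(\theta')\,d\theta'$ showing $\tfrac{1}{n_2}\ell_\eta(\varphi;\x)\xrightarrow{p}-\alpha\,\mathbb{E}_{p^*_{x_1}}[\log p(x_1|\varphi)]-\eta\,\mathbb{E}_{p^*_{x_2}}[\log p(x_2|\varphi,\widetilde{\theta}_\varphi)]$, identification of this limit with the $\gamma$-SMI limiting loss from Corollary~\ref{cor:gamma_loss_varphi} when $\eta=\gamma$, and then consistency via Corollary~\ref{cor:weak}. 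Your added remarks — that tempering by fixed $\eta>0$ does not move the interior maximizer, and the explicit triangle-inequality/union-bound step for the closeness of $\varphi^{(\eta)}$ and $\varphi^{(\gamma)}$ — are correct elaborations of details the paper leaves implicit.
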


\begin{proof}
Under Assumption \ref{A:Laplace}, $p_\eta(\x_2|\varphi)$ is approximated by Laplace's method and $\frac{1}{n_2}\ell_\eta(\varphi;\x)$ is approximated by 
\[  \frac{1}{n_2}\widehat{\ell_\eta}(\varphi;\x) = -\frac{1}{n_2} \log p(\x_1|\varphi) - \frac{1}{n_2} \log \dfrac{p(\widetilde{\theta}_\varphi) (2\pi)^{d_\theta/2} \exp(- \eta \ell(\varphi,\widetilde{\theta}_\varphi;\x_2))}{ n_2^{d_\theta/2} \sqrt{|\eta \nabla^2_\theta \ell(\varphi,\widetilde{\theta}_\varphi;\x_2)|} } \]
where $\ell(\varphi,\widetilde{\theta}_\varphi;\x_2)=-\log p(\x_2|\varphi, \widetilde{\theta}_\varphi)$. Taking the limit on $n_2$ for a fixed $\alpha$, we get $\lim\limits_{n_2\to\infty} \frac{1}{n_2}\widehat{\ell_\eta}(\varphi;\x)  = - \alpha \mathbb{E}_{p^*_{x_1}}[\log p(x_1|\varphi)] -\eta\mathbb{E}_{p^*_{x_2}}[\log p(x_2|\varphi, \widetilde{\theta}_\varphi)]$ which equals to the limiting loss in Corollary \ref{cor:gamma_loss_varphi} if $\gamma=\eta$; 
$$ \frac{1}{n_2}\ell_\gamma(\varphi;\x) \xrightarrow{p} - \alpha \mathbb{E}_{p^*_{x_1}}[\log p(x_1|\varphi)] -\gamma\mathbb{E}_{p^*_{x_2}}[\log p(x_2|\varphi,\theta^*_\varphi)] = \lim\limits_{n_2\to\infty} \frac{1}{n_2}\widehat{\ell_\eta}(\varphi;\x) \,.$$
Applying Corollary \ref{C:Laplace}, both of the empirical loss functions converge to $$- \alpha \mathbb{E}_{p^*_{x_1}}[\log p(x_1|\varphi)] -\eta\mathbb{E}_{p^*_{x_2}}[\log p(x_2|\varphi,\theta^*_\varphi)]$$ in probability and so the corresponding SMI posterior densities converge in probability by continuous mapping.
\end{proof}

\subsection{Marginalized loss}\label{Appendix:Laplace}


Suppose that $r(\varphi,\theta;\x_2)=-\frac{1}{n_2}\sum_{i=1}^{n_2} \log p(\x_{2,i}|\varphi,\theta)$. Assuming iid samples,  $r(\varphi,\theta;\x_2)\to \widetilde{r}(\theta,\varphi)$ almost surely where $\widetilde{r}(\theta,\varphi) = \mathbb{E}_{p^*_{x_2}}[-\log p(x_2|\varphi,\theta)]$. Let $\widetilde{\theta}_\varphi$ be the minimizer of $\widetilde{r}(\theta,\varphi)$ for a given $\varphi$. Using the Laplace method, we approximate $p(\x_2|\varphi)$ and present the asymptotic behaviour.

\begin{assumption}\label{A:Laplace}
We assume that for any $\varphi\in\Omega_\varphi$, (i) Both $\pi(\theta)$ and $r(\varphi,\theta;\x_2)$ are $C^\infty$. (ii) There exists an $\overline{\theta}_\varphi\in\Omega_\varphi$ such that for every $\epsilon>0$, $\delta_\epsilon>0$ where $\delta_\epsilon = \inf \{ r(\varphi,\theta;\x_2)-r(\varphi,\overline{\theta}_\varphi;\x_2): \theta\in \Omega_\theta \text{ and } \|\theta-\overline{\theta}_\varphi \| \geq \epsilon\} $. (iii) The Hessian $H_{\overline{\theta}_\varphi}:= \nabla^2_\theta r(\varphi,\overline{\theta}_\varphi;\x_2)$ is positive definite and finite.
\end{assumption}

The above assumption implies that $\overline{\theta}_\varphi$ is the unique minimizer of $r(\theta,\varphi;\x_2)$ for a given $\varphi$. 

\begin{theorem}\label{T:Laplace}
Under Assumption \ref{A:Laplace}, as $n\to\infty$ we have 
\[ p(\x_2|\varphi) =\int p(\x_2|\varphi,\theta)\pi(\theta)d\theta =\widehat{p(\x_2|\varphi)}  (1+O(n^{-1}))\]
where  
\[ \widehat{p(\x_2|\varphi)}  = \dfrac{\pi(\overline{\theta}_\varphi) (2\pi)^{d_\theta/2} \exp(-n_2 r(\varphi,\overline{\theta}_\varphi;\x_2))}{ n_2^{d_\theta/2} \sqrt{| \nabla^2_\theta r(\varphi,\overline{\theta}_\varphi;\x_2)|} } \,. \]
\end{theorem}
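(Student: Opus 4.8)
The plan is to apply the classical Laplace approximation to the integral $p(\x_2|\varphi)=\int \exp(-n\,r(\varphi,\theta;\x_2))\,p(\theta)\,d\theta$, where I have used $p(\x_2|\varphi,\theta)=\exp(-n\,r(\varphi,\theta;\x_2))$ from the definition of $r$ and identified $n=n_2$. The strategy is the standard one (see \citet{Wong1989}): split the domain into a small ball $B_\epsilon=\{\theta:\|\theta-\overline{\theta}_\varphi\|<\epsilon\}$ around the minimiser $\overline{\theta}_\varphi$ and its complement, show the complement contributes a negligible amount, and evaluate the $B_\epsilon$-contribution by a second-order Taylor expansion of $r$ combined with a Gaussian integral.

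First I would treat the dominant contribution from $B_\epsilon$. Since $\overline{\theta}_\varphi$ minimises $r(\varphi,\cdot;\x_2)$ and $r$ is $C^\infty$ by Assumption~\ref{A:Laplace}(i), we have $\nabla_\theta r(\varphi,\overline{\theta}_\varphi;\x_2)=0$, so Taylor's theorem gives
\[
r(\varphi,\theta;\x_2)=r(\varphi,\overline{\theta}_\varphi;\x_2)+\tfrac12(\theta-\overline{\theta}_\varphi)^T H_{\overline{\theta}_\varphi}(\theta-\overline{\theta}_\varphi)+R_3(\theta),
\]
with $R_3$ of third order in $\|\theta-\overline{\theta}_\varphi\|$ and $H_{\overline{\theta}_\varphi}$ positive definite by Assumption~\ref{A:Laplace}(iii). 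The change of variables $u=\sqrt{n}(\theta-\overline{\theta}_\varphi)$ turns the quadratic term into $\tfrac12 u^T H_{\overline{\theta}_\varphi}u$ and rescales the volume element by $n^{-d_\theta/2}$, while $nR_3$ becomes $O(n^{-1/2})$ pointwise in $u$. Expanding the prior as $p(\theta)=p(\overline{\theta}_\varphi)+O(\|\theta-\overline{\theta}_\varphi\|)$ (the paper writes the leading value $p(\overline{\theta}_\varphi)$ as $\pi(\overline{\theta}_\varphi)$) and carrying out the Gaussian integral $\int e^{-u^T H_{\overline{\theta}_\varphi} u/2}\,du=(2\pi)^{d_\theta/2}|H_{\overline{\theta}_\varphi}|^{-1/2}$ recovers the leading term $\widehat{p(\x_2|\varphi)}$ exactly as stated.

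The tail is controlled using Assumption~\ref{A:Laplace}(ii): on the complement of $B_\epsilon$ we have $r(\varphi,\theta;\x_2)-r(\varphi,\overline{\theta}_\varphi;\x_2)\ge\delta_\epsilon>0$, so that contribution is bounded by $\exp(-n\,r(\varphi,\overline{\theta}_\varphi;\x_2))\exp(-n\delta_\epsilon)\int p(\theta)\,d\theta$, which is exponentially smaller than the $n^{-d_\theta/2}$-scale leading term and is absorbed into the error. For the order of the surviving corrections, the odd (third-order) terms integrate to zero against the symmetric Gaussian, so the first nonvanishing correction comes from the fourth-order term in $R_3$, the square of the third-order term, and the first-order prior expansion, each scaling as $n^{-1}$ relative to the leading term; this yields the multiplicative error $1+O(n^{-1})$.

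The main obstacle is making the error bound genuinely $O(n^{-1})$ rather than merely $o(1)$: this requires controlling $R_3$ and the prior expansion uniformly on $B_\epsilon$ and showing the neglected Gaussian tails from truncating the $u$-integral to $\|u\|<\sqrt{n}\epsilon$ decay fast enough. All of this is standard under the $C^\infty$ and positive-definiteness assumptions, and the precise $O(n^{-1})$ rate is exactly the content of \citet{Wong1989} and \citet{Schillings2020}, so I would invoke those references to close the remainder estimates rather than re-derive them.
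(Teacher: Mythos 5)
Your proposal is correct and follows essentially the same route as the paper, which simply cites the standard asymptotic expansion in Section IX of Wong (1989) and omits the details; your sketch is a faithful reconstruction of that classical argument (ball/complement split, quadratic Taylor expansion with Gaussian integral, exponential tail bound from $\delta_\epsilon$, and cancellation of odd terms to get the relative $O(n^{-1})$ rate), closing the remainder estimates by the same references the paper relies on.
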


\begin{proof}
Theorem \ref{T:Laplace} is a straightforward variant of Section IX in \citet{Wong1989}, and we omit the proof as it is identical. With the Assumption \ref{A:Laplace}, $p(\x_2|\varphi)$ has an asymptotic expansion of the form 
\[
p(\x_2|\varphi)  = e^{-n_2 r(\varphi,\overline{\theta}_\varphi;\x_2)} (\sum^{k'}_{k=0} c_k n_2^{-k} + O(n_2^{-d_\theta/2-k'}))
\]
where $k'$ is some positive integer and $c_0 = \pi(\overline{\theta}_\varphi) (2\pi)^{d_\theta/2}| \nabla^2_\theta r(\varphi,\widetilde{\theta}_\varphi;\x_2)|^{1/2}$.
\end{proof}

The Laplace approximation $\widehat{p(\x_2|\varphi)} $ in Theorem \ref{T:Laplace} is still valid under weaker conditions that $\pi(\theta)$ has continuous and $r(\varphi,\theta;\x_2)$ is continuous second-order partial derivatives in the neighbourhood of $\overline{\theta}_\varphi$ \citep{Wong1989}. Another variant of the Laplace approximation result is presented in \citet{Schillings2020}. 

\begin{corollary}\label{C:Laplace}
Under Assumption \ref{A:Laplace}, for large $n$, it holds that $\widehat{p(\x_2|\varphi)} \xrightarrow{\text{p}} p(\x_2|\varphi) $ and $\frac{-1}{n_2}\log p(\x_2|\varphi) \xrightarrow{p}\widetilde{r}(\varphi,\widetilde{\theta}_\varphi)$ where $\widetilde{\theta}_\varphi$ is the minimizer of $\widetilde{r}(\theta,\varphi)$ for a given $\varphi$. 
\end{corollary}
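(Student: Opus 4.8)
The plan is to read both claims off the Laplace expansion in Theorem~\ref{T:Laplace} and then reduce the second to a standard M-estimation argument for the convergence of the minimised loss. For the first claim, Theorem~\ref{T:Laplace} already gives $p(\x_2|\varphi)=\widehat{p(\x_2|\varphi)}\,(1+O(n_2^{-1}))$, so the relative error satisfies $\widehat{p(\x_2|\varphi)}/p(\x_2|\varphi)=1/(1+O(n_2^{-1}))\to 1$ and hence $\widehat{p(\x_2|\varphi)}-p(\x_2|\varphi)=p(\x_2|\varphi)\,O(n_2^{-1})\to 0$ as $n_2\to\infty$, giving $\widehat{p(\x_2|\varphi)}\xrightarrow{p}p(\x_2|\varphi)$ with no work beyond the theorem.

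For the second claim I would take logarithms in the expansion and divide by $-n_2$, obtaining
\[
\frac{-1}{n_2}\log p(\x_2|\varphi)=r(\varphi,\overline{\theta}_\varphi;\x_2)+\frac{d_\theta\log n_2}{2n_2}+O(n_2^{-1}),
\]
where the $O(n_2^{-1})$ collects $\log\pi(\overline{\theta}_\varphi)$, $\tfrac12\log|\nabla^2_\theta r(\varphi,\overline{\theta}_\varphi;\x_2)|$, the $\log(2\pi)$ constant and $\log(1+O(n_2^{-1}))$. Under Assumption~\ref{A:Laplace} the prior density $\pi(\overline{\theta}_\varphi)$ is bounded away from $0$ and the Hessian $H_{\overline{\theta}_\varphi}$ is positive and finite, so every term on the right other than $r(\varphi,\overline{\theta}_\varphi;\x_2)$ is $o(1)$. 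It therefore remains to show that the minimised empirical loss converges, $r(\varphi,\overline{\theta}_\varphi;\x_2)\xrightarrow{p}\widetilde{r}(\varphi,\widetilde{\theta}_\varphi)$.

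I would establish this by a sandwich. For the upper bound, $\overline{\theta}_\varphi$ minimises $r(\varphi,\cdot;\x_2)$, so $r(\varphi,\overline{\theta}_\varphi;\x_2)\le r(\varphi,\widetilde{\theta}_\varphi;\x_2)$, and the pointwise law of large numbers at the fixed point $\widetilde{\theta}_\varphi$ gives $r(\varphi,\widetilde{\theta}_\varphi;\x_2)\to\widetilde{r}(\varphi,\widetilde{\theta}_\varphi)$ a.s., whence $\limsup_{n_2}r(\varphi,\overline{\theta}_\varphi;\x_2)\le\widetilde{r}(\varphi,\widetilde{\theta}_\varphi)$. For the lower bound I would invoke argmin consistency: combining the well-separation condition in Assumption~\ref{A:Laplace}(ii) with (locally) uniform convergence of $r(\varphi,\cdot;\x_2)$ to $\widetilde{r}(\varphi,\cdot)$ yields $\overline{\theta}_\varphi\xrightarrow{p}\widetilde{\theta}_\varphi$ in the manner of Theorem~5.7 of \citet{Vaart2007}, so that $\widetilde{r}(\varphi,\widetilde{\theta}_\varphi)=\inf_\theta\widetilde{r}(\varphi,\theta)\le\widetilde{r}(\varphi,\overline{\theta}_\varphi)=r(\varphi,\overline{\theta}_\varphi;\x_2)+o_p(1)$ and hence $\liminf_{n_2}r(\varphi,\overline{\theta}_\varphi;\x_2)\ge\widetilde{r}(\varphi,\widetilde{\theta}_\varphi)$. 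The two bounds combine to give the claim.

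The main obstacle is precisely this argmin-to-argmin transfer: moving from the \emph{pointwise} convergence $r\to\widetilde{r}$ assumed at the outset to convergence of the minimised value requires control of the empirical minimiser $\overline{\theta}_\varphi$, i.e.\ a uniform (or at least locally uniform) law of large numbers together with identifiability of $\widetilde{\theta}_\varphi$. Assumption~\ref{A:Laplace} supplies the smoothness, the non-degenerate Hessian and the well-separation that such an M-estimation argument needs; the only point requiring care is that these hold uniformly enough to make legitimate both the $o_p(1)$ control of the correction terms and the replacement of $\widetilde{r}(\varphi,\overline{\theta}_\varphi)$ by $r(\varphi,\overline{\theta}_\varphi;\x_2)$ used in the lower bound.
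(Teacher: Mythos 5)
Your proposal follows essentially the same route as the paper's proof: the first claim is read directly off the relative-error bound in Theorem~\ref{T:Laplace}, and the second is obtained by taking logarithms of the Laplace approximation, discarding the $O(\log n_2/n_2)$ correction terms, and reducing to the convergence of the minimised empirical loss $r(\varphi,\overline{\theta}_\varphi;\x_2)\to\widetilde{r}(\varphi,\widetilde{\theta}_\varphi)$. Your sandwich argument is a more careful justification of this last step, which the paper dispatches in one line by asserting that loss convergence yields minimiser convergence; the implicit uniform-convergence requirement you flag is present but unstated in the paper's version as well.
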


\begin{proof} If Assumption \ref{A:Laplace} holds, for large $n$ we have that $\frac{p(\x_2|\varphi) - \widehat{p(\x_2|\varphi)}}{ \widehat{p(\x_2|\varphi)}} = O(n^{-1}) $. There exists $ c'<\infty$ such that $ | p(\x_2|\varphi) - \widehat{p(\x_2|\varphi)}| < \frac{c' \widehat{p(\x_2|\varphi)}}{n_2}$. There exists $\epsilon$ such that $\lim_{n_2 \to\infty} P\left ( \frac{c'\widehat{p(\x_2|\varphi)}}{n_2}  > \epsilon \right )  =0 $ then  $\lim_{n_2\to\infty} P ( | p(\x_2|\varphi) - \widehat{p(\x_2|\varphi)}| >\epsilon )  =0$ and this is sufficient to show $\widehat{p(\x_2|\varphi)} \xrightarrow{\text{p}} p(\x_2|\varphi) $. The loss convergence $r(\varphi,\theta;\x_2)\xrightarrow{a.s.} \widetilde{r}(\varphi,\theta)$ yields the minimizer convergence, $\overline{\theta}_\varphi \to \widetilde{\theta}_\varphi$.

The risk approximation is 
$$\frac{-1}{n_2} \log\widehat{p(\x_2|\varphi)} = -\frac{\log(\pi(\overline{\theta}_\varphi)(2\pi)^{d_\theta/2})}{n_2} + \frac{d_\theta\log n_2}{2 n_2} +r(\varphi,\overline{\theta}_\varphi; \x_2) +\frac{1}{2n_2}\log |\nabla^2_\theta r(\varphi,\overline{\theta}_\varphi ; \x_2) |  $$
and as $n\to\infty$, all terms become negligible except $r(\varphi,\overline{\theta}_\varphi;\x_2)$ so $\lim_{n_2\to\infty}  \frac{-1}{n_2} \log\widehat{p(\x_2|\varphi)} = \widetilde{r}(\varphi,\widetilde{\theta}_\varphi)$. By the transitivity, $\frac{-1}{n_2}\log p(\x_2|\varphi) \xrightarrow{p} \widetilde{r}(\varphi,\widetilde{\theta}_\varphi)$. 
\end{proof}


\subsection{Computation for marginalized loss}\label{Appendix:computing-gamma-SMI}

In the $\gamma$-SMI posterior, the module-specific parameter $\theta'$ in the belief update for $\varphi$ is marginalized. The integral is not always solved analytically and is often numerically estimated. There is a wide range of works on marginal likelihood methods, and these methods are straightforwardly adapted to marginalized loss estimation. \citet{frazier2023} used the marginalized marginal likelihood identity \citep{Chib1995} with the Laplace approximated posterior distribution.

We use the adaptive Gauss-Hermite quadrature (AGHQ) method to estimate a marginalized loss. Gaussian quadrature is a well-known numerical method that integrates the product of the $d$-dimensional Gaussian density and any polynomial of total order $2k-1$ or less for a carefully chosen number of quadrature points, $k$. An integral is approximated by a weighted sum of $d$-dimensional integrands at quadrature points; points and weights are generated from the family quadrature rules. The relative rate of AGHQ is known to be $O(n^{-\lfloor (k+2)/3\rfloor })$ \citep{Liu1994,Jin2020}. Recently \citet{Bilodeau2024} showed that this relative rate still holds for approximating the marginal likelihood. The Laplace approximation is a special case of AGHQ with one quadrature point for AGH, and the relative error is known to be $O(n^{-1})$ \citep{Kass1989}. We used the R-package \texttt{adhq} \citep{Stringer2021} to approximate a marginalized loss using the AGHQ.

\section{Simulation study}

\subsection{Normal Mixture example}\label{Appendix:Normal}

Suppose we have two datasets, $\x_1=x_{1,1:n_1}$ and $\x_2=x_{2,1:n_2}$, informing unknown parameters $\varphi$ and $\theta$. Following the framework in Figure \ref{fig:Multimodular_model}, the $x_1$-module is reliable and $x_2$-module is misspecified. Parametric models are normal for both, $f(x_1) = N(\phi,\sigma^2_1)$ with known $\sigma^2_1$ and $f(x_2) = N(\varphi+\theta,\sigma^2_2)$ with known $\sigma^2_2$. A uniform prior is assigned for $\varphi$ and $\theta\sim N(0,s^2_\theta=0.33^2)$. Suppose that the proportion of data points from the $x_1$ over $x_2$ modules is $\alpha = \lim\limits_{n_2\to\infty} n_1/n_2=0.5$. The true data generative models are $p^*_{x_1}=N(\varphi^*, \sigma^{2}_{1})$ and $p^*_{x_2}= \lambda^* N(\varphi^*,\sigma_{2}^{2}) + (1-\lambda^*) N(\theta^*,\sigma_2^2)$ where $\varphi^*=0$, $\theta^*=6$, $\sigma_{2}^{2}=1$ and $\sigma_{1}^{2}=4^2$. 
\begin{figure}[h]
\centering
    \includegraphics[width=6.5cm,height=5cm]{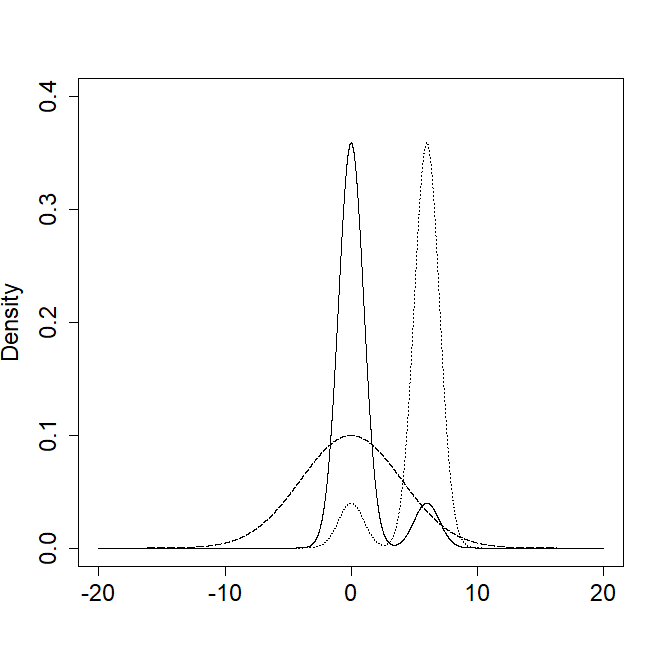}
    \caption{True data generative models; $p^*_{x_1}$ is shown as a dashed line and $p^*_{x_2}$ for $\lambda^*=0.9$ as a solid line and $\lambda^*=0.1$ as a dotted line.}
    \label{fig:normal_true}
\end{figure}
Here $\lambda^*$ determines the level of misspecification. See Figure \ref{fig:normal_true} for the densities. 

The $\gamma$-SMI posterior is    
$$\pi_\gamma(\varphi|\x)\pi(\theta|\x_2,\varphi) = N(\varphi; \mu_{\varphi,\gamma}, \sigma^2_{\varphi,\gamma}) N(\theta; \mu_{\theta|\varphi}, \sigma^2_{\theta|\varphi})$$

   where $\mu_{\varphi,\gamma} = \sigma^2_{\varphi,\gamma}\left( \frac{\gamma \sum_{i=1}^{n_2} \x_{2,i}}{\sigma^2_2+n_2 s^2_\theta} + \frac{\sum_{j=1}^{n_1} \x_{1,i}}{\sigma_1^2}\right)$, $\sigma^2_{\varphi,\gamma} = \left( \frac{\gamma n_2}{\sigma^2_2+n_2 s^2_\theta} + \frac{ n_1}{\sigma_1^2} \right)^{-1}$, \\ $\mu_{\theta|\varphi } = \rho(\sum^{n_2}_{i=2}\x_{2,i}/n_2-\varphi)$, $\sigma^2_{\theta|\varphi }= \left( \frac{1}{s_\theta^2} + \frac{n_2  }{\sigma_2^2} \right)^{-1}$ and $\rho = n_2 \sigma^2_{\theta|\varphi}/\sigma^2_2$.

    The $\eta$-SMI posterior is 
       $$\pi_\eta(\varphi|\x)\pi(\theta|\x_2,\varphi) = N(\varphi; \mu_{\varphi,\eta},\sigma^2_{\varphi,\eta}) N(\theta;\mu_{\theta|\varphi},\sigma_{\theta|\varphi}^2)$$
    where $\sigma^2_{\varphi,\eta} = \left( \dfrac{\eta n_2 }{ \sigma_2^2+n_2 \eta s_\theta^2} +\dfrac{n_1}{\sigma_1^2} \right)^{-1}$, 
 $\mu_{\varphi,\eta}=\sigma^2_{\varphi,\eta} \left(  \dfrac{\eta \sum_{i=1}^{n_2} \x_{2,i}}{ \sigma_2^2+n_2 \eta s_\theta^2} + \dfrac{\sum_{j=1}^{n_1} \x_{1,j}}{\sigma_1^2} \right) $ and $\pi(\theta|\x_2,\varphi)$ is the same as for the $\gamma$-SMI posterior. 

The $\beta$ SMI posterior is not analytically trackable, and we simulate the joint posterior, including an auxiliary variable $\theta'$ then marginalize it. The average loss function for $\varphi$ with an auxiliary variable $\theta'$ is simplified 
\begin{align*}
\frac{1}{n_2}\ell_{\beta}(\varphi,\theta';\x)=&
\frac{-1}{n_2(\beta-1)}\sum_{i=1}^{n_2} \left( \frac{e^{-(\x_{2,i}-\varphi-\theta)^2/2\sigma_2^2}}{\sqrt{2\pi \sigma_2^2}}\right)^{\beta-1}\\
&\quad + \quad \frac{1}{n_2\beta}\beta^{-\frac{3}{2}}(2\pi\sigma_2^2)^{-\frac{\beta-1}{2}}-\frac{\alpha}{n_1}\sum_{i=1}^{n_1}\log \frac{e^{(\x_{1,i}-\varphi)^2/2\sigma_1^2}}{\sqrt{2\pi \sigma_1^2}}
\end{align*}
With an auxiliary variable $\theta'$, the joint $\beta$-SMI posterior is 
\[ \pi_\beta(\theta,\varphi,\theta'|\x) = \pi_\beta(\varphi,\theta'|\x) \pi(\theta|\x_2,\varphi) \propto
        \exp(-\ell_{\beta}(\varphi, \theta'; \x))\pi(\varphi,\theta') \pi(\theta|\x_2,\varphi) \,. \]
For each $\beta$ value, a total of 8000 posterior samples are simulated after discarding 2000 as a burn-in and thinning by 10. 

\subsubsection{Loss for $s$}\label{Appendix:Normal_joint}

Let $\x=\{\x_1,\x_2\}$ be the training data learning parameters and $\y=\{\y_1,\y_2\}$ be the calibration data. Let $\overline{\y}_1$ and $\overline{\y}_2$ be the mean of $y_{1,1:J}$ and $y_{2,1:J}$ respectively. For simplicity, we give details for the pooled case $l(s;\y,\x)$ and write down results for the product case. 

We consider the $\gamma$-SMI posterior predictive for $\y_1$ and $\y_2$ for simplicity in notation. Since $\pi_\gamma(\varphi|\x)$ is Gaussian, the posterior predictive for $\y_{1}$ is analytically tractable, and $l(\gamma;\y_1,\x)$ is simplified 
\begin{align*}
 l(\gamma;\y_1,\x) & = -\log \int \prod^{J}_{i=1} p(y_{1,i}|\varphi,\sigma_1^2) \pi_\gamma(\varphi|\x) d\varphi \\ 
&= \frac{J}{2}\log(2\pi\sigma^2_1)-\frac{1}{2}\log\Big(\frac{\sigma_1^2}{J\sigma^2_{\varphi,\gamma}+\sigma^2_1}\Big)+\frac{1}{2\sigma_1^2}\sum^J_{i=1}(y_{1,i}-\overline{\y}_1)^2 + \frac{J(\overline{\y}_1-\mu_{\varphi,\gamma})^2}{2(J\sigma^2_{\varphi,\gamma}+\sigma^2_1)} \,.
\end{align*}
For the pooled loss, with Theorem 1, the optimal hyperparameter ($J\to\infty$) is 
\[ \widetilde{\gamma}_{(1,\infty)} = \argmin_\gamma \log \sigma^2_{\varphi,\gamma} + \frac{\mu_{\varphi,\gamma}^2}{\sigma^2_{\varphi,\gamma}} \,.\] 
For the product loss $l(\gamma;y,\x)=-\sum_i\log(p_\gamma(y_{1,i}|\x))$, the optimal hyperparameter (in the limit $J\to\infty$) is 
\begin{align*}
\widetilde{\gamma}_{(\infty,1)} & = \argmin_\gamma[\mathbb{E}_{y\sim p^*_{x_1}}[ l(\gamma;y,\x)]] = \argmin_\gamma \frac{1}{2(\sigma^2_{\varphi,\gamma}+\sigma^2_1)} \Big( {\mathbb{E}[Y^2]}+\mu^2_{\varphi,\gamma}-2\mu_{\varphi,\gamma}\mathbb{E}[Y] \Big) \\ & = \argmin_\gamma \frac{1}{2(\sigma^2_{\varphi,\gamma}+4^2)} (4^2+\mu^2_{\varphi,\gamma})\end{align*}




The loss function for $\eta$ associated with the $\eta$-SMI posterior and the optimal values are obtained by replacing $\sigma^2_{\varphi,\gamma}$ by $\sigma^2_{\varphi,\eta}$ and $\mu_{\varphi,\gamma}$ by $\mu_{\varphi,\eta}$. 

The loss function for $b=1/\beta$ is associated with $\beta$-SMI. Given $\beta$-SMI posterior samples $\varphi_1,..,\varphi_T \sim \pi_\beta(\varphi|\x)$, the posterior predictive $p_\beta(\y_1|\x)$ for the pooled loss is estimated using the Monte Carlo method,
\[e^{-l_{(1,J)}(\beta;\y_{1},\x)} \approx \frac{1}{T} \sum^T_{j=1} \prod^J_{i=1}p(y_{1,i}|\varphi_j,\sigma^2_1)\,, \]
and in the product case
\[
e^{-l_{(J,1)}(\beta;\y_{1},\x)}\approx \prod^J_{i=1} \left[\frac{1}{T} \sum^T_{j=1} p(y_{1,i}|\varphi_j,\sigma^2_1)\right]\,.
\]
\subsubsection{$s$-estimation under varying levels of misspecification}\label{sec:normal1}

In this experiment the training data is $\x=\{\x_{1,1:n_1},\x_{2,1:n_2}\}$ with $n_1=30$ and $n_2=60$. The calibration data $\y$ has a fixed size $J=10^4$ blocks (and $J=10^3$ for $\beta$-SMI simulation). Viewing the $\eta$/$\gamma$-SMI posterior as interpolating between the cut and conventional posterior distributions, we assign $\eta$ and $\gamma$ uniform priors on $[0,1]$. Figures \ref{fig:normal_loss_gamma_rep} and \ref{fig:normal_loss_b_rep} illustrate the behaviour of the posterior mean estimate of the $\gamma$-SMI posterior and $\beta$-SMI posterior respectively under varying levels of misspecification, based on 100 repetitions (i.e., 100 realizations of $\x$). Both pseudo optimal parameters remain small, with low variation, until $\lambda^*$ becomes sufficiently high ($\approx 1$), with $\widetilde{b}_{(1,\infty)}$ showing less sensitivity on average. Under over-parametrization ($\lambda^*=1$), the variation in both optimal parameters increases. Specifically $\widetilde{\gamma}_{(1,\infty)}$ centers around 1 and $\widetilde{b}_{(1,\infty)}$ centers around 1.7. The $\beta$-loss encompasses a broader range of belief updates (as discussed in Section 2.1) and effectively distinguishes between underfitting ($\lambda^*<1$) and overfitting ($\lambda^*=1$). 

\begin{figure} 
    \centering
    \begin{overpic}[ width=.4\linewidth,height=0.15\textheight ]{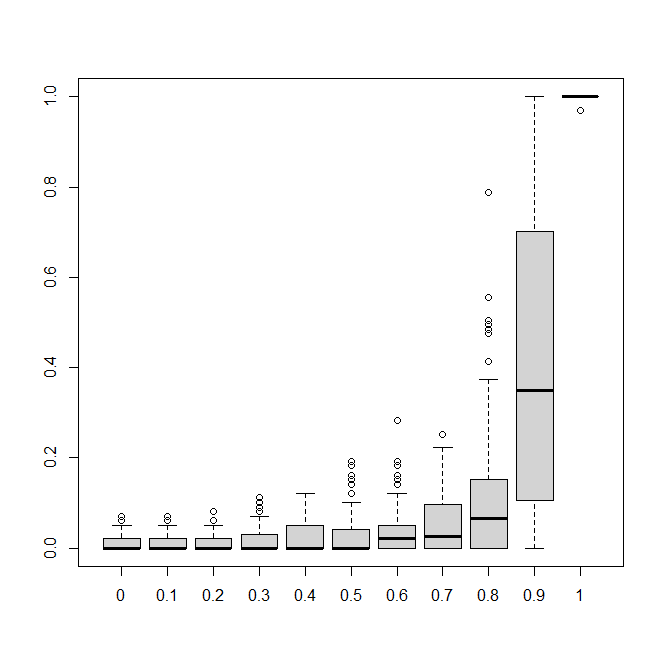}  
    \put(40,0){\tiny $\lambda^*$ values} \put(-5,30){\rotatebox{90}{\tiny $\widetilde{\gamma}_{(1,\infty)}$}}
    \end{overpic}
    \begin{overpic}[ width=.4\linewidth,height=0.15\textheight ]{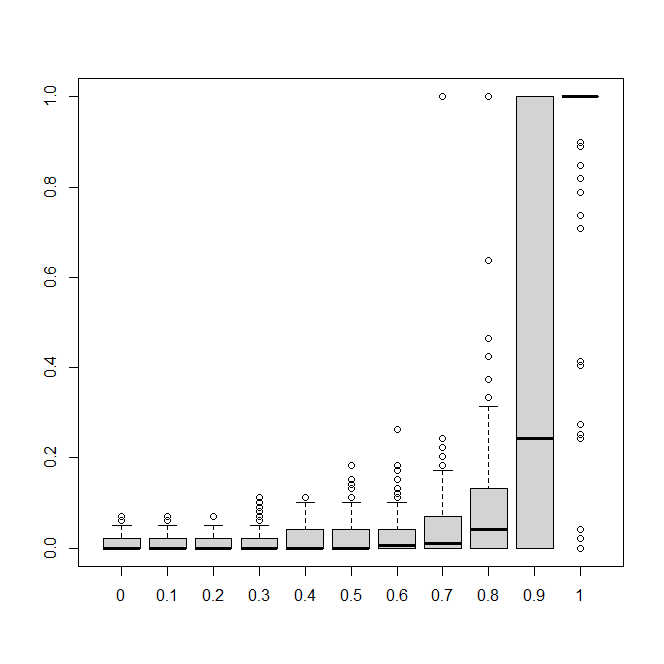}  
    \put(40,0){\tiny $\lambda^*$ values} \put(-5,30){\rotatebox{90}{\tiny $\widetilde{\gamma}_{(\infty,1)}$}}
    \end{overpic}
    \begin{overpic}[ width=.4\linewidth,height=0.15\textheight ]{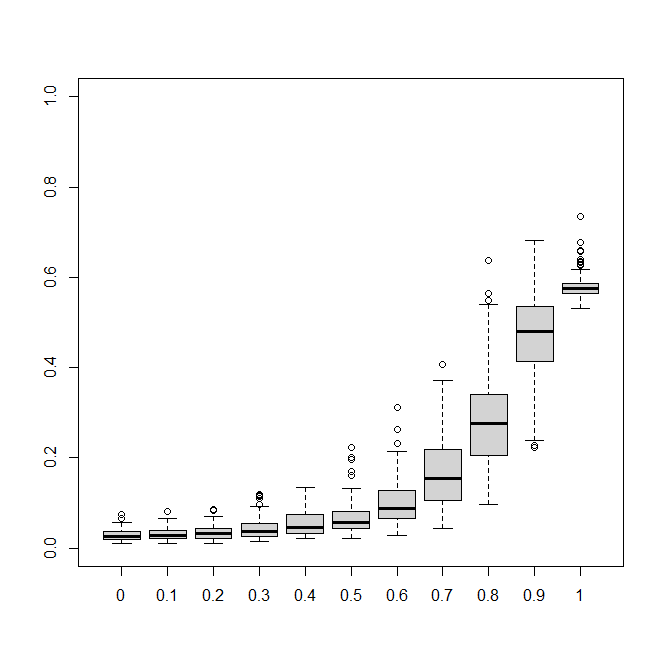}  
     \put(40,0){\tiny $\lambda^*$ values}  \put(-5,30){\rotatebox{90}{\tiny $\widehat{\gamma}_{(1,J)}$}}
    \end{overpic}
    \begin{overpic}[ width=.4\linewidth,height=0.15\textheight ]{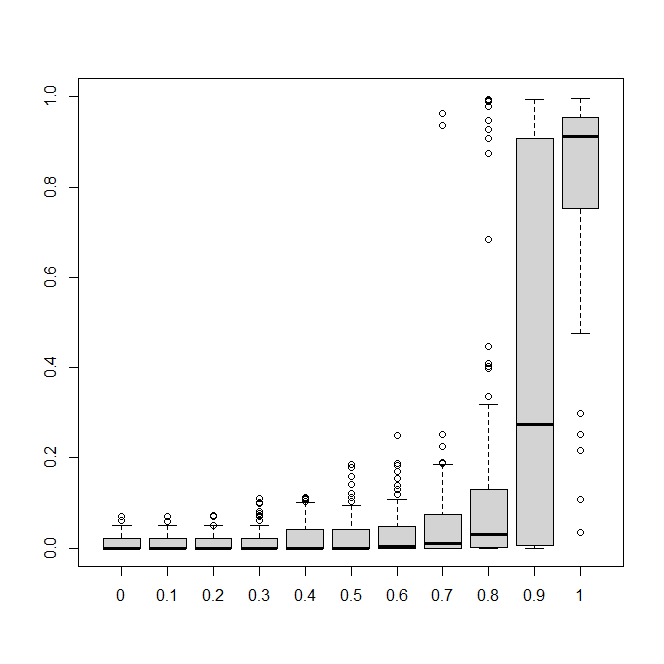}  
    \put(40,0){\tiny $\lambda^*$ values}  \put(-5,30){\rotatebox{90}{\tiny $\widehat{\gamma}_{(J,1)}$}}
    \end{overpic}
     \caption{Summary of $\widetilde{\gamma}_{(1,\infty)}$ and posterior mean estimates, $\widehat{\gamma}_{(1,J)}$ and $\widehat{\gamma}_{(J,1)}$ with $J=10^4$ from 100 replicates as described in Section \ref{sec:normal1}. }
    \label{fig:normal_loss_gamma_rep}
\end{figure}

\begin{figure}
    \centering
    \begin{overpic}[ width=.4\linewidth,height=0.2\textheight ]{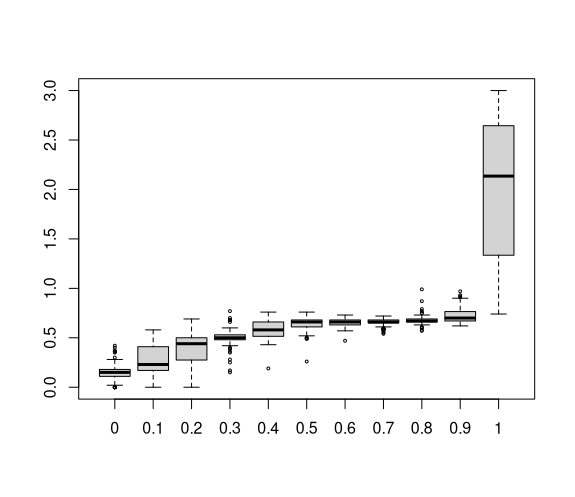}  
    \put(40,0){\tiny $\lambda^*$ values} \put(-5,30){\rotatebox{90}{\tiny $\tilde{b}_{(1,\infty)}$}}
    \end{overpic}
    \hspace{0.4cm}
    \begin{overpic}[ width=.4\linewidth,height=0.2\textheight ]{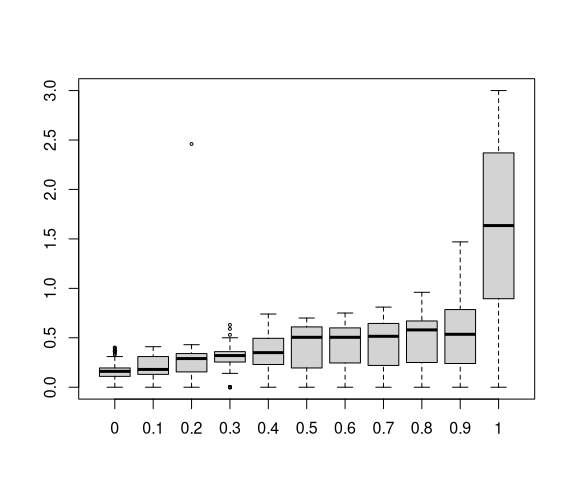}  
    \put(40,0){\tiny $\lambda^*$ values} \put(-5,30){\rotatebox{90}{\tiny $\widetilde{b}_{(\infty,1)}$}}
    \end{overpic}
    
    \begin{overpic}[ width=.4\linewidth,height=0.2\textheight ]{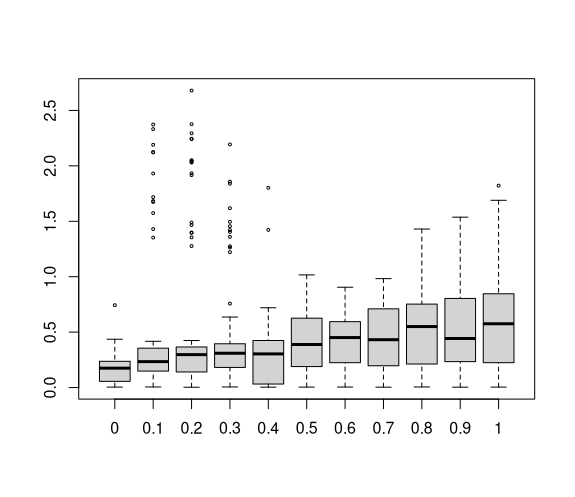}  
     \put(40,0){\tiny $\lambda^*$ values}  \put(-5,30){\rotatebox{90}{\tiny $\widehat{b}_{(1,J)}$}}
    \end{overpic}
    \hspace{0.4cm}
    \begin{overpic}[ width=.4\linewidth,height=0.2\textheight ]{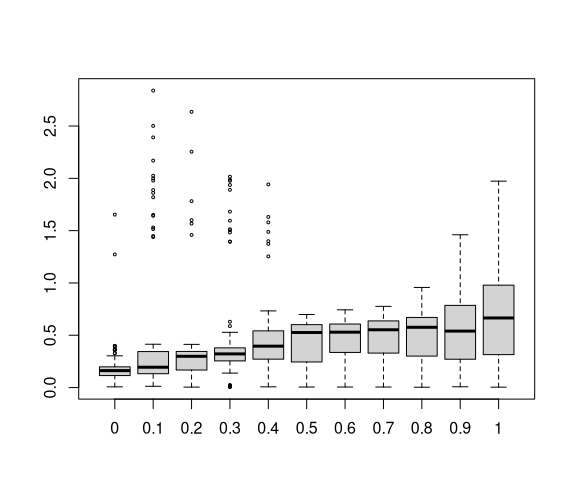}  
    \put(40,0){\tiny $\lambda^*$ values}  \put(-5,30){\rotatebox{90}{\tiny $\widehat{b}_{(J,1)}$}}
    \end{overpic}
     \caption{Summary of $\widetilde{b}_{(1,\infty)}$ and posterior mean estimates, $\widehat{b}_{(1,J)}$ and $\widehat{b}_{(J,1)}$ with $J=10^3$ from 100 replicates as described in Section \ref{sec:normal1}.}
    \label{fig:normal_loss_b_rep}
\end{figure}

Simulation results for $\eta$-SMI in Figure~\ref{fig:normal_eta_rep} show some similarities with the $\gamma$-SMI (left column, product-posterior in Figure~\ref{fig:normal_loss_gamma_rep}) as we might expect from Corollary~\ref{cor:same_smi} (despite the sample sizes $n_1,n_2$  in $\x$, which are only moderate and non-asymptotic). 

\begin{figure}[ht]
    \centering
    \begin{overpic}[ width=.8\linewidth,height=0.2\textheight ]{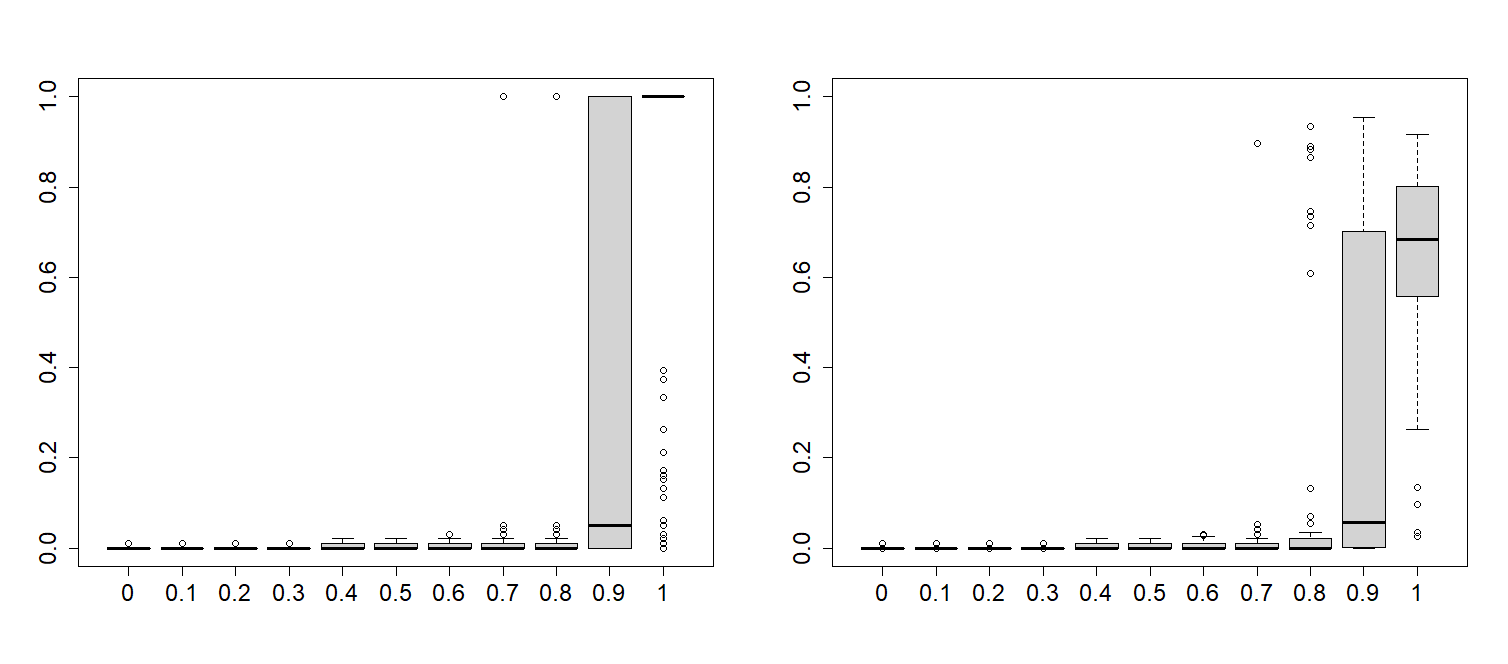}  
    \put(20,0){\tiny $\lambda^*$ values} \put(20,35){\tiny $\widetilde{\eta}_{(\infty,1)}$}
    \put(70,0){\tiny $\lambda^*$ values} \put(70,35){\tiny $\widehat{\eta}_{(J,1)}$}
    \end{overpic}
     \caption{Summary of $\widetilde{\eta}_{(\infty,1)}$ and posterior mean estimates, $\widehat{\eta}_{(J,1)}$, with $J=10^4$ from 100 replicates as described in Section \ref{sec:normal1}. }
    \label{fig:normal_eta_rep}
\end{figure}

\subsection{Asymptotic distribution of $s$ - further experiments}\label{Appendix:s-post-asym-further}

The $J$-dependence of the posterior density for $b$, $\gamma$ and $\eta$ was illustrated in Figure~\ref{fig:normal_loss_b_gam_eta} in Section~\ref{sec:normal2} for mild misspecification. In Figures \ref{fig:normal_loss_cut} and \ref{fig:normal_loss_overparam} we take two extremes of misspecification, $\lambda^*=0.08$ and 1. For $\eta$ and $\gamma$ in the product-posteriors in Figures \ref{fig:normal_loss_cut} and \ref{fig:normal_loss_overparam} the optimal $s$-values are on the boundary at 0 for $\lambda^*=0$ and at 1 for $\lambda^*=1$. These cases are governed by the non-standard analysis in Section~\ref{asymptotic_boundary}.

The flexible representation of misspecification in $\beta$-divergence becomes more evident in the asymptotic regime; $\widetilde{b}_{(1,\infty)}$ tends to be larger than $\widetilde{\gamma}_{(1,\infty)}$ and the overfit (signalled by $b>1$) and misfit are identified. A higher posterior concentration for $b$ is observed in comparison to $\gamma$ and $\eta$ posteriors. 

In general $\widehat{s}_{KL}$ (the cross marks) shows the expected downward shift relative to the mean (triangles).
This is clearest in Figure~\ref{fig:normal_loss_b_gam_eta} at mild misspecification and small calibration sample sizes. This is often a shift towards $\widetilde{s}_{(1,\infty)}$.

\begin{figure}  
    \centering
    \begin{overpic}[ width=.32\linewidth,height=0.17\textheight ]{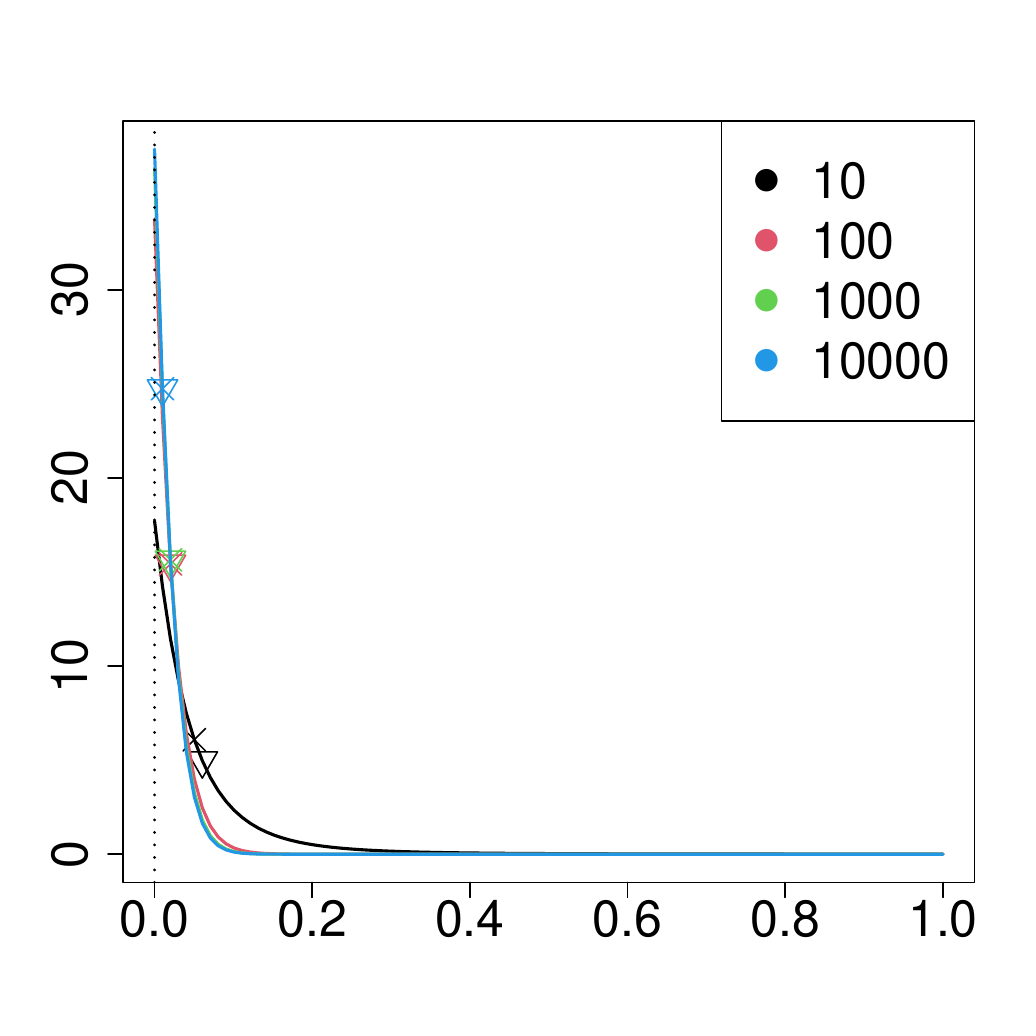}  
    \put(40,0){\tiny $\gamma$ values} \put(-5,25){\rotatebox{90}{\tiny $\rho_{(1,J)}(s|\y,\x)$}}
    \end{overpic}
    \begin{overpic}[ width=.32\linewidth,height=0.17\textheight ]{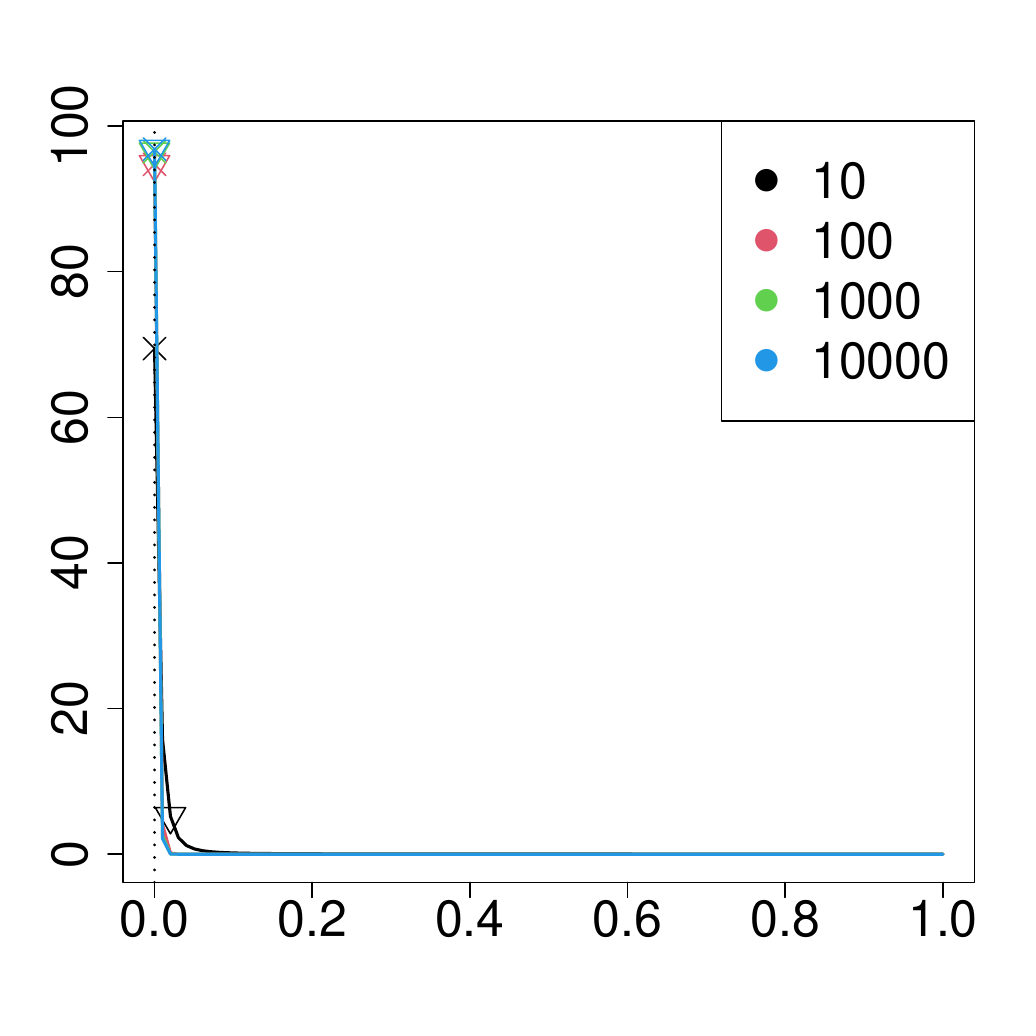}  
     \put(40,0){\tiny $\eta$ values}  \put(-5,25){\rotatebox{90}{\tiny $\rho_{(1,J)}(\eta|\y,\x)$}}
    \end{overpic}
    \begin{overpic}[width=.32\linewidth,height=0.17\textheight ]{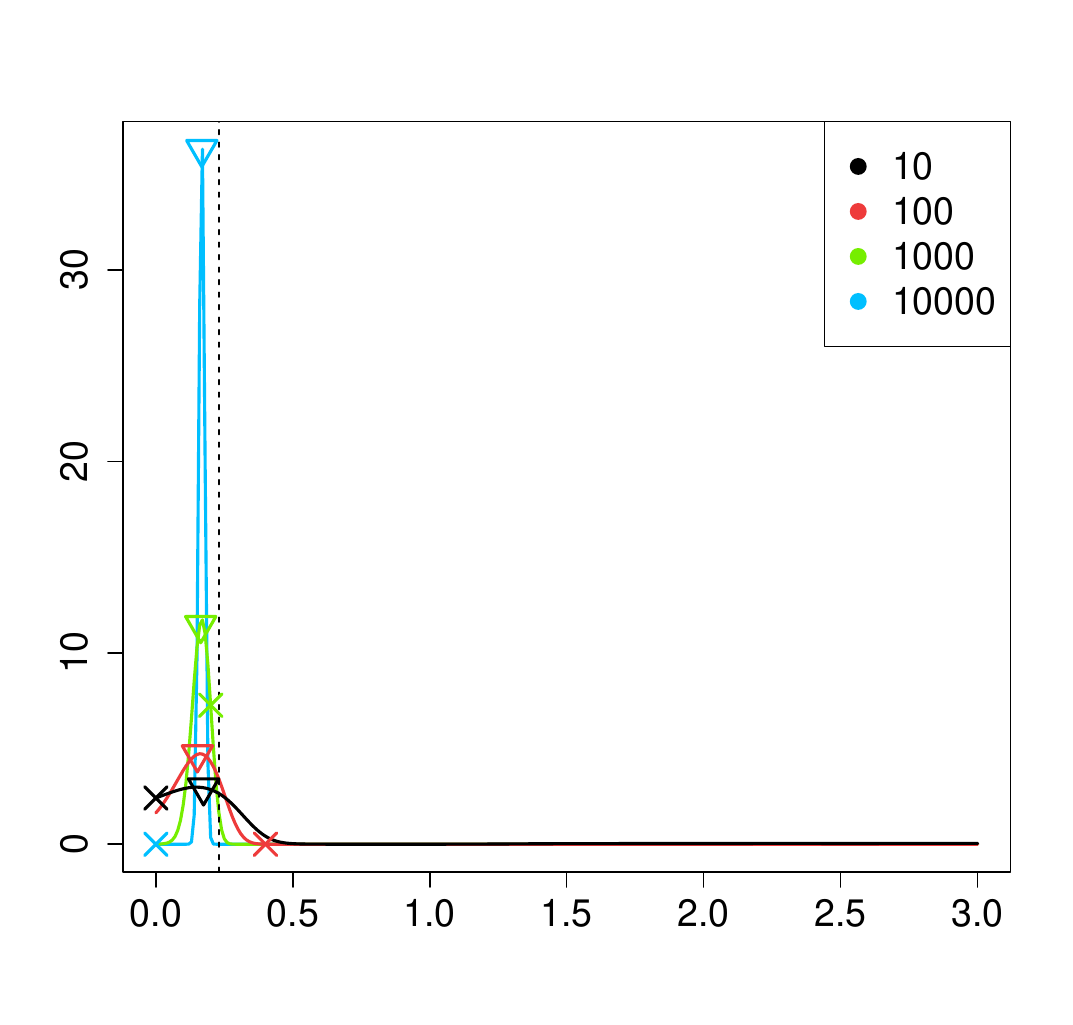}  
     \put(40,0){\tiny $b$ values}  \put(-5,25){\rotatebox{90}{\tiny $\rho_{(1,J)}(b|\y,\x)$}}
    \end{overpic}
    \begin{overpic}[width=.32\linewidth,height=0.17\textheight ]{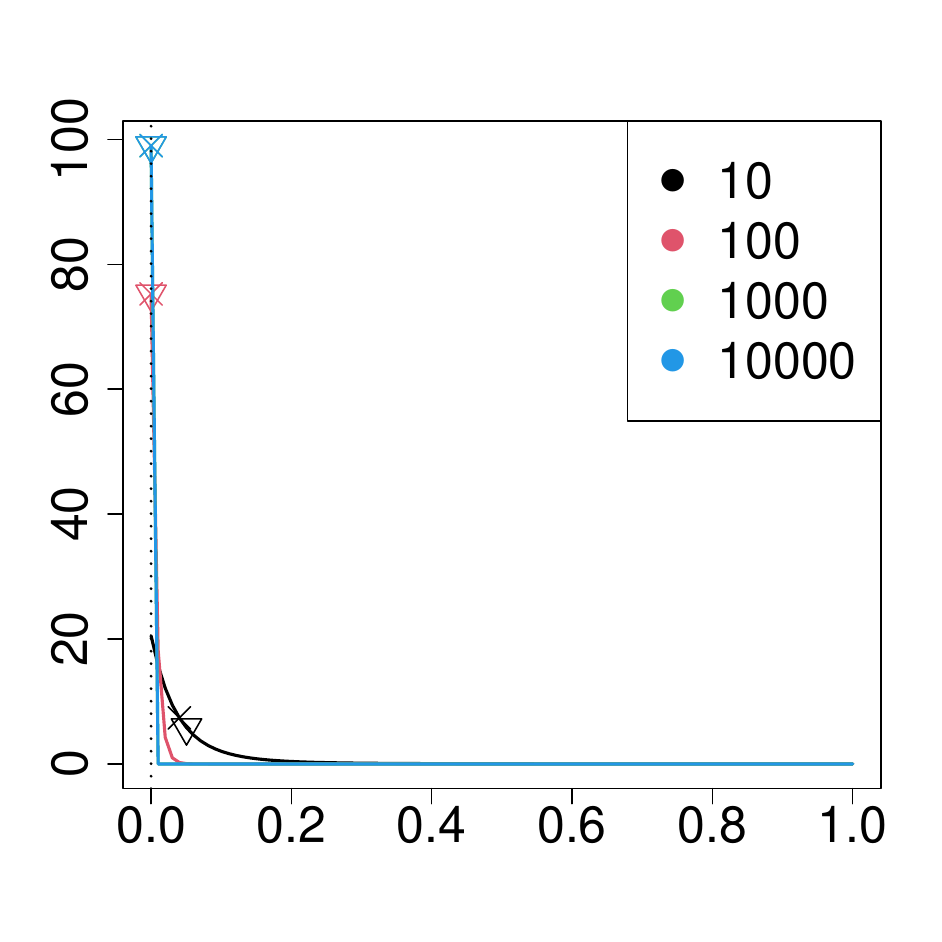}  
    \put(40,0){\tiny $\gamma$ values}  \put(-5,25){\rotatebox{90}{\tiny $\rho_{(J,1)}(\gamma|\y,\x)$}}
    \end{overpic}
    \begin{overpic}[width=.32\linewidth,height=0.17\textheight ]{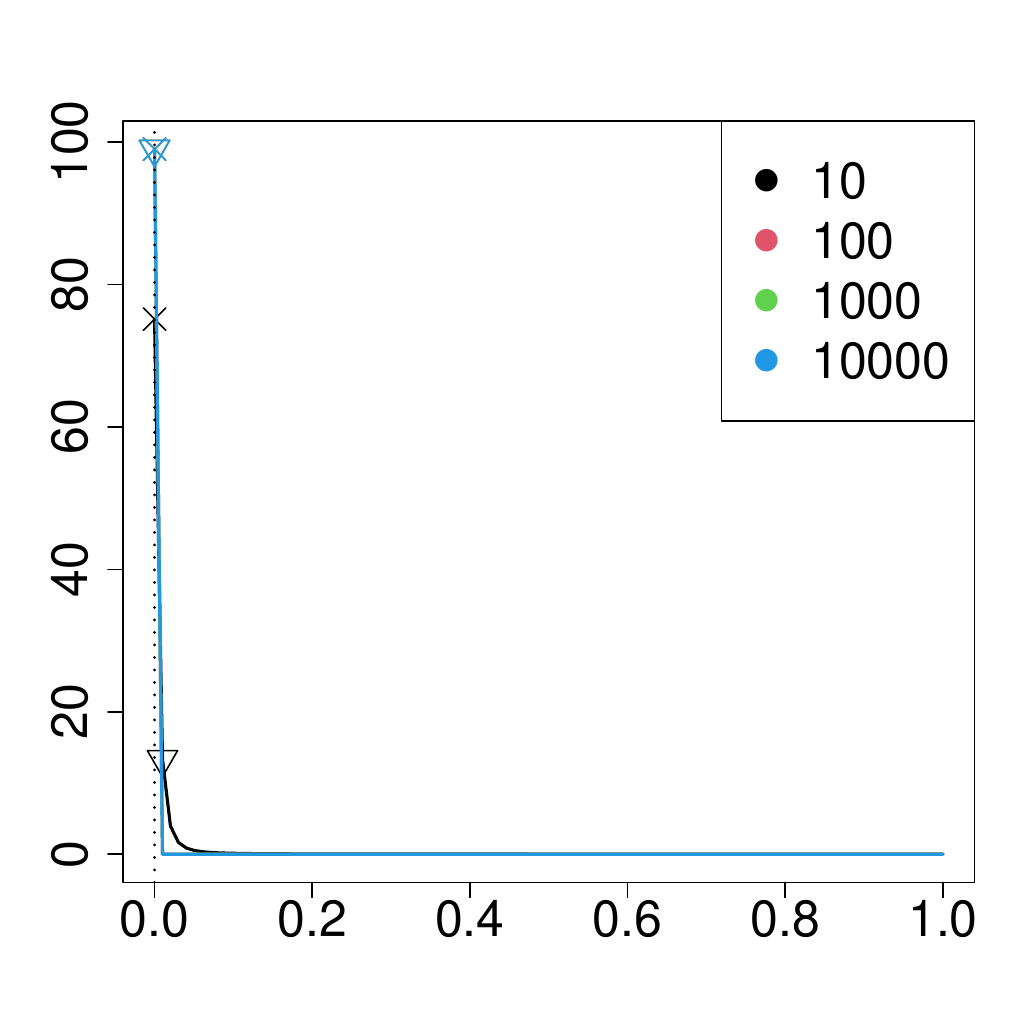}  
     \put(40,0){\tiny $\eta$ values} \put(-5,25){\rotatebox{90}{\tiny $\rho_{(J,1)}(\eta|\y,\x)$}}
    \end{overpic}
    \begin{overpic}[width=.32\linewidth,height=0.17\textheight ]{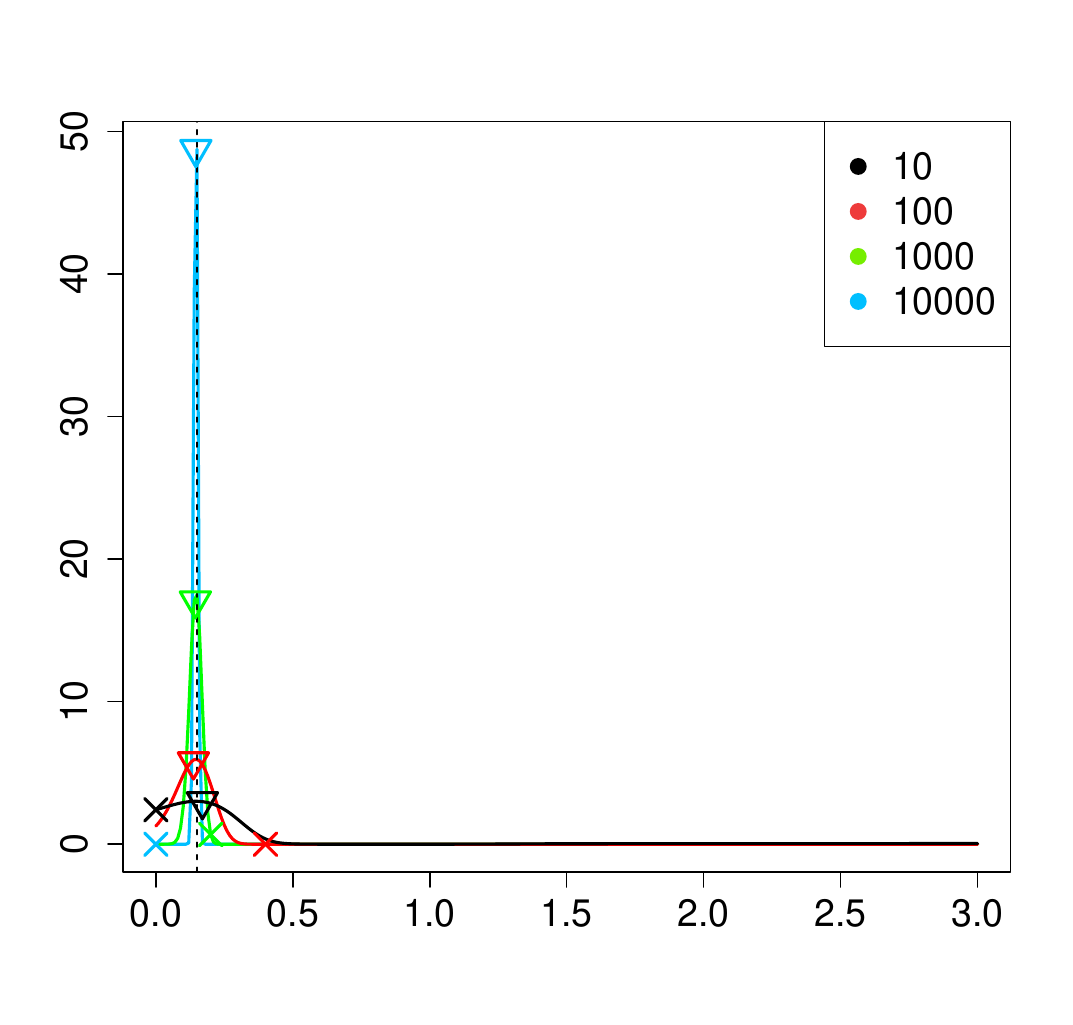}  
    \put(40,0){\tiny $b$ values} \put(-5,25){\rotatebox{90}{\tiny $\rho_{(J,1)}(b|\y,\x)$}}
    \end{overpic}
     \caption{Posterior density for $\eta$, $\gamma$ and $b$ with various sizes of calibration data as described in Section \ref{sec:normal2} under $\lambda^*=0.08$; colour code represents $J$. The vertical dotted lines, triangle marks and cross marks indicate the optimal value ($\widetilde{s}$), posterior mean ($\widehat{s}$) and harmonic mean ($\widehat{s}_{KL}$), respectively.}
    \label{fig:normal_loss_cut}
\end{figure}

\begin{figure}  
    \centering
    \begin{overpic}[ width=.32\linewidth,height=0.17\textheight ]{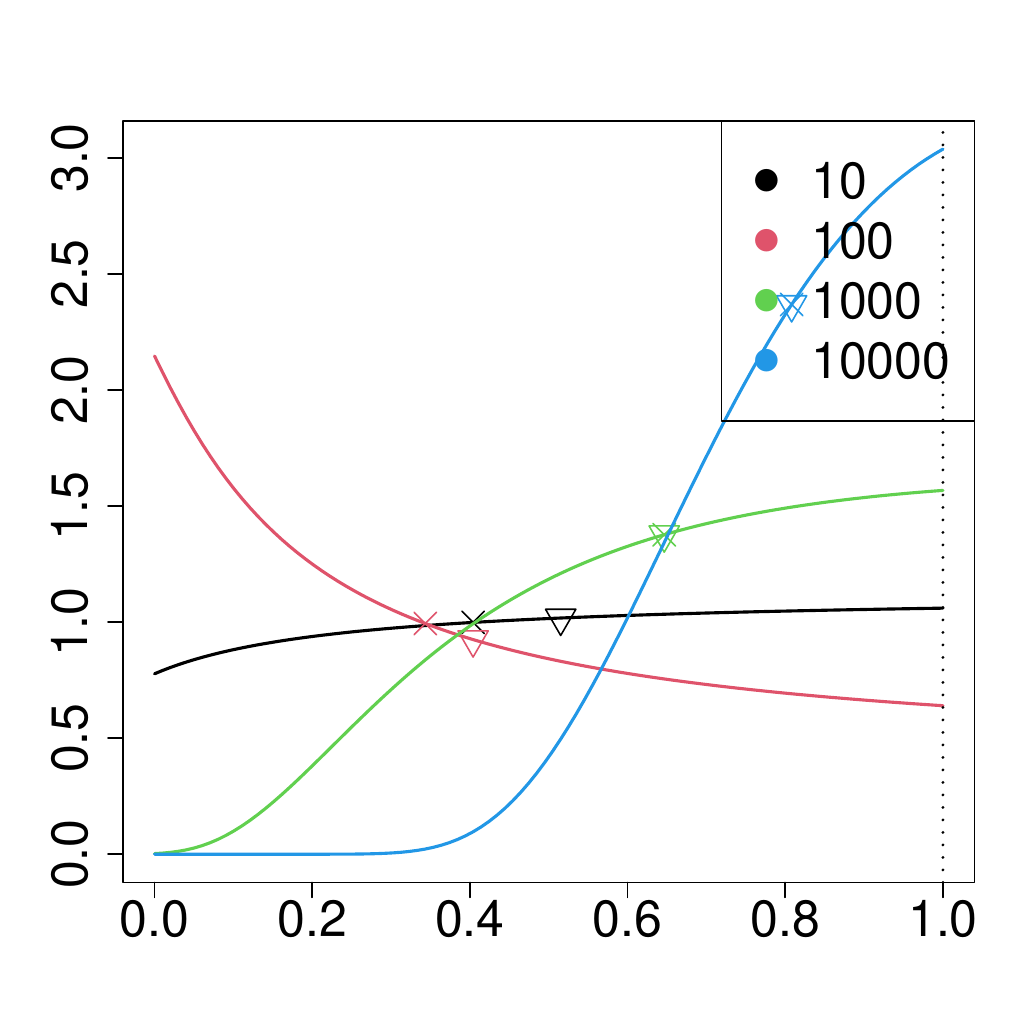}  
    \put(40,0){\tiny $\gamma$ values} \put(-5,25){\rotatebox{90}{\tiny $\rho_{(J,1)}(\gamma|\y,\x)$}}
    \end{overpic}
    \begin{overpic}[ width=.32\linewidth,height=0.17\textheight ]{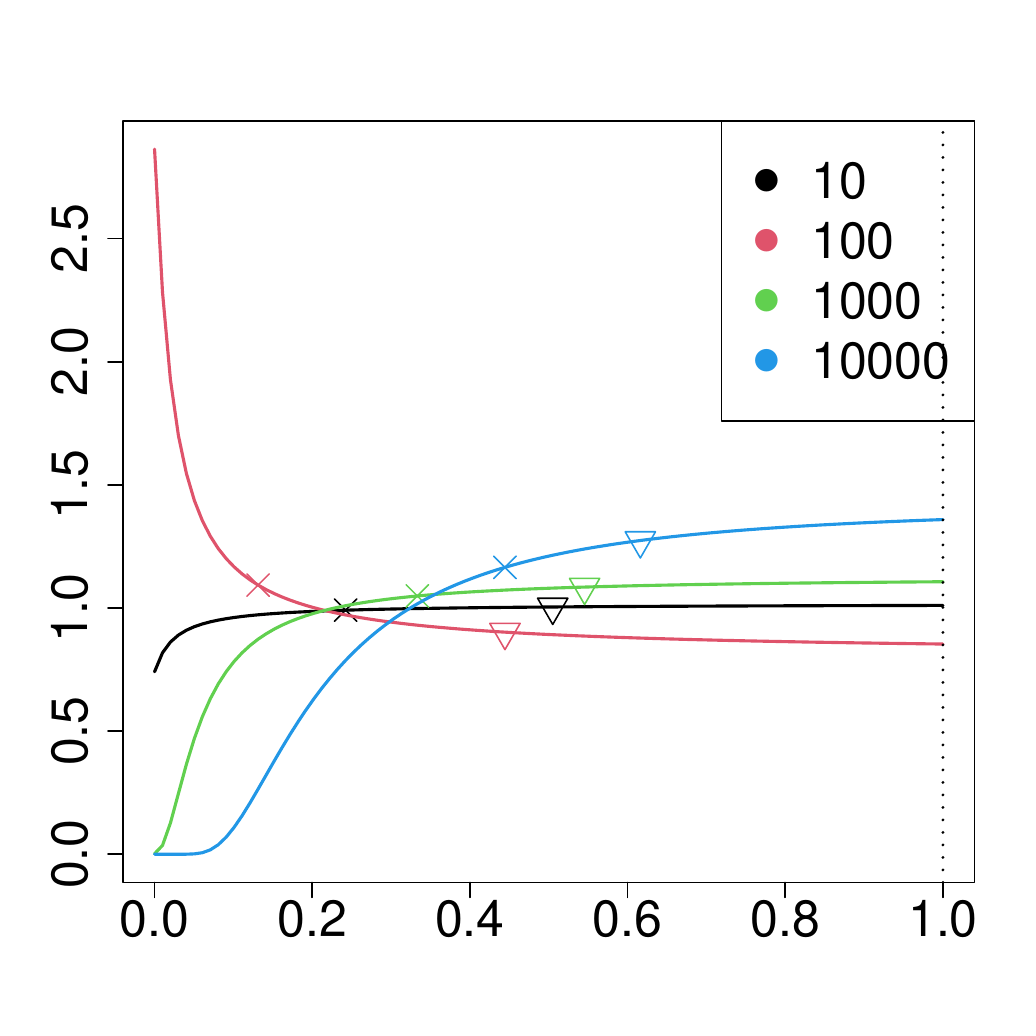}  
     \put(40,0){\tiny $\eta$ values}  \put(-5,25){\rotatebox{90}{\tiny $\rho_{(J,1)}(\eta|\y,\x)$}}
    \end{overpic}
    \begin{overpic}[width=.32\linewidth,height=0.17\textheight ]{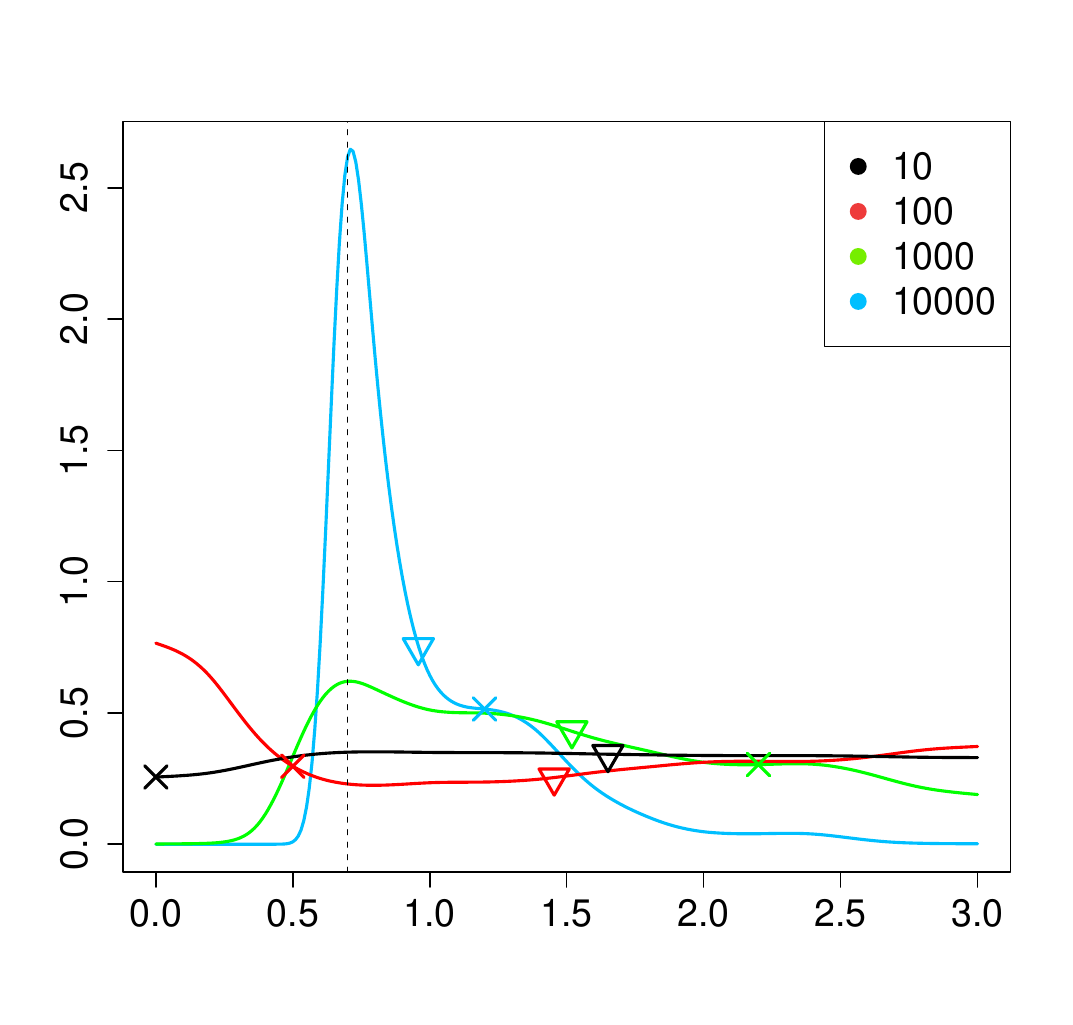}  
     \put(40,0){\tiny $b$ values}  \put(-5,25){\rotatebox{90}{\tiny $\rho_{(J,1)}(b|\y,\x)$}}
    \end{overpic}
    \caption{Posterior density for $\eta$, $\gamma$ and $b$ with various sizes of calibration data as described in Section \ref{sec:normal2} under $\lambda^*=1$; colour code represents $J$. The vertical dotted lines, triangle marks and cross marks indicate the optimal value ($\widetilde{s}$), posterior mean ($\widehat{s}$) and harmonic mean ($\widehat{s}_{KL}$), respectively.}
    \label{fig:normal_loss_overparam}
\end{figure}

The similar behavior of the $\eta$ and $\gamma$ posteriors in Figure~\ref{fig:normal_eta_rep} and Figure~\ref{fig:normal_loss_gamma_rep} are expected from Corollary~\ref{cor:same_smi} but the result only holds as the size of the training data $n_1$ and $n_2$ grow at fixed $\alpha=n_1/n_2$. The two right columns of Figure~\ref{fig:normal_loss_b_gam_eta} also show qualitatively similar behaviours for $\eta$ and $\gamma$ posteriors. However, the result applies to $\pi_s(\varphi,\theta|\x)$ and only secondarily to $\rho(s|\y,\x)$. 

In Figure~\ref{fig:normal_loss_b_gam_eta} we have only $n_1=30$ and $n_2=60$. Looking directly at the joint distribution of $\pi_s(\varphi,\theta|\x)$ at much larger $n_1$-values we see in Figure~\ref{fig:smipost} that the $\gamma$ and $\eta$ SMI-posteriors converge as $n_1\to\infty$ for a fixed $\alpha=0.5$, consistent with Corollary~\ref{cor:same_smi}. This will mean the $\rho(s|\y,\x)$ posteriors converge for both product and pooled losses as the posterior predictive distributions on which they are based converge.

\begin{figure}    
    \includegraphics[width=5cm,height=5cm]{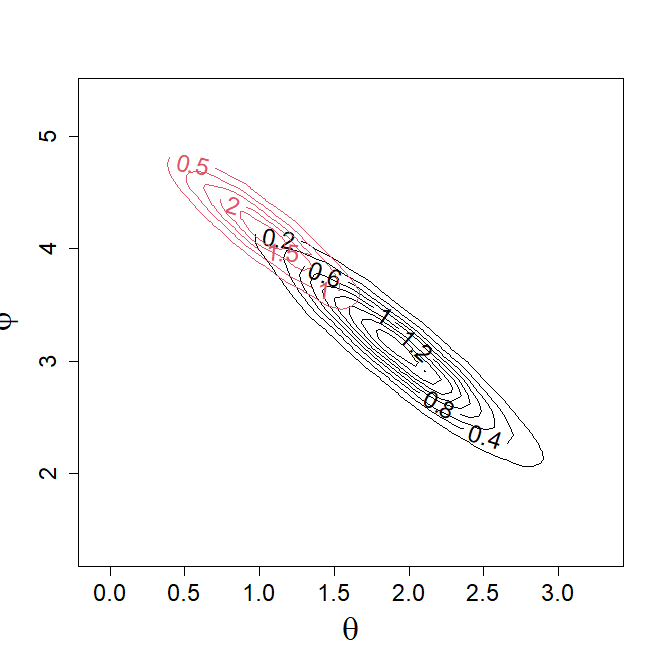}
    \includegraphics[width=5cm,height=5cm]{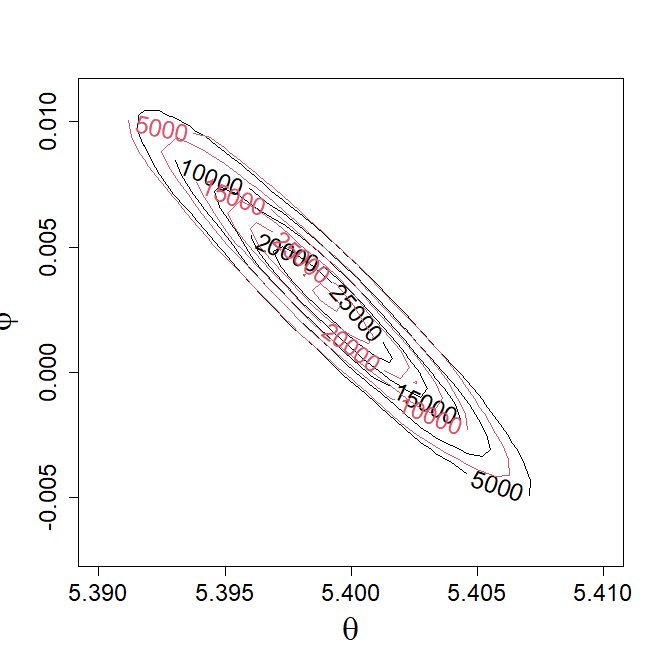}
     \caption{Joint $\gamma$-SMI posterior density (black contour) and $\eta$-SMI posterior density (red contour) comparison for $n_1=30$ (left plot) and $n_1=10^6$ (right plot) while $\alpha=0.5$ and $\gamma=\eta=0.3$ as described in Section \ref{sec:normal2}.}
    \label{fig:smipost}
\end{figure}


\subsubsection{Expected risk ratio comparison for Normal-mixture experiment}\label{sec:normal3}

When we learn the hyperparameters $s$, we split the data into training and calibration data sets. We don't in practice have the test data to which the fitted belief update will ultimately be applied. In this section we ask whether we gain anything over simply using the Bayesian or Cut-model belief update on the combined training and calibration data. We also investigate how performance depends on the split fraction into training and test data.

We split the available data $\overline{\x}=\{\overline{x}_{1,1:n_1},\overline{x}_{2,1:n_2}\}$ into a training dataset $\x$ and calibration dataset $\y$. Given $0<k<1$, the training dataset is defined as $\x=(\x_1,\x_2)$ with $\x_1=\bar x_{1,1:kn_1}$ and $\x_2=\bar x_{2,1:n_2}$. We calibrate by learning to predict $\x_1$-module data only, so the calibration dataset $\y$, is the complement of $\x_1$ within $\overline{\x}_1$. Here $\y=(y_{1,1},\dots,y_{1,J})$ with $J=(1-k)n_1$. 

We examine the posterior prediction quality and expected risk ratio using the test data for $\x_1$-module, since calibration targeted prediction of $\y_1$. Since we only have test data for the $\x_1$ module, denote by $\z=(z_{1,1},\dots,z_{1,J^{(z)}})$ so there are $J^{(z)}$ samples in the test data. The optimal $s$ value {\it for predicting the test data using all the training and calibration data} will be 
$\bar{s}_{(J^{(z)},1)} = \argmin_s l_{(J^{(z)},1)}(s;\z, \overline{\x})$ (in the product posterior) and $\bar{s}_{(1,J^{(z)})} = \argmin_s l_{(1,J^{(z)})}(s;\z, \overline{\x})$ (in the pooled case).

Let $n_2=60$, $J^{(z)}=n_1=30$ be the data set size parameters before splitting so we have $k\times 30$ training samples in $\x$ and $J=(1-k)\times 30$ calibration samples in $\y$, for a range of values of $k$ between $0$ and $1$. We generate 100 realizations of $\overline{\x}$. For each $\overline{\x}$, we estimate the expected risk ratio $R(s,s_o)$, defined in (\ref{eq:s_relative}), using 30 test datasets with misspecification parameter $\lambda^*=0.9$ (mild misspecification) and split proportions $k=(0.2,0.4,0.6,0.8)$. The reference value $s_o$ corresponding to the cut ($\gamma_o=\eta_o=b_o=0$) or conventional Bayes ($\gamma_o=\eta_o=b_o=1$) posterior is considered and compared against different estimators for $s^*$ in different forms of SMI. 

In Figure~\ref{fig:normal_b_gamma_eta_prod_pred} we plot the distribution of $R$-values. Rows are SMI-variants $\beta$, $\gamma$ and $\eta$ and columns compare the two $s$-estimators MAP (with a bar) and posterior mean (with a hat) with Cut and Bayes. The dependence on the train/calibrate split fraction ($k$) is shown on the $x$-axis. 
A box sitting above the horizontal line at $R=1$ indicates the SMI-variant/estimator pair predicts with lower expected risk ratio than the comparator. 

In summary, both estimators, MAP and mean of $\rho(s|\y,\x)$ do about equally well. The $\eta$ and $\gamma$ SMI results (in the bottom two rows) are slightly better than $\beta$ (top row) and uniformly out-perform Bayes and Cut. In other mispecification experiments with $\lambda^*=0.1$ and $1.0$, which we do not report, we found the opposite held, to some extent expected.
There is little dependence on the split proportion $k$ in the bottom two rows. Interestingly, $b$-estimation for $\beta$-SMI does best when only $20\%$ of the data are used for training and $80\%$ are used for test. This is surprising because $\hat b$ and $\bar b$ (and all the other estimators) are estimated using an SMI-posterior for $\varphi$ and $\theta$ based on training data alone and we simply assume this will be the right adjustment when we come to predict test data using a posterior based on $\x$ and $\y$. Using a small fraction of data for training and a large fraction for calibration tests this assumption strongly, but it seems to hold up well.
\begin{figure}
    \centering
    \begin{overpic}[ width=.24\linewidth,height=0.13\textheight ]{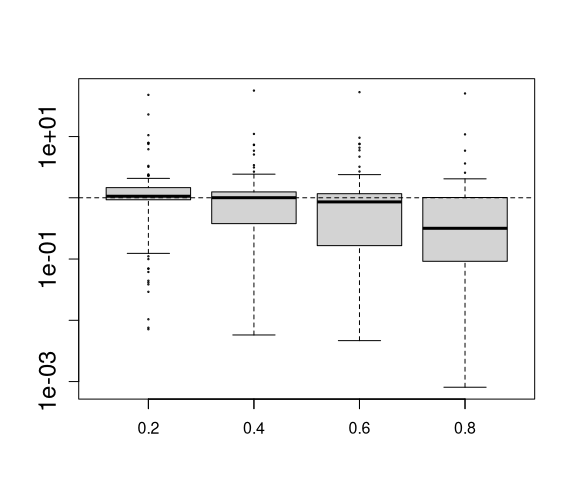}\put(30,78){\tiny $R(\widehat{b}_{(J,1)},0)$} 
    \end{overpic}   
    \begin{overpic}[ width=.24\linewidth,height=0.13\textheight ]{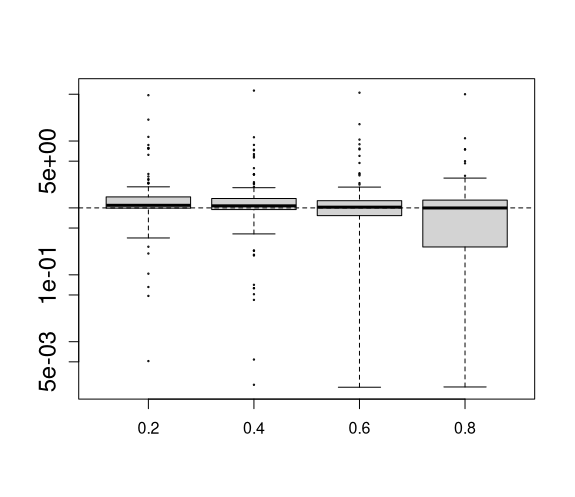}\put(30,78){\tiny $R(\overline{b}_{(J,1)},0)$}  
    \end{overpic}
    \begin{overpic}[ width=.24\linewidth,height=0.13\textheight ]{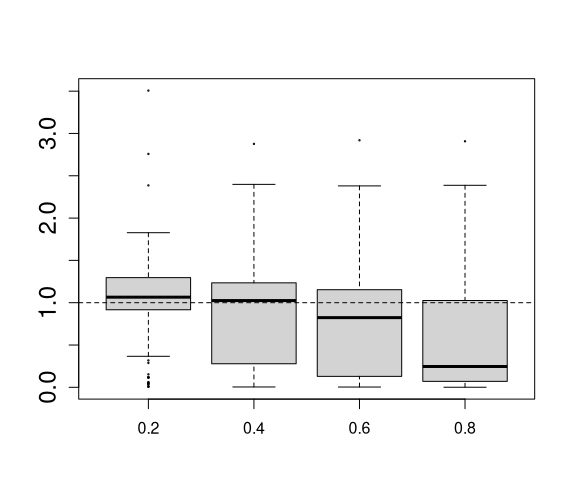}\put(30,78){\tiny $R(\widehat{b}_{(J,1)},1)$} 
    \end{overpic}
    \begin{overpic}[ width=.24\linewidth,height=0.13\textheight ]{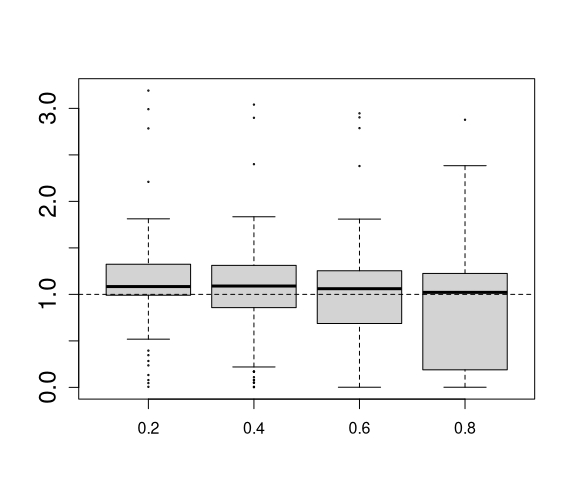}\put(30,78){\tiny $R(\overline{b}_{(J,1)},1)$}  
    \end{overpic} 

    \vspace{0.4cm}
    \begin{overpic}[ width=.23\linewidth,height=0.12\textheight ]{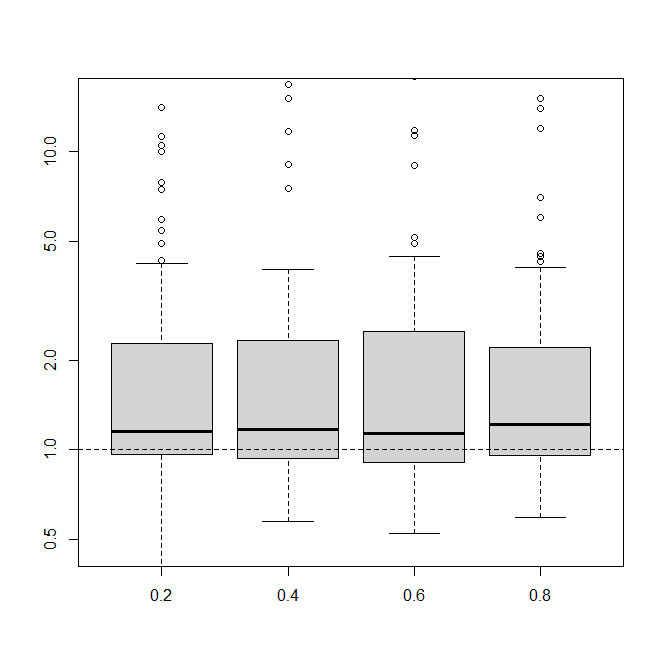}\put(30,78){\tiny $R(\widehat{\gamma}_{(J,1)},0)$}  
    \end{overpic}
    \begin{overpic}[ width=.23\linewidth,height=0.12\textheight ]{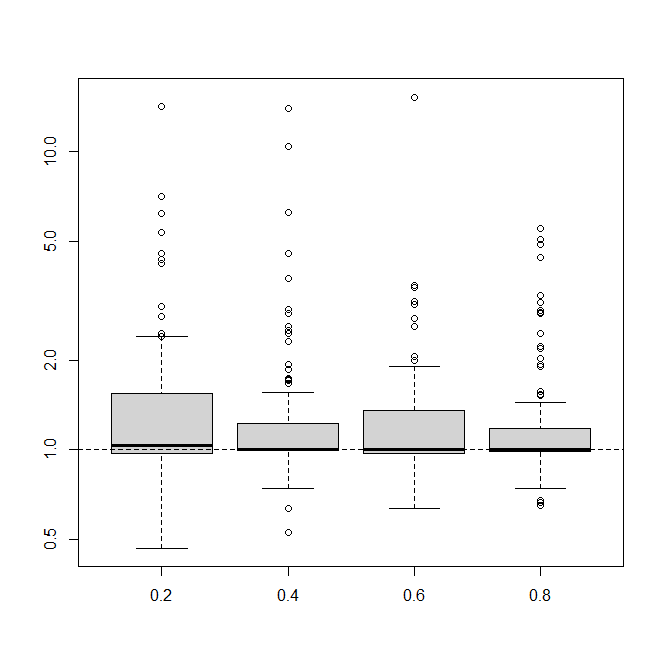}\put(30,78){\tiny $R(\overline{\gamma}_{(J,1)},0)$}  
    \end{overpic}
    \begin{overpic}[ width=.23\linewidth,height=0.12\textheight ]{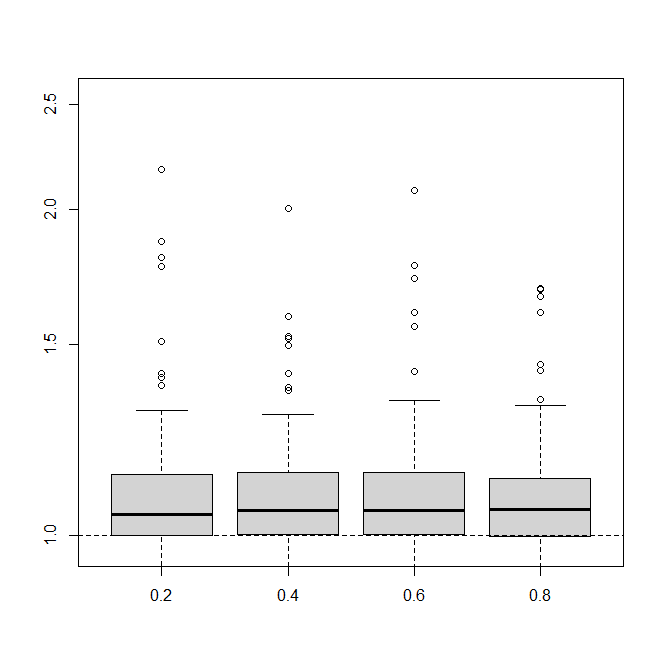}\put(30,78){\tiny $R(\widehat{\gamma}_{(J,1)},1)$}  
    \end{overpic}
    \begin{overpic}[ width=.23\linewidth,height=0.12\textheight ]{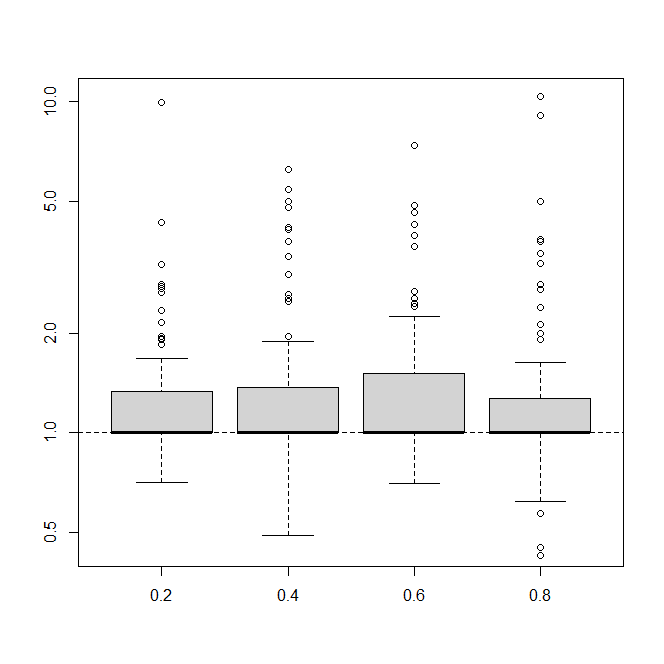}\put(30,78){\tiny $R(\overline{\gamma}_{(J,1)},1)$}  
    \end{overpic}
    \vspace{0.4cm}
    \begin{overpic}[ width=.23\linewidth,height=0.12\textheight ]{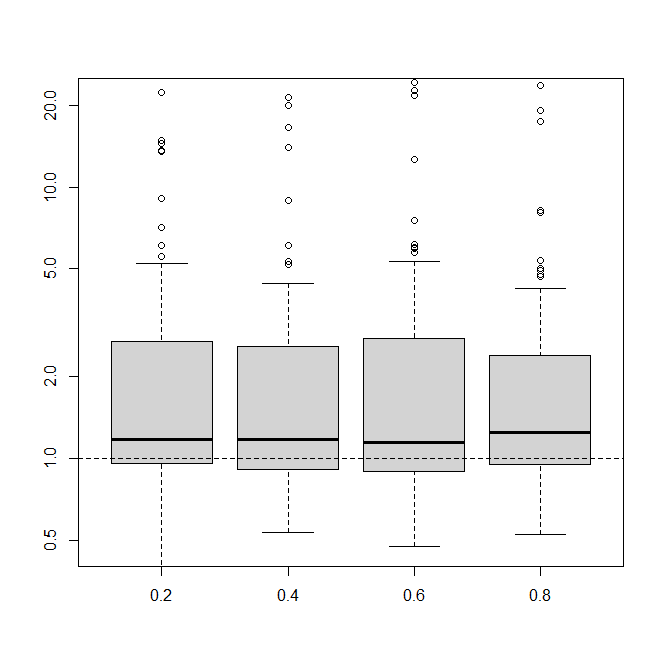}\put(30,78){\tiny $R(\widehat{\eta}_{(J,1)},0)$}  
    \end{overpic}
    \begin{overpic}[ width=.23\linewidth,height=0.12\textheight ]{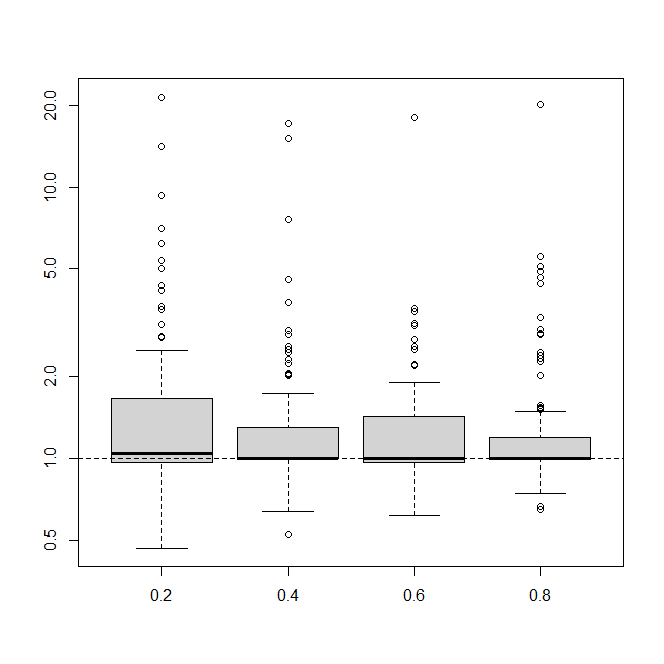}\put(30,78){\tiny $R(\overline{\eta}_{(J,1)},0)$}  
    \end{overpic}
    \begin{overpic}[ width=.23\linewidth,height=0.12\textheight ]{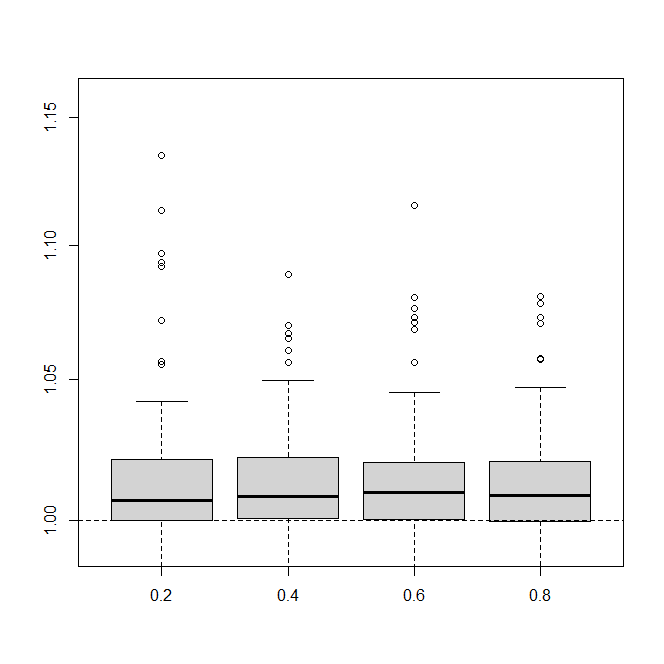}\put(30,78){\tiny $R(\widehat{\eta}_{(J,1)},1)$}  
    \end{overpic}
    \begin{overpic}[ width=.23\linewidth,height=0.12\textheight ]{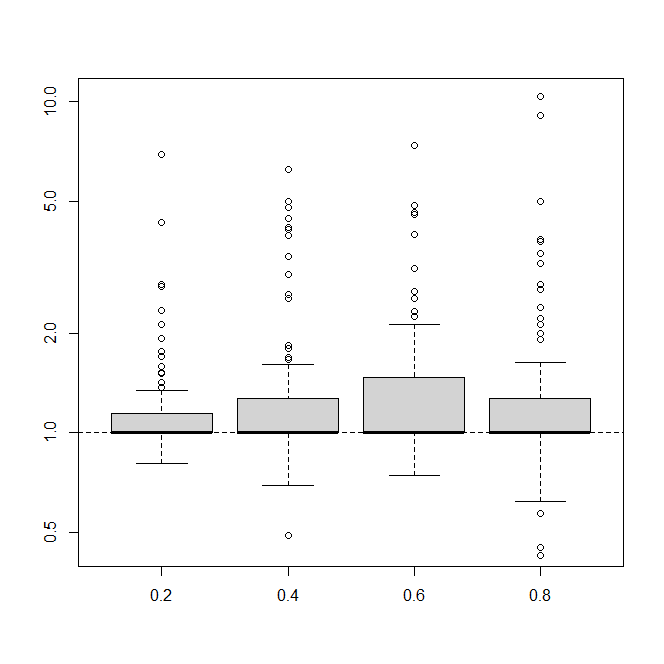}\put(30,78){\tiny $R(\overline{\eta}_{(J,1)},1)$}  
    \end{overpic}
    \caption{Columns left to right: estimates of $R(\widehat{s}_{(J,1)},0)$, $R(\bar{s}_{(J,1)},0)$, $R(\widehat{s}_{(J,1)},1)$ and $R(\bar{s}_{(J,1)},1)$ across the split rate $k$, with $J=(1-k)n_1$ calibration points and $kn_1$ training, for misspecification level $\lambda^*=0.9$ based on 100 replicates as described in Section \ref{sec:normal3}. Rows top to bottom: $R$ ratios for $b$, $\gamma$ and $\eta$ product loss posterior predictives. First two columns compare risk against Cut and last two compare Bayes. Horizontal dotted lines indicate $R=1$.}
\label{fig:normal_b_gamma_eta_prod_pred}
\end{figure}

\subsection{State space example}\label{Appendix:state-space}

Here we report three further experiments with the $\eta$ and $(\eta,\beta)$-SMI belief updates applied to the State Space model example of Section~\ref{sec:ssm}. 
In Appendix~\ref{sec: asym ssm} we investigate convergence of the pooled and product-loss posteriors for $\eta$ in $\eta$-SMI. This repeats for the State Space model the experiment in Section~\ref{sec:normal2} and Figure~\ref{fig:normal_loss_b_gam_eta} which considered the normal mixture example.
In Appendix~\ref{Appendix:risk-ratios-state-space} we estimate, for different pairs of belief updates, the expected risk ratios defined in \eqref{eq:s_relative}. This is done using test data. The experiment setup matches the setup for expected risk-ratio estimation in Figure~\ref{fig:ssm_pred} in Section~\ref{sec: post pred ssm}. In Appendix~\ref{Appendix:zero_constraint}, we demonstrate the asymptotics and estimate the risk in the case where the space of $\eta$ is $\Omega_\eta=[0,\infty)$.

\subsubsection{Asymptotic property validation for $\eta$-SMI}\label{sec: asym ssm}

Example behaviour of a posterior density for $\eta$ with increasing $J$ is shown in Figure~\ref{fig:post pred n,m new}, under the three levels of misspecification $\varphi_M^*=0.5,0.7,1$. Training data is fixed with $n=10$. We vary the size of the calibration data set, taking $J=k{n}$ with $k=0.5,...,100$. The optimal values are estimated using $J^{(z)}=1000\, n$ test data and, for this particular example, $\tilde{\eta}_{(1,\infty)}$ and $\tilde{\eta}_{(\infty,1)}$ are similar. 
We see that $\rho_{(1,J)}(\eta|\y,\x)$ converges in shape with increasing $J$ to $\pi_{\eta}(\bar{\phi}|\x)$ up to normalization as we expect for the pooled case by Theorem \ref{thm:loss_s_single}. In the bottom row, $\rho_{(1,J)}(\eta|\y,\x)$ is converging to a normal distribution as we expect from Theorem~\ref{cor:posterior_bin} (except bottom right where the expected risk ratio is minimised on the boundary so the regularity conditions needed for Theorem~\ref{cor:posterior_bin} are not satisfied).

Figure \ref{fig:ssm_pred} compares $\eta$ estimates with $\bar\eta^*_\x$. This is a high-precision estimate of $\eta^*$ using a very large amount of test data (where $\eta^*$ can be $\tilde{\eta}_{(1,\infty)}$ or $\tilde{\eta}_{(\infty,1)}$ depending on the loss). If a similarly high-precision estimate of the optimal value conditioned on $\bar{\x}=(\x,\y)$,  $\bar\eta^*_{\bar\x}=\argmin_\eta l(\eta;\z,\bar{\x})$, is similar to $\bar\eta^*_\x$, the expected risk of the SMI posterior using the $\eta$-estimate will likely perform well. Figure \ref{fig:diff_eta} shows that the difference of the $\eta$ estimate from the optimal value conditioned on $(\x,\y)$ is indeed small. Under misspecification, most $\eta$ estimates, except $\widehat{\eta}_{hm}$, are not significantly higher or lower than the optimal value conditioned on $(\x,\y)$. Although $\widehat{\eta}_{hm}$ tends to be smaller, the penalty in prediction is rather small and does not result poor predictive performance. 

\begin{figure}
    \centering
   \begin{overpic}[width=.32\linewidth,height=0.2\textheight]{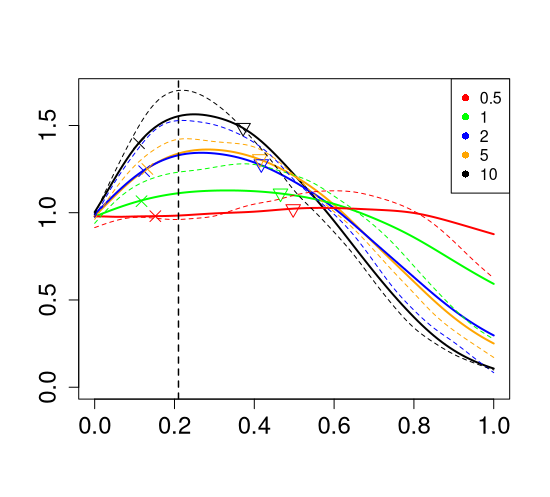}
    \put(-10,25){\rotatebox{90}{\tiny $\rho_{(1,J)}(\eta|\y,\x)$}}
    \put(35,90){\tiny $\varphi^*_M=0.5$}
    \end{overpic}
    \begin{overpic}[width=.32\linewidth,height=0.2\textheight]{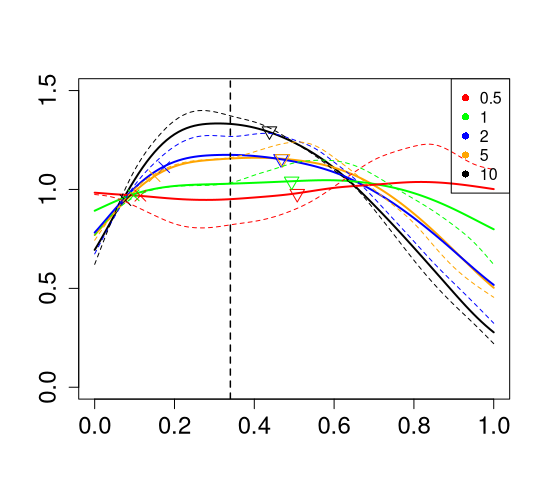}
    \put(-10,25){\rotatebox{90}{\tiny $\rho_{(1,J)}(\eta|\y,\x)$}}
    \put(35,90){\tiny $\varphi^*_M=0.7$}
    \end{overpic}
\begin{overpic}[width=.32\linewidth,height=0.2\textheight]{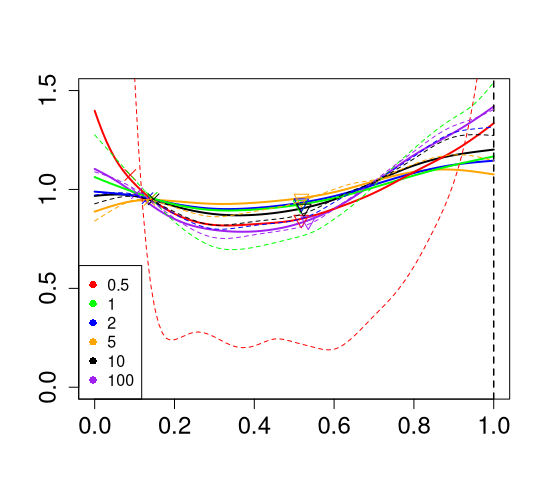}
    \put(35,90){\tiny $\varphi^*_M=1$}
    \end{overpic}
    
    \begin{overpic}[width=.32\linewidth,height=0.2\textheight]{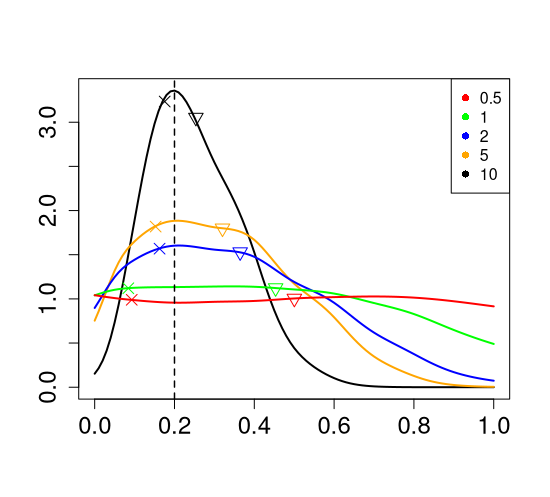}
    \put(-10,25){\rotatebox{90}{\tiny $\rho_{(J,1)}(\eta|\y,\x)$}}
    \end{overpic}
    \begin{overpic}[width=.32\linewidth,height=0.2\textheight]{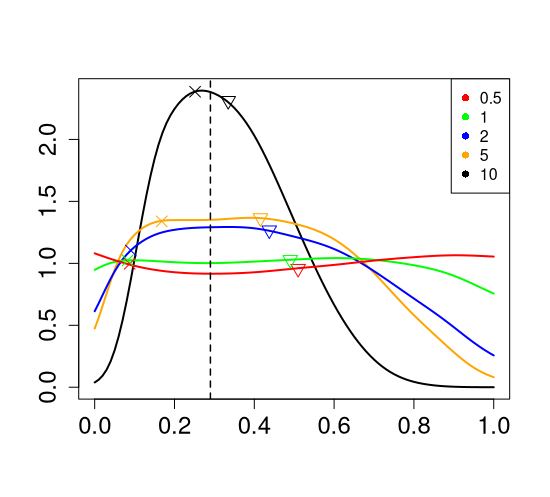}
    \put(-10,25){\rotatebox{90}{\tiny $\rho_{(J,1)}(\eta|\y,\x)$}}
    \end{overpic}
    \begin{overpic}[width=.32\linewidth,height=0.2\textheight]{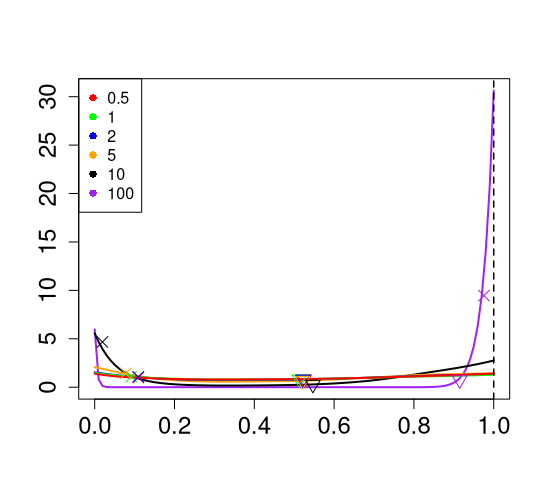}
    \end{overpic}
    \caption{Pooled (top row) and product (bottom row) posterior densities for $\eta$ for various sizes of calibration data $J=k{n}$ with $k=0.5,...,100$ (training data size is fixed at $n=10$, there is no data-splitting here); line colour indicates $J$-value. The vertical dotted lines, triangle marks and cross marks indicate $\widetilde{\eta}$ (true), $\widehat{\eta}$ (mean) and $\widehat{\eta}_{hm}$ (harmonic mean) respectively. The dotted curves in the top row are $\pi_\eta(\bar{\phi}|\x)/
    \int\pi_\eta(\bar{\phi}|\x)d\eta$ from (\ref{eq:loss_laplace}). These should coincide with the solid line of the same colour at large $k$.}
    \label{fig:post pred n,m new}
\end{figure}

\begin{figure}
    \centering
    \begin{overpic}
[width=.4\linewidth,height=0.25\textheight]{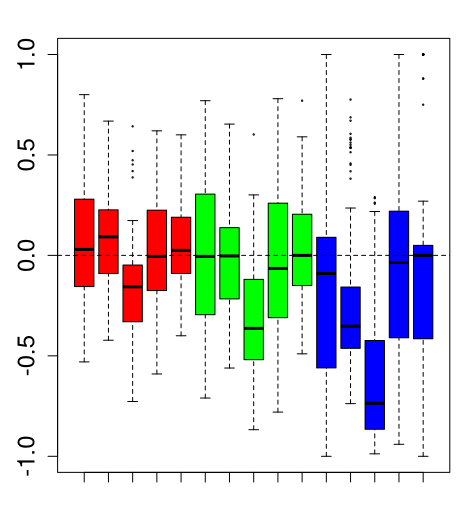}
\put(16,5){\footnotesize a}
\put(21,5){\footnotesize b}
\put(27,5){\footnotesize c}
\put(31,5){\footnotesize d}
\put(36,5){\footnotesize e}
    \end{overpic}  
    \begin{overpic}
[width=.4\linewidth,height=0.25\textheight]{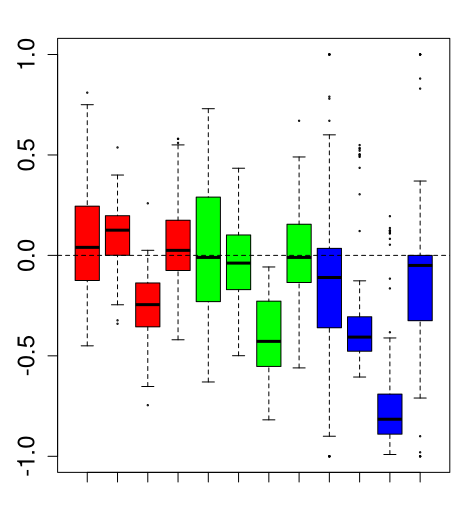}
\put(16,5){\footnotesize a}
\put(23,5){\footnotesize b}
\put(31,5){\footnotesize c}
\put(37,5){\footnotesize e}
    \end{overpic}
    \caption{Deviation of $\eta$ estimates in Figure \ref{fig:ssm_pred} from $\bar\eta^*_{\bar\x}$ for the three misspecification levels using product (left) and pooled (right) losses; $\varphi^*_M=0.5$ (red, high), $\varphi^*_M=0.7$ (green, medium), and $\varphi^*_M=1$ (blue, no misspecification). $\bar\eta^*_{\bar\x}$ is the optimal value estimated using $J^{(z)}=10^3\,n$ observation vectors $z_i=(z_{i,1},\dots,z_{i,d_x}),\ i=1\,\dots,J^{(z)}$ conditioned on $\bar\x=(\x,\y)$. For each misspecification level, four $\eta$-differences are shown: (a) $\overline{\eta}-\bar\eta^*_{\bar\x}$ (posterior mode), (b) $\widehat{\eta}-\bar\eta^*_{\bar\x}$ (mean), (c) $\widehat{\eta}_{hm}-\bar\eta^*_{\bar\x}$  (harmonic mean), (d) $\eta_{\rm \scriptscriptstyle WAIC}-\bar\eta^*_{\bar\x}$ and (e) $\bar\eta^*_\x-\bar\eta^*_{\bar\x}$. }   
    \label{fig:diff_eta}
\end{figure}

\subsubsection{Expected risk ratio comparisons for the State Space model}\label{Appendix:risk-ratios-state-space}

In Figure~\ref{fig:eta b ratio} we compare the posterior predictive performance of $(\eta,\beta)$-SMI to $\eta$-SMI, $\beta$-SMI, Bayes and Cut for estimation using the pooled loss. This experiment took $\varphi^*=0.7$ and $n=J=10$ blocks of calibration and training data. Each bar in Figure~\ref{fig:eta b ratio} summarises the distribution of $R$ in 100 simulated data sets. Although it considers fewer estimators and misspecification levels it is nevertheless typical of what we have seen. Prediction with $(\eta,\beta)$-SMI and any estimator is clearly preferred to Bayes (top left) and Cut (top right) and is slightly better than $\beta$-SMI (bottom right, this is $(\eta,\beta)$-SMI with $\eta=1$). It is almost identical to $\eta$-SMI (bottom left, at least for the Bayes estimator $R((\bar{\eta},\bar{b})_{(1,J)},\bar{\eta}_{(1,J)})$) and slightly better than $\beta$-SMI (bottom right), at least for the posterior mean $(\eta,b)$-estimator.

We can understand this using the single example training data set in Figure~\ref{fig:eta b joint} which has the same level of misspecification. To match Figure~\ref{fig:eta b ratio} we look at the pooled case with $J=10$ at top left in Figure~\ref{fig:eta b joint}. Although the distribution is diffuse, we see an $\eta$-value equal one is ruled out, so $\eta$ is needed for good prediction. This explains why $\beta$-SMI does less well in Figure~\ref{fig:eta b ratio}. On the other hand when we take $b\to 1$ in $(\eta,\beta)$-SMI we get $\eta$-SMI; a value of $b=1$ has high probability in the top row of Figure~\ref{fig:eta b joint}, so $b$ is not needed. This explains why $\eta$ and $(\eta,\beta)$-SMI have similar performance in Figure~\ref{fig:eta b ratio} lower left but $\beta$-SMI does slightly worse than $(\eta,\beta)$-SMI in in Figure~\ref{fig:eta b ratio} lower right.

\begin{figure}
    \centering
\begin{overpic}
        [ width=.3\linewidth,height=0.2\textheight ]{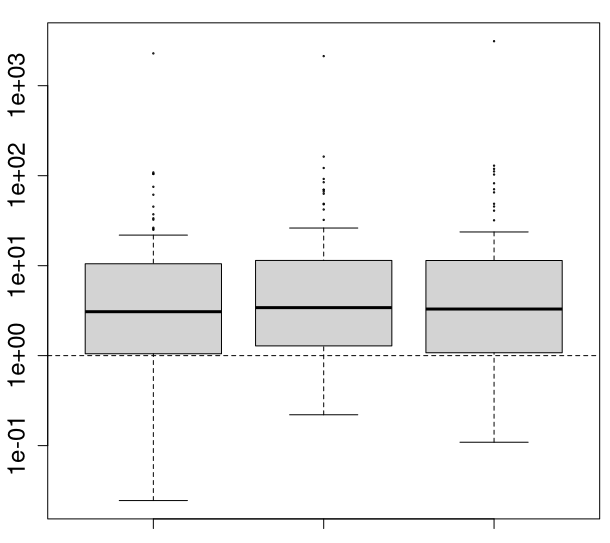}  
        \put(20,-3){\footnotesize $\widehat{s}$}
        \put(50,-3){\footnotesize $\overline{s}$}    
        \put(77,-3){\footnotesize $\widehat{s}_{hm}$}
        \put(25,96){\tiny $R((\eta,b)_{(1,J)},1)$}
\end{overpic}
\begin{overpic}
        [ width=.3\linewidth,height=0.2\textheight ]{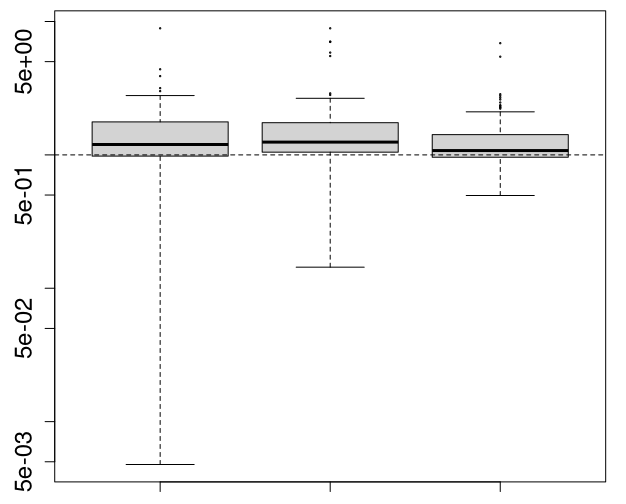}  
        \put(20,-3){\footnotesize $\widehat{s}$}
        \put(50,-3){\footnotesize $\overline{s}$}    
        \put(77,-3){\footnotesize $\widehat{s}_{hm}$}
        \put(25,98){\tiny $R((\eta,b)_{(1,J)},0)$}
\end{overpic}

\vspace{0.7cm}
\begin{overpic}
        [ width=.3\linewidth,height=0.2\textheight ]{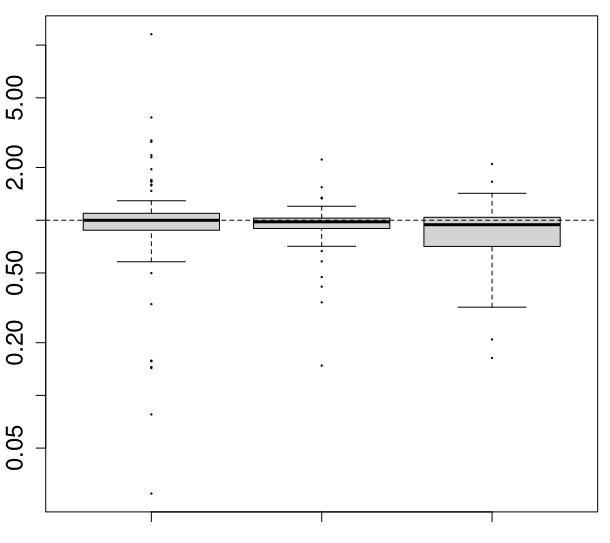} 
        \put(20,0){\footnotesize $\widehat{s}$}
        \put(50,0){\footnotesize $\overline{s}$}    
        \put(77,0){\footnotesize $\widehat{s}_{hm}$}
        \put(20,98){\tiny $R((\eta,b)_{(1,J)},\eta_{(1,J)})$}
\end{overpic}
\begin{overpic}
        [ width=.3\linewidth,height=0.2\textheight ]{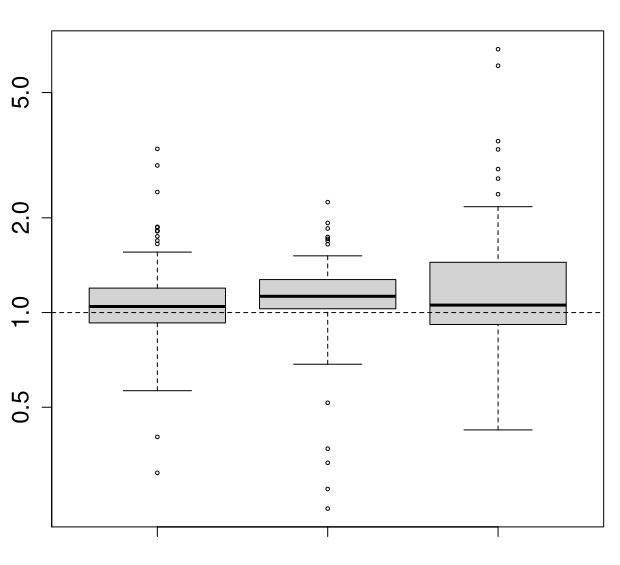}  
        \put(20,0){\footnotesize $\widehat{s}$}
        \put(50,0){\footnotesize $\overline{s}$}    
        \put(77,0){\footnotesize $\widehat{s}_{hm}$}
        \put(20,95){\tiny $R((\eta,b)_{(1,J)},b_{(1,J)})$}
\end{overpic}
        \caption{Risk for joint posterior of $(\eta,b)$ in $(\eta,\beta)$-SMI compared against other estimators (indicated on $x$-axis). In each case the same estimator-type is used for both arguments of $R$. For example, the left bar in the lower left graph is the distribution of $R((\widehat{\eta},\widehat{b})_{(1,J)},\widehat{\eta}_{(1,J)})$. The training and calibration data data have $n=10$ and $J=10$ and the misspecifcation parameter is $\varphi^*_M=0.7$ (mild).  
        The first row shows the expected risk-ratio distributions for different $(\eta,b)$-estimators against Bayes (left) and Cut (right) posterior prediction. The second row shows comparisons with $\eta$-SMI (left) and $\beta$-SMI (right). Hyperparameter estimation is based on the pooled-data loss. Horizontal dotted lines indicate $R=1$.
         }
    \label{fig:eta b ratio}
\end{figure}

\subsubsection{Hyperprameter posteriors on $[0,\infty)$}\label{Appendix:zero_constraint}

In this section we examine the state-space example for $\eta\in [0,\infty)$ in $\eta$-SMI. We carry out inference for $\eta$ without any domain constraint (other than positivity). In Appendix~\ref{Appendix:ssm-finite-eta-star} below we prove these unconstrained estimates are always finite.

Figure \ref{fig:post_eta_infUnif} repeats the large-$J$ behaviour of the posterior as in Figure \ref{fig:post pred n,m new} for an improper prior on $[0,\infty)$. As $J$ increases, the posterior exhibits strong concentration around the optimal values; the posterior modes are preserved, while tails reduce. For the well-specified model, the optimal values are slightly greater than 1. 

The behavior in Figure \ref{fig:post_eta_infUnif} is typical, as we now illustrate using 100 replicates of $\x$ and $\y$. In Figure \ref{fig:post_eta_100rep} the posterior support typically moves towards the lower end of $0<\eta<1$ as the level of misspecification increases (left two graphs). Without any misspecification (at right), the posteriors tend to cluster around 1 though (as the training data $\x$ has only ten blocks of six observations -- see Figure~\ref{fig:ssm-hmm-block-structure}) very small values are also seen. Some of these posteriors are U-shaped, like the example bottom right in Figure \ref{fig:post_eta_infUnif}, black curve. This disappears and concentrates in the usual way at large $J$ (see Figure \ref{fig:post_eta_infUnif} bottom right, purple curve).

Figures \ref{fig:eta_ratio_exp3} and \ref{fig:eta diff positive 50calib} show the expected risk ratio using the prior $\eta\sim \mbox{Exp}(1/3)$ (mean 3) and $\eta$-estimate error $\eta-\bar\eta^*$ from the optimal value. Why Exp$(1/3)$? 
It is more important to distinguish $\eta=0.5$ from $\eta=1$ and $\eta=1$ from $\eta=3$ than to resolve larger values like $\eta=9$, as $\pi_\eta(\phi|\x)$ is only sensitive to $\eta$ at small $\eta$. The prior should be uninformative over the sensitive range of $\eta$ and penalise large values on subjective grounds -- we don't see them often in the real analyses we have conducted.  Results for prediction performance ($ie$, $R$ in Figure \ref{fig:eta_ratio_exp3})  are very similar to the corresponding results computed in $[0,1]$ in Figure~\ref{fig:ssm_pred}.

\begin{figure}
    \centering
   \begin{overpic}[width=.32\linewidth]{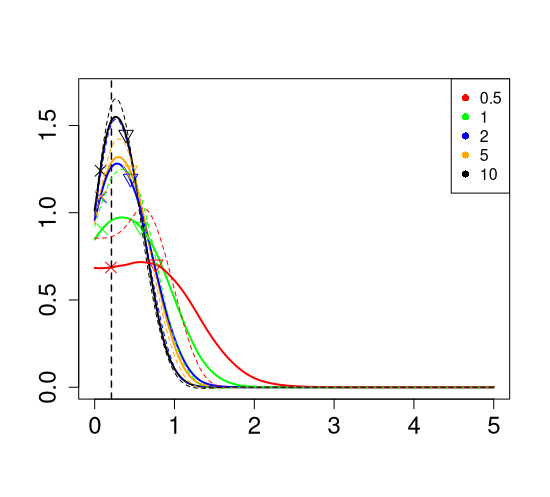}
    \put(-5,25){\rotatebox{90}{\tiny $\rho_{(1,J)}(\eta|\y,\x)$}}
    \put(35,85){\tiny $\varphi^*_M=0.5$}
    \end{overpic}
    \begin{overpic}[width=.32\linewidth]{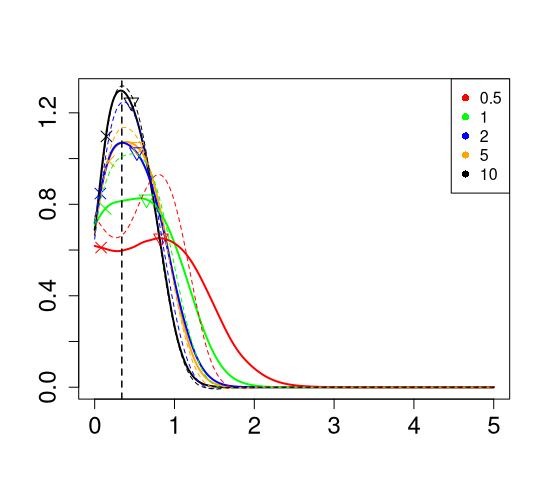}
    \put(35,85){\tiny $\varphi^*_M=0.7$}
    \end{overpic}
\begin{overpic}[width=.32\linewidth]{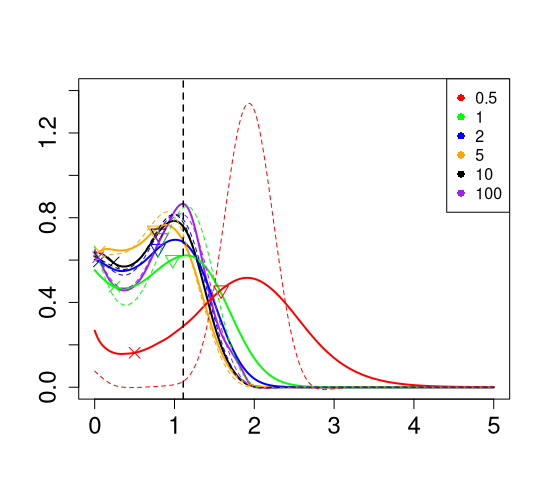}
    \put(35,85){\tiny $\varphi^*_M=1$}
    \end{overpic}
    \newline\\[-0.4in]
    
    \begin{overpic}[width=.32\linewidth]{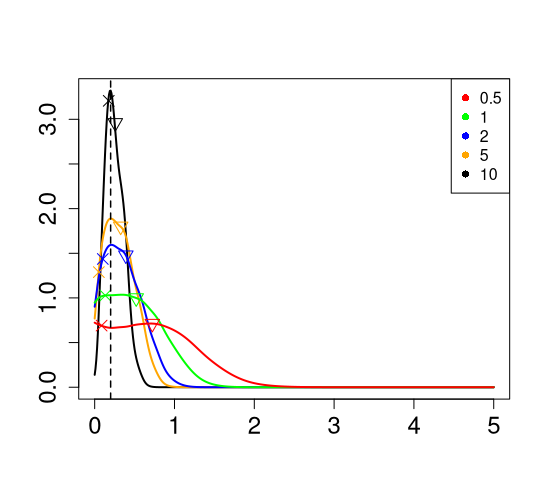}
    \put(-4,25){\rotatebox{90}{\tiny $\rho_{(J,1)}(\eta|\y,\x)$}}
    \end{overpic}
    \begin{overpic}[width=.32\linewidth]{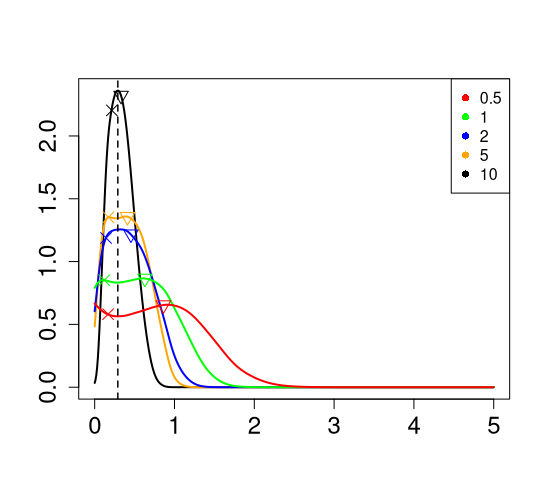}
    \end{overpic}
    \begin{overpic}[width=.32\linewidth]{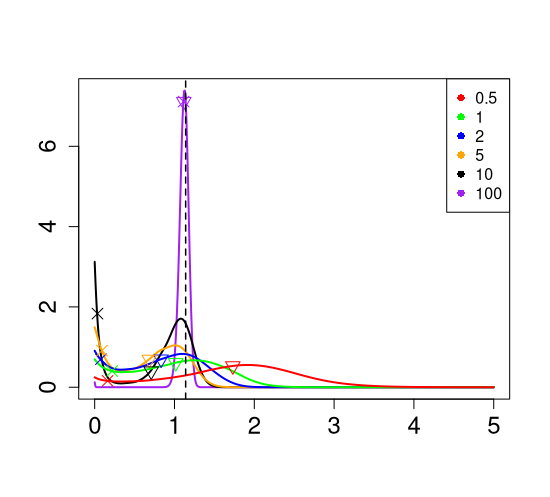}
    \end{overpic}
    \caption{Posterior densities for $\eta\ge 0$. These are the same posteriors for $\eta$ shown in Figure~\ref{fig:post pred n,m new} (the same training and calibration data $\x,\y$). The prior is $\eta \sim Unif(0,\infty)$.  At $\varphi_M^*=0.5$, $\tilde{\eta}_{(1,\infty)}=0.21, \tilde{\eta}_{(\infty,1)}=0.2$. At $\varphi_M^*=0.7$, $\tilde{\eta}_{(1,\infty)}=0.34, \tilde{\eta}_{(\infty,1)}=0.29$.
     At $\varphi_M^*=1$, $\tilde{\eta}_{(1,\infty)}=1.14, \tilde{\eta}_{(\infty,1)}=1.11$. $\tilde{\eta}_{(\infty,1)}$ is estimated using $10*1000$ blocks calibration data.}
    \label{fig:post_eta_infUnif}
\end{figure}

\begin{figure}
    \centering
    \begin{overpic}
[width=.32\linewidth,height=0.2\textheight]{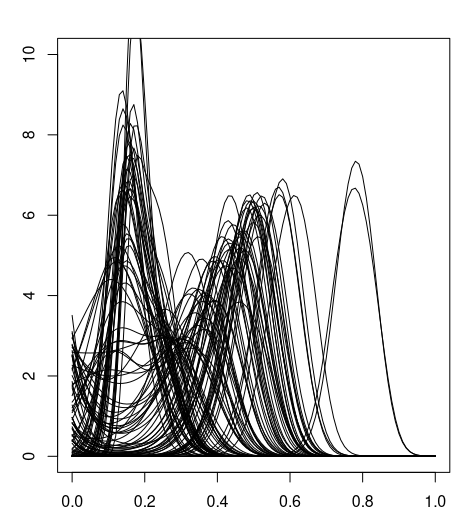}
\put(40,90){\tiny $\varphi_M^*=0.5$}
    \end{overpic}
    \begin{overpic}
[width=.32\linewidth,height=0.2\textheight]{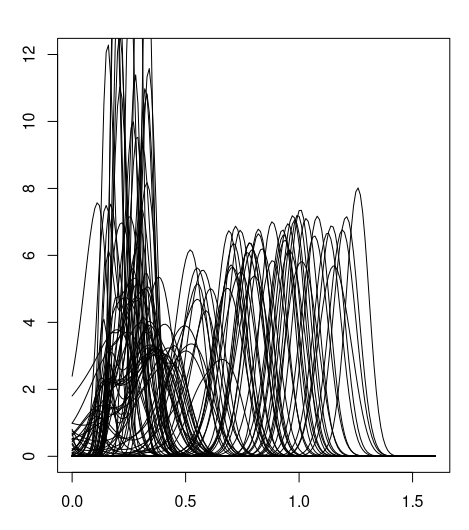}
\put(40,90){\tiny $\varphi_M^*=0.7$}
    \end{overpic}
    \begin{overpic}
[width=.32\linewidth,height=0.2\textheight]{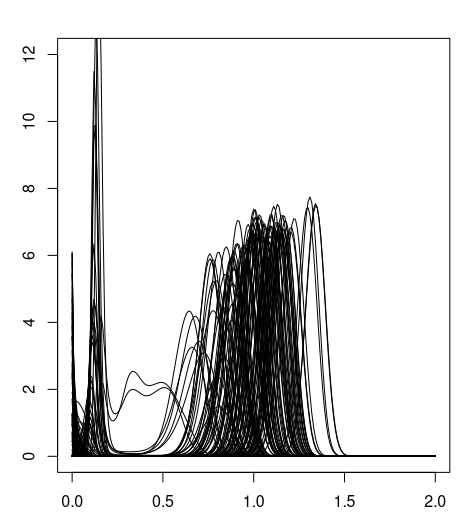}
\put(40,90){\tiny $\varphi_M^*=1$}
    \end{overpic}
    \caption{Posterior densities for $\eta$ with prior $\eta\sim Unif(0,\infty)$, from 100 realizations of training blocks of $J^{(\x)}=10$ and calibration blocks of $J^{(\y)}=50$.}
    \label{fig:post_eta_100rep}
\end{figure}

\begin{figure}[t]
    \centering
\begin{overpic}[ width=.3\linewidth,height=0.22\textheight ]{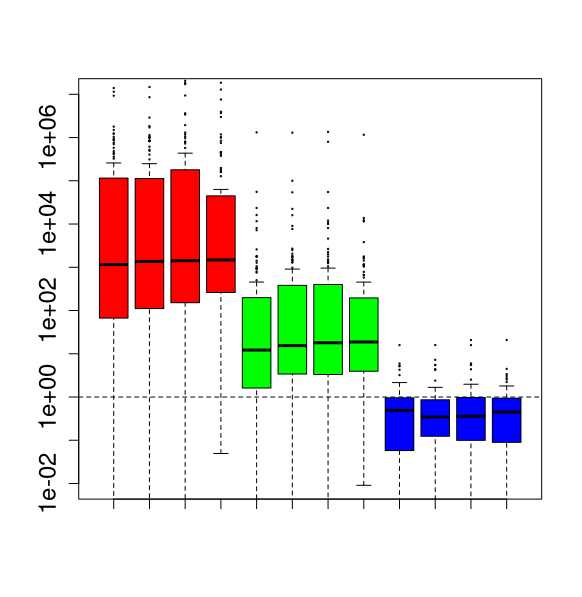} 
\put(32,92){\tiny $R(\eta_{(J,1)},1)$}
\put(16,5){\tiny a}
\put(23,5){\tiny b}
\put(28,5){\tiny c}
\put(35,5){\tiny d}
\end{overpic}
\begin{overpic}[ width=.3\linewidth,height=0.22\textheight ]{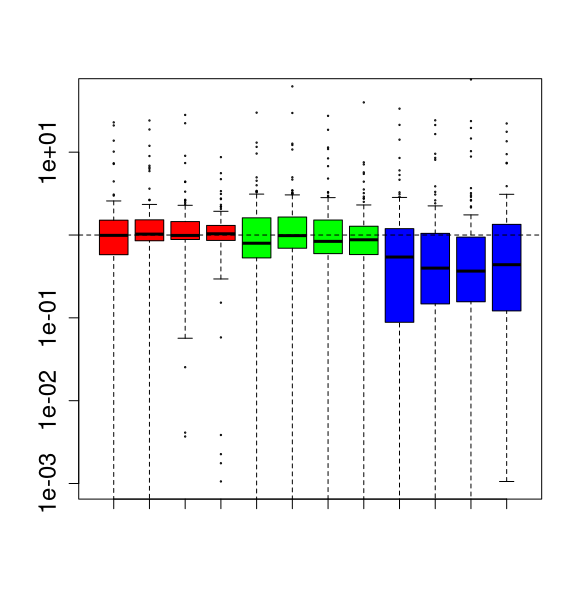} 
\put(32,92){\tiny $R(\eta_{(J,1)},0)$}

\end{overpic}
\begin{overpic}[ width=.3\linewidth,height=0.22\textheight ]{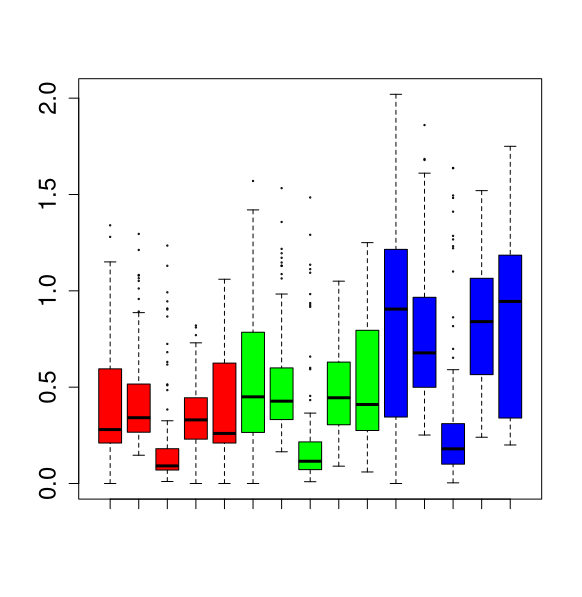} 
\put(32,92){\tiny $\eta$-estimates}
\put(16,5){\tiny a}
\put(19,5){\tiny b}
\put(25,5){\tiny c}
\put(30,5){\tiny d}
\put(35,5){\tiny e}
\end{overpic}

\vspace{0.6cm}
\begin{overpic}[ width=.3\linewidth,height=0.22\textheight ]{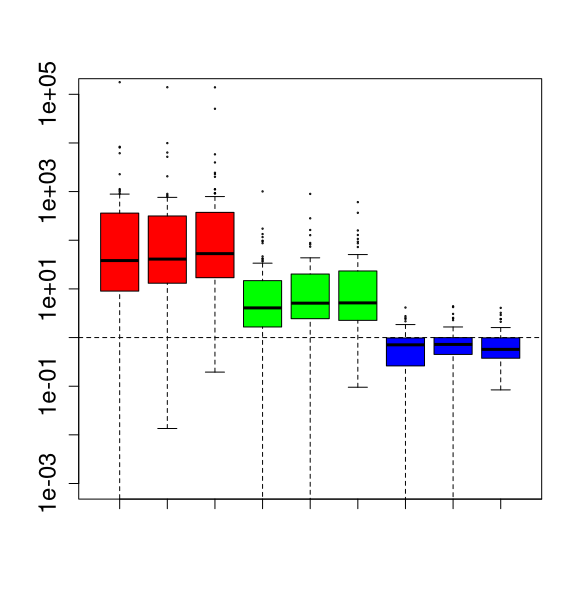} 
\put(32,92){\tiny $R(\eta_{(1,J)},1)$}
\put(18,5){\tiny a}
\put(28,5){\tiny b}
\put(35,5){\tiny c}
\end{overpic}
\begin{overpic}[ width=.3\linewidth,height=0.22\textheight ]{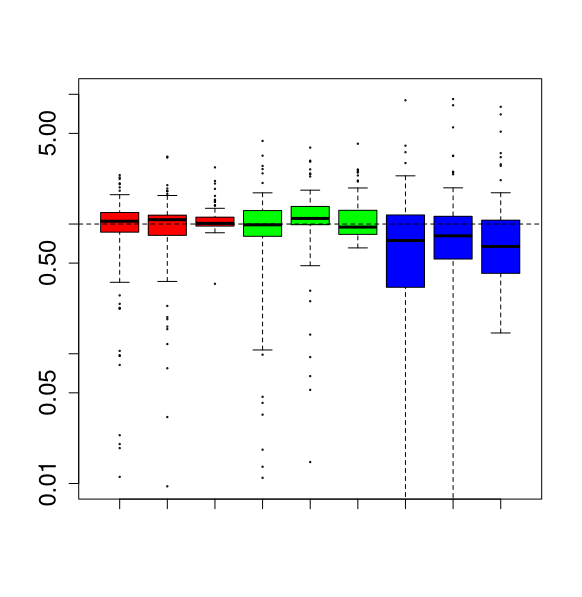} 
\put(32,92){\tiny $R(\eta_{(1,J)},0)$}
\end{overpic}
\begin{overpic}[ width=.3\linewidth,height=0.22\textheight ]{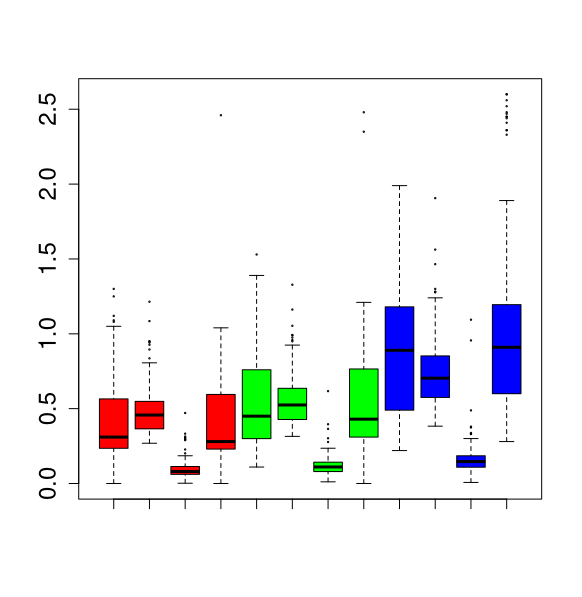}  
\put(32,92){\tiny $\eta$-estimates}
\put(17,5){\tiny a}
\put(23,5){\tiny b}
\put(28,5){\tiny c}
\put(34,5){\tiny e}
\end{overpic}
\caption{Distributions of expected risk ratio \eqref{eq:s_relative} and $\eta$ estimates using a prior $\eta\sim Exp(1/3)$ for the three misspecification levels; $\varphi^*_M=0.5$ (red, high), $\varphi^*_M=0.7$ (green, medium), and $\varphi^*_M=1$ (blue, no misspecification). Rows show comparisons against estimates of $\eta$ computed using product (top row) and pooled (bottom row) $\eta$-posteriors. For each misspecification level, four $\eta$ estimators, (a) $\overline{\eta}$ (posterior mode), (b) $\widehat{\eta}$ (mean), (c) $\widehat{\eta}_{hm}$  (harmonic mean) and (d) $\eta_{\rm \scriptscriptstyle WAIC}$ are compared against Bayes ($\eta=1$, left column) and cut ($\eta=0$, middle column) and $\overline{\eta}^*$ (fourth box (e) in each color group, estimated using $J=10^3$ test blocks, right column). Horizontal dotted lines at $R=1$.} \label{fig:eta_ratio_exp3} 
\end{figure}

\begin{figure}
    \centering
    \begin{overpic}[ width=.4\linewidth,height=0.3\textheight ]{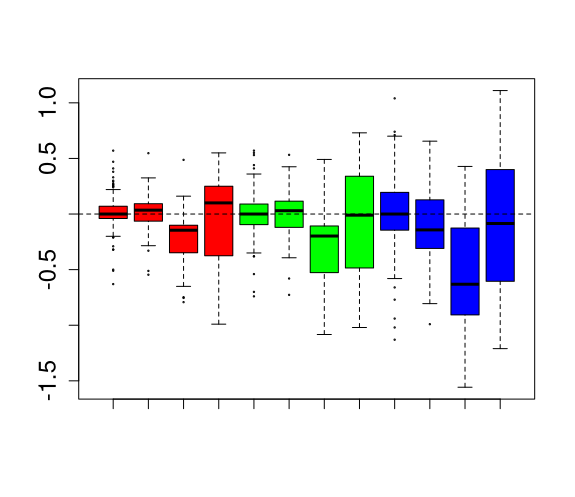} 
\put(16,5){\footnotesize a}
\put(22,5){\footnotesize b}
\put(28,5){\footnotesize c}
\put(33,5){\footnotesize d}
\end{overpic}
\begin{overpic}[ width=.4\linewidth,height=0.3\textheight ]{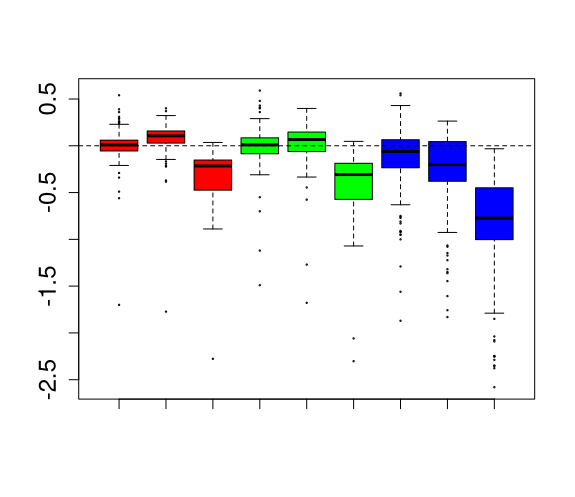} 
\put(18,5){\footnotesize a}
\put(25,5){\footnotesize b}
\put(33,5){\footnotesize c}

\end{overpic}
\caption{Deviation of $\eta$ estimates in Figure \ref{fig:eta_ratio_exp3} from $\bar{\eta}^*$ for the three misspecification levels using product (left) and pooled (right) losses; $\varphi^*_M=0.5$ (red, high), $\varphi^*_M=0.7$ (green, medium), and $\varphi^*_M=1$ (blue, no misspecification). For each misspecification level, four $\eta$ estimators, (a) $\overline{\eta}$ (posterior mode), (b) $\widehat{\eta}$ (mean) and (c) $\widehat{\eta}_{hm}$  (harmonic mean). Horizontal dotted lines at zero.}   
    \label{fig:eta diff positive 50calib}
\end{figure}

\subsubsection{Finite optimal hyperparameters}\label{Appendix:ssm-finite-eta-star}

In this subsection we show that $\eta_{(1,\infty)}$ and $\eta_{(\infty,1)}$ are almost surely finite in the State Space Model example of Section~\ref{sec:ssm}. This of course chimes with the results for $\eta$-estimation above.

Let $|A|$ and $|M|$ be the number of anchors and missing $\theta$'s. Using the reparametrization $\theta_M \mid \theta_A \sim N(B\theta_A,\Sigma_A)$ for the AR(1) prior, integrating out $\theta_M$ we obtain
\begin{align*}
\pi(\varphi^2 \mid \theta_A, x_A, x_M) & \propto \pi(\varphi^2)\,(\varphi^2)^{-(\eta-1)|M|/2 - |A|/2}
\left|\Sigma_A+\tfrac{\varphi^2}{\eta}I_{|M|}\right|^{-1/2} \\   
& \times \exp\!\left( -\tfrac{1}{2}(x_M - B\theta_A)^\top (\Sigma_A+\tfrac{\varphi^2}{\eta}I_{|M|})^{-1} (x_M - B\theta_A) -\tfrac{1}{2\varphi^2}\|x_A - \theta_A\|^2 \right).
\end{align*}
The prior for $\varphi^2$ is $\mbox{InvG}(a,b)$. For large $\eta$, 
\[
|\Sigma_A+\tfrac{\varphi^2}{\eta}I_{|M|}|^{-1/2} 
=|\Sigma_A|^{-1/2}+O(\eta^{-1})
\]
and using the Neumann expansions
\begin{align*}
(x_M - B\theta_A)^\top (\Sigma_A+\frac{\varphi^2}{\eta}I_{|M|})^{-1} (x_M - B\theta_A)  &= (x_M - B\theta_A)^\top \left(\Sigma_A^{-1}-\frac{\varphi^2}{\eta} \Sigma_A^{-2} +O\left(\eta^{-2}\right) \right)  (x_M - B\theta_A) \\ 
&=(x_M - B\theta_A)^\top \Sigma_A^{-1}(x_M - B\theta_A)+O(\eta^{-1})\,.
\end{align*}

Hence $\pi(\varphi^2|\cdot) = InvG(a_\eta,b_A)(1+O(1/\eta))$ where \[a_\eta=a+\frac{|A|}{2}+\frac{(\eta-1)|M|}{2}=\frac{\eta|M|}{2}+O(1)\] and $b_A=b+\frac{1}{2}\|x_A-\theta_A\|^2$. 

Suppose that there are $J$ held-out anchors and $S_J=\|x'_{A}-\theta'_A\|^2$ denotes L2 norm of these held-out calibration data for the pooled loss. Integrating with $\varphi^2$, the posterior predictive for all the calibration data (pooled case) is
\begin{align}
p_\eta(x'_{A}|x_{A},\theta_A,\theta'_A,x_{M}) &= \int (2\pi\varphi^2)^{-J/2}\exp(-\frac{S_J}{2\varphi^2})\,\mbox{InvG}(\varphi^2;a_\eta,b_A) d\varphi^2 \ \times\ (1+O(\eta^{-1})) \nonumber\\
&= (2\pi)^{-J/2} \frac{b_A^{a_\eta}}{\Gamma(a_\eta)}\frac{\Gamma(a_\eta+J/2)}{(b_A+S_J/2)^{a_\eta+J/2}}\ \times\ (1+O(\eta^{-1}))\nonumber\\
&\propto \eta^{J/2} \left(1+\frac{S_J}{2b_A}\right)^{-\eta|M|/2-J/2}(1+O(\eta^{-1}))\,,
\label{eq:ssm_pooled_post_pred_big_eta}
\end{align}
where the last step uses $\frac{\Gamma(a_\eta+J/2)}{\Gamma(a_\eta)}=a_\eta^{J/2}(1+O(a_\eta^{-1}))$ and $a_\eta = \eta|M|/2+O(1)$.  
The product form of posterior predictive is 
\begin{align} \prod_{j=1}^J p_\eta(x'_{A,j}|x_{A},\theta_A,\theta'_{A,j},x_{M}) \propto \eta^{J/2} \prod^J_{i=1} \left(1+\frac{\|x'_{A,j}-\theta'_{A,j}\|^2}{2b_A} \right)^{-\eta|M|/2-1/2}(1+O(\eta^{-1})) \,. \label{eq:ssm_prod_post_pred_big_eta}\end{align}
Unless $x'_{A,j}=\theta'_{A,j}$ for all $j=1,\dots,J$, both the pooled and product posterior predictive densities go to zero in \eqref{eq:ssm_pooled_post_pred_big_eta} and \eqref{eq:ssm_prod_post_pred_big_eta} as $\eta\to\infty$, so this holds almost surely in $\x$. 
 

\subsection{EDiSC experiments}\label{Appendix:EDiSC}

\subsubsection{EDiSC model details}\label{Appendix:EDiSC-model}

The EDiSC model infers evolving word meanings over time in an unsupervised, bag-of-words framework, using context words to determine the sense of a target word. For a detailed introduction of the model and notation, we refer the reader to \citet[Section 2 and 4]{zafar2024embedded}
 and \citet[Section 4]{zafar2024exploring}. Here, we summarize the key components for the current paper. The training data $\x=(x_1,\dots,x_n)$ for a target word consists of $n$ snippets, each a set of about $2L$ words $x_i,\ i=1,\dots,n$ from a vocabulary $\V=\cup_i x_i$ from just before and after the target word, with $V=|\V|$ co-occurring words in all (the number of words $n_i\le 2L$ in a snippet $x_i$ is usually less than $2L$ as stopwords and very low frequency words are filtered out) and spans $T$ discrete contiguous time
periods and $G$ text genres (such as news, magazine, fiction, non-fiction). Each snippet $x_i$, $i = 1, \ldots, n$, belongs to time period $\tau_i\in \{1,\dots,T\}$ and genre $\gamma_i\in \{1,\dots,G\}$ and the full data are $\x=(x_1,\dots,x_n)$. In the generative model, for each $x_i$, we first sample its sense $\c_i$, and then sample context words given the sense. Sense $\c_i$ is sampled from a multinomial sense-prevalence distribution $\widetilde{\phi}^{\gamma_i, \tau_i}$ over $K$ senses (indexed by genre and time) so that
\[
p(\c_i = k|\widetilde{\phi}^{\gamma_i, \tau_i}) = \widetilde{\phi}_k^{\gamma_i, \tau_i}
\enspace \text{for all senses } k \in \{1, \ldots, K\}.
\]
Given the sense $\c_i$, at each position/word $j=1,\dots,n_i$ in $x_i$, context words $x_{i,j}$ are sampled independently from a multinomial distribution $\widetilde{\psi}^{\c_i, \tau_i}$ over the $V$-sized lemmatized vocabulary (indexed by sense and time) so that
\[
p(x_{i,j} = v | \c_i, \widetilde{\psi}^{\c_i, \tau_i}) = \widetilde{\psi}^{\c_i, \tau_i}_{v}
\enspace \text{for all words } v \in \{1, \ldots, V\}.
\]
The inferential task is to learn $\widetilde{\phi}$ and $\widetilde{\psi}$ given $\x$ and correctly group the snippets into clusters by meaning.

We use the notation $(\c_i, \gamma_i, \tau_i)$ for snippet-specific sense-genre-time triples, and $(k, g, t)$ for their generic equivalents.
Let $\widetilde{\phi}^{g,t} = \mathrm{softmax}(\phi^{g,t})$ and $\widetilde{\psi}^{k,t} = \mathrm{softmax}(\psi^{k,t})$.  The real-valued arrays $\phi$ and $\psi$ evolve over time $t=1,\dots,T$ so their priors (given in \citet[Section 4]{zafar2024embedded}) model that stochastic process. The likelihood
\begin{align}\label{eq:EDiSC-gb-posterior}
p(\x \mid \phi, \psi) &= \prod_{i=1}^{n} \sum_{k=1}^{K} 
p(\c_i = k| \phi)\, p(x_i |\c_i = k, \psi)\\ 
&= \prod_{i=1}^{n} \sum_{k=1}^{K} 
\widetilde{\phi}^{\gamma_i, \tau_i}_{k}
\prod_{w \in x_i} \widetilde{\psi}^{k, \tau_i}_{w}\nonumber
\end{align}
can be exactly marginalized over the unknown latent sense-label parameters $\c$ as the number $K$ of possible values is typically small.
Snippets $\x$ are observed data, whereas sense assignments $\c = (\c_1, \ldots, \c_n)$ are missing data (in fact these are known for our data, as the data have been hand-labeled, but the point of the analysis is to group the snippets without knowledge of true sense labels). The generalized Bayesian posterior is
\[\pi_\eta(\phi, \psi \mid \x) \propto \pi(\phi, \psi) p(\x \mid \phi, \psi)^\eta
\]
by raising the likelihood to a power of $\eta$.


In Table~\ref{tab:EDiSC-experiment-settings} we give details of the data sets and model settings used in the experiments reported in Section~\ref{sec:edisc}. {Inference is based on 1,400 samples obtained by taking every 10th draw from four chains of length 4,000, using STAN \citep{Carpenter2018}, after discarding the first 500 iterations of each chain as burn-in. We checked ESS values were reliably large.}

\begin{table}[t]
\caption{Data and Train/Calibrate splits for experiments reported in Section~\ref{sec:edisc}. This table is reproduced with minor changes from Table~2 in \cite{zafar2024exploring}. In column 2, $N=n+J+J_z$ is the total data set size before splitting. In column 8, $n=|\x|$, $J=|\y|$ and $J_z=|\z|$ give the numbers of snippets in the training, calibration and test data respectively.\\ 
} \label{tab:EDiSC-experiment-settings}
\centering
\begin{adjustbox}{max width=\textwidth}
\begin{tabular}{lrrcccclrl}
\toprule
        &Snippets & Vocab & Length & True senses & Model senses & Genres & {Train, Cal, Test} & \multicolumn{2}{l}{Time periods} \\
Target word & \multicolumn{1}{c}{($N$)} & \multicolumn{1}{c}{($V$)} & ($L$) & ($K^*$) & ($K$) & ($G$) & $(n, J, J_z)$ & ($T$) & detail \\ \\[-1.25em]
bank split 1 &   704\phantom{N} &   736 & 14 & 2 & 2 & 4 & 470, 117, 117 & 10 & 1810--2010 \\
bank split 2 &   708\phantom{N} &   717 & 14 & 2 & 2 & 4 & 472, 118, 118 & 10 & 1810--2010 \\
bank split 3 &   703\phantom{N} &   728 & 14 & 2 & 2 & 4 & 469, 117, 117 & 10 & 1810--2010 \\
bank split 4 &   704\phantom{N} &   742 & 14 & 2 & 2 & 4 & 470, 117, 117 & 10 & 1810--2010 \\
bank split 5 &   706\phantom{N} &   735 & 14 & 2 & 2 & 4 & 471, 118, 117 & 10 & 1810--2010 \\
chair   &   745\phantom{N} & 3,180 & 20 & 2 & 2 & 4 & 497, 124, 124 & 10 & 1820--2020 \\
apple   & 1,154\phantom{N} & 3,737 & 20 & 2 & 2 & 4 & 770, 192, 192 & 5 & 1960--2020\\
gay   &     650\phantom{N} & 3,071 & 20 & 2 & 4 & 3 & 434, 108, 108 & 5 & 1920--2020\\
mouse &     584\phantom{N} & 2,439 & 20 & 2 & 3 & 3 & 390, 97, 97 & 4 & 1940--2020\\
bug   &     522\phantom{N} & 2,475 & 20 & 4 & 4 & 3 & 348, 87, 87 & 8 & 1980--2020\\
\bottomrule
\end{tabular}
\end{adjustbox}
\end{table}

\subsection{EDiSC - further results}
\label{Appendix:EDiSC-pooled-loss}

\begin{figure}[ht]
    \centering
    \hbox{
    \begin{overpic}
    [width=.45\linewidth,height=0.25\textheight]{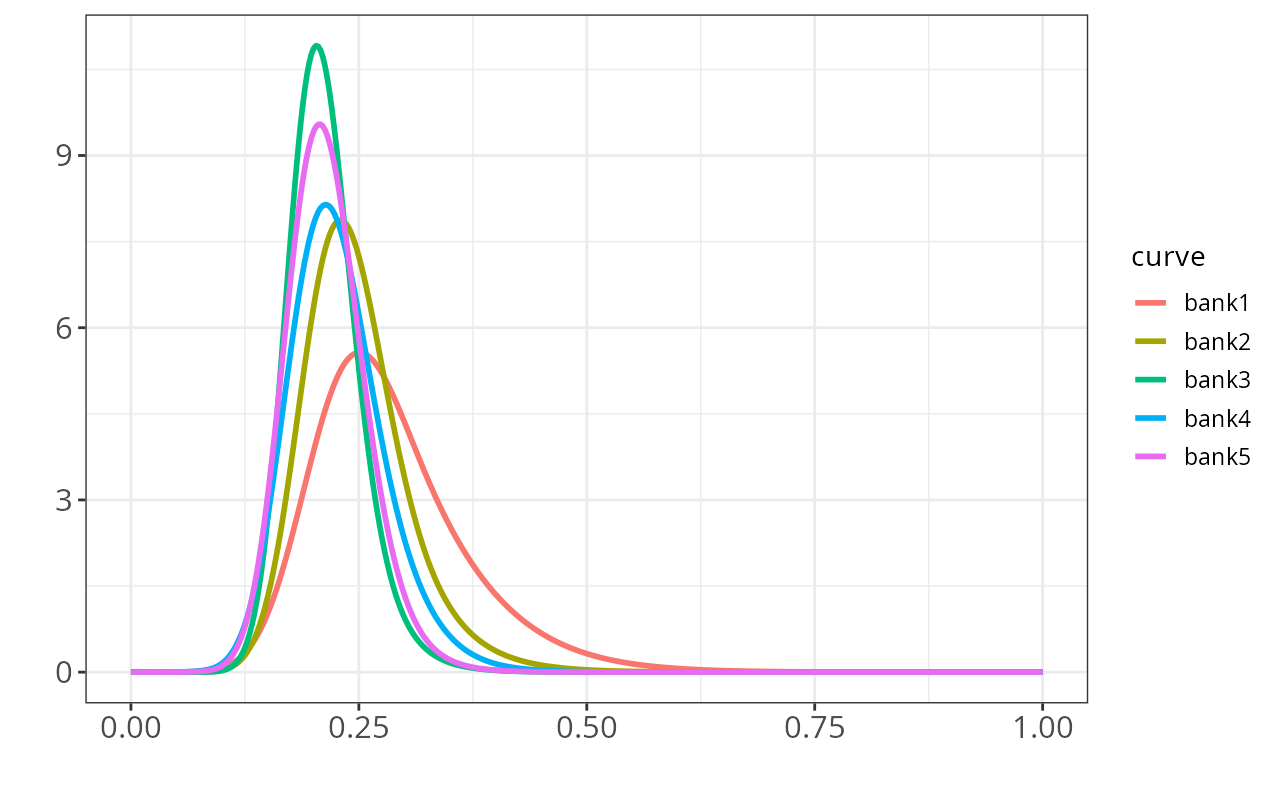}
    \put(-5,25){\rotatebox{90}{\small $\rho(\eta|\y_{(J,1)}, \x)$}}
        \put(50,2){\small $\eta$}
    \end{overpic}
    \hspace{0.2cm}
    \begin{overpic}
    [width=.45\linewidth,height=0.25\textheight]{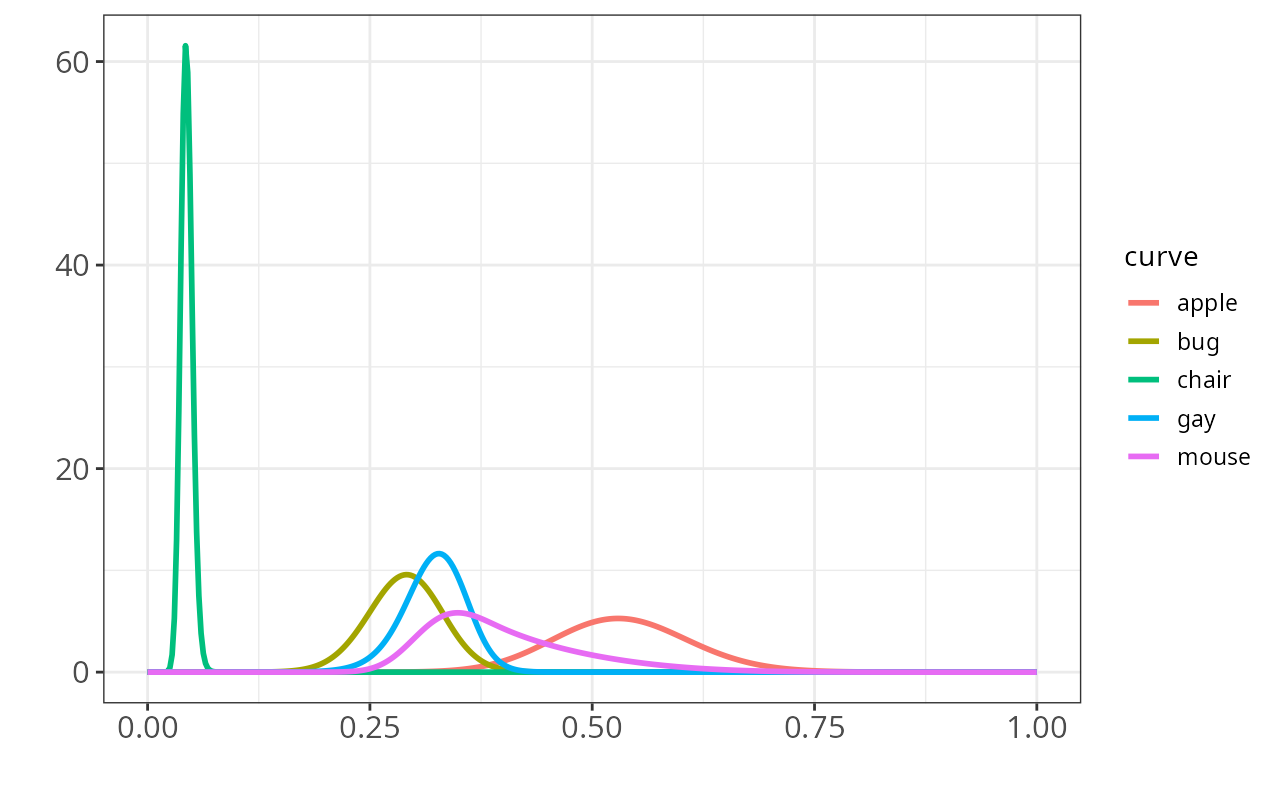}
    \put(-5,25){\rotatebox{90}{\small $\rho(\eta|\y_{(J,1)}, \x)$}}
        \put(50,2){\small $\eta$}
    \end{overpic}
    }
    \caption{Product-posterior densities for $\eta$ for 5 subsets of data all with target word ``bank'' (left) and 5 other target words (right). The posteriors in Figure~\ref{fig:bank split} are estimated using one third of the data as calibration data; the posteriors above use one sixth and are generally less normal-looking and more dispersed. All other details are the same.
    }
    \label{fig:bank split sixth}
\end{figure}

In this section we report findings for risk ratios in the EdiSC analysis of Section~\ref{sec:edisc}. The experiment settings are given in Table~\ref{tab:EDiSC-experiment-settings}, and in particular the split from the full data $\bar \x$ into train/calibrate/test blocks $\x,\y,\z$ uses a 4:1:1 split in contrast to the 2:1 split into train/calibrate blocks $\x$ and $\y$ used to form the $\eta$-posteriors in Figure~\ref{fig:bank split}. There we didn't need test data as we compared on Brier scores. However, here we need to hold out test data in order to compute risk ratios. The $\eta$-posteriors from the 4:1:1 split are shown in Figure~\ref{fig:bank split sixth}. As expected these have somewhat greater variance and are a little skewed, as we are further from the asymptotic regime ($J$ halves in going from Figure~\ref{fig:bank split} to Figure~\ref{fig:bank split sixth}).
We evaluate $R(\eta_1,\eta_2)=\frac{p_{\eta_1}(\z|\x,\y)}{p_{\eta_2}(\z|\x,\y)}$ (Equation \ref{eq:s_relative}), taking as our baseline $\eta_1=\hat\eta$ (the posterior mean for each density in Figure~\ref{fig:bank split sixth}) and compare against Bayes (with $\eta_2=1$) and against the $\eta$-values $\eta_2=\eta^\dagger$ reported in \cite{zafar2024exploring}. The $R$-values given in Table~\ref{tab:edisc-R-bank} are logs, $\tilde{R}:=\log R$.  
\begin{table}[htbp]
    \centering
    \begin{tabular}{c|c|c}
    Target word     &  $ \tilde{R}_{(\hat{\eta},1)}$ & $\tilde{R}_{(\hat{\eta},\eta^\dagger)}$\\
    \hline
    bank 1 &  10.6  &  1.0  \\
    bank 2 &  18.2  & 3.9 \\
    bank 3 &  26.0  & 6.0  \\
    bank 4 &  19.7  & 2.5   \\
    bank 5 &  27.2  & 6.7   \\
    \end{tabular}\qquad \qquad
    \begin{tabular}{c|c|c}
    Target word & $\tilde{R}_{(\hat{\eta},1)}$     & $\tilde{R}_{(\hat{\eta},\eta^\dagger)}$ \\
    \hline
    chair  & 160.3    & 111.7 \\
    apple &  20.0  &  25.9 \\
    gay  & 65.0  &  19.1 \\
    mouse  & 12.1  &  0.58 \\
    bug  &  53.5  &  14.8 \\
    \end{tabular}
    \caption{%
    Log Posterior predictive ratios on test data in the 4:1:1 split. Positive values favor our method. Here $\hat{\eta}$ is the posterior mean learning rate in the product-loss posterior with a uniform prior and $\eta^\dagger$ is the value estimated in \cite{zafar2024exploring}.}
    \label{tab:edisc-R-bank}
\end{table}


\subsection{EDiSC - clustering results at different $\eta$}\label{Appendix:edisc-clustering-bug}

To examine the model output, we look at the context words with the highest probabilities of co-occuring with the target word under each model sense. We show the top 10 context words of the word \textbf{bug} under model fits for $\eta=1$ (Bayes), $\eta=0.4$ (optimal) and $\eta=0.3$ (our estimate) and $\eta=0.1$ (too small) in Table~\ref{tab:context_words}. \textbf{bug} takes four meanings, i.e. (A) insect, (B) microorganism, (C) software glitch or (D) wire-tapping device. In the output for $\eta=0.4$ the target-word senses can be identified: clusters 1-4 are respectively A, B, C and D. For $\eta=0.1$, clusters 1 and 4 are both A, 3 is C, 2 is meaningless and senses B and D are lost.  Our method chooses $\eta=0.3$ which is a little small, but still gives fair sense-separation of top-words.

\begin{table}[hb]
\tiny
\noindent
\begin{tabularx}{\textwidth}{c X X X X X X X X X X}
\toprule
\multicolumn{11}{l}{\textbf{Sense \hspace{1em} Top 10 context words $\eta = 1$}} \\
\midrule
1 & insect & spray & bug & find & mosquito & eat & little & say & make & beetle \\
2 & p & cause & make & say & get & virus & bacterium & people & also & one\\
3 & fix & system & computer & software & new & use & update & company & say & device \\
4 & p & say & new & computer & year & get & company & find & make & also  \\
\midrule
\multicolumn{11}{l}{\textbf{Sense \hspace{1em} Top 10 context words $\eta=0.4$ (brier-score optimal)}} \\
\midrule
1 & insect & bug & spray &   mosquito & find & beetle & say & like & make & little  \\
2 & p & cause & say & make & bacterium & virus & get & people & one & find \\ 
3 & p & computer & new &     say & year & company & software & make & get &    system   \\
4 & say & new & p & federal & security & system & agent & get & computer & phone \\
\midrule
\multicolumn{11}{l}{\textbf{Sense \hspace{1em} Top 10 context words $\eta =\hat{\eta}= 0.3$ (our estimate)}} \\
\midrule
1 & insect & bug & spray & say & mosquito & find & like & p & beetle & get \\
2 & p & cause & virus & say & make & bacterium & get & one & year & find\\
3 & computer & p & new & say & company & year & software & system & make & get\\
4 & say & p & new & get & make & find & year & use & computer & know  \\
\midrule
\multicolumn{11}{l}{\textbf{Sense \hspace{1em} Top 10 context words $\eta = 0.1$}} \\
\midrule
1 & insect & bug & p & mosquito & say & like & spray & find & one & make \\
2 & say & p & get & make & one & new & know & like & use & find  \\
3 & p & computer & new & say & year & make & software & use & get & company  \\
4 & say & p & bug & make & get & like & insect & one & new & find \\
\bottomrule
\end{tabularx}
\caption{Top 10 context words for different senses and $\eta$ values of the word \textbf{bug}. This is a problem of unsupervised classification, so Brier-scores are not available in the real applications as there are no true meaning labels. We would like to estimate a Brier-score optimal $\eta$ without access to the Brier scores. \cite{zafar2024exploring} hand-labeled the snippets with their true senses. These data are not used in the inference but can be used to check it.}
\label{tab:context_words}
\end{table}


\subsection{Epidemiological data}\label{sec:hpv}

The model has two modules: in each population $i=1,\dots,n$, $n=13$, a Poisson response for the number of cancer cases $\x_{2,i}$ in $T_i$ women-years of followup, and a Binomial model for the number $\x_{1,i}$ of women infected with HPV in a sample of size $N_i$ from the $i$’th population; $\x_{2,i} \sim Poisson(T_i \exp(\beta_1+\beta_2\phi_i))$ and $\x_{1,i} \sim Binomial(N_i,\phi_i)$. The $\x_2$-module is known to be a suspect, and the $\x_1$-module is well-specified. The shared parameter is $\varphi=(\phi_1,\dots,\phi_n)$ and the $\x_2$-module specific parameter (see Figure~\ref{fig:Multimodular_model}) is $\theta=(\beta_1,\beta_2)$ following the two modular setup in Section 2. A uniform prior is assigned for each $\phi$, $\phi_i\sim U[0,1]$, $i=1,\dots,n$ and $\beta_1,\beta_2 \sim N(0,10^3)$.

To fit our purpose, the data from a Binomial model $(\x_{1,i},N_i)$ is represented by Bernoulli trials $\{\x_{1,i,j}\}^{N_i}_{j=1}$; $\x_{1,i,j} \sim \mbox{Bernoulli}(\phi_i)$ for $j=1,\dots, N_i$. This is to expand the data and divide it into training, calibration, and test datasets. The 1st, 3rd, 4th, 7th and 9th populations are selected because their empirical probabilities are not too close to 0, and there are enough samples per population. For each of those populations, $50\%$ of $\{\x_{1,i,j}\}^{N_i}_{j=1}$, $i\in\{1,3,4,7,9\}$ is used as the training data, $25\%$ as the calibration data $\y$, and $25\%$ as the test data $\z$. The $\x_{1,i}$ data for the other populations ($i\in \{2,5,6,7,8,10,11,12,13\}$) and all data from the Poisson model ($\x_2$) are used as the training data. 

For $\gamma$-SMI, estimation of the marginal loss is time-consuming (so it is helpful that it is well approximated by $\eta$-SMI!) so we restrict comparison to calibration with pooled losses where this only has to be done once, rather than $J$ times. We randomly partitioned data of the selected populations, estimated a pooled-loss posterior mean denoted by $\widehat{\eta}_{(1,J)}$ and compared the $\eta$-SMI posterior predictive for the test data. This was repeated 100 times, and the result is summarized in Figure \ref{fig:hpv} (we replicate using random splits of a single real data set, whereas in in the normal and State Space Model comparisons we simulate new data for each replicate). The posterior mean estimate ($\widehat{\eta}_{(1,J)}$) tends to be smaller than 0.5 on average, and the relative risk shows that the cut posterior results in better prediction on average, consistent with \citet{carmona20}. 

For the $\gamma$-SMI posteriors, the marginalized loss $\ell_\gamma(\varphi;\x)$ is estimated using the AGHQ method with $k=5$ quadrature points.  The $\eta$-SMI posterior was simulated using the R-package, \texttt{rstan} \citep{Carpenter2018} and $2\times 10^4$ posterior samples (after a burn-in of $10^4$) are used to estimate $\ell_{\eta}(\varphi,\theta';\x_1,N,\x_2,T)$ and $R({\widehat{\eta}_{(1,J)},\eta_{o}})$ in Figure \ref{fig:hpv}.
The $\gamma$-SMI posterior was simulated using the R-package, \texttt{adaptMCMC} \citep{Scheidegger2024}, implementing the adaptive Metropolis algorithm \citep{Vihola2012}. The marginalized loss was estimated using the adaptive Gaussian Hermite quadrature method with 5 quadrature points using the R-package \texttt{aghq} \citep{Stringer2021} and $2\times 10^5$ posterior samples thinned by 10 are used to estimate $\ell_{\gamma}(\varphi;\x_1,N,\x_2,T)$ in Figure \ref{fig:hpv}.

The right plot in Figure \ref{fig:hpv} compares the logged negative loss functions for $\eta$ and $\gamma$ from one simulation and they are very similar. Consequently they perform similarly in $\eta$ and $\gamma$ estimation and predicting the test data; $\widehat{\eta}_{(1,J)}=0.366$, $\widehat{\gamma}_{(1,J)}=0.364$, $R(\widehat{\gamma}_{(1,J)},0)=0.178$, $R(\widehat{\eta}_{(1,J)},0)=0.195$, $R(\widehat{\gamma}_{(1,J)},1)=0.982$ and $R(\widehat{\eta}_{(1,J)},1)=1.078$. If we use the MLE we get $\bar{\eta}_{(1,J)}=\bar{\gamma}_{(1,J)}=0$, $R(\bar{\gamma}_{(1,J)},0)=R(\bar{\eta}_{(1,J)},0)=1$, $R(\widehat{\gamma}_{(1,J)},1)=R(\widehat{\eta}_{(1,J)},1)=0.001$ so this is a case where the MLE/MAP estimate does better than the posterior mean. This is because the minimum loss value is moderately strongly favored by loss and sits on the boundary at $\eta/\gamma=0$.

\begin{figure}[ht]   
\centering
    \begin{overpic}[width=0.4\linewidth,height=0.25\textheight]{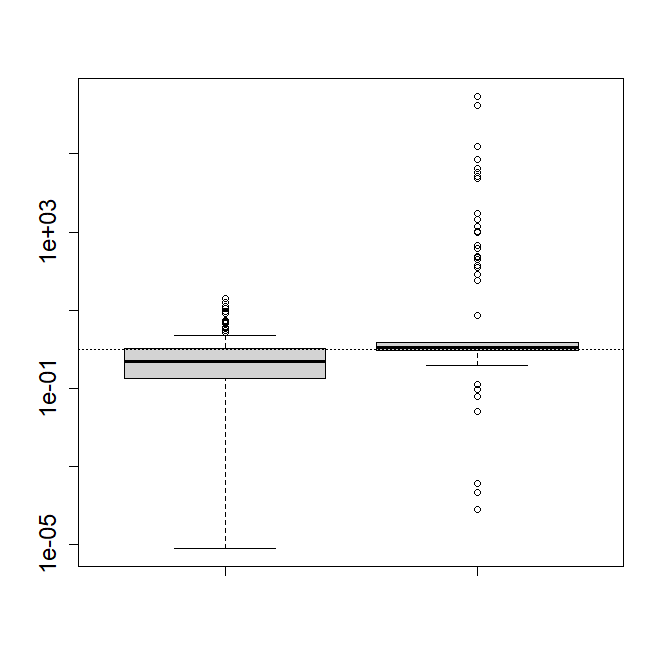}
    \put(20,5){\tiny $R(\widehat{\eta}_{(1,J)},0)$}
    \put(60,5){\tiny $R(\widehat{\eta}_{(1,J)},1)$}
    \end{overpic}
    \begin{overpic}[width=0.25\linewidth,height=0.25\textheight]{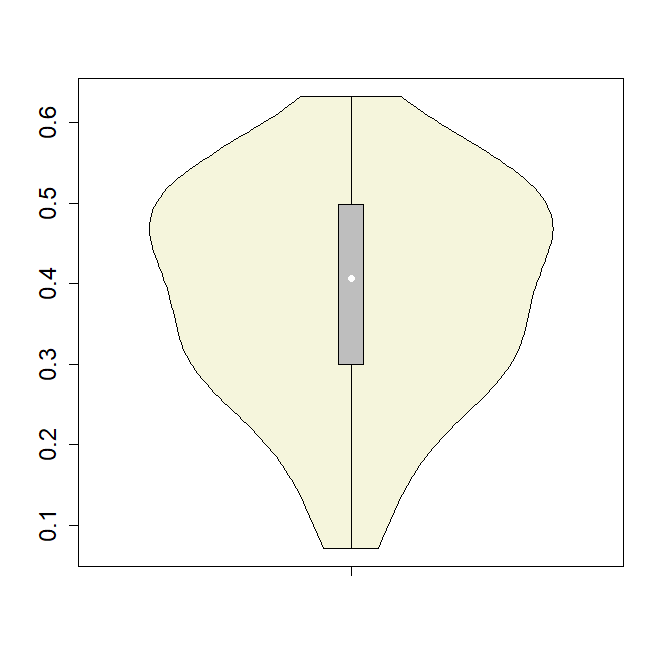}
    \put(32,5){\tiny $\widehat{\eta}_{(1,J)}$}
    \end{overpic}
    \hspace*{0.3cm}
    \begin{overpic}[width=0.25\linewidth,height=0.25\textheight]{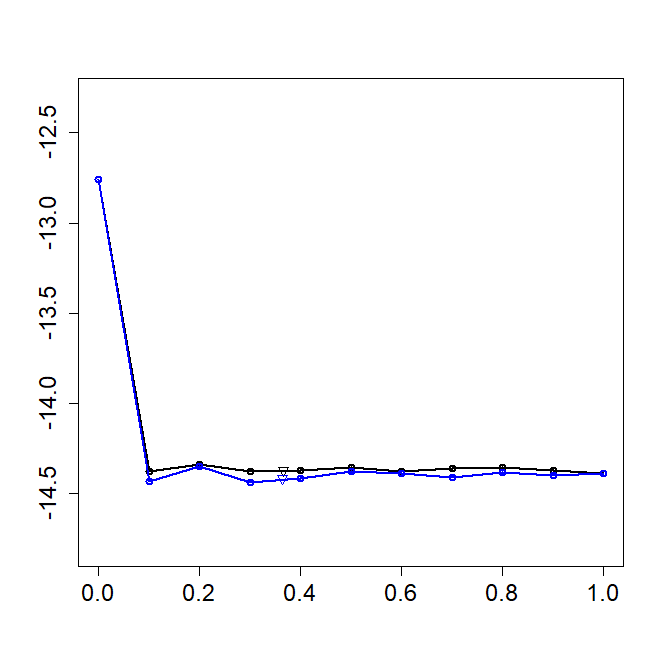}  
    \put(25,5){\tiny $\gamma$,$\eta$ values}
    \put(-5,30){\rotatebox{90}{\tiny $-l_{(1,J)}(s;\y,\x)$}}
    \end{overpic}
    \caption{Summary of relative posterior predictive (left plot) and the posterior mean estimates $\widehat{\eta}_{(1,J)}$ (middle plot) from 100 replicates. The plot on the right compares $-l_{(1,J)}(\eta;\y,\x)$ (black line) and $-l_{(1,J)}(\gamma;\y,\x)$ (blue line). The horizontal dotted line in the left plot indicates $R=1$.}
    \label{fig:hpv}
\end{figure}

 \section{Remarks on the training/calibrate split}\label{Appendix:normal4-TC-split}

Although performance is fairly flat with $0<k<1$ in most cases in the examples in Appendix~\ref{sec:normal3} (and other experiments which we dont report), performance on small calibration data sets ($k=0.8$) is generally weaker
(the boxes are flat or slope down with $k$ in Figure~\ref{fig:normal_b_gamma_eta_prod_pred}).

However, the total number of data points split between $\x$ and $\y$ in this example is small, just $n_1=30$, so the number of data points in the calibration data at $k=0.8$ is $J=(1-k)n_1=6$, which is very small. When we split larger data sets in Sections~\ref{sec:ssm}, \ref{sec:edisc} and Appendix~\ref{sec:hpv}, calibration data sets which are large, but not a large fraction of the total, don't seem to show this behavior (in experiments we do not report).

Interestingly, very small training data sets with $k=0.2$ perform well for the normal mixture example. What is happening here is presumably that it only takes a small data set $\x$ to inform $\varphi$ through $\exp(-\ell_s(\varphi;\x))$ in \eqref{eq:SeqGibbsPosterior} so $\pi_s(\varphi|\x)$ has the right $\varphi$-dependence. However, a reasonably large number of samples is required for $\tilde p_s(\y|\x)=\prod_j p_s(y_j|\x)$ to become informative of $s$. When $k=0.2$, $J=(1-k)n_1$ so $J=24$ and when $k=0.8$ we have $J=6$ which leaves $\tilde p_s(\y|\x)$ fairly flat in $s$. Again, for the larger real data sets in Sections~\ref{sec:edisc} and Appendix~\ref{sec:hpv}, the number of samples in $\y$ is always large. For example, in Section~\ref{sec:edisc}, $J$ ranges from $J=2\times 87$ to $J=2\times 192$ blocks (see Table~\ref{tab:EDiSC-experiment-settings}) so even though $k=2/3$ we get very good concentration for $p_s(\y|\x)$ and hence $\rho(s|\y,\x)$ in Figure~\ref{fig:bank split}.

Finally, if we took $k=0$ and the pooled loss we would be calibrating using the marginal likelihood $p_s(\overline{\x})$ for all the data ($ie$, $\y=\overline{\x}$). This is equivalent to estimating $s$ using $\rho(s,\varphi|\y)\propto \rho(s)\exp(-\ell_s(\varphi;\y))$. We were critical of this in the introduction. The problem is that $\ell_s(\varphi;\y)$ is a loss for $\varphi$, and can't be used as a loss for $s$ as it doesnt express any meaningful inferential goal in the belief update for $s$. If we take $k=1$ then we simply get the prior for $s$ and no belief update.

\section{Verify Assumptions 1-4 in a normal model}\label{app:gbi-normal-example}

In this section we give a very simple example where we can check the 
assumptions in Section~\ref{sec:bayes_s}. 
We take a normal model with true data-distributions
\[
x_i,\ y_j \stackrel{\mathrm{iid}}{\sim} N(\mu^*,1),\ i=1,\dots,n,\quad j=1,\dots,J\,.
\]
The mean and variance are unknown and assigned a conjugate prior
\[
\pi(\theta,\sigma^2)\propto 1\times \operatorname{IG}(\sigma^2;a,b),
\qquad a>0,\quad b>0.
\]
The fitted observation model is
\[
x_i\mid \theta,\sigma^2 \sim N(\theta,\sigma^2),
\]
so $p(\y|\theta,\sigma^2)$ is $C^\infty$ for $(\theta,\sigma^2)\in \R\times (0,\infty)$ in Assumption~\ref{A:laplace}(i). Take $r=n\eta$ and let
\[
\bar x = \frac1n\sum_{i=1}^n x_i,
\quad
s_x^2 = \frac1n\sum_{i=1}^n (x_i-\bar x)^2,
\quad
\bar y = \frac1J\sum_{j=1}^J y_j,
\quad
S_y = \sum_{j=1}^J (y_j-\bar y)^2, \quad \Delta:=\bar x-\mu^*.
\]
We take $u=s_x^2$ below. The power posterior,
\begin{align*}
\pi_\eta(\theta,\sigma^2\mid \x)
&\propto
\Big\{\prod_{i=1}^n N(x_i;\theta,\sigma^2)\Big\}^{\eta}
\operatorname{IG}(\sigma^2;a,b)\\
&=
N\!\left(\theta;\bar x,\frac{\sigma^2}{r}\right)\,
\operatorname{IG}(\sigma^2;\alpha_r,B_r),\
\end{align*}
with
\[
\alpha_r:=a+\frac{r-1}{2},
\qquad
B_r:=b+\frac r2 u,
\]
is $C^\infty$ for $(\theta,\sigma^2)\in \R\times (0,\infty)$ which completes Assumption~\ref{A:laplace}(i).
We need $\alpha_r>0$ so $r>r_0=\max\{0,1-2a\}$ and the space for the learning rate 
is $\Omega_\eta=\left(\frac{r_0}{n},\infty\right)$, so if $a\ge 1/2$ then $\Omega_\eta=(0,\infty)$.

The pooled posterior predictive \[p_\eta(\y\mid \x)=t_J(\y;\bar x\,\mathbf 1_J, \Sigma_r,\nu_r)\] is a $J$-variate Student-$t$ density with
$\nu_r=2a+r-1$ degrees of freedom, location $\bar x\,\mathbf 1_J$, and scale matrix
\[
\Sigma_r
=
\frac{B_r}{\alpha_r}
\left(
I_J+\frac1r\mathbf 1_J\mathbf 1_J^\top
\right).
\]
For the product posterior predictive we set $J=1$ and take a product so
\[
\widetilde p_r(\y\mid \x)
=
\prod_{j=1}^J t_1(y_j;\bar x, \lambda_r,\nu_r)
\]
with $\lambda_r^2 = (1+1/r)B_r/\alpha_r$ and $\nu_r$ unchanged.

\subsection{Pooled case}\label{app:gbi-normal-example-pooled}
The finite-$J$ pooled maximiser in Assumption~\ref{A:log_min_pooled} is
\[
\bar r_{(1,J)}=\argmax_{r>r_0} p_r(\y\mid \x),
\]
The upper-tail behaviour is explicit. Let $T=\sum_{j=1}^J (y_j-\bar x)^2$.
As $r\to\infty$,
\[
\log p_r(\y\mid \x)
=
\log N_J(\y;\bar x\,\mathbf 1_J,uI_J)
+\frac{A_{1,J}}{r}
+O(r^{-2}),
\]
where $A_{1,J}$ is $O(J^2)$ and doesn't depend on $r$, so
\[
\frac{d}{dr}\log p_r(\y\mid \x)
=
-\frac{A_{1,J}}{r^2}+O(r^{-3}),
\qquad r\to\infty.
\]
Calculation gives
\[
\frac{A_{1,J}}{J^2}=
\frac{
(\bar x-\mu^*)^4
+
2(\bar x-\mu^*)^2
+
(u-1)^2
}{4u^2} + O(J^{-1}),
\]
If $(\bar x,u)\neq (\mu^*,1)$
then $A_{1,J}>0$ and $r_0<\bar\eta_{(1,J)}<\infty$ for all sufficiently large $J$, almost surely. This gives Assumption~1.

For Assumption~\ref{A:laplace}(ii-iii), let $\phi=(\theta,\sigma^2)$ so $Int(\Omega_\phi)=\R\times (0,\infty)$. 
The MLE \[\bar{\phi}_J=\{\phi\in Int(\Omega_\phi); \nabla_\phi \log p(\y|\phi)=0\}\] is $\bar{\phi}_J = \left(\bar y,\frac{S_y}{J}\right)$,
which is a.s. interior (ie, when $S_y>0$ which is $J\ge 2$) so A\ref{A:laplace}(ii) holds. At $\bar{\phi}_J$,
\[
-\frac1J\nabla_\phi^2\log p(\y\mid \phi)\Big|_{\phi=\bar{\phi}_J}
=
\begin{pmatrix}
J/S_y & 0\\[0.4em]
0 & J^2/(2S_y^2)
\end{pmatrix},
\]
which is positive definite if $J\ge 2$ so A\ref{A:laplace}(iii) holds.

Assumption~\ref{A:laplace}(iv) holds as
\[
\mathbb E_{z\sim N(\mu^*,1)}[\log N(z;\theta,\sigma^2)]
=
-\frac12\log(2\pi\sigma^2)
-\frac{1+(\mu^*-\theta)^2}{2\sigma^2},
\]
is uniquely maximised at $(\widetilde \theta,\widetilde \sigma)=(\mu^*,1)$.
The pooled target in Theorem~\ref{thm:loss_s_single} is
\[
\widetilde r_{(1,\infty)}
=
\argmax_{r>r_0}\ \pi_r(\mu^*,1\mid \x),
\]
with
\[
\pi_r(\mu^*,1\mid \x)
=
\frac{\sqrt r}{\sqrt{2\pi}}
\exp\!\left(-\frac r2\Delta^2\right)
\frac{B_r^{\alpha_r}}{\Gamma(\alpha_r)}
\exp(-B_r).
\]
Now $g(r)=\log(\pi_r(\mu^*,1\mid \x))$ is strictly concave on $(r_0,\infty)$, $g(r)\to-\infty$ as $r\to r_0$ (because $\alpha_r\to 0$ in $-\log(\Gamma(\alpha_r)$) and
\[
\lim_{r\to\infty} g'(r)=(\log u + 1 - u - \Delta^2)/2,
\]
and if we have $n\ge 2$ samples in the training data (so $u=s_x^2>0$ a.s.) then this is almost surely negative. It follows that
$r_0<\widetilde r_{(1,\infty)}<\infty$ almost surely in $\x$.

\subsection{Product case}

Assumption~\ref{A:Winter1-new}(i) is easily verified for additive losses including all those of the Bregman form in \eqref{loss:f}. Here we verify only Assumptions~\ref{A:log_min_prod} and \ref{A:Winter1-new}(ii) and omit Assumption~\ref{A:Winter1-new}(iii) and the opening assumption of Theorem~\ref{cor:posterior_bin}.

The calculation for the product case is similar to the pooled case. We want to show that both
\[
\bar r_{(J,1)}=\argmax_{r>r_0} h(r),
\qquad
h(r)=\log(\widetilde p_r(\y\mid \x))
\]
in A\ref{A:log_min_prod} and the ELPPD maximiser
\[
\widetilde r_{(\infty,1)}
=
\argmax_{r>r_0} \mathbb E_{Y\sim N(\mu^*,1)}[\log p_r(Y\mid \x)],
\]
in A\ref{A:Winter1-new} exist and are in $(r_0,\infty)$.

First, $h(r)\to -\infty$ as $r\to r_0$ (due to a denominator factor 
$\Gamma(a+(r-1)/2)$ in $t_1(y_j;\bar x, \lambda_r,\nu_r)$) so if 
$h'(r)<0$ at $r\to\infty$ then $r_0<\bar r_{(J,1)}<\infty$.

As $r\to\infty$, the product of marginals in $\widetilde p_r(\y\mid \x)$ 
has the form
\[
h(r)=
\log\!\left\{\prod_{j=1}^J N(y_j;\bar x,u)\right\}
+\frac{A_{J,1}}{r}
+O(r^{-2}),
\]
where $A_{J,1}$ is $O(J)$ and doesn't depend on $r$. Setting $D_j=(y_j-\bar x)^2/u$, calculation gives
\[
A_{J,1}
=
\frac14\sum_{j=1}^J
\left[
D_j^2
+
\left(2-4a+\frac{4b}{u}\right)D_j
+
4a-5-\frac{4b}{u}
\right]
\]
and if $\bar A_{\infty,1}=\lim_{J\to\infty} A_{J,1}/J$ then
\[
\bar A_{\infty,1}=\frac{3+6\Delta^2+\Delta^4}{4u^2}
+
\left(\frac12-a+\frac{b}{u}\right)\frac{1+\Delta^2}{u}
+
a-\frac54-\frac{b}{u}.
\]
Since $\eta=r/n$, we conclude that $r_0/n<\bar \eta_{(J,1)}<\infty$ iff $A_{J,1}>0$ and $r_0/n<\widetilde \eta_{(\infty,1)}<\infty$ iff $\bar A_{\infty,1}>0$. 
At large but finite $n$ we will have $\bar x\simeq\mu^*$ and $u=s_x^2\simeq 1$ so $\Delta\simeq 0$ and $\bar A_{\infty,1}\simeq 0$. This suggests the probability $\widetilde r_{(\infty,1)}=\infty$ is around a half (under resampling of $\x$) at large $n$ and this is bourne out in simulation. At large $J$ the signs of $\bar A_{\infty,1}$ and $A_{J,1}$ are the same so $\bar r_{(J,1)}$ has the same behavior at large $J$. This doesn't depend on the prior hyperparameters $a,b$.

We use simulation to explore this at small $n$ (recall, $n$ is fixed and $J\to \infty$ is our focus).
At lower $n$ and $J$ values the choice of $a$ and $b$ do make a difference. In Table~\ref{tab:prob-eta-finite-prod} we parameterise using $\mu_{\sigma^2}=E(\sigma^2)=b/(a-1)$ and $v_{\sigma^2}=\mbox{var}(\sigma^2)=b^2/(a-1)^2(a-2)$ are the prior mean and variance of $\sigma^2$. 

We tried prior variance $v_{\sigma^2}=0.1$ and mean $\mu_{\sigma^2}$ below, at and above the true value to investigate the effect of misspecification and $(\mu_{\sigma^2},v_{\sigma^2})=(1,1)$ and $(2,4)$ for something like typical subjective values. What is happening here is that if $\bar x\simeq \mu^*$ and $s_x^2\simeq 1$ then the simple MLE $\bar\theta=\bar x$ is very close to the true value of $\theta$ (ie, $\mu^*$) so $\y$ is well-predicted by shrinking $\theta$ onto $\bar x$, and this is achieved by taking $\eta=\infty$.

\begin{table}
    \centering
    \begin{tabular}{l|rrrrr}
        \toprule
        &\multicolumn{4}{c}{$(\mu_{\sigma^2},v_{\sigma^2})$}\\[-0.1in]
        \\
        & (0.5, 0.1) & (1, 0.1) & (4, 0.1) & (1, 1) & (2, 4)\\
        \midrule
        $\Pr(\bar \eta_{(J,1)}<\infty)$ & 0.62 & 0.72 & 0.62 & 0.60 & 0.59 \\[0.05in]
        $\Pr(\widetilde \eta_{(\infty,1)}<\infty)$ & 0.73 & 0.94 & 0.71 & 0.77 & 0.71\\
        \bottomrule
    \end{tabular}
    \caption{Estimated probabilities for optimal $\eta$ to be interior at $n=10$ and $J=10$ at different values of the prior mean and variance for $\sigma^2$. We illustrate with small values of $n$ and $J$ as the probabilities are one half at large $n$ and $\Pr(\bar \eta_{(J,1)}<\infty)\simeq \Pr(\widetilde \eta_{(\infty,1)}<\infty)$ at large $J$. The true value of $\sigma^2=1$ and $\mu^*=0$. The entries have standard errors $\le 0.001$ and are based on $N=10^5$ replicate $\x,\y$-datasets.}
    \label{tab:prob-eta-finite-prod}
\end{table}

\subsection{Interpretation}\label{app:gbi-normal-example-interp}

The different behavior of the optimal $\eta$-values in the pooled and product cases is easy to understand. When $J$ is large, $\y$ carries a lot of information about $\phi=(\theta,\sigma^2)$.
In the pooled case 
\[
p_\eta(\y|\x)=\int p(\y|\phi)\pi_\eta(\phi|\x)d\phi
\]
and $p(\y|\phi)$ forces $\phi$ onto the $\y$-MLE which is $\bar\phi_J=(\bar y,S_y/J)$ (with $\bar\phi_J\simeq \widetilde{\phi}$ at large $J$) and $p_\eta(\y|\x)$ behaves like $\pi_\eta(\bar\phi_J|\x)\to \pi_\eta(\widetilde{\phi}|\x)$ as $J\to\infty$ per Theorem~\ref{thm:loss_s_single}.

Now if $\eta\to\infty$ then $\pi_\eta(\widetilde{\phi}|\x)\to 0$ because $\pi_\infty(\phi|\x)$ is a delta-function at the $\x$-MLE $\bar \phi_n=(\bar x,s_x)$ and $\bar \phi_n\ne \widetilde \phi_J$ almost surely. The penalty for taking $\eta\to\infty$ at large $J$ is large, because the support of $\pi_\eta(\phi|\x)$ shrinks away from $\bar\phi_J$. As $\eta$ is brought down, $\pi_\eta(\phi|\x)$ spreads out and ``crosses over'' $\widetilde \phi$; the posterior density at $\widetilde \phi$ reaches a maximum at some intermediate value of $\eta$ and then starts to decline as the dispersion of $\pi_\eta(\phi|\x)$ grows (at least for a diffuse prior). The pooled $\eta$-posterior adjusts the power posterior to cover the MLE of the calibration data, so it tends to target the pseudo-true parameter. 

The discussion in the last paragraph does not depend on the details of the normal likelihood or inverse Gamma prior and could be extended to more general settings.




In the product case 
\begin{align*}
\tilde p_\eta(\y|\x)&=\prod_{j=1}^J 
E_{\phi\sim \pi_\eta(\cdot|\x)}(p(y_j|\phi)) 
\end{align*}
Now each $y_j,\ j=1,\dots,J$ has its own $\phi_j$ and because $n>1$, the information from $\x$ dominates the distribution of $\phi_j$ and the dispersion of $p^*$ plays a role.
The optimal $\eta$ is 
\[
\widetilde\eta_{(\infty,1)}
=
\argmin_\eta
KL\!\left(p^*\,\|\,p_\eta(\cdot\mid \x)\right)\,.
\]
The product estimator therefore
chooses the learning rate which makes the marginal one-step posterior predictive density $p_\eta(\cdot|\x)$ closest to the
true predictive density, rather than the learning rate which maximises posterior density
at the pseudo-true parameter.

In the normal example, the power posterior collapses onto the training-sample MLE as $\eta\to\infty$,  and the one-step
posterior predictive tends to the plug-in predictive
\[
p_\eta(y\mid \x)\longrightarrow_J N(y;\bar x,s_x).
\]
The boundary value $\eta=\infty$ corresponds to trusting the training-sample MLE
completely. Taking $\eta<\infty$ replaces this
plug-in normal density with a heavier-tailed Student-$t$ predictive and in general, finite
$\eta$ smooths the plug-in predictive and gives more
probability to observations which are surprising under the $\x$-dependent predictive (``off-manifold'' data). However, in our example, the fitted model is actually perfectly well specified: there is $\phi\in\Omega_\phi$ such that $p^*(y_j)=p(y_j|\phi)$, namely $\phi=(\mu^*,1)$, so if the first two moments of $\x$ and $\y$ agree then a heavier tailed distribution doesn't help. 

If the plug-in predictive
$N(\bar x,u)$ is too sharp, has the wrong centre, or has the wrong scale relative to
$p^*=N(\mu^*,1)$, then a finite learning rate can improve the expected log predictive
density by adding posterior uncertainty. 
If, on the other hand, $\bar x\simeq\mu^*$ and
$u\simeq 1$, then the plug-in predictive is already close to the true predictive
distribution, and there is little to be gained by decreasing $\eta$. In that case the
optimum may occur at the boundary $\eta=\infty$.

\end{document}